\documentclass[sigplan,screen,authorversion,nonacm]{acmart}
\pdfoutput=1
\usepackage{macros}

\usepackage[leftmargin=1em,rightmargin=1ex,vskip=1ex]{quoting}
\setlength{\belowcaptionskip}{-10pt}
\usepackage{lscape}

\title{Monoidal Streams for Dataflow Programming}
\author{Elena Di Lavore}
\author{Giovanni de Felice}
\author{Mario Rom\'an}

\makeatletter
\def\@copyrightspace{\relax}
\makeatother

\begin{document}

\begin{abstract}
  We introduce  monoidal streams: a generalization of causal stream functions to monoidal categories.
  They have a feedback structure that gives semantics to signal flow graphs.
  In the same way that streams provide semantics to dataflow programming with pure functions, monoidal streams provide semantics to dataflow programming with theories of processes represented by a symmetric monoidal category.
  As an example, we study a stochastic dataflow language.
\end{abstract}

\keywords{Monoidal stream, Stream, Monoidal category, Dataflow programming, Feedback, Signal flow graph, Coalgebra, Stochastic process.}
\maketitle

\section{Introduction}\label{section:introduction}

\paragraph{Dataflow languages}
Dataflow (or \emph{stream-based}) programming languages, such as \textsc{Lucid}~\cite{wadge1985lucid,halbwachs1991lustre},
follow a paradigm in which every declaration represents an infinite list of values:
a \emph{stream}~\cite{benveniste93,uustalu05}.
The following program in a \textsc{Lucid}-like
language (\Cref{diagram:dataflowfibonacci}) computes the Fibonacci sequence,
thanks to a $\Fby$ (``followed by'') operator.
\begin{figure}[H]
  \centering $\fib = 0\ \Fby\ (\fib + (1\ \Fby\ \Wait(\fib)))$
\caption{\textit{The Fibonacci sequence is 0 followed
by the Fibonacci sequence plus the Fibonacci sequence preceded by a 1.}}
\label{diagram:dataflowfibonacci}
\end{figure}
The control structure of dataflow programs is inspired by \emph{signal flow graphs} \cite{benveniste93,shannon42,mason53}.
Signal flow graphs are diagrammatic specifications of processes with feedback loops,
widely used in control system engineering.
In a dataflow program, feedback loops represent how the current value of a stream may depend on its previous values.
For instance, the previous program (\Cref{diagram:dataflowfibonacci}) corresponds to the signal flow graph in \Cref{figure:fibonacci}.
\begin{figure}[h!]
  \begin{minipage}{0.3\linewidth}
 \tikzset{every picture/.style={line width=0.85pt}} %
\begin{tikzpicture}[x=0.75pt,y=0.75pt,yscale=-1,xscale=1]
\draw   (50,100) -- (90,100) -- (90,120) -- (50,120) -- cycle ;
\draw   (85,125) -- (105,125) -- (105,145) -- (85,145) -- cycle ;
\draw    (120,95) .. controls (120,135.2) and (116.29,135.8) .. (105,135) ;
\draw    (90,70) .. controls (89.8,55.8) and (120.2,56.6) .. (120,70) ;
\draw    (70,120) .. controls (72,132.09) and (72,134.66) .. (85,135) ;
\draw    (130,62) -- (130,90) ;
\draw [shift={(130,60)}, rotate = 90] [color={rgb, 255:red, 0; green, 0; blue, 0 }  ][line width=0.75]    (10.93,-3.29) .. controls (6.95,-1.4) and (3.31,-0.3) .. (0,0) .. controls (3.31,0.3) and (6.95,1.4) .. (10.93,3.29)   ;
\draw    (105.1,59.1) .. controls (104.9,44.9) and (130.2,46.6) .. (130,60) ;
\draw  [fill={rgb, 255:red, 0; green, 0; blue, 0 }  ,fill opacity=1 ] (102.2,59.1) .. controls (102.2,57.5) and (103.5,56.2) .. (105.1,56.2) .. controls (106.7,56.2) and (108,57.5) .. (108,59.1) .. controls (108,60.7) and (106.7,62) .. (105.1,62) .. controls (103.5,62) and (102.2,60.7) .. (102.2,59.1) -- cycle ;
\draw   (45,70) -- (65,70) -- (65,90) -- (45,90) -- cycle ;
\draw   (55,155) -- (95,155) -- (95,175) -- (55,175) -- cycle ;
\draw   (45,125) -- (65,125) -- (65,145) -- (45,145) -- cycle ;
\draw    (65,195) -- (65,210) ;
\draw   (70,70) -- (110,70) -- (110,90) -- (70,90) -- cycle ;
\draw    (55,90) .. controls (54.57,99.23) and (65.14,93.51) .. (65,100) ;
\draw    (120,70) -- (120,100) ;
\draw    (90,90) .. controls (89.57,99.23) and (80.14,93.51) .. (80,100) ;
\draw    (95,145) .. controls (94.57,154.23) and (85.14,148.51) .. (85,155) ;
\draw    (55,145) .. controls (54.57,154.23) and (65.14,148.51) .. (65,155) ;
\draw    (130,90) .. controls (130.25,158.17) and (105.14,170.09) .. (105,195) ;
\draw    (75,175) -- (75,183.87) ;
\draw    (65,195) .. controls (64.8,180.8) and (85.2,180.8) .. (85,195) ;
\draw  [fill={rgb, 255:red, 0; green, 0; blue, 0 }  ,fill opacity=1 ] (72.1,183.87) .. controls (72.1,182.27) and (73.4,180.97) .. (75,180.97) .. controls (76.6,180.97) and (77.9,182.27) .. (77.9,183.87) .. controls (77.9,185.47) and (76.6,186.77) .. (75,186.77) .. controls (73.4,186.77) and (72.1,185.47) .. (72.1,183.87) -- cycle ;
\draw    (85,195) .. controls (85,204.2) and (105,205) .. (105,195) ;
\draw (55,80) node    {$1$};
\draw (95,135) node    {$+$};
\draw (70,110) node    {$fby$};
\draw (55,135) node    {$0$};
\draw (75,165) node    {$fby$};
\draw (90,80) node    {$wait$};
\end{tikzpicture}
  \end{minipage} \begin{minipage}{0.5\linewidth}
    \begin{gather*} \mathsf{fib} \defn\fbk( \COPY ; \\
      \partial (1 \times \WAIT) \times \im ; \\
      \partial(\fby) \times \im ; \\
      \partial(+) ; \\
      0 \times \im ; \\
      \fby ; \\
      \COPY ) \end{gather*}
  \end{minipage}
  \caption{Fibonacci: signal flow graph and morphism.}\label{figure:fibonacci}\label{figure:finalfibonaccigraph}
\end{figure}
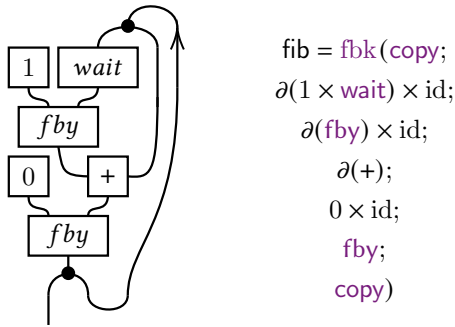

\paragraph{Monoidal categories}
Any theory of processes that \emph{compose sequentially and in parallel}, satisfying reasonable axioms, forms a \emph{monoidal category}.
Examples include functions~\cite{lambek1986a}, probabilistic channels~\cite{cho2019, fritz2020}, partial maps~\cite{cockett02}, database queries \cite{Bonchi18}, linear resource theories~\cite{coeckeFS16} and quantum processes~\cite{abramsky2009categorical}.
Signal flow graphs are the graphical syntax for \emph{\feedbackMonoidalCategories{}}~\cite{katis02,bonchi14,bonchi15,feedbackspans2020,kaye22}: they are the \emph{string diagrams} for any of these theories, extended with \emph{feedback}.

Yet, semantics of dataflow languages have been mostly restricted to theories of pure functions~\cite{benveniste93,uustalu2008comonadic,cousot19,delpeuch19,oliveira84}: what are called \emph{cartesian} monoidal categories.
We claim that this restriction is actually inessential;
dataflow programs may take semantics in non-cartesian monoidal categories, exactly as their signal flow graphs do.

The present work provides this missing semantics:
we construct \emph{monoidal streams} over a symmetric monoidal category, which form a \emph{\feedbackMonoidalCategory{}}.
Monoidal streams model the values of a \emph{monoidal dataflow language}, in the same way that streams model the values of a classical dataflow language.
This opens the door to stochastic, effectful, or quantum dataflow languages.
In particular, we give semantics and string diagrams for a \emph{stochastic dataflow programming language}, where the following code can be run.

\begin{figure}[H]
  \centering  $\walk = 0\ \Fby\ (\uniform{(-1,1)} + \walk)$
  \caption{\textit{A stochastic dataflow program. A random walk is 0 followed by the random walk plus a stochastic stream of steps to the left (-1) or to the right (1), sampled uniformly.}}
\label{diagram:dataflowwalk}
\end{figure}

\begin{figure}[h!]
\begin{minipage}{0.3\linewidth}
\tikzset{every picture/.style={line width=0.85pt}}
\begin{tikzpicture}[x=0.75pt,y=0.75pt,yscale=-1,xscale=1]
\draw   (190,110) -- (230,110) -- (230,130) -- (190,130) -- cycle ;
\draw   (170,140) -- (210,140) -- (210,160) -- (170,160) -- cycle ;
\draw   (160,110) -- (180,110) -- (180,130) -- (160,130) -- cycle ;
\draw    (190,160) -- (190,170) ;
\draw    (220,110) .. controls (220.2,97) and (239.4,97.4) .. (240,110) ;
\draw   (170,80) -- (210,80) -- (210,100) -- (170,100) -- cycle ;
\draw    (170,130) .. controls (169.57,139.23) and (180.14,133.51) .. (180,140) ;
\draw    (210,130) .. controls (209.57,139.23) and (200.14,133.51) .. (200,140) ;
\draw    (180,180) -- (180,195) ;
\draw    (180,180.65) .. controls (179.8,166.45) and (200.2,166.45) .. (200,180.65) ;
\draw  [fill={rgb, 255:red, 0; green, 0; blue, 0 }  ,fill opacity=1 ] (187.1,170) .. controls (187.1,168.4) and (188.39,167.1) .. (190,167.1) .. controls (191.6,167.1) and (192.9,168.4) .. (192.9,170) .. controls (192.9,171.6) and (191.6,172.9) .. (190,172.9) .. controls (188.39,172.9) and (187.1,171.6) .. (187.1,170) -- cycle ;
\draw    (200,180.65) .. controls (200,189.85) and (220,190) .. (220,180) ;
\draw    (190,100) .. controls (189.57,109.23) and (200.14,103.51) .. (200,110) ;
\draw    (220,180) .. controls (220.2,150.25) and (239.99,148.64) .. (240.01,111.71) ;
\draw [shift={(240,110)}, rotate = 89.11] [color={rgb, 255:red, 0; green, 0; blue, 0 }  ][line width=0.75]    (10.93,-3.29) .. controls (6.95,-1.4) and (3.31,-0.3) .. (0,0) .. controls (3.31,0.3) and (6.95,1.4) .. (10.93,3.29)   ;

\draw (210,120) node    {$+$};
\draw (170,120) node    {$0$};
\draw (190,150) node    {$fby$};
\draw (190,90) node    {$unif$};
\end{tikzpicture}
\end{minipage} \begin{minipage}{0.5\linewidth}
  \begin{gather*} \mathsf{walk} \defn
    \fbk( \\
    \partial(\mathsf{unif}) \tensor  \im; \\
    0 \tensor \partial(+) ; \\
    \fby ; \\
    \COPY )
  \end{gather*}
\end{minipage}
\caption{Random walk: signal flow graph and morphism.}
\label{string:walk}
\end{figure}
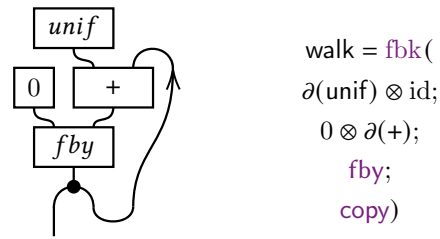

\subsection{Contributions}

Our main novel contribution is the definition of a \feedbackMonoidalCategory{} of \emph{monoidal streams}
over a \symmetricMonoidalCategory{} ($\STREAM$,~\Cref{def:monoidalstream,th:monoidalstreamsfeedback}).
Monoidal streams form a \finalCoalgebra{}; for sufficiently well-behaved monoidal categories (\Cref{def:productive}), we give an explicit construction of this coalgebra (\Cref{def:observationalsequence}).

In \cartesianCategories{}, the \emph{causal functions} of Uustalu and Vene~\cite{uustalu05} (see also \cite{jacobs:causalfunctions})
are a particular case of our monoidal streams (\Cref{th:cartesianstreams,th:nelist}).
In the category of stochastic functions, our construction captures the notion of
\emph{controlled stochastic process}~\cite{fleming1975,ross1996stochastic} (\Cref{th:stochasticprocesses}).

In order to arrive to this definition, we unify the previous literature:
we characterize the cartesian ``intensional stateful sequences'' of Katsumata and Sprunger
with a final coalgebra (\Cref{th:intensionalcoalgebra}), and then ``extensional stateful sequences''
in terms of the ``feedback monoidal categories'' of Katis, Sabadini and Walters~\cite{katis02} (\Cref{th:ext-stateful-sequences}).
We justify observational equivalence with a refined fixpoint equation that employs \emph{coends} (\Cref{theorem:observationalfinalcoalgebra}).
We strictly generalize ``stateful sequences'' from the cartesian to the monoidal case.

Finally, we extend a type theory of \symmetricMonoidalCategories{} with a feedback operator (\Cref{section:typetheory}) and we use it as a stochastic dataflow programming language.

\subsection{Related work}

\paragraph{Coalgebraic streams.}
Uustalu and Vene~\cite{uustalu05} provide elegant \emph{comonadic} semantics for a (cartesian) \Lucid-like programming language.
We shall prove that their exact technique cannot be possibly extended to arbitrary monoidal categories (\Cref{th:nelist}).
However, we recover their semantics as a particular case of our monoidal streams (\Cref{th:cartesianstreams}).

\paragraph{Stateful morphism sequences.}
Sprunger and Katsumata constructed the category of \emph{stateful sequences} in the cartesian case~\cite{katsumata19}.
Our work is based on an unpublished work by Rom{\'a}n~\cite{roman2020} that first exteneded this definition to the symmetric monoidal case,
using \emph{coends} to justify \emph{extensional equality}.
Shortly after, Carette, de Visme and Perdrix~\cite{carette21} rederived this construction and applied it to the case of completely positive maps between Hilbert spaces, using (a priori) a slightly different notion of equality.
We synthetise some of this previous work, we justify it for the first time using coalgebra and we particularize it to some cases of interest.

\paragraph{Feedback.}
Feedback monoidal categories are a weakening of \emph{traced monoidal categories}.
The construction of the free such categories is originally due to Katis, Sabadini and Walters~\cite{katis02}.
Feedback monoidal categories and their free construction have been repurposed and rediscovered multiple times in the literature~\cite{sabadini95,hoshino14,bonchi19,gay03}.
Di Lavore et al.~\cite{feedbackspans2020} summarize these uses and introduce \emph{delayed feedback}.

\paragraph{General and dependent streams.}
Our work concerns \emph{synchronous} streams: those where, at each point in time $t = 0,1,\dots$,
the stream process takes exactly one input and produces exactly one output.
This condition is important in certain contexts like, for instance, real-time embedded systems; %
but it is not always present.
The study of asynchronous stream transformers and their universal properties is considerably different~\cite{abadi15},
and we refer the reader to the recent work of Garner~\cite{garner2021stream} for a discussion on \emph{non-synchronous} streams.
Finally, when we are concerned with \emph{dependent streams} indexed by time steps,
a possible approach, when our base category is a topos, is to use the \emph{topos of trees}~\cite{birkedal11}.

\paragraph{Categorical dataflow programming.}
Category theory is a common tool of choice for dataflow programming~\cite{rutten00,gay03,mamouras20}.
In particular, profunctors and coends are used by Hildebrandt, Panangaden and Winskel~\cite{hildebrandt1998relational}
to generalise a model of non-deterministic dataflow,
which has been the main focus~\cite{panangaden1988computations,lynch1989proof,lee09} outside cartesian categories.

\subsection{Synopsis}

This manuscript contains three main definitions in terms of universal properties (\emph{intensional, extensional and observational streams}, \Cref{def:intensionalmonoidalstream,def:extensionalmonoidalstreams,def:observationalmonoidalstream}); and three explicit constructions for them (\emph{intensional, extensional and observational sequences}, \Cref{def:intensionalsequence,def:extensional-sequence,def:observationalsequence}).
Each definition is refined into the next one: each construction is a quotienting of the previous one.

\Cref{section-prelude-coalgebra,sec:dinaturality} contain expository material on coalgebra and dinaturality.
\Cref{sec:int-stateful-sequences} presents intensional monoidal streams.
\Cref{section:extensional} introduces extensional monoidal streams in terms of feedback monoidal categories.
\Cref{section:observational} introduces the definitive \emph{observational} equivalence and defines \emph{monoidal streams}.
\Cref{sec:monoidal-streams} constructs the feedback monoidal category of monoidal streams.
\Cref{section:classicalstreams,section:stochastic-streams} present two examples: cartesian and stochastic streams.
\Cref{section:typetheory} introduces a type theory for feedback monoidal categories.
\subsection{Prelude: Coalgebra}
\label{section-prelude-coalgebra}

In this preparatory section, we introduce some background material on coalgebra~\cite{rutten00,jacobs2005coalgebras,adamek2005introduction}.
Coalgebra is the category-theoretic study of stateful systems and infinite data-structures, such as streams.
These structures arise as \emph{final coalgebras}: universal solutions to certain functor equations.

Let us fix an endofunctor $F \colon \catC \to \catC$ through the section.

\begin{definition}
  A \emph{coalgebra} $(Y,\beta)$ is an object $Y \in \catC$, together with a
  morphism $\beta \colon Y \to FY$. A \emph{coalgebra
  morphism} $g \colon (Y,\beta) \to (Y',\beta')$ is a morphism
  $g \colon Y \to Y'$ such that $g ; \beta' = \beta ; Fg$.
\end{definition}

Coalgebras for an endofunctor form a category with coalgebra morphisms between them.
A \emph{final coalgebra} is a final object in this category.
As such, final coalgebras are unique up to isomorphism when they exist.

\begin{definition}\defining{linkfinalcoalgebra}{}
A \emph{final coalgebra} is a coalgebra $(Z,\gamma)$ such that for any other coalgebra $(Y,\beta)$ there exists a unique coalgebra morphism $g \colon (Y,\beta) \to (Z,\gamma)$.
\end{definition}

Our interest in final coalgebras derives from the fact that they are canonical fixpoints of an endofunctor.
Specifically, Lambek's theorem (\Cref{th:lambektheorem}) states that whenever the final coalgebra exists, it is a fixpoint.

\begin{definition}
  A \emph{fixpoint} is a coalgebra $(Y,\beta)$ such that $\beta \colon Y \to FY$ is an
  isomorphism. A \emph{fixpoint morphism} is a coalgebra morphism between fixpoints:
  fixpoints and fixpoint morphisms form a full subcategory of the category of coalgebras.
  A \emph{final fixpoint} is a final object in this category.
\end{definition}

\begin{theorem}[Lambek, \cite{lambek68}]\label{th:lambektheorem}
  Final coalgebras are fixpoints.
  As a consequence, when they exist, they are final fixpoints.
\end{theorem}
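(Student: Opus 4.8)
The plan is a standard diagram chase that exploits the universal property of the final coalgebra twice. Let $(Z,\gamma)$ be the final coalgebra, so that $\gamma \colon Z \to FZ$; the goal is to show that $\gamma$ is an isomorphism. First I would apply the functor $F$ to $\gamma$ to obtain a new coalgebra $(FZ, F\gamma)$, where $F\gamma \colon FZ \to FFZ$. By finality of $(Z,\gamma)$, there is a unique coalgebra morphism $h \colon (FZ, F\gamma) \to (Z,\gamma)$; unwinding the definition, this $h$ satisfies $h ; \gamma = F\gamma ; Fh$. This $h$ is the candidate inverse for $\gamma$.

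The key observation is that $\gamma$ is itself a coalgebra morphism $\gamma \colon (Z,\gamma) \to (FZ, F\gamma)$, since its defining equation $\gamma ; F\gamma = \gamma ; F\gamma$ holds trivially. I would then establish the two inverse laws separately. For one composite, $\gamma ; h$ is a coalgebra endomorphism of the final coalgebra $(Z,\gamma)$; since $\mathrm{id}_Z$ is also such an endomorphism, the uniqueness clause of finality forces $\gamma ; h = \mathrm{id}_Z$. For the other composite, I would use the morphism equation for $h$ together with functoriality, computing $h ; \gamma = F\gamma ; Fh = F(\gamma ; h) = F(\mathrm{id}_Z) = \mathrm{id}_{FZ}$. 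Together these two identities show that $\gamma$ is an isomorphism with inverse $h$, so $(Z,\gamma)$ is a fixpoint.

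For the second assertion, I would note that fixpoints form a full subcategory of the category of coalgebras. Since $(Z,\gamma)$ is final in the whole category of coalgebras and now lies in this subcategory, for every fixpoint $(Y,\beta)$ the unique coalgebra morphism into $(Z,\gamma)$ is automatically the unique fixpoint morphism: fullness guarantees there are no additional morphisms in the subcategory to account for. Hence $(Z,\gamma)$ is a final fixpoint.

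I expect no serious obstacle here, as the argument is a short two-step diagram chase. The only nonobvious move is recognizing that $\gamma$ is a coalgebra morphism into $(FZ, F\gamma)$; this is precisely what allows the single universal property of finality to do all the work, and care with the diagrammatic composition order is the one place where a sign-of-direction slip could creep in.
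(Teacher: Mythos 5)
Your proof is correct and is exactly the classical Lambek argument: the paper states this theorem with a citation to Lambek's original article and gives no proof of its own, and your two-step diagram chase (finality applied to $(FZ,F\gamma)$, then uniqueness of endomorphisms of the final coalgebra, then functoriality) is the standard one. The deduction of the second assertion from fullness of the subcategory of fixpoints is also correct.
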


The last question before continuing is how to explicitly construct a final coalgebra.
This is answered by Adamek's theorem (\Cref{th:adamek}).
The reader may be familiar with Kleene's theorem for constructing fixpoints~\cite{stoltenberg}:
the least fixpoint of a monotone function $f \colon X \to X$ in a directed-complete partial order $(X,\leq)$ is
the supremum of the chain $\bot \leq f(\bot) \leq f(f(\bot)) \leq \dots$,
where $\bot$ is the least element of the partial order, whenever this supremum is preserved by $f$.
This same result can be categorified into a fixpoint theorem for constructing final coalgebras:
the directed-complete poset becomes a category with $\omega$-chain limits;
the monotone function becomes an endofunctor;
and the least element becomes the final object.

\begin{theorem}[Adamek, \cite{adamek74}]\label{th:adamek}
  Let $\catD$ be a category with a final object $1$ and $\omega$-shaped
  limits. Let $F \colon \catD \to \catD$ be an endofunctor. We write $L \defn \lim\nolimits_{n} F^{n}1$ for
  the limit of the following $\omega$-chain, which is called the \emph{terminal sequence} of $F$.
  \[1 \overset{!}\longleftarrow F1 \overset{F!}\longleftarrow FF1 \overset{FF!}\longleftarrow FFF1 \overset{FFF!}\longleftarrow \dots \]
  Assume that $F$ preserves this limit, meaning that the canonical morphism $FL \to L$ is an isomorphism.
  Then, $L$ is the final $F$-coalgebra.
\end{theorem}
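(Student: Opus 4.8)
The plan is to follow Adamek's original argument: exhibit the coalgebra structure on $L$ coming from the hypothesis, and then prove finality by an inductive diagram chase against the projections of the limit cone. Throughout, write $\pi_n \colon L \to F^n 1$ for the legs of the limit cone defining $L$, so that $\pi_{n+1} ; F^n(!) = \pi_n$.

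First I would make the coalgebra structure explicit. Applying $F$ to the terminal sequence yields the very same sequence shifted by one step, and the maps $F\pi_n \colon FL \to F^{n+1}1$ assemble into a cone over it. Dropping the leading terminal object $1$ does not change the limit of an $\omega$-chain, since the missing leg to $1$ is forced by terminality; hence $L$, equipped with $(\pi_{n+1})_n$, is itself a limit of the shifted sequence. The assumption that $F$ preserves the limit says precisely that $(FL, (F\pi_n)_n)$ is also a limit cone for that shifted sequence, so comparing the two limit cones produces the canonical isomorphism $\psi \colon FL \to L$, characterised by $\psi ; \pi_{n+1} = F\pi_n$. I then set $\gamma \defn \psi^{-1} \colon L \to FL$, making $(L,\gamma)$ a fixpoint, and in particular a coalgebra.

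Next, given an arbitrary coalgebra $(Y,\beta)$, I would build a cone from $Y$ to the terminal sequence by iterating $\beta$. Define $\beta_0 \colon Y \to 1$ to be the unique map, and $\beta_{n+1} \defn \beta ; F(\beta_n) \colon Y \to F^{n+1}1$. A short induction using functoriality and terminality shows $\beta_{n+1} ; F^n(!) = \beta_n$, so $(\beta_n)_n$ is a cone and the universal property of $L$ yields a unique $g \colon Y \to L$ with $g ; \pi_n = \beta_n$. To see $g$ is a coalgebra morphism, i.e. $g ; \gamma = \beta ; Fg$, it is cleanest to verify the equivalent equation $g = \beta ; Fg ; \psi$ by postcomposing with each $\pi_n$: the component at $1$ holds by terminality, and for $n \geq 1$ the identity $\psi ; \pi_n = F\pi_{n-1}$ reduces the claim to $\beta ; F(g ; \pi_{n-1}) = \beta_n$, which is exactly the recursive definition of $\beta_n$.

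Finally, uniqueness follows the same pattern. If $h \colon (Y,\beta) \to (L,\gamma)$ is any coalgebra morphism then $h = \beta ; Fh ; \psi$, and an induction on $n$ — base case by terminality, inductive step again using $\psi ; \pi_n = F\pi_{n-1}$ — gives $h ; \pi_n = \beta_n$ for all $n$. Hence $h ; \pi_n = g ; \pi_n$ for every $n$, so $h = g$ by the universal property of the limit. The one genuinely delicate step is the first: correctly identifying the canonical morphism $FL \to L$ as the comparison of the two limit cones over the shifted sequence, and thereby recognising that ``$F$ preserves the limit'' is exactly what makes $\gamma$ an isomorphism. Everything afterwards is a routine chase through the limit projections.
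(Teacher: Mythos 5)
The paper does not prove this theorem: it is imported as a classical result, cited to Adamek, and used as a black box (e.g.\ in the proofs of Theorems~\ref{th:intensionalcoalgebra} and~\ref{th:cartesianstreams}). Your argument is the standard proof of that classical result and is correct in all its steps: the identification of the canonical map $FL \to L$ as the comparison of two limit cones over the shifted chain (with the leading terminal object harmlessly dropped), the construction of the cone $\beta_n$ by iterating the coalgebra structure, the verification that the induced $g$ is a coalgebra morphism, and the uniqueness induction all check out; so there is nothing to compare against on the paper's side, and nothing to correct on yours.
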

\section{Intensional Monoidal Streams}
\label{sec:int-stateful-sequences}

This section introduces a preliminary definition of \emph{monoidal stream} in terms of a fixpoint equation (in \Cref{eq:intensionalstreamshort}).
In later sections, we refine both this definition and its characterization into the definitive notion of \emph{monoidal stream}.

Let $(\catC,\tensor,I)$ be a fixed \symmetricMonoidalCategory{}.

\subsection{The fixpoint equation}\label{section:fixpoint}

Classically, type-variant streams have a neat coinductive definition~\cite{jacobs2005coalgebras,rutten00} that says:
\begin{quoting}\defining{linkstreamfun}{}
\emph{``A stream of type \(\stream{A} = \streamExpr{A}\) is an
  element of \(A_{0}\) together with a stream of type
  \(\tail{\stream{A}} = (A_{1}, A_{2}, \ldots)\)''}.
\end{quoting}
Formally, streams are the final fixpoint of the equation
\[\streamFun(A_{0},A_{1},\ldots) \cong A_{0} \times \streamFun(A_{1},A_{2},\ldots);\]
and this fixpoint is computed to be \(\streamFun(\stream{A}) = \prod_{n \in \naturals}^{\infty} A_{n}\).

In the same vein, we want to introduce not only streams but \emph{stream processes}
over a fixed theory of processes.
\begin{quoting}
\emph{``A stream process from
  \(\stream{X} = \streamExpr{X}\) to \(\stream{Y} = \streamExpr{Y}\)
  is a process from \(X_{0}\) to \(Y_{0}\) communicating along a channel $M$
  with a stream process from \(\tail{\stream{X}} = (X_{1}, X_{2}, \ldots)\) to
  \(\tail{\stream{Y}} = (Y_{1}, Y_{2}, \ldots)\).''}
\end{quoting}
Streams are recovered as stream processes on an empty input,
so we take this more general slogan as our preliminary definition of \emph{monoidal stream} (in~\Cref{def:intensionalmonoidalstream}).
Formally, they are the final fixpoint of the equation in \Cref{eq:intensionalstreamshort}.
\begin{figure}[!h]
  \centering \defining{linkistream}{}
  $\displaystyle \fun{T}(\stream{X},\stream{Y}) \cong \sum_{M \in \catC}
    \hom{}(X_{0}, M \tensor Y_{0}) \times \fun{T}(\act{M}{\tail{\stream{X}}}, \tail{\stream{Y}}).$
  \caption{Fixpoint equation for intensional streams.}
\label{eq:intensionalstreamshort}
\end{figure}
\begin{remark}[Notation]
  Let  $\stream{X} \in \NcatC$ be a sequence of objects $\streamExpr{X}$.
  We write $\defining{linktail}{\tail{\stream{X}}}$ for its \emph{tail} $(X_{1},X_{2},\dots)$.
  Given $M \in \catC$, we write $\act{M}{\stream{X}}$ for the sequence $(M \tensor X_{0},X_{1},X_{2},\dots)$;
  As a consequence, we write $\act{M}{\tail{\stream{X}}}$ for $(M \otimes X_{1},X_{2},X_{3},\dots)$.

\end{remark}

\begin{definition}\label{def:intensionalmonoidalstream}\defining{linkintensionalmonoidalstream}{}
  The set of \emph{intensional monoidal streams} $\fun{T} \colon \NcatC^{op} \times \NcatC \to \Set$, depending on inputs and outputs, %
  is the final fixpoint of the equation in \Cref{eq:intensionalstreamshort}.
\end{definition}

\begin{remark}[Initial fixpoint]
  There exists an obvious fixpoint for the equation in
  \Cref{eq:intensionalstreamshort}: the constant empty set.
  This solution is the \emph{initial} fixpoint, a minimal solution.
  The \emph{final} fixpoint will be realized by the set of \emph{intensional sequences}.
\end{remark}

\subsection{Intensional sequences}
We now construct the set of \intensionalStreams{} explicitly (\Cref{th:intensionalcoalgebra}).
For this purpose, we generalize the ``stateful morphism sequences'' of Katsumata and Sprunger~\cite{katsumata19} from cartesian to arbitrary \symmetricMonoidalCategories{} (\Cref{def:intensionalsequence}). We derive a novel characterization of these ``sequences'' as the desired final fixpoint (\Cref{th:intensionalcoalgebra}).

In the work of Katsumata and Sprunger, a stateful sequence is a sequence of morphisms $f_{n} \colon M_{n-1} \times X_{n} \to M_{n} \times Y_{n}$ in a \cartesianMonoidalCategory{}.
These morphisms represent a process at each point in time $n=0,1,2,\dots$.
At each step \(n\), the process takes an input \(X_{n}\) and,
together with the stored memory \(M_{n-1}\), produces some output \(Y_{n}\) and
writes to a new memory \(M_{n}\). The memory is initially empty, with
\(M_{-1} \defn \mathbf{1}\) being the final object by convention.
We extend this definition to any \symmetricMonoidalCategory{}.
\begin{definition}\label{def:intensionalsequence}\defining{linkintensionalstatefulsequence}{}
  Let $\stream{X}$ and $\stream{Y}$ be two sequences of objects
  representing inputs and outputs, respectively. An
  \defining{linkstreamtransducer}{\emph{intensional sequence}} is a
  sequence of objects $\streamExpr{M}$ together with a sequence of morphisms
  \[\intseq{f_{n} \colon M_{n-1} \tensor X_{n} \to M_{n} \tensor Y_{n}},\]
  where, by convention, $M_{-1} \defn I$ is the unit of the monoidal category.
  In other words, the set of intensional sequences is
  \[\iSeq(\stream{X},\stream{Y}) \coloneqq
    \sum_{M \in [\naturals,\catC]} \prod_{n = 0}^{\infty} \hom{}(M_{n-1} \tensor X_{n} , M_{n} \tensor Y_{n}).\]
\end{definition}

We now prove that \intensionalSequences{} are the final fixpoint of the equation in \Cref{eq:intensionalstreamshort}.
The following~\Cref{th:intensionalcoalgebra} serves two purposes:
it gives an explicit final solution to this fixpoint equation and
it gives a novel universal property to \intensionalSequences{}.

\begin{theorem}\label{th:intensionalcoalgebra}\label{corollary:intensionalstreams}
\IntensionalSequences{} are the explicit construction of \intensionalStreams{}, $\iStream{} \cong \iSeq{}$.
  In other words, they are a fixpoint of the equation in \Cref{eq:intensionalstreamshort},
and they are the final such one.
\end{theorem}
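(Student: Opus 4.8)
The plan is to identify the right endofunctor and then invoke Adamek's theorem (\Cref{th:adamek}), whose hypotheses were set up in the prelude precisely for this situation. Let $F$ be the endofunctor on the functor category $[\NcatC^{op}\times\NcatC,\Set]$ sending $\fun{T}$ to the right-hand side of \Cref{eq:intensionalstreamshort}, namely $F\fun{T}(\stream{X},\stream{Y}) = \sum_{M\in\catC}\hom{}(X_0, M\tensor Y_0)\times\fun{T}(\act{M}{\tail{\stream{X}}},\tail{\stream{Y}})$. First I would check that $F$ is a well-defined endofunctor: functoriality of $F\fun{T}$ in $(\stream{X},\stream{Y})$ follows from functoriality of $\hom{}$ and of the action $\act{M}{(-)}$, and the category $[\NcatC^{op}\times\NcatC,\Set]$ has a terminal object (the constant singleton $1$) and $\omega$-shaped limits, both computed pointwise in $\Set$. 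Here I assume the usual size conventions (a universe, or a small set of objects of $\catC$) so that the coproduct over $M\in\catC$ stays in $\Set$.

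Next I would compute the terminal sequence $1\leftarrow F1\leftarrow FF1\leftarrow\dots$ explicitly and identify its limit with $\iSeq$. By induction, $F^{n}1(\stream{X},\stream{Y})$ is the set of length-$n$ intensional sequences: a choice of memories $M_0,\dots,M_{n-1}$ together with morphisms $f_k\colon M_{k-1}\tensor X_k\to M_k\tensor Y_k$ for $0\le k<n$, with $M_{-1}\defn I$. The base case is $F^{0}1=1$, and the inductive step unfolds the sum using that the first object of $\act{M_0}{\tail{\stream{X}}}$ is $M_0\tensor X_1$ and that of $\tail{\stream{Y}}$ is $Y_1$. The connecting map $F^{n+1}1\to F^{n}1$, being $F^{n}(!)$, is the truncation discarding the last memory $M_n$ and morphism $f_n$. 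A compatible family over this $\omega$-chain is then exactly a coherent infinite choice of $(M_n)_n$ and $(f_n)_n$, so $\lim_n F^{n}1(\stream{X},\stream{Y})\cong\sum_{M\in[\naturals,\catC]}\prod_{n=0}^{\infty}\hom{}(M_{n-1}\tensor X_n, M_n\tensor Y_n)$, which is the set $\iSeq(\stream{X},\stream{Y})$ of \Cref{def:intensionalsequence}.

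The one genuinely delicate step is the hypothesis of \Cref{th:adamek}: that $F$ preserves this limit, i.e. the canonical map $FL\to L$ is an isomorphism. The danger is that $F$ contains the coproduct $\sum_{M\in\catC}$, which does not commute with $\omega$-limits in $\Set$ in general. The observation that rescues the argument is that the truncation maps are \emph{summand-preserving}: under the identification $F^{n+1}1(\stream{X},\stream{Y})=\sum_{M_0}\hom{}(X_0, M_0\tensor Y_0)\times F^{n}1(\act{M_0}{\tail{\stream{X}}},\tail{\stream{Y}})$, the connecting map keeps $M_0$ fixed and truncates only inside the second factor. Hence every compatible family in $\lim_n\sum_{M}(\dots)$ is forced to lie in a single $M_0$-summand for all $n$, so the coproduct can be pulled out of the limit; together with the fact that the constant set $\hom{}(X_0,M\tensor Y_0)$ and binary products commute with limits, this yields $FL\cong\lim_n F^{n+1}1 = L$. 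Then \Cref{th:adamek} gives that $L$ is the final $F$-coalgebra, hence by \Cref{th:lambektheorem} the final fixpoint of \Cref{eq:intensionalstreamshort}, which is $\iStream{}$ by \Cref{def:intensionalmonoidalstream}; since $L\cong\iSeq{}$, this establishes $\iStream{}\cong\iSeq{}$. Unwinding the isomorphism $FL\cong L$ is exactly the ``peel off the first morphism'' bijection that exhibits $\iSeq{}$ directly as a fixpoint, which I would record to match the statement verbatim. The main obstacle throughout is this summand-preservation bookkeeping, together with keeping straight the re-indexing of memories (the shift $M_n\mapsto M_{n-1}$ hidden in the action $\act{M}{(-)}$).
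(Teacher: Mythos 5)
Your proposal is correct and follows essentially the same route as the paper: identify the terminal sequence of the endofunctor defined by the right-hand side of \Cref{eq:intensionalstreamshort}, observe that its limit is $\iSeq$ because the coproduct over memories commutes with the connected $\omega$-limit (your ``summand-preservation'' argument is precisely the paper's appeal to coproducts commuting with connected limits), verify the fixpoint condition by distributing the product over the coproduct, and conclude by Adamek. If anything, you are slightly more explicit than the paper's sketch in checking that the \emph{canonical} map $FL\to L$ is the isomorphism, which is what Adamek's theorem actually requires.
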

\begin{proof}[Proof sketch]
  It is known that categories of functors over sets, such as $[\NcatC^{op}\times\NcatC,\Set]$,
have all limits.
  Adamek's theorem (\Cref{th:adamek}) states that, if the following limit is a fixpoint, it is indeed the final one.
  \begin{equation}\label{eq:limit-intensional}
    \lim_{n \in \naturals} \sum_{M_{0},\dots,M_{n}} \prod^{n}_{t=0} \hom{}(M_{t-1} \tensor X_{t}, M_{t} \tensor Y_{t})
  \end{equation}
  \hyperlink{linkconnectedlimits}{Connected limits commute with coproducts} and the limit of the nth-product is the infinite product. Thus,~\Cref{eq:limit-intensional} is isomorphic to
  $\iSeq(\stream{X},\stream{Y})$. It only remains to show that $\iSeq(\stream{X},\stream{Y})$ is
  a fixpoint, which means it should be isomorphic to the following expression.
  \begin{equation}\label{eq:fixpoint-expr-intensional}
    \sum_{M_{0}} \hom{}(X_{0},M_{0} \tensor Y_{0}) \times \sum_{M \in \NcatC} \prod_{n = 1}^{\infty} \hom{}(M_{n-1} \tensor X_{n}, M_{n} \tensor Y_{n}).
  \end{equation}
  Cartesian products distribute over coproducts, so~\Cref{eq:fixpoint-expr-intensional} is again isomorphic to $\iSeq(\stream{X},\stream{Y})$.
\end{proof}
\subsection{Interlude: Dinaturality}
\label{sec:dinaturality}

During the rest of this text, we deal with two different definitions of what it means for two processes to be equal: \emph{extensional} and \emph{observational equivalence}, apart from pure \emph{intensional equality}.
Fortunately, when working with functors of the form $P \colon \catC^{op} \times \catC \to \Set$, the so-called \emph{endoprofunctors}, we already have a canonical notion of equivalence.

Endoprofunctors $P \colon \catC^{op} \times \catC \to \Set$ can be thought as indexing families
of processes $P(M,N)$ by the types of an input channel $M$ and an output channel $N$. A process
$p \in P(M,N)$ writes to a channel of type $N$ and then reads from a channel of type $M$.

Now, assume we also have a transformation $r \colon N \to M$ translating from output to input types.
Then, we can \emph{plug the output to the input}:
the process $p$ writes with type $N$, then $r$ translates from $N$ to $M$, and then $p$ uses this same output as its input $M$.
This composite process can be given two sligthly different descriptions; the process could
\begin{itemize}
  \item translate \emph{after writing}, $P(M,r)(p) \in P(M,M)$, or
  \item translate \emph{before reading}, $P(r,N)(p) \in P(N,N)$.
\end{itemize}
These two processes have different types.
However, with the output plugged to the input, it does not really matter when to apply the translation.
These two descriptions represent the same process: they are \emph{dinaturally equivalent}.

\begin{definition}[Dinatural equivalence] \label{def:dinaturality}\defining{linkdinaturality}
  For any functor $P \colon \catC^{op} \times \catC \to \Set$, consider the set
  \[S_{P} \defn \sum_{M \in \catC} P(M,M).\]
  \emph{Dinatural equivalence}, $(\sim)$, on the set $S_{P}$
  is the smallest equivalence relation satisfying
  $P(M,r)(p) \sim P(r,N)(p)$ for each $p \in P(M,N)$ and each $r \in \hom{}(N,M)$.
\end{definition}

Coproducts quotiented by \dinaturalEquivalence{} construct a particular form of colimit called a \emph{coend}.
Under the process interpretation of profunctors, taking a coend means \emph{plugging an output to an input}
of the same type.

\begin{definition}[Coend, \cite{maclane78,loregian2021}]
  Let $P \colon \catC^{op} \times \catC \to \Set$ be a functor.
  Its \emph{coend} is the coproduct of $P(M,M)$ indexed by $M \in \catC$, quotiented by \dinaturalEquivalence{}.
  \[\coend{M\in\catC}P(M,M) \defn \left(\sum_{M \in \catC} P(M,M) \bigg/ \sim \right).\]
  That is, the coend is the colimit of the diagram with a \emph{cospan} $P(M,M) \gets P(M,N) \to P(N,N)$
  for each $f \colon N \to M$.
\end{definition}

\subsection{Towards extensional memory channels}

Let us go back to intensional monoidal streams.
Consider a family of processes $f_{n} \colon M_{n-1} \tensor X_{n} \to Y_{n} \tensor N_{n}$ reading from memories of type $M_{n}$ but writing to memories of type $N_{n}$.
Assume we also have processes $r_{n} \colon N_{n} \to M_{n}$ translating from output to input memory.
Then, we can consider the process that does \(f_{n}\), translates from memory \(N_{n}\) to memory \(M_{n}\) and then does \(f_{n+1}\).
This process is described by two different \intensionalSequences{},
\begin{itemize}
  \item $\intseq{f_{n} ; (r_{n} \otimes \im) \colon M_{n-1} \tensor X_{n} \to M_{n} \tensor Y_{n}}$, and
  \item $\intseq{(r_{n-1} \otimes \im) ; f_{n} \colon N_{n-1} \tensor X_{n} \to N_{n} \tensor Y_{n}}$.
\end{itemize}
These two \intensionalSequences{} have different types for the memory channels.
However, in some sense, they represent \emph{the same process description}.
If we do not care about what exactly it is that we save to memory, we should consider two such processes to be equal (as in \Cref{diagram:dataflowwalk2}, where ``the same process'' can keep two different values in memory).
Indeed, dinaturality in the memory channels $M_{n}$ is the smallest equivalence relation $(\sim)$ satisfying
\[(f_{n} ; (r_{n} \otimes \im))_{n \in \naturals} \sim ((r_{n-1} \otimes \im); f_{n})_{n \in \naturals}.\]
This is precisely the quotienting that we
perform in order to define \emph{extensional sequences}.

\begin{definition}\label{def:extensionalequality}\defining{linkextensionalequality}{}\label{def:extensional-sequence}
\emph{Extensional equivalence} of \intensionalSequences{}, $(\sim)$, is dinatural equivalence in the memory channels $M_{n}$.
An \defining{linkextensionalstatefulsequence}{\emph{extensional sequence}} from \(\stream{X}\) to \(\stream{Y}\) is an equivalence class
  \[\extseq{f_{n} \colon M_{n-1} \tensor X \to M_{n} \tensor Y}\]
of \intensionalSequences{} under extensional equivalence.

In other words, the set of extensional sequences is the set of intensional sequences substituting the coproduct by a coend,
  \[\eSeq(\stream{X},\stream{Y}) \defn \coend{M \in [\naturals,\catC]} \prod^{\infty}_{i=0} \hom{}(X_{i} \tensor M_{i-1}, Y_{i} \tensor M_{i}).\]
\end{definition}

\begin{figure}[h!]
\tikzset{every picture/.style={line width=0.85pt}} %
\begin{tikzpicture}[x=0.75pt,y=0.75pt,yscale=-1,xscale=1]
\draw   (190,110) -- (230,110) -- (230,130) -- (190,130) -- cycle ;
\draw   (170,140) -- (210,140) -- (210,160) -- (170,160) -- cycle ;
\draw   (160,110) -- (180,110) -- (180,130) -- (160,130) -- cycle ;
\draw    (190,160) -- (190,170) ;
\draw    (220,110) .. controls (220.2,97) and (239.4,97.4) .. (240,110) ;
\draw   (170,80) -- (210,80) -- (210,100) -- (170,100) -- cycle ;
\draw    (170,130) .. controls (169.57,139.23) and (180.14,133.51) .. (180,140) ;
\draw    (210,130) .. controls (209.57,139.23) and (200.14,133.51) .. (200,140) ;
\draw    (180,180) -- (180,195) ;
\draw    (180,180.65) .. controls (179.8,166.45) and (200.2,166.45) .. (200,180.65) ;
\draw  [fill={rgb, 255:red, 0; green, 0; blue, 0 }  ,fill opacity=1 ] (187.1,170) .. controls (187.1,168.4) and (188.39,167.1) .. (190,167.1) .. controls (191.6,167.1) and (192.9,168.4) .. (192.9,170) .. controls (192.9,171.6) and (191.6,172.9) .. (190,172.9) .. controls (188.39,172.9) and (187.1,171.6) .. (187.1,170) -- cycle ;
\draw    (200,180.65) .. controls (200,189.85) and (220,190) .. (220,180) ;
\draw    (190,100) .. controls (189.57,109.23) and (200.14,103.51) .. (200,110) ;
\draw    (220,180) .. controls (220.2,150.25) and (239.99,148.64) .. (240.01,111.71) ;
\draw [shift={(240,110)}, rotate = 89.11] [color={rgb, 255:red, 0; green, 0; blue, 0 }  ][line width=0.75]    (10.93,-3.29) .. controls (6.95,-1.4) and (3.31,-0.3) .. (0,0) .. controls (3.31,0.3) and (6.95,1.4) .. (10.93,3.29)   ;
\draw   (299.99,165) -- (339.99,165) -- (339.99,185) -- (299.99,185) -- cycle ;
\draw   (319.97,110) -- (359.97,110) -- (359.97,130) -- (319.97,130) -- cycle ;
\draw   (309.97,80) -- (329.97,80) -- (329.97,100) -- (309.97,100) -- cycle ;
\draw   (279.99,135.56) -- (319.99,135.56) -- (319.99,155.56) -- (279.99,155.56) -- cycle ;
\draw    (319.97,100) .. controls (319.55,109.23) and (330.12,103.51) .. (329.97,110) ;
\draw    (329.97,150) -- (329.99,165) ;
\draw    (319.99,185.65) .. controls (319.99,195.63) and (350.32,196.97) .. (349.99,185) ;
\draw    (349.99,185) .. controls (350.18,155.25) and (369.98,153.64) .. (370,116.71) ;
\draw [shift={(369.99,115)}, rotate = 89.11] [color={rgb, 255:red, 0; green, 0; blue, 0 }  ][line width=0.75]    (10.93,-3.29) .. controls (6.95,-1.4) and (3.31,-0.3) .. (0,0) .. controls (3.31,0.3) and (6.95,1.4) .. (10.93,3.29)   ;
\draw    (369.99,110) -- (369.99,115) ;
\draw    (339.98,130) -- (339.98,140) ;
\draw    (329.99,150.65) .. controls (329.79,136.45) and (350.19,136.45) .. (349.99,150.65) ;
\draw  [fill={rgb, 255:red, 0; green, 0; blue, 0 }  ,fill opacity=1 ] (337.08,140) .. controls (337.08,138.4) and (338.38,137.1) .. (339.98,137.1) .. controls (341.58,137.1) and (342.88,138.4) .. (342.88,140) .. controls (342.88,141.6) and (341.58,142.9) .. (339.98,142.9) .. controls (338.38,142.9) and (337.08,141.6) .. (337.08,140) -- cycle ;
\draw    (299.99,155) .. controls (299.56,164.23) and (310.13,158.51) .. (309.99,165) ;
\draw    (349.99,110) .. controls (350.19,97) and (369.39,97.4) .. (369.99,110) ;
\draw    (349.99,150) .. controls (349.99,171.97) and (360.32,171.63) .. (359.99,195) ;
\draw (210,120) node    {$+$};
\draw (170,120) node    {$0$};
\draw (190,150) node    {$fby$};
\draw (190,90) node    {$unif$};
\draw (319.98,175) node    {$+$};
\draw (319.97,90) node    {$0$};
\draw (339.97,120) node    {$fby$};
\draw (299.99,145.56) node    {$unif$};
\draw (251,123.4) node [anchor=north west][inner sep=0.75pt]    {$\ \sim $};
\end{tikzpicture}
\caption{Extensionally equivalent walks keeping different memories: the \emph{current} position vs. the \emph{next} position.}
\label{diagram:dataflowwalk2}
\end{figure}
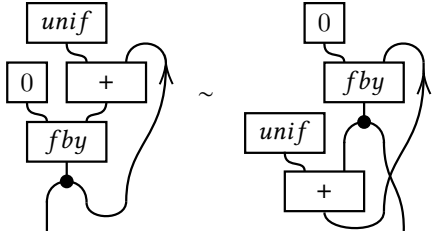

\section{Extensional Monoidal Streams}
\label{section:extensional}

In this section, we introduce \emph{extensional monoidal streams} in terms of a universal property:
\emph{extensional streams are the morphisms of the free delayed-feedback monoidal category} (\Cref{th:extensionalfreefeedback}).

Feedback monoidal categories come from the work of Katis, Sabadini and Walters~\cite{katis02}.
They are instrumental to our goal of describing and composing signal flow graphs: they axiomatize a graphical calculus that extends the well-known string diagrams for monoidal categories with \emph{feedback loops}~\cite{katis02,feedbackspans2020}.
Constructing the \emph{free feedback monoidal category}~(\Cref{def:stconstruction}) will lead to the main result of this section:
extensional sequences are the explicit construction of extensional streams (\Cref{th:extensionalfreefeedback}).

We finish the section by exploring how extensional equivalence may not be enough to capture true observational equality of processes (\Cref{example:observe}).

\subsection{Feedback monoidal categories}

\emph{Feedback monoidal categories} are \symmetricMonoidalCategories{} with a ``feedback'' operation that connects outputs to inputs.
They have a natural axiomatization (\Cref{def:feedback}) that has been rediscovered independently multiple times, with only slight variations~\cite{bloom93,katis02,katis99,bonchi19}.
It is weaker than that of \emph{traced monoidal categories}~\cite{feedbackspans2020} while still satisfying a normalization property (\Cref{th:free-feedback}).
We present a novel definition that generalizes the previous ones by allowing the feedback operator to be guarded by a \monoidalEndofunctor{}.

\begin{definition}\label{def:feedback}\defining{linkcategorywithfeedback}{}
  A \emph{feedback monoidal category} is a symmetric monoidal category $(\catC,\otimes,I)$ endowed with a monoidal endofunctor $\fun{F} \colon \catC \to \catC$ and an operation
  \[\defining{linkfbkop}{\ensuremath{\mathbf{fbk}}}_{S} \colon
     \idProf(\fun{F}(S) \tensor X, S \tensor Y) \to \idProf(X,Y)\]
  for all \(S\), \(X\) and \(Y\) objects of \(\catC\);
  this operation needs to satisfy the following axioms.
  \begin{enumerate}[ label={(A\arabic*).}, ref={\textbf{(A\arabic*)}}, start=1 ]
    \item\label{axiom:tight} Tightening: \(u \dcomp \fbk[S](f) \dcomp v = \fbk[S]((\id{\fun{F}S} \tensor u) \dcomp f \dcomp(\id{S} \tensor v))\).
    \item\label{axiom:vanish} Vanishing: \(\fbk[\monoidalunit](f) = f\).
    \item\label{axiom:join} Joining: \(\fbk[T](\fbk[S](f)) = \fbk[S \tensor T](f)\)
    \item\label{axiom:strength} Strength: \(\fbk[S](f) \tensor g = \fbk[S](f \tensor g)\).
    \item\label{axiom:slide} Sliding: \(\fbk[S]((\fun{F}h \tensor \id{X}) \dcomp f) = \fbk[T](f \dcomp (h \tensor \id{Y}))\).
  \end{enumerate}
\end{definition}

\begin{figure}[H]
\tikzset{every picture/.style={line width=0.85pt}} %
\begin{tikzpicture}[x=0.75pt,y=0.75pt,yscale=-1,xscale=1]
\draw    (40,60) -- (40,70) ;
\draw   (30,70) -- (70,70) -- (70,90) -- (30,90) -- cycle ;
\draw    (10,40.35) .. controls (10,20.2) and (40,20.2) .. (40,40.35) ;
\draw   (29.5,40) -- (49.5,40) -- (49.5,60) -- (29.5,60) -- cycle ;
\draw    (10,90) .. controls (10,109) and (40.4,110.2) .. (40,90) ;
\draw    (10,62) -- (10,90) ;
\draw [shift={(10,60)}, rotate = 90] [color={rgb, 255:red, 0; green, 0; blue, 0 }  ][line width=0.75]    (10.93,-3.29) .. controls (6.95,-1.4) and (3.31,-0.3) .. (0,0) .. controls (3.31,0.3) and (6.95,1.4) .. (10.93,3.29)   ;
\draw    (60,20) -- (60,70) ;
\draw    (60,90) -- (60,110) ;
\draw    (140,60) -- (140,70) ;
\draw   (130,40) -- (170,40) -- (170,60) -- (130,60) -- cycle ;
\draw    (110,40.35) .. controls (109.6,19.8) and (140.4,19.8) .. (140,40.35) ;
\draw   (129.5,70) -- (149.5,70) -- (149.5,90) -- (129.5,90) -- cycle ;
\draw    (110,90) .. controls (110.4,109.4) and (140.4,109.8) .. (140,90) ;
\draw    (110,40) -- (110,90) ;
\draw    (160,60) -- (160,110) ;
\draw    (160,20) -- (160,40) ;
\draw    (10,40) -- (10,80) ;
\draw    (110,62) -- (110,90) ;
\draw [shift={(110,60)}, rotate = 90] [color={rgb, 255:red, 0; green, 0; blue, 0 }  ][line width=0.75]    (10.93,-3.29) .. controls (6.95,-1.4) and (3.31,-0.3) .. (0,0) .. controls (3.31,0.3) and (6.95,1.4) .. (10.93,3.29)   ;
\draw (50,80) node    {$f$};
\draw (39.5,50) node    {$Fh$};
\draw (81,53.4) node [anchor=north west][inner sep=0.75pt]    {$=$};
\draw (150,50) node    {$f$};
\draw (139.5,80) node    {$h$};
\end{tikzpicture}
\caption{The sliding axiom (A5).}\label{diagram:sliding}
\end{figure}
A \hyperlink{linkfeedbackfunctor}{\emph{feedback functor}} is a symmetric \monoidalFunctor{} that preserves the feedback structure (Appendix, \Cref{def:feedbackfunctor}).

\begin{remark}[Wait or trace]\label{remark:wait}\defining{linkmorphwait}
  In a \feedbackMonoidalCategory{} $(\catC,\fbk)$,
  we construct the morphism $\mathsf{wait}_{X} \colon X \to FX$ as a feedback loop over the symmetry,
  $\mathsf{wait}_{X} = \fbk(\sigma_{X,X}).$
  A \emph{traced monoidal category}~\cite{joyal96} is a \feedbackMonoidalCategory{} guarded by the identity functor
  such that $\mathsf{wait}_{X} = \im_{X}$.
\end{remark}

The ``state construction'', \(\St(\bullet)\), realizes the \emph{free} feedback monoidal category.
As it happens with feedback monoidal categories, this construction has appeared in the literature in slightly different forms. It has been used for describing a ``memoryful geometry of interaction''~\cite{hoshino14}, ``stateful resource calculi''~\cite{bonchi19}, and ``processes with feedback''~\cite{sabadini95,katis02}.

The idea in all of these cases is the same: we allow the morphisms of a monoidal category to depend on a ``state space'' $S$, possibly guarded by a functor.
Adding a state space is equivalent to freely adding feedback~\cite{feedbackspans2020}.

\begin{definition}
  A \emph{stateful morphism} is a pair $(S,f)$ consisting of a ``state space'' $S \in \catC$ and a morphism $f \colon \fun{F}S \tensor X \to S \tensor Y$.
  We say that two stateful morphisms are \emph{sliding equivalent} if they are related by the smallest equivalence relation satisfying
  $\reprCoend{(\fun{F}r \tensor \im) \dcomp h}{S} \sim \reprCoend{h \dcomp (r \tensor \im)}{T}$
  for each $h \colon X \tensor \fun{F}T \to S \tensor Y$ and each $r \colon S \to T$.

  In other words, sliding equivalence is \emph{dinaturality in $S$}.
\end{definition}

\begin{definition}[St($\bullet$) construction,~\cite{katis02,feedbackspans2020}]\label{def:stconstruction}
  We write \(\defining{linkSt}{\ensuremath{\mathsf{St}}}_{\fun{F}}(\catC)\) for the \symmetricMonoidalCategory{} that has the same objects as \(\catC\) and whose morphisms from $X$ to $Y$ are stateful morphisms $f \colon \fun{F}S \tensor X \to S \tensor Y$ \emph{up to sliding}.
  \[\hom{\St_{\fun{F}}(\catC)}(X,Y) \coloneqq \coend{S \in \catC} \idProf[\catC](\fun{F}S \tensor X, S \tensor Y).\]
\end{definition}

\begin{theorem}[see~\cite{katis02}]\label{th:free-feedback}
  \(\St_{\fun{F}}(\catC)\) is the free feedback monoidal category over \((\catC,\fun{F})\).
\end{theorem}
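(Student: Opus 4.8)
The plan is to prove the statement in two stages: first that $\St_{\fun{F}}(\catC)$ genuinely carries the structure of a \feedbackMonoidalCategory{}, and then that it enjoys the universal property of the free such category over $(\catC,\fun{F})$. Throughout, the guard \monoidalEndofunctor{} on $\St_{\fun{F}}(\catC)$ is the lift of $\fun{F}$, acting as $\fun{F}$ on objects and sending a stateful morphism $(S,f)$ to $(\fun{F}S,\fun{F}f)$. The symmetric monoidal structure is inherited from $\catC$ on objects, while composition and tensoring of stateful morphisms run the two state spaces in parallel. Given representatives $f \colon \fun{F}S \tensor X \to S \tensor Y$ and $g \colon \fun{F}T \tensor Y \to T \tensor Z$, I would define the composite to have state $S \tensor T$, threading $f$ and then $g$ using the comparison $\fun{F}(S \tensor T) \cong \fun{F}S \tensor \fun{F}T$ and the symmetry to reorder factors; the tensor of $(S,f)$ and $(S',f')$ has state $S \tensor S'$ and is built analogously. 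The feedback operator is a reindexing of the coend: a stateful morphism $h \colon \fun{F}S \tensor X \to S \tensor Y$ with state $T$ is a map $\fun{F}T \tensor \fun{F}S \tensor X \to T \tensor S \tensor Y$, and $\fbk[S]$ reinterprets it as a stateful morphism $X \to Y$ with enlarged state $T \tensor S$, again via $\fun{F}(T \tensor S) \cong \fun{F}T \tensor \fun{F}S$. The only real content at this stage is well-definedness on equivalence classes, which I address below.

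\emph{Axioms.} With these definitions, the axioms of \Cref{def:feedback} become bookkeeping on the state object. Vanishing (\Cref{axiom:vanish}) holds because absorbing the unit leaves the state unchanged up to $\fun{F}I \cong I$; joining (\Cref{axiom:join}) is associativity of $\tensor$ on states; strength (\Cref{axiom:strength}) and tightening (\Cref{axiom:tight}) follow from the parallel definition of $\tensor$ and functoriality of the hom-profunctor. Crucially, sliding (\Cref{axiom:slide}) is not a condition to verify but is exactly the dinaturality in $S$ by which the coend $\coend{S \in \catC}\idProf(\fun{F}S \tensor X, S \tensor Y)$ is formed; this is the precise sense in which adding a state space is freely adding feedback.

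\emph{Universal property.} I would take the inclusion $\iota \colon \catC \to \St_{\fun{F}}(\catC)$ to be the identity on objects and $f \mapsto (I,f)$ on morphisms, which is strong monoidal, and observe that every stateful morphism satisfies $(S,f) = \fbk[S](\iota f)$, so $\iota$ and $\fbk$ generate the category. For freeness, let $(\catD,\fun{G},\mathbf{fbk}^{\catD})$ be any \feedbackMonoidalCategory{} and $\Phi \colon \catC \to \catD$ a monoidal functor intertwining $\fun{F}$ with $\fun{G}$; set $\hat{\Phi}(S,f) \defn \mathbf{fbk}^{\catD}_{\Phi S}(\Phi f)$. Well-definedness on sliding classes follows from sliding (\Cref{axiom:slide}) now holding \emph{in $\catD$}, functoriality and feedback-preservation follow from axioms (A1)--(A4) in $\catD$, and $\hat{\Phi}\circ\iota = \Phi$ is immediate. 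Uniqueness is forced: any feedback functor $\Psi$ extending $\Phi$ must send $(S,f) = \fbk[S](\iota f)$ to $\mathbf{fbk}^{\catD}_{\Phi S}(\Psi \iota f) = \mathbf{fbk}^{\catD}_{\Phi S}(\Phi f) = \hat{\Phi}(S,f)$.

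\emph{Main obstacle.} The delicate point is well-definedness of composition and of $\hat{\Phi}$ with respect to the sliding/dinatural quotient: this is the interchange of coends, where one checks that threading the states $S$ and $T$ commutes with quotienting in each variable separately. The Fubini theorem for coends~\cite{loregian2021} together with a careful diagram chase on the symmetry and the $\fun{F}$-comparison isomorphisms does the work; everything else reduces to coherence in $\catC$.
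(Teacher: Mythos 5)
Your proposal is correct and follows essentially the same route as the paper's (appendix) proof and the argument it defers to in Katis--Sabadini--Walters: equip $\St_{\fun{F}}(\catC)$ with the feedback structure by enlarging the state object, observe that every morphism is $\fbk[S]$ of an image of $\iota$ so that the extension $\hat{\Phi}(S,f) = \fbk^{\catD}_{\Phi S}(\Phi f)$ is forced, and use the sliding axiom in the target to get well-definedness on the coend quotient. The paper's sketch phrases the forcedness via a normalization property for feedback expressions, but this is the same mechanism as your generation argument.
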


\subsection{Extensional monoidal streams}

Monoidal streams should be, in some sense, the minimal way of adding feedback to a theory of processes.
The output of this feedback, however, should be delayed by one unit of time: the category $\NcatC$ is naturally equipped with a \emph{delay} endofunctor that shifts a sequence by one.
Extensional monoidal streams form the free delayed-feedback category.

\begin{definition}[Delay functor]\label{def:delay-fun-cn}
  Let \(\defining{linkdelay}{\delay} \colon \sequencesCat{\catC} \to \sequencesCat{\catC}\) be the endofunctor defined on objects \(\stream{X} = \streamExpr{X}\), as \(\delay[\stream{X}] \defn \streamExprExp{\monoidalunit, X_0, X_1}\); and on morphisms \(\stream{f} = \streamExpr{f}\) as \(\delay[\stream{f}] \defn \streamExprExp{\id{\monoidalunit}, f_0, f_1}\).
\end{definition}

\begin{definition}\label{def:extensionalmonoidalstreams}
  \defining{linkestream}{}\defining{linkextensionalmonoidalstream}{}
  The set of \emph{extensional monoidal streams}, depending on inputs and outputs, $\eStream{} \colon \NcatC^{op} \times \NcatC \to \Set$, is the hom-set of the free feedback monoidal category over \((\NcatC, \delay)\).
\end{definition}

We characterize now extensional streams in terms of extensional sequences and the $\St(\bullet)$-construction.

\begin{theorem}\label{th:ext-stateful-sequences}\label{th:extensionalfreefeedback}
  \ExtensionalSequences{} are the explicit construction of \extensionalStreams{}, $\eStream \cong \eSeq$.
\end{theorem}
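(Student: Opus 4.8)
The plan is to unfold both sides to a coend over $\NcatC = [\naturals,\catC]$ and match the two integrands componentwise. By Definition~\ref{def:extensionalmonoidalstreams}, $\eStream(\stream{X},\stream{Y})$ is the hom-set of the free feedback monoidal category over $(\NcatC,\delay)$, which by Theorem~\ref{th:free-feedback} is realized by the state construction $\St_\delay(\NcatC)$. Expanding Definition~\ref{def:stconstruction} at $(\NcatC,\delay)$ gives
\[\hom{\St_\delay(\NcatC)}(\stream{X},\stream{Y}) = \coend{\stream{S} \in \NcatC} \idProf[\NcatC](\delay(\stream{S}) \tensor \stream{X},\, \stream{S} \tensor \stream{Y}),\]
so it suffices to identify this coend with $\eSeq(\stream{X},\stream{Y})$.

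First I would use that $\NcatC$ has pointwise monoidal structure and that a hom-set there is the product of the componentwise hom-sets, $\idProf[\NcatC](\stream{A},\stream{B}) = \prod_{n} \hom{}(A_n,B_n)$. Combining this with Definition~\ref{def:delay-fun-cn}, which gives $\delay(\stream{S})_0 = \monoidalunit$ and $\delay(\stream{S})_n = S_{n-1}$ for $n \geq 1$, the integrand decomposes as
\[\idProf[\NcatC](\delay(\stream{S}) \tensor \stream{X},\, \stream{S} \tensor \stream{Y}) = \prod_{n=0}^{\infty} \hom{}(S_{n-1} \tensor X_n,\, S_n \tensor Y_n),\]
with the convention $S_{-1} = \monoidalunit$ arising precisely from the delay. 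This is the intensional-sequence integrand of Definition~\ref{def:intensionalsequence}, up to the order of the two tensor factors on each side. That remaining discrepancy I would absorb by conjugating with the symmetry $\sigma$: pre- and post-composition with $\sigma$ furnishes an isomorphism between the endoprofunctor $Q(\stream{S},\stream{S'}) = \prod_n \hom{}(S_{n-1}\tensor X_n, S'_n \tensor Y_n)$ above and the endoprofunctor $P(\stream{S},\stream{S'}) = \prod_n \hom{}(X_n \tensor S_{n-1}, Y_n \tensor S'_n)$ whose diagonal coend is $\eSeq(\stream{X},\stream{Y})$ by Definition~\ref{def:extensional-sequence}. Since $\sigma$ is natural in both variables, this conjugation respects both the contravariant dependence (through the domain and $\delay$) and the covariant dependence (through the codomain) on the memory sequence, so $P \cong Q$ as functors $\NcatC^{op}\times\NcatC \to \Set$; their coends therefore agree, yielding $\eStream \cong \eSeq$.

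The step I expect to require the most care is checking that the sliding equivalence of $\St_\delay(\NcatC)$ is \emph{literally} the extensional equivalence of Definition~\ref{def:extensional-sequence}, and not merely indexed by the same category. Here the bookkeeping of the delay is the crux: a reindexing $\stream{r} \colon \stream{S} \to \stream{T}$ in $\NcatC$ acts on the writing (codomain) side at time $n$ through $r_n$, but on the reading (domain) side it is shifted by $\delay$, acting at time $n$ through $\delay(\stream{r})_n = r_{n-1}$. I would verify that the generating relation $\reprCoend{(\delay(\stream{r})\tensor\im)\dcomp f}{\stream{S}} \sim \reprCoend{f \dcomp (\stream{r}\tensor\im)}{\stream{T}}$ of the $\St$-construction reads componentwise as $((r_{n-1}\tensor\im);f_n)_n \sim (f_n;(r_n\tensor\im))_n$, which is exactly the dinaturality in the memory channels $M_n$ quotiented in Definition~\ref{def:extensional-sequence}. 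Once the generators of the two equivalence relations are matched under the symmetry isomorphism $P \cong Q$, the displayed isomorphism of coends upgrades to the asserted $\eStream \cong \eSeq$.
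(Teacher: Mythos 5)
Your proof is correct and follows essentially the same route as the paper, whose proof is the one-line observation that $\eSeq(\stream{X},\stream{Y}) = \St_{\delay}(\NcatC)(\stream{X},\stream{Y})$, i.e.\ that unfolding \Cref{def:stconstruction} at $(\NcatC,\delay)$ and invoking \Cref{th:free-feedback} already yields the claim. You have merely made explicit the componentwise bookkeeping the paper leaves implicit — the pointwise hom-sets of $\NcatC$, the delay shift giving the $M_{-1}=\monoidalunit$ convention, the tensor-order symmetry, and the identification of sliding equivalence with dinaturality in the memory channels — all of which check out.
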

\begin{proof}
Note that $\eSeq(\stream{X},\stream{Y}) = \St_{\delay}(\NcatC)(\stream{X},\stream{Y})$.
That is, the \extensionalSequences{} we defined using dinaturality coincide with the morphisms of \(\St_{\delay}(\NcatC)\), the free feedback monoidal category over \((\NcatC,\delay)\) in \Cref{def:stconstruction}.
\end{proof}

As a consequence, the calculus of signal flow graphs given by the syntax of feedback monoidal categories is sound and complete for extensional equivalence over $\NcatC$.

\subsection{Towards observational processes}

\ExtensionalSequences{} were an improvement over \intensionalSequences{} because they allowed us to equate process descriptions that were \emph{essentially the same}.
However, we could still have two processes that are ``observationally the same'' without them being described in the same way.

\begin{remark}[Followed by]
  \defining{linkmorphfollowedby}{}
  As we saw in the Introduction, \emph{``followed by''} is a crucial operation in dataflow programming.
  Any sequence can be decomposed as $\stream{X} \cong \act{X_{0}}{\delay(\tail{\stream{X}})}$.\footnote{This can also be seen as the isomorphism making ``sequences'' a final coalgebra. That is, the first slogan we saw in~\Cref{section:fixpoint}.}
  We call \emph{``followed by''} to the coherence map in $\NcatC$ that witnesses this decomposition.
  \[\morphfby_{\stream{X}} \colon \act{X_{0}}{\delay{(\tail{\stream{X}})}} \to \stream{X}\]
  In the case of constant sequences $\stream{X} = (X,X,\dots)$, we have that $\tail{\stream{X}} = \stream{X}$; which means that ``followed by'' has type
  $\morphfby_{\stream{X}} \colon \act{X}{\delay\stream{X}} \to \stream{X}$.

\end{remark}

\begin{example}\label{example:observe}
  Consider the \extensionalStatefulSequence{}, in any \cartesianMonoidalCategory{}, that saves the first input to memory without ever producing an output.
  \emph{Observationally}, this is no different from simply discarding the first input, $(\blackComonoidUnit)_{X} \colon X \to 1$.
  However, in principle, we cannot show that these are \emph{extensionally equal}, that is, $\fbk(\morphfby_{\stream{X}}) \neq (\blackComonoidUnit)_{X}$.

More generally, discarding the result of any stochastic or deterministic signal flow graph is, \emph{observationally}, the same as doing nothing (\Cref{figure:silentwalk}, consequence of \Cref{th:stochasticprocesses}).
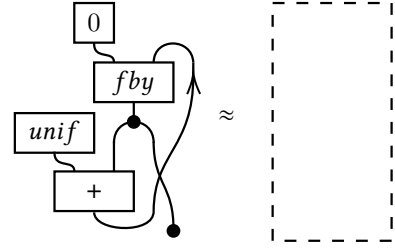
\begin{figure}
\tikzset{every picture/.style={line width=0.85pt}} %
\begin{tikzpicture}[x=0.75pt,y=0.75pt,yscale=-1,xscale=1]
\draw   (299.99,165) -- (339.99,165) -- (339.99,185) -- (299.99,185) -- cycle ;
\draw   (319.97,110) -- (359.97,110) -- (359.97,130) -- (319.97,130) -- cycle ;
\draw   (309.97,80) -- (329.97,80) -- (329.97,100) -- (309.97,100) -- cycle ;
\draw   (279.99,135.56) -- (319.99,135.56) -- (319.99,155.56) -- (279.99,155.56) -- cycle ;
\draw    (319.97,100) .. controls (319.55,109.23) and (330.12,103.51) .. (329.97,110) ;
\draw    (329.97,150) -- (329.99,165) ;
\draw    (319.99,185.65) .. controls (319.99,195.63) and (350.32,196.97) .. (349.99,185) ;
\draw    (349.99,185) .. controls (350.18,155.25) and (369.98,153.64) .. (370,116.71) ;
\draw [shift={(369.99,115)}, rotate = 89.11] [color={rgb, 255:red, 0; green, 0; blue, 0 }  ][line width=0.75]    (10.93,-3.29) .. controls (6.95,-1.4) and (3.31,-0.3) .. (0,0) .. controls (3.31,0.3) and (6.95,1.4) .. (10.93,3.29)   ;
\draw    (369.99,110) -- (369.99,115) ;
\draw    (339.98,130) -- (339.98,140) ;
\draw    (329.99,150.65) .. controls (329.79,136.45) and (350.19,136.45) .. (349.99,150.65) ;
\draw  [fill={rgb, 255:red, 0; green, 0; blue, 0 }  ,fill opacity=1 ] (337.08,140) .. controls (337.08,138.4) and (338.38,137.1) .. (339.98,137.1) .. controls (341.58,137.1) and (342.88,138.4) .. (342.88,140) .. controls (342.88,141.6) and (341.58,142.9) .. (339.98,142.9) .. controls (338.38,142.9) and (337.08,141.6) .. (337.08,140) -- cycle ;
\draw    (299.99,155) .. controls (299.56,164.23) and (310.13,158.51) .. (309.99,165) ;
\draw    (349.99,110) .. controls (350.19,97) and (369.39,97.4) .. (369.99,110) ;
\draw    (349.99,150) .. controls (349.99,171.97) and (360.32,171.63) .. (359.99,195) ;
\draw  [fill={rgb, 255:red, 0; green, 0; blue, 0 }  ,fill opacity=1 ] (357.09,195) .. controls (357.09,193.4) and (358.39,192.1) .. (359.99,192.1) .. controls (361.59,192.1) and (362.89,193.4) .. (362.89,195) .. controls (362.89,196.6) and (361.59,197.9) .. (359.99,197.9) .. controls (358.39,197.9) and (357.09,196.6) .. (357.09,195) -- cycle ;
\draw  [dash pattern={on 4.5pt off 4.5pt}] (410,80) -- (470,80) -- (470,200) -- (410,200) -- cycle ;
\draw (319.98,175) node    {$+$};
\draw (319.97,90) node    {$0$};
\draw (339.97,120) node    {$fby$};
\draw (299.99,145.56) node    {$unif$};
\draw (381,132.4) node [anchor=north west][inner sep=0.75pt]    {$\approx $};
\end{tikzpicture}
\caption{\emph{Observationally}, a silent process does nothing.}
\label{figure:silentwalk}
\end{figure}
\end{example}
\section{Observational Monoidal Streams}
\label{section:observational}

In this section, we introduce our definitive \emph{monoidal streams}: \emph{\observationalStreams{}} (\Cref{def:observationalmonoidalstream}).
Their explicit construction is given by \observationalSequences{}: \extensionalSequences{} quotiented by \observationalEquivalence{}.

Intuitively, two processes are observationally equal if they are ``equal up to stage \(n\)'', for any \(n \in \naturals\).
We show that, in sufficiently well-behaved monoidal categories (which we call \emph{productive}, \Cref{def:productive}), the set of observational sequences given some inputs and outputs is the final coalgebra of a fixpoint equation (\Cref{eq:observationalstreamshort}).
The name ``observational equivalence'' is commonly used to denote equality on the final coalgebra: \Cref{theorem:observationalfinalcoalgebra} justifies our use of the term.

\subsection{Observational streams}

We saw in~\Cref{sec:int-stateful-sequences} that we can define \intensionalSequences{} as a solution to a fixpoint equation.
We now consider the same equation, just substituting the coproduct for a coend.

\begin{definition}[Observational streams]\label{def:observationalmonoidalstream}\defining{linkobservationalmonoidalstream}{}
  The set of \emph{observational monoidal streams}, depending on inputs and outputs, %
  is the functor $\fun{Q} \colon \NcatC^{op} \times \NcatC \to \Set$ given by %
  the final fixpoint of the equation in \Cref{eq:observationalstreamshort}.
\end{definition}
\vspace{-1em}
\begin{figure}[!h]
  \centering
  $\displaystyle \fun{Q}(\stream{X},\stream{Y}) \cong \coend{M \in \catC}
    \hom{}(X_{0}, M \tensor Y_{0}) \times \fun{Q}(\act{M}{\tail{\stream{X}}}, \tail{\stream{Y}}).$
  \caption{Fixpoint equation for observational streams.}
\label{eq:observationalstreamshort}
\end{figure}

The explicit construction of this final fixpoint will be given by \observationalSequences{} (\Cref{theorem:observationalfinalcoalgebra}).

\subsection{Observational sequences}

We said that observational equivalence is ``equality up to stage $n$'', so our first step will be to define what it means to truncate an \extensionalSequence{} at any given $n \in \naturals$.

\begin{definition}[$n$-Stage process]\label{def:n-stages-process}
  An \emph{n-stage process} from inputs $\stream{X} = \streamExpr{X}$ to outputs
  $\stream{Y} = \streamExpr{Y}$ is an element of the set
  \[\defining{linkstage}{\ensuremath{\fun{Stage}_{n}}}(\stream{X},\stream{Y}) \defn \StageExpr{n}{X}{Y}.\]
\end{definition}

\begin{remark}\label{remark:stagenotation}
  In other words, $n$-stage processes are $n$-tuples $(f_{i} \colon M_{i-1} \tensor X_{i} \to M_{i} \tensor Y_{i})^{n}_{i=0}$, for some choice of $M_{i}$ \emph{up to dinaturality}, that we write as
  \[\bra{f_{0}|f_{1}|\dots|f_{n}} \in \Stage{n}(\stream{X},\stream{Y}).\]
  In this notation, \emph{dinaturality} means that morphisms can \emph{slide past the bars}.
  That is, for any $r_{i} \colon N_{i} \to M_{i}$ and any tuple, dinaturality says that
  \[\begin{aligned}
      &\bra{f_{0};(r_{0} \tensor \im)|f_{1};(r_{1} \tensor \im)|\dots|f_{n};(r_{n} \tensor \im)} \\
      & = \bra{f_{0}|(r_{0} \tensor \im);f_{1}|\dots|(r_{n-1} \tensor \im);f_{n}}.
    \end{aligned}\]
  Note that the last $r_{n}$ is removed by dinaturality.
\end{remark}

\begin{definition}[Truncation]
  The \emph{$k$-truncation} of an \extensionalSequence{} $\braket{f_{n} \colon M_{n-1} \tensor X_{n} \to M_{n} \tensor Y_{n}} \in \eSeq(\stream{X},\stream{Y})$ is  $\bra{f_{0}|\dots|f_{k}} \in \Stage{k}(\stream{X},\stream{Y})$.
  Truncation is well-defined under dinatural equivalence (\Cref{remark:stagenotation}).

  For $k \leq n$, the \emph{$k$-truncation} of an n-stage process given by $\bra{f_{0}|f_{1}|\dots|f_{n}} \in \Stage{n}(\stream{X},\stream{Y})$
  is $\bra{f_{0}|\dots|f_{k}} \in \Stage{k}(\stream{X},\stream{Y})$. This induces functions $\pi_{n,k} \colon \Stage{n}(\stream{X},\stream{Y}) \to \Stage{k}(\stream{X},\stream{Y})$, with
  the property that $\pi_{n,m} ; \pi_{m,k} = \pi_{n,k}$.
\end{definition}

\begin{definition}[\defining{linkobservationallyequal}{Observational equivalence}]\label{def:observationallyequal}
  Two extensional stateful sequences
  \[ \extseq{f}, \extseq{g} \in \coend{M \in [\naturals,\catC]} \prod^{\infty}_{i=0} \hom{}(M_{i-1} \tensor X_{i}, Y_{i} \tensor M_{i})\]
  are \emph{observationally equivalent} when all their n-stage truncations are equal. That is, $\bra{f_{0}|\dots | f_{n}} = \bra{g_{0}|\dots | g_{n}}$, for each $n \in \naturals$.
  We write this as $f \obsEqRel g$.
\end{definition}

\begin{remark}
  Formally, this is to say that the sequences \(\extseq{f}\) and \(\extseq{g}\) have the same image on the limit
  \[\lim\nolimits_{n} \Stage{n}(\stream{X},\stream{Y}),\]
  over the chain $\pi_{n,k} \colon \Stage{n}(\stream{X},\stream{Y}) \to \Stage{k}(\stream{X},\stream{Y})$.
\end{remark}

\begin{definition}\label{def:observationalsequence}
  An \defining{linkobservationalsequence}{\emph{observational sequence}} from \(\stream{X}\) to \(\stream{Y}\) is an
  equivalence class
  \[[\extseq{f_{n} \colon M_{n-1} \tensor X_{n} \to M_{n} \tensor Y_{n}}]_{\approx}\]
  of \extensionalSequences{} under \observationalEquality{}.
  In other words, the set of observational sequences is
  \[\oSeq(\stream{X},\stream{Y}) \cong \left(\coend{M \in [\naturals,\catC]} \prod^{\infty}_{i=0} \hom{}(M_{i-1} \tensor X_{i}, M_{i} \tensor Y_{i})\right)
  \bigg/\approx\]
\end{definition}
\subsection{Productive categories}
\label{section:productive}

The interaction between extensional and observational equivalence is of particular interest in some well-behaved categories that we call \emph{productive categories}.
In productive categories, observational sequences are the final fixpoint of an equation (\Cref{theorem:observationalfinalcoalgebra}), analogous to that of \Cref{sec:int-stateful-sequences}.

An important property of programs is \emph{termination}: a terminating program always halts in a finite amount of time.
However, some programs (such as servers, drivers) are not actually intended to terminate but to produce infinite output streams.
A more appropriate notion in these cases is that of \emph{productivity}:
a program that outputs an infinite stream of data is productive if each individual component of the stream is produced in finite time.
To quip, \emph{``a productive stream is a terminating first component followed by a productive stream''}.

The first component of our streams is only defined \emph{up to some future}.
It is an equivalence class $\alpha \in \Stage{1}(\stream{X},\stream{Y})$,
with representatives $\alpha_{i} \colon X_{0} \to M_{i} \tensor Y_{0}$.
But, if it does terminate, there is a process $\alpha_{0} \colon X_{0} \to M_{0} \tensor Y_{0}$ in our theory representing the process just until $Y_{0}$ is output.

\begin{definition}[Terminating component]
  A 1-stage process $\alpha \in \Stage{1}(\stream{X},\stream{Y})$ is \emph{terminating relative to $\catC$} if there exists $\alpha_{0} \colon X_{0} \to M_{0} \tensor Y_{0}$ such that each one of its representatives, $\bra{\alpha_{i}} = \alpha$, can be written as $\alpha_{i} = \alpha_{0} ; (s_{i} \tensor \im)$ for some $s_{i} \colon M_{0} \to M_{i}$.

The morphisms $s_{i}$ represent what is unique to each representative, and so we ask that,
  for any $u\colon M_{0} \tensor A \to U \tensor B$ and $v \colon M_{0} \tensor A \to V \tensor B$,
the equality $\bra{\alpha_{i} \tensor \im_{A} ; u \tensor \im_{Y_{0}}} = \bra{\alpha_{j} \tensor \im_{A} ; v \tensor \im_{Y_{0}}}$  implies
  $\bra{s_{i} \tensor \im_{A} ; u} = \bra{s_{j} \tensor \im_{A} ; v}$.
\end{definition}

\begin{definition}[\defining{linkproductive}{Productive category}]\label{def:productive}
  A symmetric monoidal category $(\catC,\tensor,\monoidalunit)$ is \emph{productive} when
every 1-stage process is terminating relative to $\catC$.
\end{definition}

\begin{remark}
  Cartesian monoidal categories are \productive{} (\Cref{prop:cartesianproductive}).
  \emph{Markov categories} \cite{fritz2020} with \emph{conditionals} and \emph{ranges} are \productive{} (\Cref{appendix:productivemarkov}).
  Free symmetric monoidal categories and compact closed categories are always productive.
\end{remark}

\begin{theorem}\label{theorem:observationalfinalcoalgebra}
  \ObservationalSequences{} are the explicit construction of \observationalStreams{} when the category is \productive{}.
More precisely, in a \productive{} category, the final fixpoint of the equation in \Cref{eq:observationalstreamshort} is given by the set of \observationalSequences{}, $\oSeq$.
\end{theorem}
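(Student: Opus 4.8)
The plan is to apply Adamek's theorem (\Cref{th:adamek}) to the endofunctor
\[
  \Phi(\fun{P})(\stream{X},\stream{Y}) \defn \coend{M \in \catC} \hom{}(X_{0}, M \tensor Y_{0}) \times \fun{P}(\act{M}{\tail{\stream{X}}}, \tail{\stream{Y}})
\]
on the functor category $[\NcatC^{op} \times \NcatC, \Set]$, whose final fixpoint is by definition $\fun{Q}$. Limits here are computed pointwise in $\Set$, so the category has a final object (the constant singleton functor $1$) and all $\omega$-shaped limits, and the hypotheses of \Cref{th:adamek} are in place. The work then splits into three parts: (i) computing the terminal sequence of $\Phi$ and recognizing its limit $L \defn \lim_{n} \Phi^{n} 1$; (ii) identifying $L$ with $\oSeq$; and (iii) checking that $\Phi$ preserves $L$, so that Adamek's theorem applies. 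Step (iii) is where productivity is essential and is the main obstacle.

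First I would compute the terminal sequence by induction, establishing $\Phi^{n} 1 \cong \Stage{n}$ with the connecting maps $\Phi^{n-1}(!)$ corresponding exactly to the truncations $\pi_{n,n-1}$. The base case is immediate, since
\[
  \Phi(1)(\stream{X},\stream{Y}) = \coend{M \in \catC} \hom{}(X_{0}, M \tensor Y_{0}) \times 1 \cong \Stage{1}(\stream{X},\stream{Y}).
\]
For the inductive step, unfolding one further application of $\Phi$ introduces a fresh coend over the next memory object and a factor $\hom{}(M \tensor X_{1}, N \tensor Y_{1})$; matching this against the description of $n$-stage processes in \Cref{remark:stagenotation} yields $\Phi^{n+1} 1 \cong \Stage{n+1}$ and identifies the connecting map with $\pi_{n+1,n}$. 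Hence $L \cong \lim_{n} \Stage{n}(\stream{X},\stream{Y})$, which is precisely the limit through which observational equivalence was defined (\Cref{def:observationallyequal}).

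The identification $L \cong \oSeq$ then reduces to surjectivity of the comparison map $\eSeq \to \lim_{n}\Stage{n}$ sending an extensional sequence to its tower of truncations: its image is $\oSeq$ and the quotient by $\approx$ makes it injective onto that image, so it remains to realize an arbitrary coherent tower by a single extensional sequence. This is where productivity enters. A coherent tower $(\beta_{n})_{n} \in \lim_{n}\Stage{n}$ specifies each stage only up to dinaturality, so the memory objects may a priori vary with $n$; the terminating-component structure of \Cref{def:productive} supplies, at each stage, a canonical representative $\alpha_{0} \colon X_{0} \to M_{0} \tensor Y_{0}$ together with the mediating maps reconciling the other representatives, and the cancellation clause guarantees these choices are compatible across stages. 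This lets me fix a single memory sequence $\stream{M}$ and assemble the maps $f_{n} \colon M_{n-1} \tensor X_{n} \to M_{n} \tensor Y_{n}$ whose truncations recover the tower.

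The hard part is step (iii). Writing out the canonical comparison $\Phi L \to L$ and using $\Phi^{n}1 \cong \Stage{n}$, it becomes the map
\begin{gather*}
  \coend{M \in \catC} \hom{}(X_{0}, M \tensor Y_{0}) \times \lim_{n}\Stage{n}(\act{M}{\tail{\stream{X}}}, \tail{\stream{Y}}) \\
  \longrightarrow \quad \lim_{n} \coend{M \in \catC} \hom{}(X_{0}, M \tensor Y_{0}) \times \Stage{n}(\act{M}{\tail{\stream{X}}}, \tail{\stream{Y}})
\end{gather*}
comparing the coend $\coend{M}$, a colimit, with the $\omega$-limit $\lim_{n}$. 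Such a colimit--limit interchange is not automatic and does not hold for arbitrary symmetric monoidal categories; productivity is exactly the hypothesis that restores it, since the terminating-component property lets one standardize the memory object across all stages so that a coherent family of coends with varying memories descends to a single coend of the limit. The same standardization argument underlies the surjectivity of step (ii), so both follow from one lemma on productive categories. Once this isomorphism is in hand, \Cref{th:adamek} identifies $L$ as the final fixpoint of \Cref{eq:observationalstreamshort}, and combined with step (ii) this gives $\fun{Q} \cong L \cong \oSeq$, as claimed.
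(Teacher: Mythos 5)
Your proposal is correct and follows essentially the same route as the paper: identify the terminal sequence of the equation's endofunctor with $\Stage{n}$, use productivity to realize any coherent tower in $\lim_{n}\Stage{n}$ by a single extensional sequence (so that the limit is $\oSeq$), and use productivity again to show the canonical comparison exhibits the limit as a fixpoint, which is exactly the paper's decomposition into \Cref{lemma:observationalstatefulisterminalsequence} and \Cref{lemma:coalgebraexists}. The one point your step (iii) underplays is that the comparison map must be shown \emph{injective} as well as surjective: surjectivity is the standardization argument you describe (the paper's ``factoring a family of processes''), but injectivity requires a separate factoring of two towers with a common first step, and this is where the cancellation clause in the definition of terminating component is indispensable rather than merely ``compatibility across stages''.
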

\begin{proof}[Proof sketch]
  The terminal sequence for this final coalgebra is given by $\Stage{n}(\stream{X},\stream{Y})$.
  In productive categories, we can prove that the limit $\lim_{n} \Stage{n}(\stream{X},\stream{Y})$ is a fixpoint of the equation in \Cref{eq:observationalstreamshort} (\Cref{lemma:coalgebraexists}).
  Finally, in productive categories, observational sequences coincide with this limit (\Cref{appendix:theorem:observationalfinalcoalgebra}).
\end{proof}

\section{The Category of Monoidal Streams}\label{sec:monoidal-streams}

We are ready to construct $\STREAM$: the \feedbackMonoidalCategory{} of monoidal streams.
Let us recast the definitive notion of monoidal stream (\Cref{def:observationalmonoidalstream}) coinductively.

\begin{definition}%
\label{def:monoidalstream}\defining{linkmonoidalstream}
A \emph{monoidal stream} $f \in \STREAM(\stream{X},\stream{Y})$ is a triple consisting of
  \begin{itemize}
    \item $M(f) \in \obj{\catC}$, the \emph{memory},
    \item $\defining{linknow}{\ensuremath{\now}}(f) \in \hom{}(X_{0}, M(f) \tensor Y_{0})$, the \emph{first action},
    \item $\defining{linklater}{\ensuremath{\later}}(f) \in \STREAM(\act{M(f)}{\tail{\stream{X}}},\tail{\stream{Y}})$, the \emph{rest of the action},
  \end{itemize}
  quotiented by dinaturality in $M$.
\end{definition}

Explicitly, monoidal streams are quotiented by the equivalence relation $f \sim g$ generated by
\begin{itemize}
  \item the existence of $r \colon M(g) \to M(f)$,
  \item such that $\now(f) = \now(g) ; r$,
  \item and such that $\act{r}{\later(f)} \sim \later(g)$.
\end{itemize}
Here, $\act{r}{\later(f)} \in \STREAM(\act{M(g)}{\tail{\stream{X}}},\tail{\stream{Y}})$ is obtained by
precomposition of the first action of
$\later(f)$ with $r$.

\begin{remark}
  This is a coinductive definition of the functor
  \[\STREAM \colon \NcatC^{op} \times \NcatC \to \Set.\]
  In principle, arbitrary final coalgebras do not need to exist.
  Moreover, it is usually difficult to explicitly construct such coalgebras~\cite{adamek74}.
  However, in \productive{} categories, this coalgebra does exist and is constructed by \observationalSequences{}.
  From now on, we reason \emph{coinductively}~\cite{kozen17}, a style particularly suited for all the following definitions.
\end{remark}

\subsection{The symmetric monoidal category of streams}
The definitions for the operations of sequential and parallel composition are described in two steps.
We first define an operation that takes into account an extra \emph{memory channel} (\Cref{figure:memories}); we use this extra generality to strengthen the \emph{coinduction hypothesis}.
We then define the desired operation as a particular case of this coinductively defined one.

\begin{figure}[h!]
\tikzset{every picture/.style={line width=0.85pt}} %
\begin{tikzpicture}[x=0.75pt,y=0.75pt,yscale=-1,xscale=1]
\draw   (60,50) -- (100,50) -- (100,70) -- (60,70) -- cycle ;
\draw    (70,20) .. controls (69.67,40.43) and (49.67,30.43) .. (50,50) ;
\draw    (50,20) .. controls (50,40.43) and (70.33,30.43) .. (70,50) ;
\draw    (90,20) -- (90,50) ;
\draw    (70,70) .. controls (69.67,90.43) and (49.67,80.43) .. (50,100) ;
\draw    (50,70) .. controls (50,90.43) and (70.33,80.43) .. (70,100) ;
\draw   (56.5,100) -- (96.5,100) -- (96.5,120) -- (56.5,120) -- cycle ;
\draw    (90,70) -- (90,100) ;
\draw    (50,100) -- (50,140) ;
\draw    (50,50) -- (50,70) ;
\draw    (70,120) -- (70,140) ;
\draw    (90,120) -- (90,140) ;
\draw   (190,70) -- (230,70) -- (230,90) -- (190,90) -- cycle ;
\draw   (136,70) -- (176,70) -- (176,90) -- (136,90) -- cycle ;
\draw    (150,20) -- (150,70) ;
\draw    (170,20) .. controls (170,40.43) and (200.33,50.43) .. (200,70) ;
\draw    (200,20) .. controls (199.67,40.43) and (169.67,50.43) .. (170,70) ;
\draw    (220,20) -- (220,70) ;
\draw    (150,90) -- (150,140) ;
\draw    (220,90) -- (220,140) ;
\draw    (170,90) .. controls (170.01,110.43) and (200.33,120.43) .. (200,140) ;
\draw    (200,90) .. controls (199.67,110.43) and (169.67,120.43) .. (170,140) ;
\draw (80,60) node  [font=\small]  {${\textstyle \mathsf{now}( f)}$};
\draw (76.5,110) node  [font=\small]  {${\textstyle \mathsf{now}( g)}$};
\draw (50,16.6) node [anchor=south] [inner sep=0.75pt]  [font=\small]  {$A$};
\draw (70,16.6) node [anchor=south] [inner sep=0.75pt]  [font=\small]  {$B$};
\draw (89.99,16.6) node [anchor=south] [inner sep=0.75pt]  [font=\small]  {$X$};
\draw (90,143.4) node [anchor=north] [inner sep=0.75pt]  [font=\small]  {$Z$};
\draw (70,143.4) node [anchor=north] [inner sep=0.75pt]  [font=\small]  {$M_{g}$};
\draw (50,143.4) node [anchor=north] [inner sep=0.75pt]  [font=\small]  {$M_{f}$};
\draw (156,80) node  [font=\small]  {${\textstyle \mathsf{now}( f)}$};
\draw (210,80) node  [font=\small]  {${\textstyle \mathsf{now}( g)}$};
\draw (200,16.6) node [anchor=south] [inner sep=0.75pt]  [font=\small]  {$X$};
\draw (220,16.6) node [anchor=south] [inner sep=0.75pt]  [font=\small]  {$X'$};
\draw (150,16.6) node [anchor=south] [inner sep=0.75pt]  [font=\small]  {$A$};
\draw (170,16.6) node [anchor=south] [inner sep=0.75pt]  [font=\small]  {$B$};
\draw (170,143.4) node [anchor=north] [inner sep=0.75pt]  [font=\small]  {$M_{g}$};
\draw (150,143.4) node [anchor=north] [inner sep=0.75pt]  [font=\small]  {$M_{f}$};
\draw (200,143.4) node [anchor=north] [inner sep=0.75pt]  [font=\small]  {$Y$};
\draw (220,143.4) node [anchor=north] [inner sep=0.75pt]  [font=\small]  {$Y'$};
\end{tikzpicture}
\caption{String diagrams for the first action of sequential and parallel composition with memories.}
\label{figure:memories}
\end{figure}
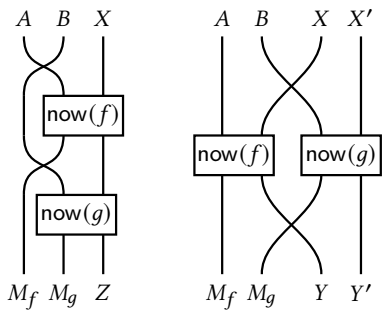

\begin{definition}[Sequential composition]\label{def:sequentialstream}
  Given two streams $f \in \STREAM(\act{\sA}{\stream{X}},\stream{Y})$ and
  $g \in \STREAM(\act{\sB}{\stream{Y}},\stream{Z})$, we compute
  $(\Ncomp{f}{\sA}{g}{\sB}) \in\STREAM(\act{(\sA \tensor \sB)}{\stream{X}},\stream{Z})$,
  their \emph{sequential composition with memories $\sA$ and $\sB$}, as
  \begin{itemize}
    \item $M(\Ncomp{f}{\sA}{g}{\sB}) = M(f) \tensor M(g)$,
    \item $\now(\Ncomp{f}{\sA}{g}{\sB}) = \sigma \dcomp (\now(\fm) \tensor \im) \dcomp \sigma \dcomp (\now(\gm) \tensor \im)$,
    \item $\later(\Ncomp{f}{\sA}{g}{\sB}) = \Ncomp{\later(f)}{M(f)}{\later(g)}{M(g)}$.
  \end{itemize}
  We write $(f \dcomp g)$ for $(\Ncomp{f}{\monoidalunit}{g}{\monoidalunit}) \in \STREAM(\stream{X},\stream{Z})$; the \emph{sequential composition} of $f  \in \STREAM(\stream{X},\stream{Y})$ and $g \in \STREAM(\stream{Y},\stream{Z})$.
\end{definition}

\begin{definition} The \emph{identity}
  $\im_{\stream{X}} \in \STREAM(\stream{X},\stream{X})$ is defined by $M(\im_{\stream{X}}) = \monoidalunit$,
  $\now(\im_{\stream{X}}) = \im_{X_{0}}$, and $\later(\im_{\stream{X}}) = \im_{\tail{\stream{X}}}$.
\end{definition}

\begin{definition}[Parallel composition]\label{def:parallelstream}
  Given two streams $f \in \STREAM(\act{\sA}{\stream{X}}, \stream{Y})$
  and $g \in \STREAM(\act{\sB}{\stream{X}'},\stream{Y}')$, we compute $(\Ntensor{f}{\sA}{g}{\sB}) \in \STREAM(\act{(\sA \tensor \sB)}{(\stream{X} \tensor \stream{X'})},\stream{Y}  \tensor \stream{Y'})$, their \emph{parallel composition
  with memories $\sA$ and $\sB$}, as
  \begin{itemize}
    \item $M(\Ntensor{f}{\sA}{g}{\sB}) = M(f) \tensor M(g)$,
    \item $\now(\Ntensor{f}{\sA}{g}{\sB}) = \sigma ; (\now(f)\tensor\now(g)); \sigma$,
    \item $\later(\Ntensor{f}{\sA}{g}{\sB}) = \Ntensor{\later(f)}{M(f)}{\later(g)}{M(g)}$.
  \end{itemize}
  We write $(f \tensor g)$ for $(\Ntensor{f}{\monoidalunit}{g}{\monoidalunit}) \in \STREAM(\stream{X} \tensor \stream{X'},\stream{Y}  \tensor \stream{Y'})$; we call it the \emph{parallel composition} of $f \in \STREAM(\stream{X},\stream{Y})$ and $g \in \STREAM(\stream{X}',\stream{Y}')$.
\end{definition}

\begin{definition}[Memoryless and constant streams]\label{def:inclusion}
  Each sequence \(\stream{f} = \streamExpr{f}\), with $f_{n} \colon X_{n} \to Y_{n}$, induces a stream $\lift{f} \in \STREAM(\stream{X}, \stream{Y})$ defined by $M(\lift{f}) \defn \monoidalunit$, $\now(\lift{f}) \defn f_{0}$, and $\later(\lift{f}) \defn \lift{\tail{\stream{f}}}$.
  Streams of this form are called \emph{memoryless}, i.e. their memories are given by the monoidal unit.

  Moreover, each morphism $f_{0} \colon X \to Y$ induces a \emph{constant} memoryless stream that we also call
  $f \in \STREAM(X,Y)$, defined by $M(\lift{f}) \defn \monoidalunit$, $\now(\lift{f}) \defn f_{0}$, and $\later(\lift{f}) \defn \lift{f}$.
\end{definition}

\begin{theorem}\label{th:category}
Monoidal streams over a \productive{} symmetric monoidal category $(\catC, \tensor, \monoidalunit)$ form a \symmetricMonoidalCategory{} $\STREAM$ with a symmetric monoidal identity-on-objects functor from $\NcatC$.
\end{theorem}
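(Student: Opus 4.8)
The plan is to argue entirely by \emph{coinduction}, using that $\STREAM$ is the final coalgebra of \Cref{eq:observationalstreamshort} — which exists precisely because $\catC$ is \productive{} (\Cref{theorem:observationalfinalcoalgebra}). Under this principle, proving that two streams $f$ and $g$ are equal amounts to producing a dinaturality witness that identifies their memories and makes the first actions $\now(f)$ and $\now(g)$ agree, and then checking that the rests $\later(f)$ and $\later(g)$ are equal — a goal of the very same form, which I am entitled to assume as the coinduction hypothesis. The decisive technical device is that every law is established first for the \emph{memoryful} composites of \Cref{def:sequentialstream,def:parallelstream}: carrying along the additional memory channels $\sA$ and $\sB$ strengthens the hypothesis just enough that the $\later$-part of each equation becomes a verbatim instance of the statement one step later. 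What then remains at each $\now$-stage is an equation in $\catC$, discharged by coherence for \symmetricMonoidalCategories{}.

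\emph{Category axioms.} I would first treat associativity of sequential composition by proving the more general identity comparing the two bracketings of a threefold memoryful composite; at the $\now$-stage this is a rearrangement of the symmetries appearing in \Cref{def:sequentialstream} together with associativity of $\tensor$ in $\catC$ on the combined memory, while at the $\later$-stage it is literally the same identity applied to $\later(f)$, $\later(g)$ and $\later(h)$. The unit laws $\im \dcomp f = f$ and $f \dcomp \im = f$ reduce, at the $\now$-stage, to the unit laws of $\catC$ once the trivial memory $\monoidalunit$ is absorbed dinaturally by the unitor, and reduce to themselves at the $\later$-stage.

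\emph{Monoidal and symmetric structure.} Functoriality of $\tensor$ — the interchange law $(f \dcomp g) \tensor (f' \dcomp g') = (f \tensor f') \dcomp (g \tensor g')$ and preservation of identities — is again a single coinduction whose $\now$-obligation is the interchange law of $\catC$ modulo the symmetries of \Cref{def:parallelstream}. The associator, unitors and symmetry of $\STREAM$ I would take to be the memoryless lifts (\Cref{def:inclusion}) of the componentwise structural isomorphisms of $\NcatC$; since these lifts have trivial memory, each coherence condition (pentagon, triangle, the hexagons and $\sigma \dcomp \sigma = \im$) holds stage by stage exactly because it holds in $\catC$, and coinduction concludes. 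Finally, the identity-on-objects functor $\NcatC \to \STREAM$ is the assignment $\stream{f} \mapsto \lift{f}$ of \Cref{def:inclusion}; it is strict symmetric monoidal because $\lift{f \dcomp g} = \lift{f} \dcomp \lift{g}$, $\lift{f \tensor g} = \lift{f} \tensor \lift{g}$ and preservation of the structural maps are all the trivial-memory specialisation of the coinductions above.

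\emph{Main obstacle.} The delicate part is the dinaturality bookkeeping. The two sides of an axiom generally do not have literally equal first actions; they agree only after sliding a witness $r$ along the memory channel. Hence each coinductive step must both (i) exhibit the correct reindexing $r$ on the composite memory $M(f) \tensor M(g)$ so that the two $\now$-components coincide as elements of the relevant coend, and (ii) confirm that acting by this same $r$ carries one $\later$-component to the other, so that the coinduction hypothesis is applied to genuinely matching streams. Threading these witnesses coherently through the duplicated memories, and checking that the symmetries $\sigma$ inserted in \Cref{def:sequentialstream,def:parallelstream} recombine as required, is where the work lies; once they are fixed, every residual equality is an instance of \symmetricMonoidalCategory{} coherence in $\catC$ applied independently at each stage. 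Productivity enters only to guarantee the existence of the final coalgebra, which is what makes the coinductive argument legitimate.
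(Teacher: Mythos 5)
Your proposal is correct and follows essentially the same route as the paper's own proof (\Cref{th:monoidalstreamscategory} and the supporting lemmas): each law is first established for the memoryful composites so that the $\later$-obligation is a verbatim coinductive instance, the two sides are identified by a dinaturality witness (associator for associativity, symmetry for interchange) with the $\now$-stage discharged by string-diagram coherence in $\catC$, and the structural isomorphisms and the functor from $\NcatC$ are the memoryless lifts. Your closing observation that productivity is needed only to legitimise the coinduction matches the paper's remark following \Cref{def:monoidalstream}.
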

\begin{proof}
  Appendix, \Cref{th:monoidalstreamscategory}.
\end{proof}

\subsection{Delayed feedback for streams}

Monoidal streams form a \emph{delayed feedback} monoidal category.
Given some stream in $\STREAM(\delay \stream{S} \tensor \stream{X}, \stream{S} \tensor \stream{Y})$,
we can create a new stream in $\STREAM(\stream{X}, \stream{Y})$ that passes the output in $\stream{S}$ as a memory channel that gets used as the input in $\delay\stream{S}$.
As a consequence, the category of monoidal streams has a graphical calculus given by that of \categoriesWithFeedback{}. This graphical calculus is complete for extensional equivalence (as we saw in \Cref{th:ext-stateful-sequences}).

\begin{definition}[Delay functor]\label{def:delay-fun-stream}
  The functor from \Cref{def:delay-fun-cn} can be lifted to a monoidal functor $\mathbf{\delay} \colon \STREAM \to \STREAM$ that acts on objects in the same way.
  It acts on morphisms by sending a stream $f \in \STREAM(\stream{X},\stream{Y})$ to the stream given by $M(\delay f) = \monoidalunit$, $\now(\delay f) = \im_{\monoidalunit}$ and $\later(\delay f) = f$.
\end{definition}

\begin{definition}[Feedback operation]
  Given any morphism of the form $\fm \in \STREAM(\act{N}{\delay \stream{S} \tensor \stream{X}}, \stream{S} \tensor \stream{Y})$,
  we define $\fbk(f^{N}) \in \STREAM(\act{N}{\stream{X}}, \stream{Y})$ as
  \begin{itemize}
    \item $M(\fbk(f^{N})) = M(f) \tensor S_{0}$,
    \item $\now(\fbk(f^{N})) = \now(f)$ and
    \item $\later(\fbk(f^{N})) = \fbk(\later(f)^{M(f) \tensor S_{0}})$.
  \end{itemize}
  We write \(\fbk(f) \in \STREAM(\stream{X}, \stream{Y})\) for \(\fbk(f^{\monoidalunit})\), the feedback of $f \in \STREAM(\delay \stream{S} \tensor \stream{X}, \stream{S} \tensor \stream{Y})$
\end{definition}

\begin{theorem}\label{th:monoidalstreamsfeedback}
  \MonoidalStreams{} over a \symmetricMonoidalCategory{}
  $(\catC, \tensor, \monoidalunit)$ form a \(\delay\)-feedback monoidal category
  $(\STREAM, \fbk)$.
\end{theorem}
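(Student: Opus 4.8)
The plan is to verify the five axioms of a feedback monoidal category (\Cref{def:feedback}) for the guarding functor $\fun{F} = \delay$. Since $\STREAM$ is already a symmetric monoidal category by \Cref{th:category} and $\delay$ is a monoidal endofunctor by \Cref{def:delay-fun-stream}, only the axioms governing $\fbk$ remain. Each axiom is an equality of monoidal streams, so I would establish it \emph{coinductively}: to prove $f = g$ in $\STREAM$ it suffices, following \Cref{def:monoidalstream}, to exhibit a dinaturality witness $r$ identifying $M(f)$ with $M(g)$, to check that $\now(f)$ and $\now(g)$ agree through it, and to reduce the equality of $\later(f)$ and $\later(g)$ to the coinduction hypothesis. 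As with sequential and parallel composition, I would carry the arguments in the memory-parametrized form $\fbk(f^{N})$, so that the hypothesis is strong enough to survive one unfolding step.

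The tightening, vanishing, and strength axioms are the routine cases. Since $\now(\fbk(f^{N})) = \now(f)$ and feedback only rewires the $\stream{S}$-channel, pre- and post-composition with streams $u$ and $v$ on the $\stream{X}$- and $\stream{Y}$-channels commutes with $\fbk$ on first actions, and the later parts match by coinduction; this gives tightening. For vanishing, taking $\stream{S} = \monoidalunit$ makes $S_0 = \monoidalunit$, so the memory factor $M(f) \tensor S_0$ collapses to $M(f)$ up to the unit isomorphism and the stream is returned unchanged. For strength, the first action of a parallel composite is $\sigma ; (\now(f) \tensor \now(g)) ; \sigma$, which commutes with feedback on the $\stream{S}$-factor since $g$ does not touch that channel; the later parts again reduce to the hypothesis.

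The joining axiom requires bookkeeping of the order in which the two fed-back first components $S_0$ and $T_0$ are appended to the memory. Iterating feedback on the left produces a memory of the shape $(M(f) \tensor S_0) \tensor T_0$, whereas the single feedback over $\stream{S} \tensor \stream{T}$ on the right produces $M(f) \tensor (S \tensor T)_0 = M(f) \tensor (S_0 \tensor T_0)$; these are identified by associators and the symmetry, compatibly with the dinaturality quotient. Unfolding once reduces to the same statement for $\later(f)$, closing the coinduction.

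The main obstacle is the sliding axiom, which is exactly where the guarding by $\delay$ does its work. For $h \colon \stream{S} \to \stream{T}$, the stream $\delay h$ contributes $\im_{\monoidalunit}$ to its first action and defers $h$ to its later part, so on first actions the left-hand side $\fbk((\delay h \tensor \id{\stream{X}}) ; f)$ reduces to $\now(f)$, while the right-hand side $\fbk(f ; (h \tensor \id{\stream{Y}}))$ has already applied $\now(h)$ to the $\stream{S}$-output. These do not agree on the nose: the equality is realized precisely by a \emph{sliding} (dinaturality) move that pushes $h$ from the output of one stage onto the input of the next, where it is absorbed into the memory channel $M(f) \tensor S_0$ and matched by the sliding relation built into $\STREAM$. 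I would make this precise by proving, by coinduction, a memory-parametrized sliding lemma that threads the pending copy of $h$ as part of the data; the delicate point is checking that the dinaturality witness produced at each step composes correctly across the coinductive step, so that the one-step delay lines up the occurrence of $h$ with the correct time index. Once this is established, the five axioms assemble into the claimed $\delay$-feedback monoidal structure on $\STREAM$.
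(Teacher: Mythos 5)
Your proposal matches the paper's proof in structure and in substance: the appendix verifies each of the five axioms by a coinductive argument on memory-parametrized streams, exhibiting a dinaturality witness between the two memory objects (a symmetry, unitor, or associator for tightening, vanishing, strength, and joining), checking the first actions by string diagrams, and closing with the coinduction hypothesis. Your treatment of sliding is also the paper's: the witness there is $\sigma ; (\now(h) \tensor \im)$ rather than a mere coherence isomorphism, and the paper strengthens the coinduction hypothesis exactly as you suggest, by stating the sliding lemma with an extra morphism acting on the memory channels so that the pending first action of $h$ can be absorbed at each unfolding step.
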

\begin{proof}
  Appendix, \Cref{th:appendix:monoidalstreamsfeedback}.
\end{proof}

\begin{corollary}\label{prop:functor-sequences-streams}
  There is a ``semantics'' identity-on-objects feedback monoidal functor
  \(\Semantics{} \colon \St_{\delay}{\NcatC} \to \STREAM\) from the free $\delay$-feedback monoidal category to the category of monoidal streams.
  Every \extensionalStatefulSequence{} $\extseq{f_{n} \colon M_{n-1} \tensor X_{n} \to Y_{n} \tensor M_{n}}$ gives a monoidal stream $\Semantics{}(f)$, which is defined by $M(\Semantics{}(f)) = M_{0}$,
  $$\now(\Semantics{}(f)) = f_{0},\mbox{ and }\later(\Semantics{}(f)) = \Semantics{}(\tail{f}),$$
  and this is well-defined.
  Moreover, this functor is full when \(\catC\) is \productive{}; it is not generally faithful.
\end{corollary}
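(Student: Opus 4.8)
The plan is to exploit the two explicit descriptions already established. By \Cref{th:ext-stateful-sequences}, morphisms of the free feedback category $\St_{\delay}(\NcatC)$ are exactly extensional sequences $\eSeq$; by \Cref{def:observationalsequence,theorem:observationalfinalcoalgebra}, monoidal streams over a \productive{} category are observational sequences, with $\oSeq \cong \eSeq/{\approx}$. Under these identifications I expect $\Semantics{}$ to be nothing more than the canonical quotient $\eSeq \to \oSeq$, read coinductively. The strategy thus has four parts: (i) check the coinductive formula yields a well-defined element of $\STREAM$, i.e.\ that it descends through the sliding quotient defining $\eSeq$; (ii) check it is a feedback monoidal functor; (iii) deduce fullness from surjectivity of a quotient; (iv) exhibit a failure of faithfulness using \Cref{example:observe}.

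For (i), I would define $\Semantics{}$ on an intensional representative $\extseq{f_n}$ by the stated recursion ($M \mapsto M_0$, $\now \mapsto f_0$, $\later \mapsto \Semantics{}(\tail{f})$); this already lands among the triples of \Cref{def:monoidalstream}, so it is a monoidal stream by construction. The content is invariance under extensional (sliding) equivalence. I would argue by coinduction that a single sliding move through the bar at position $n$ produces semantically equal streams: at $n=0$ the two representatives differ precisely by a reindexing morphism $r_0$ postcomposed to $\now$ and slid into the tail, which is exactly the dinaturality relation on $M$ that \Cref{def:monoidalstream} already quotients by; for $n>0$ the first action and memory agree while the tails are related by a sliding move at position $n-1$, so the coinduction hypothesis closes the case. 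Hence $\Semantics{}$ is constant on extensional classes and is well-defined on $\eSeq = \St_{\delay}(\NcatC)$.

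For (ii), rather than verifying every axiom by hand, I would invoke freeness (\Cref{th:free-feedback}). The memoryless embedding $\NcatC \to \STREAM$ of \Cref{th:category} is a symmetric monoidal identity-on-objects functor, and by \Cref{def:delay-fun-stream} it strictly intertwines the two delay endofunctors (the stream-delay of a memoryless lift equals the memoryless lift of the sequence-delay). Since $(\STREAM,\fbk)$ is a $\delay$-feedback monoidal category (\Cref{th:monoidalstreamsfeedback}), freeness of $\St_{\delay}(\NcatC)$ yields a unique feedback monoidal functor extending this embedding. It then remains to confirm that unfolding this unique functor on a stateful morphism — pushing its state into a feedback loop via the coinductive $\fbk$ of \Cref{th:monoidalstreamsfeedback} — reproduces the stated formula; equivalently, one verifies directly by coinduction that the formula sends identities, sequential and parallel composites, and feedback to their $\STREAM$-counterparts, matching the clauses of \Cref{def:sequentialstream,def:parallelstream} and the feedback operation term by term.

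For (iii) and (iv): in a \productive{} category the comparison of \Cref{theorem:observationalfinalcoalgebra} identifies $\STREAM(\stream{X},\stream{Y})$ with $\oSeq(\stream{X},\stream{Y}) = \eSeq(\stream{X},\stream{Y})/{\approx}$, and the same now/later recursion shows $\Semantics{}$ agrees with the canonical quotient $\eSeq \twoheadrightarrow \oSeq$; a quotient is surjective on hom-sets, so $\Semantics{}$ is full. Its fibres are exactly the observational equivalence classes, so $\Semantics{}$ is faithful iff $\approx$ coincides with extensional equality on $\eSeq$, which fails by \Cref{example:observe}: there the distinct extensional sequences $\fbk(\morphfby_{\stream{X}})$ and $(\blackComonoidUnit)_{X}$ become observationally, hence $\Semantics{}$-, equal. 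Thus $\Semantics{}$ is not faithful. The main obstacle is step (i): getting the coinductive bisimulation right so that the sliding quotient on sequences is matched, stage by stage, against the memory-dinaturality quotient built into monoidal streams, with the base case of the recursion carrying the argument.
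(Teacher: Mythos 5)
Your proposal is correct and follows essentially the same route as the paper: existence of $\Semantics{}$ via freeness of $\St_{\delay}(\NcatC)$ together with the feedback structure on $\STREAM$, and fullness via the identification of monoidal streams with extensional sequences modulo observational equivalence in the productive case. Your additional coinductive check that the explicit now/later formula descends through sliding equivalence, and the explicit non-faithfulness witness from the discard example, are details the paper leaves implicit but are argued correctly.
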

\begin{proof}
  We construct $\Semantics{}$ from \Cref{th:monoidalstreamsfeedback,th:free-feedback}.
  Moreover, when \(\catC\) is \productive{}, by~\Cref{theorem:observationalfinalcoalgebra}, \monoidalStreams{} are \extensionalSequences{} quotiented by \observationalEquality{}, giving the fullness of the functor.
\end{proof}

\section{Cartesian Streams}
\label{section:classicalstreams}
Dataflow languages such as \Lucid{} or \Lustre{} \cite{wadge1985lucid,halbwachs1991lustre} can be thought
of as using an underlying cartesian monoidal structure: we can copy and discard variables and resources without affecting the order of operations.
These abilities correspond exactly to cartesianity thanks to Fox's theorem (\Cref{th:fox}, see \cite{fox76}).

\subsection{Causal stream functions}

In the cartesian case, there is available literature on the categorical semantics of dataflow programming languages~\cite{benveniste93,uustalu2008comonadic,cousot19,delpeuch19,oliveira84}.
Uustalu and Vene~\cite{uustalu05} provide elegant comonadic semantics to a \Lucid-like programming language using the non-empty list comonad.
In their framework, \emph{streams} with types \(\stream{X} = \streamExpr{X}\) are families of elements $\mathbf{1} \to X_{n}$.
\emph{Causal stream functions} from \(\stream{X} = \streamExpr{X}\) to \(\stream{Y} = \streamExpr{Y}\) are families of functions $f_{n} \colon X_{0} \times \dots \times X_{n} \to Y_{n}$.
Equivalently, they are, respectively, the states $(\stream{1} \to \stream{X})$ and morphisms $(\stream{X} \to \stream{Y})$ of the cokleisli category of the comonad $\neList \colon \NSet \to \NSet$ defined by
\[(\neList(\stream{X}))_{n} \coloneqq \prod^{n}_{i=0} X_{i}.\]

This comonad can be extended to other base categories, $\neList \colon \NcatC \to \NcatC$ \emph{only} as long as $\catC$ is cartesian.
Indeed, we can prove that the mere existence of such a comonad implies cartesianity of the base category.
For this, we introduce a refined version of Fox's theorem (\Cref{th:refinedfoxappendix}).

\begin{theorem}\label{th:nelist}
  Let $(\catC,\otimes,I)$ be a \symmetricMonoidalCategory{}.
  Let $\defining{linknelist}{\ensuremath{\neList}} \colon \NcatC \to \NcatC$ be the functor defined by
  \[\neList(X)_{n} \coloneqq \bigotimes^{n}_{i=0} X_{i}.\]
  This functor is monoidal, with oplaxators $\psi^{+}_{0} \colon \neList(I) \to I$ and $\psi_{X,Y} \colon \neList(X \otimes Y) \to \neList(X) \otimes \neList(Y)$ given by symmetries, associators and unitors.

  The monoidal functor $\neList \colon \NcatC \to \NcatC$ has a monoidal comonad structure if and only if its base monoidal category $(\catC, \otimes, I)$ is cartesian monoidal.
\end{theorem}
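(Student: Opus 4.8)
The plan is to prove the two implications separately: the forward direction (cartesian implies comonad) is routine, while the converse is the substance of the statement. For the forward direction, suppose $(\catC,\otimes,I)$ is cartesian. By Fox's theorem (\Cref{th:fox}) every object carries a natural commutative comonoid structure, supplying coherent projections and diagonals. I would then equip $\neList$ with the familiar non-empty list comonad structure: the counit $\varepsilon_{\stream{X}} \colon \neList(\stream{X}) \to \stream{X}$ projects onto the last coordinate at each stage, $X_{0} \otimes \cdots \otimes X_{n} \to X_{n}$, and the comultiplication $\nu_{\stream{X}} \colon \neList(\stream{X}) \to \neList\neList(\stream{X})$ is assembled at stage $n$ from copies of the projections $X_{0}\otimes\cdots\otimes X_{n} \to X_{0}\otimes\cdots\otimes X_{i}$ for $i \leq n$. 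Both are built purely from projections and diagonals, hence are genuine morphisms of $\NcatC$; checking the comonad and monoidal-comonad axioms is a direct diagram chase using the comonoid laws, and compatibility with the stated oplaxators holds because those oplaxators are exactly the canonical coherence isomorphisms.

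For the converse, the strategy is to recover a natural commutative comonoid structure on each object of $\catC$ from the monoidal comonad structure of $\neList$, and then invoke the refined Fox theorem (\Cref{th:refinedfoxappendix}). Given a comonad counit $\varepsilon \colon \neList \Rightarrow \mathrm{Id}$ and comultiplication $\nu \colon \neList \Rightarrow \neList\neList$, the idea is to read off comonoid operations from their low-stage components evaluated on sequences padded with the unit. Concretely, I would define the discard $\epsilon_{A} \colon A \to I$ from the stage-$1$ component of $\varepsilon$ at the sequence $(A, I, I, \dots)$, which has type $A \otimes I \to I$, precomposed with the unitor; and the copy $\Delta_{A} \colon A \to A \otimes A$ from the stage-$1$ component of $\nu$ at the same sequence, whose target is $\neList(\stream{X})_{0} \otimes \neList(\stream{X})_{1} = A \otimes (A \otimes I)$, again up to unitors. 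Naturality of $\varepsilon$ and $\nu$ as transformations of functors on $\NcatC$, restricted to morphisms $\stream{f} = (f, \mathrm{id}_{I}, \mathrm{id}_{I}, \dots)$ supported on the first coordinate, then yields naturality of $\epsilon_{A}$ and $\Delta_{A}$ in $A \in \catC$.

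It remains to verify that $(\epsilon_{A}, \Delta_{A})$ is a commutative comonoid and that these structures are compatible with $\otimes$ and trivial on $I$, which are precisely the hypotheses of the refined Fox theorem. Counitality follows from the comonad counit laws together with the explicit form of the oplaxators (symmetries, associators, unitors), which let me identify the relevant unit-padded components; coassociativity follows from coassociativity of $\nu$ read at stage $1$ against stage $2$; and cocommutativity is where the symmetry part of the oplaxators enters, since it is what forces the two ways of extracting $A \otimes A$ from the double list to agree. Compatibility with the tensor and unit comes from $\varepsilon$ and $\nu$ being comonoidal natural transformations for the given canonical monoidal structure on $\neList$. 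The main obstacle is exactly this bookkeeping in the converse: one must carefully track how the unit-padding interacts with the canonical oplaxators so that the stage-$1$ and stage-$2$ comonad equations collapse precisely to the comonoid axioms, and in particular confirm that the symmetry oplaxator delivers genuine cocommutativity rather than some twisted variant. Once the natural commutative comonoid structure is in place, cartesianity is immediate from the refined Fox theorem.
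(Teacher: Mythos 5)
Your proposal is correct and follows essentially the same route as the paper: both directions match, and in particular the converse extracts the counit and comultiplication on an object $A$ from the low-stage components of $\varepsilon$ and $\nu$ at the unit-padded sequence $(A,I,I,\dots)$, derives naturality and uniformity from $\varepsilon$ and $\nu$ being (op)monoidal natural transformations, and concludes by the refined Fox theorem (\Cref{th:refinedfoxappendix}). One correction: the hypotheses of the refined Fox theorem are only a \emph{natural, uniform, counital comagma} structure --- coassociativity and cocommutativity are \emph{conclusions} of that theorem, not hypotheses --- so the ``main obstacle'' you identify (checking that the symmetry oplaxator delivers genuine cocommutativity, and reading coassociativity off the stage-$1$/stage-$2$ comonad equations) can be skipped entirely; this is precisely the bookkeeping the refined statement was introduced to avoid.
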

\begin{proof}[Proof sketch]
  The cartesian structure can be shown to make $\neList$ an opmonoidal comonad.
  Conversely, the opmonoidal comonad structure implies that every object should have a natural and uniform counital comagma structure.
  By our refined statement of Fox's theorem (\Cref{th:refinedfoxappendix}), this implies cartesianity. See Appendix, \Cref{ax:th:nelist}.
\end{proof}

This means that we cannot directly extend Uustalu and Vene's approach to the monoidal case.
However, we prove in the next section that our definition of \monoidalStreams{} particularizes to their \emph{causal stream functions}~\cite{uustalu05,katsumata19}.

\subsection{Cartesian monoidal streams}

The main claim of this section is that, in a \cartesianMonoidalCategory{}, \monoidalStreams{} instantiate to \emph{causal stream functions} (\Cref{th:cartesianstreams}). Let us fix such a category, $(\catC,\times,1)$.

The first observation is that the universal property of the cartesian product simplifies the fixpoint equation that defines monoidal streams.
This is a consequence of the following chain of isomorphisms, where we apply a \hyperlink{linkcoyoneda}{Yoneda reduction} to simplify the coend.
\[\begin{aligned}\label{eq:cartesianStreams}
  &\streamProf(\stream{X}, \stream{Y}) \cong \\
  &\textstyle{\coend{M} \idProf(X_{0}, M \times Y_{0}) \times \streamProf(\act{M}{\tail{\stream{X}}}, \stream{Y})} \cong \\
  &\textstyle{\coend{M} \idProf(X_{0}, M) \times \idProf(X_{0}, Y_{0}) \times  \streamProf(\act{M}{\tail{\stream{X}}}, \stream{Y})} \cong \\
  &\idProf(X_{0}, Y_{0}) \times  \streamProf(\act{X_{0}}{\tail{\stream{X}}}, \tail{\stream{Y}}).
\end{aligned}\]
Explicitly, the Yoneda reduction works as follows:
the first action of a stream $f \in \STREAM(\stream{X},\stream{Y})$ can be uniquely split as
$\now(f) = (f_{1},f_{2})$ for some $f_{1} \colon X_{0} \to Y_{0}$ and $f_{2} \colon X_{0} \to M(f)$.
Under the \emph{dinaturality} equivalence relation, $(\sim)$, we can always find a unique representative with $M = X_{0}$ and $f_{2} = \im_{X_{0}}$.

The definition of monoidal streams in the cartesian case is thus simplified (\Cref{prop:fixpoint-cartesian-streams}).
From there, the explicit construction of cartesian monoidal streams is straightforward.

\begin{definition}[Cartesian monoidal streams]\label{prop:fixpoint-cartesian-streams}
The set of \emph{cartesian monoidal streams}, given inputs $\stream{X}$ and outputs $\stream{Y}$, is the terminal fixpoint of the equation
  \[\streamProf(\stream{X}, \stream{Y}) \cong \idProf(X_{0}, Y_{0}) \times  \streamProf(\act{X_{0}}{\tail{\stream{X}}}, \tail{\stream{Y}}).\]
  In other words, a cartesian monoidal stream $f \in \STREAM(\stream{X},\stream{Y})$ is a pair consisting of
  \begin{itemize}
    \item $\fst(f) \in \hom{}(X_{0}, Y_{0})$, the \emph{first action}, and
    \item $\snd(f) \in \STREAM(\act{X_{0}}{\stream{\tail{X}}}, \stream{\tail{Y}})$, the \emph{rest of the action}.
  \end{itemize}
\end{definition}

\begin{theorem}\label{th:cartesianstreams}
  In the cartesian case, the final fixpoint of the equation in \Cref{eq:observationalstreamshort} is given by
  the set of causal functions,
  \begin{equation*}
    \streamProf(\stream{X}, \stream{Y}) = \prod_{n \in \naturals}^{\infty} \idProf(X_{0} \times \cdots \times X_{n}, Y_{n}).
  \end{equation*}
  That is, the category $\STREAM$ of \monoidalStreams{} coincides with the cokleisli monoidal category of the non-empty list monoidal comonad \(\neList \colon \NcatC \to \NcatC\).

\end{theorem}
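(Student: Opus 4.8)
The plan is to work with the simplified fixpoint equation of \Cref{prop:fixpoint-cartesian-streams} rather than the original one in \Cref{eq:observationalstreamshort}. Write $\fun{F}$ for the endofunctor on $[\NcatC^{op} \times \NcatC, \Set]$ given by $\fun{F}(\fun{Q})(\stream{X},\stream{Y}) = \idProf(X_0, Y_0) \times \fun{Q}(\act{X_0}{\tail{\stream{X}}}, \tail{\stream{Y}})$. The Yoneda reduction already carried out in the text exhibits an isomorphism between $\fun{F}$ and the right-hand side of \Cref{eq:observationalstreamshort}, so the two equations share a final fixpoint. Since every cartesian category is \productive{} (\Cref{prop:cartesianproductive}), this fixpoint exists by \Cref{theorem:observationalfinalcoalgebra}; but rather than appeal to observational sequences I would compute it explicitly through Adamek's theorem (\Cref{th:adamek}) applied to the simpler $\fun{F}$.

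First I would unwind the terminal sequence of $\fun{F}$. Starting from the terminal functor $1$, an easy induction gives
\[\fun{F}^{n}1(\stream{X},\stream{Y}) \cong \prod_{i=0}^{n-1} \idProf(X_0 \times \cdots \times X_i, Y_i),\]
where the connecting maps $\fun{F}^{n+1}1 \to \fun{F}^{n}1$ are the projections forgetting the last factor; the crucial point is that repeatedly feeding $\act{X_0}{\tail{\stream{X}}}$ into the input slot accumulates, by associativity, the prefix $X_0 \times \cdots \times X_i$ in the $i$-th factor. Taking the limit along these projections yields
\[L(\stream{X},\stream{Y}) = \prod_{n \in \naturals}^{\infty} \idProf(X_0 \times \cdots \times X_n, Y_n),\]
which is exactly the claimed set of causal functions. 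To invoke Adamek I then verify that $\fun{F}$ preserves this limit, i.e. that $L$ is a fixpoint: splitting off the $n=0$ factor leaves $\idProf(X_0, Y_0) \times \prod_{m=1}^{\infty} \idProf(X_0 \times \cdots \times X_m, Y_m)$, and reassociating the accumulated prefixes identifies the second factor with $L(\act{X_0}{\tail{\stream{X}}}, \tail{\stream{Y}})$. Hence $\fun{F}L \cong L$ canonically, and Adamek identifies $L$ as the final fixpoint, establishing the hom-set formula.

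It remains to match the categorical and monoidal structure. Both $\STREAM$ and the cokleisli category have the objects of $\NcatC$ as objects, and under the identification $\streamProf(\stream{X},\stream{Y}) \cong \prod_n \idProf(X_0 \times \cdots \times X_n, Y_n)$ the right-hand side is by definition the cokleisli hom-set $\idProf[\NcatC](\neList \stream{X}, \stream{Y})$ of the comonad $\neList$, which is a comonad precisely because $\catC$ is cartesian (\Cref{th:nelist}). I would then show that the coinductively defined identity and sequential composition of \Cref{def:sequentialstream}, transported along the $\fst$/$\snd$ decomposition of \Cref{prop:fixpoint-cartesian-streams}, compute respectively the counit $\varepsilon$ and the cokleisli composite $\delta \dcomp \neList(f) \dcomp g$, with a symmetric argument comparing parallel composition against the opmonoidal structure of $\neList$. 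I expect this last verification to be the main obstacle: the stream operations are a guarded corecursion threading a growing memory channel, while the cokleisli operations thread the explicit prefix comonad, so matching them demands a careful coinductive bookkeeping showing that the memory produced at stage $n$ is, up to dinaturality, the accumulated prefix $X_0 \times \cdots \times X_n$. The explicit hom-set computation above is what makes this tractable, since it pins down the canonical representative whose memory is exactly that prefix.
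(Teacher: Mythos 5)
Your proposal is correct and follows essentially the same route as the paper, whose entire proof is the single line ``By Adamek's theorem'': you apply Adamek to the Yoneda-reduced fixpoint equation of \Cref{prop:fixpoint-cartesian-streams}, compute the terminal sequence to get the product of causal hom-sets, and check the limit is preserved. The additional bookkeeping you flag at the end (matching the coinductive composition against the cokleisli composite of $\neList$) is left implicit in the paper but is a faithful completion of the same argument, not a departure from it.
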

\begin{proof}
  By Adamek's theorem~(\Cref{th:adamek}).
\end{proof}

\begin{corollary}
  Let $(\catC, \times, \mathbf{1})$ be a cartesian monoidal category. The category $\STREAM$ is cartesian monoidal.
\end{corollary}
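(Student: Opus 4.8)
The plan is to apply Fox's theorem (\Cref{th:fox}) directly to the symmetric monoidal category $(\STREAM,\tensor,\monoidalunit)$ obtained in \Cref{th:category,th:monoidalstreamsfeedback}: it suffices to equip every object with a natural, $\tensor$-compatible commutative comonoid structure for which every stream is a comonoid homomorphism, and then Fox's theorem gives that $\tensor$ \emph{is} the categorical product and $\monoidalunit$ is terminal, which is precisely the claim. As a sanity check and an alternative existence argument, \Cref{th:cartesianstreams} already identifies $\STREAM$ with the coKleisli category of $\neList$ over the cartesian category $\NcatC$, and the coKleisli category of any comonad over a category with finite products again has finite products: for objects $\stream{C},\stream{A},\stream{B}$ one has the one-line hom-bijection
\[\STREAM(\stream{C}, \stream{A} \times \stream{B}) = \NcatC(\neList \stream{C}, \stream{A} \times \stream{B}) \cong \NcatC(\neList \stream{C},\stream{A}) \times \NcatC(\neList \stream{C},\stream{B}) = \STREAM(\stream{C}, \stream{A}) \times \STREAM(\stream{C},\stream{B}),\]
natural in $\stream{C}$ because coKleisli precomposition is precomposition with $\neList(-)$ followed by the comultiplication, which commutes with postcomposing the projections; dually the constant-$\mathbf 1$ sequence is terminal. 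This confirms $\STREAM$ has products computed stagewise, but the Fox route has the advantage of identifying them with the already-constructed $\tensor$ without a separate comparison.

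For the comonoid structure, I would transport the pointwise copy $\lift{\Delta}$ and discard $\lift{!}$ from $\catC$ along the symmetric monoidal identity-on-objects functor $\NcatC \to \STREAM$ of \Cref{th:category}, using the memoryless inclusion $\lift{(-)}$ of \Cref{def:inclusion}. Since $\catC$ is cartesian, the pointwise diagonals and terminal maps form a commutative comonoid in $\NcatC$, and symmetric monoidal functors preserve comonoids; hence $(\lift{\Delta}, \lift{!})$ is a commutative comonoid on each object of $\STREAM$, automatically compatible with $\tensor$ and natural with respect to $\NcatC$-morphisms. The only remaining Fox hypothesis — and the entire content of the proof — is that \emph{every} monoidal stream $f \in \STREAM(\stream{X},\stream{Y})$ is a comonoid homomorphism, i.e. $f \dcomp \lift{\Delta}_{\stream{Y}} = \lift{\Delta}_{\stream{X}} \dcomp (f \tensor f)$ and $f \dcomp \lift{!}_{\stream{Y}} = \lift{!}_{\stream{X}}$.

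The main obstacle is exactly this last condition, and it is subtle precisely because a monoidal stream carries a \emph{shared} internal memory whereas $f \tensor f$ runs two independent copies with memory $M(f) \tensor M(f)$ (\Cref{def:parallelstream}). I would prove it coinductively on \Cref{def:monoidalstream}. The base step compares first actions and reduces, at each stage, to the naturality of the diagonal in the cartesian category $\catC$ (copying the output of $\now(f)$ equals copying the input and applying $\now(f)$ twice), which holds by Fox-cartesianity of $\catC$. The coinductive step is where shared memory intervenes: after copying, the two memory channels of $f \tensor f$ are fed identical inputs, so they evolve identically and are dinaturally identified with the single memory channel of $f$; invoking the dinatural quotient in $M$ (the equivalence of \Cref{def:monoidalstream}) and the coinduction hypothesis on $\later(f)$ closes the argument, with the discard case being the analogous, simpler coinduction. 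Granting this homomorphism property, Fox's theorem yields that $(\STREAM,\tensor,\monoidalunit)$ is cartesian monoidal, completing the proof.
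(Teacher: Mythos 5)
Your proposal is correct, but your primary route differs from the paper's. The paper treats this corollary as an immediate consequence of \Cref{th:cartesianstreams}: since $\STREAM$ is monoidally isomorphic to the coKleisli category of the opmonoidal comonad $\neList$ on the cartesian category $\NcatC$, and the tensor induced there by the projections and diagonals is the categorical product (your displayed hom-bijection, plus terminality of the constant-$\mathbf{1}$ sequence), cartesianity follows in one line. That is exactly your ``sanity check'' paragraph, so you do recover the intended argument. Your main route — equipping each object with the memoryless comonoid $(\lift{\Delta},\lift{!})$ and invoking Fox's theorem — is a genuinely different and valid strategy, and it correctly isolates the one nontrivial obligation: that every stream is a comonoid homomorphism, which fails in general monoidal categories and is precisely where cartesianity of $\catC$ enters. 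Your sketch of that lemma is sound: the first-action comparison is naturality of $\Delta$ and $!$ in $\catC$, and the identification of the doubled memory $M(f)\tensor M(f)$ of $\Delta;(f\tensor f)$ with the single memory $M(f)$ of $f;\Delta$ is the generating dinaturality relation of \Cref{def:monoidalstream} along $r=\Delta_{M(f)}$. Do note that, as with the paper's other coinductive proofs, the coinduction hypothesis must be strengthened to streams carrying an extra memory channel $A$ (with $A$ itself copied, resp.\ discarded), since $\later(f)$ lives in $\STREAM(\act{M(f)}{\tail{\stream{X}}},\tail{\stream{Y}})$; as stated, your induction statement does not close on itself. The trade-off: the paper's route is essentially free once the explicit final-fixpoint computation of \Cref{th:cartesianstreams} is done, whereas your Fox route avoids computing the final coalgebra explicitly and works directly with the coinductive presentation, at the price of proving the homomorphism lemma — which is the entire content either way.
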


\subsection{Example: the Fibonacci sequence}\label{example:fibonacci}
Consider $(\Set,\times,\mathbf{1})$, the \cartesianMonoidalCategory{} of small sets and functions.
And let us go back to the morphism $\mathsf{fib} \in \STREAM(\mathbf{1},\mathbb{N})$ that we presented in the Introduction (\Cref{figure:fibonacci}).
By \Cref{th:cartesianstreams}, a morphism of this type is, equivalently, a sequence of natural numbers. Using the previous definitions in \Cref{sec:monoidal-streams,section:classicalstreams}, \hyperlink{linkexamplefibonacci}{we can explicitly compute} this sequence to be $\mathsf{fib} = [0,1,1,2,3,5,8,\dots]$ (see the Appendix, \Cref{computation:fibonacci}).
 
\section{Stochastic Streams}
\label{section:stochastic-streams}

Monoidal categories are well suited for reasoning about probabilistic processes.
Several different categories of probabilistic channels have been proposed in the literature \cite{panangaden1999, baez2016, cho2019}.
They were largely unified by Fritz \cite{fritz2020} under the name of \emph{\MarkovCategories{}}.
For simplicity, we work in the discrete stochastic setting, i.e. in the Kleisli category of the finite distribution monad, $\Stoch$, but we will be careful to isolate the relevant structure of Markov categories that we use.

The main result of this section is that \emph{\controlledStochasticProcesses{}} \cite{fleming1975,ross1996stochastic} are precisely monoidal streams over $\Stoch$.
That is, controlled stochastic processes are the canonical solution over $\Stoch$ of the fixpoint equation in \Cref{eq:observationalstreamshort}.

\subsection{Stochastic processes}

We start by recalling the notion of \emph{stochastic process} from probability theory and its ``controlled'' version.
The latter is used in the context of \emph{stochastic control} \cite{fleming1975,ross1996stochastic}, where the user has access to the parameters or optimization variables of a probabilistic model.

A \emph{discrete stochastic process} is defined as a collection of random variables $Y_1, \dots Y_n$ indexed by discrete time.
At any time step $n$, these random variables are distributed according to some $p_n \in \distr(Y_1 \times \dots \times Y_n)$.
Since the future cannot influence the past, the marginal of $p_{n+1}$ over $Y_{n+1}$ must equal $p_{n}$.
When this occurs, we say that the family of distributions $(p_n)_{n \in \naturals}$ is \emph{causal}.

More generally, there may be some additional variables $X_1, \dots, X_n$ which we have control over.
In this case, a \emph{\controlledStochasticProcess{}} is defined as a collection of \emph{controlled random variables}
distributing according to $f_n \colon  X_1 \times \dots \times X_n \to \distr(Y_1, \dots, Y_n)$.
Causality ensures that the marginal of $f_{n+1}$ over $Y_{n+1}$ must equal $f_{n}$.

\begin{definition}\label{def:stoch-rep-process}\defining{linkstochasticprocess}{}
  Let \(\stream{X}\) and \(\stream{Y}\) be sequences of sets.
  A \emph{controlled stochastic process} \(f \colon \stream{X} \to \stream{Y}\) is a sequence of functions \[f_{n} \colon X_{n} \times \dots \times X_{1} \to \distr(Y_{n} \times \dots \times Y_{1})\] satisfying \emph{causality} (the \emph{marginalisation property}).
That is, such that \(f_{n}\) coincides with the marginal distribution of \(f_{n+1}\) on the first \(n\) variables,
making the diagram in \Cref{figure:commute} commute.
\begin{figure}[h]
\begin{tikzcd}
    X_0 \times \dots \times X_{n+1} \rar{f_{n+1}} \dar[swap]{\pi_{0,\dots,n}} & D(Y_0 \times \dots \times Y_{n+1}) \dar{D\pi_{0,\dots,n}} \\
    X_0 \times \dots \times X_{n}  \rar{f_{n}} & D(Y_0 \times \dots \times Y_{n})
  \end{tikzcd}
  \caption{Marginalisation for stochastic processes.}
  \label{figure:commute}
\end{figure}

Controlled stochastic processes with componentwise composition, identities and tensoring, are the morphisms of a symmetric monoidal category \(\Marg\).
\end{definition}

Stochastic monoidal streams and stochastic processes not only are the same thing but they compose in the same way: they are \emph{isomorphic} as categories.

\begin{theorem}\label{th:stochasticprocesses}
The category of stochastic processes $\StochProc$  is monoidally isomorphic to the category $\STREAM$ over $\Stoch$.
\end{theorem}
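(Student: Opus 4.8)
The plan is to route the whole argument through the final-fixpoint characterisation of monoidal streams, rather than constructing an isomorphism by hand. Since $\Stoch$ is a Markov category with conditionals and ranges, it is \productive{} (\Cref{appendix:productivemarkov}); hence, by \Cref{theorem:observationalfinalcoalgebra}, monoidal streams over $\Stoch$ are exactly the \observationalSequences{} $\oSeq$, computed as the limit $\lim_{n} \Stage{n}(\stream{X}, \stream{Y})$ of the terminal sequence of the endofunctor defining \Cref{eq:observationalstreamshort}. It therefore suffices to identify this limit, together with its composition and tensoring, with the category $\StochProc$ of \controlledStochasticProcesses{} (\Cref{def:stoch-rep-process}).

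The heart of the proof is a single natural isomorphism showing that each finite stage collapses to one causal stochastic map:
\[
  \Stage{n}(\stream{X}, \stream{Y}) \;\cong\; \Stoch(X_{0} \times \cdots \times X_{n},\, Y_{0} \times \cdots \times Y_{n})_{\mathrm{caus}},
\]
where the subscript restricts to the stochastic maps whose marginal on $Y_{0} \times \cdots \times Y_{i}$ factors through $X_{0} \times \cdots \times X_{i}$ for every $i \le n$ --- exactly the $n$-truncations of \controlledStochasticProcesses{}. The map from left to right composes the chain $\bra{f_{0} | \cdots | f_{n}}$ sequentially and marginalises out the final memory $M_{n}$; the sequential shape forces the resulting joint map to be causal in this sense. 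The map from right to left disintegrates a causal joint map into a chain of conditionals, storing the realised history in the memory channels. Existence of this factorisation uses \emph{conditionals}, and the fact that it is well-defined and inverse \emph{up to the coend (sliding) quotient} --- so that only the observable output marginals survive, never the choice of what is stored --- is exactly where productivity, in particular \emph{ranges}, is used.

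Passing to the limit over $n$ then turns a compatible tower of truncations into a full family $(f_{n})_{n}$, and the compatibility constraints of the terminal sequence are precisely the marginalisation squares of \Cref{figure:commute}; thus $\lim_{n} \Stage{n}(\stream{X}, \stream{Y}) \cong \StochProc(\stream{X}, \stream{Y})$, naturally in $\stream{X}$ and $\stream{Y}$. Combined with the previous step this already yields an isomorphism of profunctors $\STREAM \cong \StochProc$. To upgrade it to a monoidal isomorphism of categories, note that both categories share the objects $\NSet$ and tensor objects componentwise, so it remains to check that sequential composition (\Cref{def:sequentialstream}) and parallel composition (\Cref{def:parallelstream}) of streams are carried to the componentwise composition and tensoring of \controlledStochasticProcesses{}. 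As the isomorphism is ``take the composite causal map at each stage'', this reduces at every truncation $n$ to the statement that the composite of a stream composition is the Kleisli composite of the composites, a routine coinductive verification, and likewise for the tensor and identities.

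The main obstacle is the stage-wise isomorphism of the second paragraph: matching the coend (dinaturality/sliding) quotient on the memory channels with the causality/marginalisation condition. This is where both conditionals (to produce the chain-of-memories factorisation) and ranges (to guarantee it is canonical modulo sliding, so that a memory carrying no observable information is identified with the trivial memory) are indispensable; it is the abstract reason why, stochastically, discarding an output collapses a process to the trivial one (\Cref{figure:silentwalk}). By contrast, naturality in the type sequences and the monoidal coherence are comparatively mechanical.
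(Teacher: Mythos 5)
Your proposal is correct and follows essentially the same route as the paper's proof: productivity of $\Stoch$ via conditionals and ranges, the composite-and-marginalise map in one direction and disintegration into conditionals in the other (with ranges guaranteeing well-definedness modulo the sliding quotient, as in the paper's \Cref{prop:condtoproc}), identification of observational sequences with the limit of the $\Stage{n}$, and a final routine check that composition and tensor are preserved. The only cosmetic difference is that you package the bijection as a stage-wise isomorphism followed by a limit, whereas the paper defines the maps $\obs$ and $\proc$ directly on full sequences and proves injectivity and surjectivity; the content is the same.
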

\begin{proof}[Proof sketch]
  Appendix, \Cref{theorem:obstochiso}. The proof of this result is non-trivial and relies on a crucial property concerning ranges in $\Stoch$.
  The proof is moreover written in the language of Markov categories where the property of ranges can be formulated in full abstraction.
\end{proof}

We expect that the theorem above can be generalised to interesting categories
of probabilistic channels over measurable spaces (such as the ones covered
in \cite{panangaden1999, cho2019, fritz2020}).

\begin{corollary}
  \(\StochProc\) is a \feedbackMonoidalCategory{}.
\end{corollary}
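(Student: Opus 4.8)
The plan is to \emph{transport} the feedback monoidal structure already available on $\STREAM$ across the isomorphism of \Cref{th:stochasticprocesses}. Two ingredients are in hand. First, by \Cref{th:monoidalstreamsfeedback}, the category of monoidal streams over $\Stoch$ carries a $\delay$-feedback monoidal structure $(\STREAM, \fbk)$, satisfying axioms \ref{axiom:tight}--\ref{axiom:slide} of \Cref{def:feedback}. Second, by \Cref{th:stochasticprocesses}, there is a monoidal isomorphism $\Phi \colon \StochProc \to \STREAM$ with inverse $\Phi^{-1}$; in particular $\StochProc$ is already a \symmetricMonoidalCategory{}, so only the feedback operation needs to be supplied.

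First I would define the transported data. Take the guarding endofunctor to be the conjugate $\delay' \defn \Phi^{-1} \circ \delay \circ \Phi \colon \StochProc \to \StochProc$, which is again a monoidal endofunctor since $\delay$ is and $\Phi$ is monoidal. For the feedback operator, set
\[\mathbf{fbk}'_{S}(f) \defn \Phi^{-1}\bigl(\fbk[\Phi S](\Phi f)\bigr) \colon \StochProc(\delay' S \tensor X,\, S \tensor Y) \to \StochProc(X, Y).\]
This is well typed because $\Phi(\delay' S \tensor X) \cong \delay(\Phi S) \tensor \Phi X$ and $\Phi(S \tensor Y) \cong \Phi S \tensor \Phi Y$, so $\Phi f$ lies in the domain of $\fbk[\Phi S]$.

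Next I would check that $(\StochProc, \delay', \mathbf{fbk}')$ satisfies \ref{axiom:tight}--\ref{axiom:slide}. Each of these is an equation written purely in terms of composition, tensoring, identities, the monoidal unit, the symmetry, the guarding functor, and the feedback operator. Applying $\Phi$ to such an equation in $\StochProc$ and using that $\Phi$ preserves all of this structure turns it into the corresponding equation for $(\STREAM, \delay, \fbk)$, which holds by \Cref{th:monoidalstreamsfeedback}; conjugating back by $\Phi^{-1}$ yields the desired identity in $\StochProc$. The main obstacle is therefore not conceptual but a matter of coherence bookkeeping: one must track the monoidal coherence isomorphisms of $\Phi$ through each axiom. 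Because $\Phi$ is an isomorphism of categories, these coherence maps are themselves invertible and correspond on the two sides, so they cancel and no genuine calculation over $\Stoch$ is required. The stated corollary then follows immediately.
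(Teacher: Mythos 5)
Your proposal is correct and is exactly the argument the paper intends: the corollary is stated without proof precisely because the feedback monoidal structure of $\STREAM$ over $\Stoch$ (\Cref{th:monoidalstreamsfeedback}) transports immediately across the monoidal isomorphism of \Cref{th:stochasticprocesses}, which is moreover identity-on-objects, so the conjugated endofunctor $\delay'$ agrees with $\delay$ on objects and the coherence bookkeeping you describe is all there is to check.
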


\subsection{Examples}

We have characterized in two equivalent ways the notion of controlled stochastic process.
This yields a categorical semantics for probabilistic dataflow programming:
we may use the syntax of feedback monoidal categories to specify simple stochastic programs and evaluate their semantics in $\StochProc$.

\begin{example}[Random Walk] \label{example:walk}

Recall the morphism $\mathsf{walk} \in \STREAM(\mathbf{1},\mathbb{Z})$ that we
depicted back in \Cref{string:walk}.

Here, $\mathsf{unif} \in \STREAM(\mathbf{1},\{-1,1\})$, is a uniform random generator
that, at each step, outputs either $1$ or $(-1)$. The output of this
uniform random generator is then added to the current position, and we declare
the starting position to be $0$.
Our implementation of this morphism, following the definitions from \Cref{sec:monoidal-streams}
(\Cref{implementation:walk}) is, by \Cref{th:stochasticprocesses}, a discrete stochastic process, and it produces samples like the following ones.
\[\begin{aligned}
& [0,1,0,-1,-2,-1,-2,-3,-2,-3,\dots] \\
& [0,1,2,1,2,1,2,3,4,5,\dots] \\
& [0,-1,-2,-1,-2,-1,0,-1,0,-1,\dots]
\end{aligned}\]
 \end{example}

\begin{example}[Ehrenfest model]\label{example:ehrenfest}
The Ehrenfest model \cite[\S 1.4]{kelly11reversibility} is a simplified model of particle diffusion.

\begin{figure}[h!]
\begin{minipage}{0.48\linewidth}
\tikzset{every picture/.style={line width=0.85pt}} %
\begin{tikzpicture}[x=0.75pt,y=0.75pt,yscale=-1,xscale=1]
\draw    (240,130) .. controls (240.4,111.8) and (300.67,118.5) .. (300,100) ;
\draw   (210,130) -- (250,130) -- (250,150) -- (210,150) -- cycle ;
\draw   (270,130) -- (310,130) -- (310,150) -- (270,150) -- cycle ;
\draw    (230.08,62.47) .. controls (231.97,102.56) and (260.33,130.96) .. (260,170) ;
\draw [shift={(230,60)}, rotate = 88.89] [color={rgb, 255:red, 0; green, 0; blue, 0 }  ][line width=0.75]    (10.93,-3.29) .. controls (6.95,-1.4) and (3.31,-0.3) .. (0,0) .. controls (3.31,0.3) and (6.95,1.4) .. (10.93,3.29)   ;
\draw    (230,150) -- (230,158.87) ;
\draw    (220,170) .. controls (219.8,155.8) and (240.2,155.8) .. (240,170) ;
\draw  [fill={rgb, 255:red, 0; green, 0; blue, 0 }  ,fill opacity=1 ] (227.1,158.87) .. controls (227.1,157.27) and (228.4,155.97) .. (230,155.97) .. controls (231.6,155.97) and (232.9,157.27) .. (232.9,158.87) .. controls (232.9,160.47) and (231.6,161.77) .. (230,161.77) .. controls (228.4,161.77) and (227.1,160.47) .. (227.1,158.87) -- cycle ;
\draw    (300,170) -- (300,201.13) ;
\draw    (290,150) -- (290,160) ;
\draw  [fill={rgb, 255:red, 0; green, 0; blue, 0 }  ,fill opacity=1 ] (287.1,158.87) .. controls (287.1,157.27) and (288.4,155.97) .. (290,155.97) .. controls (291.6,155.97) and (292.9,157.27) .. (292.9,158.87) .. controls (292.9,160.47) and (291.6,161.77) .. (290,161.77) .. controls (288.4,161.77) and (287.1,160.47) .. (287.1,158.87) -- cycle ;
\draw    (280,170) .. controls (279.8,155.8) and (300.2,155.8) .. (300,170) ;
\draw    (260,170) .. controls (260.67,184.17) and (279.67,185.5) .. (280,170) ;
\draw    (230,60) .. controls (229.8,45.8) and (250.2,45.8) .. (250,60) ;
\draw    (260,80) .. controls (260.14,92.14) and (280,117.83) .. (280,130) ;
\draw   (290,60) -- (330,60) -- (330,80) -- (290,80) -- cycle ;
\draw    (310,80) -- (310,90) ;
\draw  [fill={rgb, 255:red, 0; green, 0; blue, 0 }  ,fill opacity=1 ] (307.1,90) .. controls (307.1,88.4) and (308.4,87.1) .. (310,87.1) .. controls (311.6,87.1) and (312.9,88.4) .. (312.9,90) .. controls (312.9,91.6) and (311.6,92.9) .. (310,92.9) .. controls (308.4,92.9) and (307.1,91.6) .. (307.1,90) -- cycle ;
\draw    (300,100) .. controls (299.8,85.8) and (320.2,85.8) .. (320,100) ;
\draw    (200,170) .. controls (200.67,184.17) and (219.67,185.5) .. (220,170) ;
\draw    (240,170) -- (240,200) ;
\draw    (170,60) .. controls (169.8,45.8) and (190.2,45.8) .. (190,60) ;
\draw    (300,130) .. controls (300.33,111.17) and (320.67,126.83) .. (320,100) ;
\draw    (170.08,62.47) .. controls (171.96,102.57) and (200,131.29) .. (200,170) ;
\draw [shift={(170,60)}, rotate = 88.89] [color={rgb, 255:red, 0; green, 0; blue, 0 }  ][line width=0.75]    (10.93,-3.29) .. controls (6.95,-1.4) and (3.31,-0.3) .. (0,0) .. controls (3.31,0.3) and (6.95,1.4) .. (10.93,3.29)   ;
\draw   (180,60) -- (220,60) -- (220,80) -- (180,80) -- cycle ;
\draw   (190,20) -- (230,20) -- (230,40) -- (190,40) -- cycle ;
\draw  [fill={rgb, 255:red, 255; green, 255; blue, 255 }  ,fill opacity=1 ] (240,60) -- (280,60) -- (280,80) -- (240,80) -- cycle ;
\draw  [fill={rgb, 255:red, 255; green, 255; blue, 255 }  ,fill opacity=1 ] (250,20) -- (290,20) -- (290,40) -- (250,40) -- cycle ;
\draw    (210,40) -- (210,60) ;
\draw    (270,40) -- (270,60) ;
\draw    (200,80) .. controls (200.14,92.14) and (220,117.83) .. (220,130) ;
\draw (230,140) node  [font=\normalsize]  {$move$};
\draw (290,140) node  [font=\normalsize]  {$move$};
\draw (310,70) node  [font=\normalsize]  {$unif$};
\draw (200,70) node  [font=\normalsize]  {$fby$};
\draw (210,30) node  [font=\footnotesize]  {$( 1..4)$};
\draw (260,70) node  [font=\normalsize]  {$fby$};
\draw (270,30) node  [font=\footnotesize]  {$()$};
\end{tikzpicture}
\end{minipage}
\begin{minipage}{0.5\linewidth}
  \[\begin{gathered}
     \mathsf{ehr} \defn \\
     (1...4) \tensor () ;
     \fbk( \sigma ; \\
    \fby \tensor  \fby \tensor \mathsf{unif}; \\
    \im \tensor \im \tensor \COPY {;} \sigma ; \\
     \mathsf{move} \tensor \mathsf{move} {;} \\
     \COPY)
  \end{gathered}\]
\end{minipage}
\caption{Ehrenfest model: sig. flow graph and morphism.}
\label{example:ehrenfest}
\end{figure}
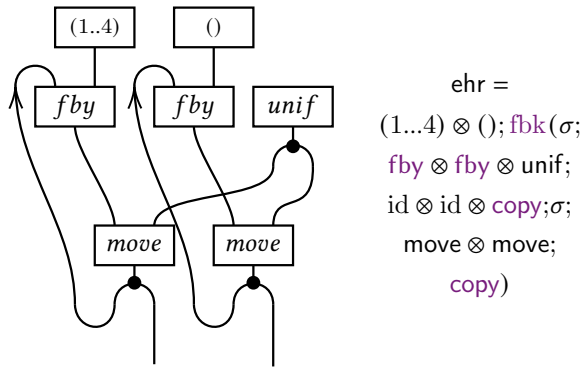

Assume we have two urns with 4 balls, labelled from 1 to 4.
Initially, the balls are all in the first urn.
We randomly (and uniformly) pick an integer from 1 to 4, and the ball labelled by that
number is removed from its box and placed in the other box.
We iterate the procedure, with independent uniform selections each time.

Our implementation of this morphism, following the definitions from \Cref{sec:monoidal-streams} (\Cref{implementation:ehrenfest}) yields samples such as the following.

$\begin{aligned}
& [([2,3,4],[1]),\ & ([3,4],[1,2]),\ && ([1,3,4],[2]), \\
& ([1,4],[2,3]),\ & ([1],[2,3,4]),\ && ([],[1,2,3,4]), \\
& ([2],[1,3,4]),\ & \dots]
\end{aligned}$
 \end{example}

\section{A dataflow programming language}
\label{section:typetheory}

In this section, we introduce the syntax for two \Lucid{}-like dataflow programming languages and their semantics in \monoidalStreams{}.
The first one is \emph{deterministic} and it takes semantics in the \feedbackMonoidalCategory{} of set-based monoidal streams.
The second one is \emph{stochastic} and it takes semantics in the \feedbackMonoidalCategory{} of stochastic processes, or stochastic monoidal streams.

We do so by presenting a type theory for \emph{feedback} monoidal categories (similar to~\cite{shulman21type,hasegawa97}). Terms of the type theory represent programs in our language.

\subsection{Type theory for monoidal categories}

We start by considering a \emph{type theory for \symmetricMonoidalCategories{}} over some generators forming a multigraph $\mathcal{G}$.
Instead of presenting a type theory from scratch, we extend the basic type theory for symmetric monoidal categories described by Shulman~\cite{shulman2016categorical}.
Details are in the Appendix (\Cref{appendix:section:typetheory}). Here, we only illustrate it with an example.

\begin{figure}[H]
\begin{mathpar}
  \infer[\textsc{Gen}]
    {f \in \mathcal{G}(A_1,\dots,A_n;B) \and
    \Gamma_1 \entails x_1 : A_1 \dots \Gamma_n \entails x_n : A_n}
    {\Shuf(\Gamma_1,\dots,\Gamma_n) \entails f(x_1,\dots,x_n) : B} \\
  \infer[\textsc{Pair}]
  {\Gamma_1 \entails x_1 : A_1\ \dots\ \Gamma_n \entails x_n : A_n}
  {\Shuf(\Gamma_1,\mydots,\Gamma_n) \entails [x_1,\mydots,x_n] : A_1 \otimes \mydots \otimes A_n}  \and
  \infer[\textsc{Var}]{ }{x : A\entails x : A} \\
  \infer[\textsc{Split}]{\Delta \entails m : A_1 \otimes \dots \otimes A_n \and \Gamma, x_1 : A_1, \dots , x_n : A_n \entails z : C}{ \Shuf(\Gamma,\Delta) \entails \textsc{Split}\ m \to [x_1,\dots,x_n]\ \textsc{in}\ z : C} \\
\end{mathpar}
\caption{Type theory of symm. monoidal categories~\cite{shulman2016categorical}.}
\end{figure}

The type theory for symmetric monoidal categories is \emph{linear}~\cite{girard87,seely87,lincoln92} in the sense that
any introduced variable must be used exactly once.
This is for a good reason: monoidal categories represent linear theories of processes, where \emph{copying} and \emph{discarding} may not be allowed in general.

\begin{example}\label{ex:monoidaltypes}
  In a monoidal category,
  let $f \colon X \tensor U \to Z$, $g \colon I \to U \tensor V \tensor W$ and $h \colon V \tensor Y \to I$.
  The following is a string diagram together with its term in the type theory.
  \begin{figure}[h!]
    \tikzset{every picture/.style={line width=0.85pt}} %
\begin{minipage}{0.3\linewidth}
\begin{tikzpicture}[x=0.75pt,y=0.75pt,yscale=-1,xscale=1]
\draw   (60,60) -- (100,60) -- (100,80) -- (60,80) -- cycle ;
\draw   (30,110) -- (70,110) -- (70,130) -- (30,130) -- cycle ;
\draw    (50,130) -- (50,150) ;
\draw   (90,100) -- (130,100) -- (130,120) -- (90,120) -- cycle ;
\draw    (50,50) .. controls (50.4,80.2) and (40,87.4) .. (40,110) ;
\draw    (70,80) .. controls (70.8,101.8) and (60,87.4) .. (60,110) ;
\draw    (90,80) .. controls (90,95.8) and (100,84.2) .. (100,100) ;
\draw    (130,50) .. controls (130.2,81.8) and (119.4,80.2) .. (120,100) ;
\draw    (80,80) .. controls (80.8,121.4) and (90,127.4) .. (90,150) ;
\draw (80,70) node  [font=\footnotesize]  {$g$};
\draw (105,110) node  [font=\footnotesize]  {$h$};
\draw (50,120) node  [font=\footnotesize]  {$f$};
\draw (50,46.6) node [anchor=south] [inner sep=0.75pt]  [font=\footnotesize]  {$X$};
\draw (130,46.6) node [anchor=south] [inner sep=0.75pt]  [font=\footnotesize]  {$Y$};
\draw (50,153.4) node [anchor=north] [inner sep=0.75pt]  [font=\footnotesize]  {$Z$};
\draw (90,153.4) node [anchor=north] [inner sep=0.75pt]  [font=\footnotesize]  {$V$};
\draw (56.5,82.4) node [anchor=north] [inner sep=0.75pt]  [font=\footnotesize]  {$U$};
\draw (107,82.4) node [anchor=north] [inner sep=0.75pt]  [font=\footnotesize]  {$W$};
\end{tikzpicture}
\end{minipage} \begin{minipage}{0.5\linewidth}
  \qquad
  \[\begin{aligned}
      & \SPLIT{g}{u,v,w}{} \\
      & \SPLIT{h(w,y)}{}{} \\
      & [f(x,u),v]
  \end{aligned}\]
\end{minipage}
  \end{figure}
\end{example}

\subsection{Adding feedback}\label{sec:addingfeedback}
We now extend the theory with delay and feedback. We start by considering a $\partial$ operator on types, which extends to contexts inductively as $\partial [] = []$ and $\partial (\Gamma, x {:} A) = \partial \Gamma , (x : \partial A)$. We provide formation rules for introducing delay and feedback.
These need to satisfy equalities making $\Delay$ a functor and $\textsc{Fbk}$ a feedback operator.
  \begin{mathpar}
    \infer[\defining{linkDelay}{\textsc{Delay}}]{\Gamma \entails x : A}{\partial \Gamma \entails x : \partial A}
    \and
    \infer[\defining{linkFbk}{\textsc{Fbk}}]
    {\Gamma , s : \partial S \entails x(s) : S \tensor A}
    {\Gamma \entails \mbox{\textsc{Fbk}}\ s.\ x(s) : A}
  \end{mathpar}
As in \Cref{remark:wait}, we define $\defining{linkWait}{\ensuremath{\mbox{\textsc{Wait}}}}(x) \equiv \Fbk{y}{[x,y]}$.

\subsection{Adding generators}

In both versions of the language (deterministic and stochastic), we include ``copy'' and ``followed by'' operations,
representing the corresponding monoidal streams.
Copying does not need to be natural (in the stochastic case, it will not be) and it does not even need to form a comonoid.
\begin{mathpar}
  \infer[\defining{linkCopy}{\textsc{Copy}}]{\Gamma \entails x : A}{\Gamma \entails \Copy(x) : A \otimes A}
  \and
  \infer[\defining{linkFby}{\defining{linkFby}{\textsc{Fby}}}]{
    \Gamma \entails x : A \and
    \Delta \entails y : \partial(A)}{\Shuf(\Gamma,\Delta) \entails x\ \Fby\ y : A}
\end{mathpar}

In fact, recursive definitions make sense only when we have a \emph{copy} operation, that allows us to rewrite the definition as a feedback that ends with a copy. That is,
\[M = x(M) \quad\mbox{means}\quad M = \Fbk{m}{\Copy(x(m))}.\]
Moreover, in the deterministic version of our language we allow non-linearity: a variable can occur multiple times, implicitly copying it.

\subsection{Examples}

\begin{example}
Recall the example from the introduction (and \Cref{example:fibonacci}).
\[\fib =
   0\ \Fby\ (\fib + (1\ \Fby\ \Wait\ \fib))
\]
Its desugaring, following the previous rules, is below.
\[\begin{aligned}
  \fib =\ &
  \Fbk{f}{} \Copy \\
  & (0\ \Fby \\
  & \quad \SPLIT{\Copy(f)}{f_{1},f_{2}}{} \\
  &  \quad (f_{1} + 1\ \Fby\ \Wait(f_{2}))\ )\\
\end{aligned}\]
\end{example}

\begin{example}[Ehrenfest model]
    The Ehrenfest model described in \Cref{example:ehrenfest} has the following
    specification in the programming language.
    \[\begin{aligned}
        \URNS =\  & [(1,2,3,4),()]\ \Fby\ \\
        & \SPLIT{\URNS}{u_{1},u_{2}}{} \\
        & \SPLIT{\Copy(\uniform)}{n_{1},n_{2}}{} \\
        & [\Move(n_{1},u_{1}), \Move(n_{2},u_{2})]
      \end{aligned}\]
    Sampling twice from the same distribution is different from copying a
    single sample, and \textsc{Split} allows us to express this difference: instead
    of calling the $\uniform$ distribution twice, this program calls it
    once and then copies the result.
\end{example}
 
\section{Conclusions}
Monoidal streams are a common generalization of streams, causal functions and stochastic processes.
In the same way that streams give semantics to dataflow programming~\cite{wadge1985lucid,halbwachs1991lustre} with plain functions, monoidal streams give semantics to dataflow programming with monoidal theories of processes.
Signal flow graphs are a common tool to describe control flow in dataflow programming.
Signal flow graphs are also the natural string diagrams of feedback monoidal categories.
Monoidal streams form a feedback monoidal category, and signal flow graphs are a formal syntax to describe and reason about monoidal streams.
The second syntax we present comes from the type theory of monoidal categories, and it is inspired by the original syntax of dataflow programming.
We have specifically studied stochastic dataflow programming, but the same framework allows for \emph{linear}, \emph{quantum} and \emph{effectful} theories of resources.

The literature on dataflow and feedback is rich enough to provide multiple diverging definitions and approaches.
What we can bring to this discussion are universal constructions.
Universal constructions justify some mathematical object as \emph{the canonical object} satisfying some properties.
In our case, these exact properties are extracted from three, arguably under-appreciated, but standard category-theoretic tools: \emph{dinaturality}, \emph{feedback}, and \emph{coalgebra}.
\emph{Dinaturality}, profunctors and coends, sometimes regarded as highly theoretical developments, are the natural language to describe how processes communicate and compose.
\emph{Feedback}, sometimes eclipsed by trace in the mathematical literature, keeps appearing in multiple variants across computer science.
\emph{Coalgebra} is the established tool to specify and reason about stateful systems.

\subsection{Further work}

\paragraph{Other theories.}
Many interesting examples of theories of processes are not monoidal but just \emph{premonoidal categories}~\cite{jeffrey97,power02}. For instance, the kleisli categories of arbitrary monads, where effects (e.g. reading and writing to a global state) do not need to commute. %
Premonoidal streams can be constructed by restricting dinaturality to their \emph{centres}.
Another important source of theories of processes that we have not covered is that of \emph{linearly distributive} and \emph{*-autonomous categories} \cite{seely87,cockett1997,blute93,blute96}.

Within monoidal categories, we would like to make monoidal streams explicit in the cases of partial maps~\cite{cockett02} for dataflow programming with different clocks \cite{uustalu2008comonadic}, non-deterministic maps~\cite{broy2001algebra,lee09} and quantum processes~\cite{carette21}.
A final question we do not pursue here is \emph{expressivity}: the class of functions a monoidal stream can define.

\paragraph{The 2-categorical view.}

We describe the  morphisms of a category as a final coalgebra.
However, it is also straightforward to describe the 2-endofunctor that should give rise to this category as a final coalgebra itself.

\paragraph{Implementation of the type theory.}

Justifying that the output of monoidal streams is the expected one requires some computations, which we have already implemented separately in the Haskell programming language (Appendix, \Cref{section:implementation}).
Agda has similar foundations and supports the coinductive definitions of this text (\Cref{sec:monoidal-streams}).
It is possible to implement a whole interpreter for a \Lucid{}-like stochastic programming language with a dedicated parser, but that requires some software engineering effort that we postpone for further work.

\bibliography{bibliography}

\newpage
\appendix
\section{Monoidal categories}

\begin{definition}[\cite{maclane78}]
  A \defining{linkmonoidalcategory}{\textbf{monoidal category}},
  \[(\catC, \otimes, I, \alpha, \lambda, \rho),\] is a category $\catC$
  equipped with a functor $\tensor \colon \catC \times \catC \to \catC$,
  a unit $\sI \in \catC$, and three natural isomorphisms: the associator $\alpha_{\sA,\sB,\sC} \colon (\sA \tensor \sB) \tensor \sC \cong \sA \tensor (\sB \tensor \sC)$, the left unitor $\lambda_{\sA} \colon \sI \tensor \sA \cong \sA$ and
  the right unitor $\rho_{\sA} \colon \sA \tensor \sI \cong \sA$;
  such that $\alpha_{\sA,\sI,\sB} ; (\im_{\sA} \tensor \lambda_{\sB}) = \rho_{\sA} \tensor \im_{\sB}$ and
  $(\alpha_{A,B,C} \tensor \im) ; \alpha_{A,B \tensor C, D} ; (\im_{A} \tensor \alpha_{B,C,D}) = \alpha_{A\tensor B,C,D} ; \alpha_{A,B,C \tensor D}$.  A monoidal category is \emph{strict} if $\alpha$, $\lambda$ and $\rho$ are identities.
\end{definition}

\begin{definition}[Monoidal functor, \cite{maclane78}]\defining{linkmonoidalfunctor}{}
  Let \[(\catC,\tensor,\sI,\alpha^{\catC},\lambda^{\catC},\rho^{\catC})\mbox{ and } (\catD,\boxtimes,\sJ,\alpha^{\catD},\lambda^{\catD},\rho^{\catD})\] be \hyperlink{linkmonoidalcategory}{monoidal categories}.
  A \defining{linkmonoidalfunctor}{\emph{monoidal functor}} (sometimes called \emph{strong monoidal functor}) is a triple
  $(F,\varepsilon,\mu)$ consisting of a functor $F \colon \catC \to \catD$ and two natural
  isomorphisms $\varepsilon \colon \sJ \cong F(\sI)$ and $\mu \colon F(\sA \tensor \sB) \cong F(\sA) \boxtimes F(\sB)$;
  such that
  \begin{itemize}
    \item the associators satisfy \[\begin{aligned}
        &\alpha^{\catD}_{FA,FB,FC} ; (\im_{FA} \tensor \mu_{B,C}) ; \mu_{A,B \tensor C} \\
        = &\ (\mu_{A,B} \tensor \im_{FC}) ; \mu_{A \tensor B,C} ; F(\alpha^{\catC}_{A,B,C}),\end{aligned}\]
    \item the left unitor satisfies \[(\varepsilon \tensor \im_{FA}) ; \mu_{I,A} ; F(\lambda^{\catC}_{A}) = \lambda^{\catD}_{FA}\]
    \item the right unitor satisfies \[(\im_{FA} \tensor \varepsilon) ; \mu_{A,I} ; F(\rho^{\catC}_{FA}) = \rho^{\catD}_{FA}.\]
  \end{itemize}
  A monoidal functor is a \emph{monoidal equivalence} if it is moreover an equivalence of categories.  Two monoidal categories are monoidally equivalent if there exists a monoidal equivalence between them.
\end{definition}

During most of the paper, we omit all associators and unitors from monoidal categories, implicitly using the \emph{coherence theorem} for monoidal categories (\Cref{remark:usingcoherence}).

\begin{theorem}[Coherence theorem, \cite{maclane78}]%
  \label{theorem:coherence}
  Every monoidal category is monoidally equivalent to a strict monoidal category.
\end{theorem}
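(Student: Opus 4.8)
The plan is to prove the theorem by the standard \emph{strictification} construction: from an arbitrary monoidal category $(\catC, \tensor, I, \alpha, \lambda, \rho)$ I would build a strict monoidal category $\catC_{\mathrm{st}}$ and exhibit a monoidal equivalence $\catC \simeq \catC_{\mathrm{st}}$. The construction I favour is the ``category of lists'', since its strictness is manifest and it sidesteps the subtle fact that the naive endofunctor embedding $X \mapsto (X \tensor -)$ into $[\catC,\catC]$ need \emph{not} be full (naturality alone cannot pin down the components of a transformation $X \tensor - \Rightarrow Y \tensor -$ from its value at $I$). Concretely, the objects of $\catC_{\mathrm{st}}$ are finite lists $(A_1, \dots, A_n)$ of objects of $\catC$; I fix an evaluation sending a list to a chosen iterated tensor, $e(A_1, \dots, A_n) \defn (\cdots(A_1 \tensor A_2)\tensor \cdots \tensor A_n)$ with $e() \defn I$, and I set $\catC_{\mathrm{st}}((A_i),(B_j)) \defn \catC(e(A_i), e(B_j))$, with composition and identities inherited from $\catC$. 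The monoidal product on $\catC_{\mathrm{st}}$ is concatenation of lists, which is \emph{strictly} associative and unital with the empty list as unit.

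First I would equip this data with a monoidal structure on morphisms. Given $f \colon (A_i) \to (B_j)$ and $g \colon (C_k) \to (D_l)$ --- that is, $\catC$-morphisms $e(A_i) \to e(B_j)$ and $e(C_k) \to e(D_l)$ --- their concatenation is defined by transporting $f \tensor g$ along the canonical isomorphisms $\gamma \colon e(A_i) \tensor e(C_k) \cong e(A_i, C_k)$ built from $\alpha$, $\lambda$ and $\rho$. I would then construct the comparison functor $E \colon \catC_{\mathrm{st}} \to \catC$ sending a list to its evaluation and acting as the identity on hom-sets; this is fully faithful by the very definition of the hom-sets and essentially surjective because every object $A$ equals $e(A)$ for the length-one list. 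Finally I would promote $E$ to a strong monoidal functor, taking as its structure isomorphisms the same canonical maps $\gamma$; since a fully faithful, essentially surjective strong monoidal functor is a monoidal equivalence, this yields $\catC \simeq \catC_{\mathrm{st}}$ with $\catC_{\mathrm{st}}$ strict.

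The main obstacle --- and the genuine content of the theorem --- is establishing MacLane's basic coherence lemma, which makes every step above legitimate: any two morphisms of $\catC$ built only from components of $\alpha$, $\lambda$, $\rho$ (together with $\im$, $\tensor$ and $\circ$) and sharing the same domain and codomain are \emph{equal}. This is exactly what guarantees that the canonical isomorphisms $\gamma$ are unambiguously defined, that concatenation of morphisms is functorial and strictly associative, and that the monoidal-functor axioms for $E$ hold. I would prove this lemma by rewriting every canonical morphism into a normal form (for instance the fully left-bracketed product with all unit factors cancelled), using the pentagon identity and the triangle identity as the confluence conditions that force any two rewriting paths to agree. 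Once coherence is in hand, all remaining verifications reduce to the routine bookkeeping indicated above.
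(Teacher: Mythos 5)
The paper does not prove this theorem: it is quoted from MacLane and used as a black box (cf.\ the remark following it on how strictification is used to suppress associators and unitors). So there is no in-paper argument to compare against; your proposal has to stand on its own, and it does. The list/strictification construction you describe is the standard route: objects of $\catC_{\mathrm{st}}$ are finite lists, hom-sets are inherited along a chosen evaluation $e$, tensor is concatenation, and the evaluation functor is fully faithful and essentially surjective by construction. You are also right to reject the naive embedding $X \mapsto (X \tensor -)$ into $[\catC,\catC]$ on the grounds that it need not be full (it is faithful, since the component at $I$ recovers $f$ from $f \tensor \id_I$, but naturality does not determine the remaining components); the usual repair is to carry extra coherence data $F(X) \tensor Y \cong F(X \tensor Y)$ along with the endofunctor, which gives an alternative proof that does not presuppose the ``all diagrams commute'' lemma.

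The one place where your sketch defers essentially all of the work is the coherence lemma itself: well-definedness of the transport isomorphisms $\gamma$, functoriality and strict associativity of concatenation on morphisms, and the monoidal-functor axioms for $E$ all reduce to the statement that any two parallel canonical maps built from $\alpha$, $\lambda$, $\rho$ coincide. You correctly identify this as the genuine content and name the right strategy (normal forms, with the pentagon and triangle as the confluence conditions), but be aware that the confluence argument is not a one-liner: one must check that any two reassociation paths to the left-bracketed normal form agree, which is MacLane's diamond-lemma-style induction on the associahedron, plus a separate (Kelly-style) argument that the unit coherences follow from the single triangle axiom. As a proof sketch your proposal is sound and complete in structure; as a full proof, that lemma is where the remaining effort lives.
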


\begin{remark}
  \label{remark:usingcoherence}
Let us comment further on how we use the coherence theorem. Each time we have a
morphism $f \colon A \to B$ in a monoidal category, we have a corresponding
morphism $A \to B$ in its strictification. This morphism can be lifted to the
original category to uniquely produce, say, a morphism
$(\lambda_{A} ; f ; \inverse{\lambda_{B}}) \colon I \otimes A \to I \otimes B$. Each time the source and the target are clearly
determined, we simply write $f$ again for this new morphism.
\end{remark}

\begin{definition}[Symmetric monoidal category, \cite{maclane78}]
  A \defining{linksymmetricmonoidalcategory}{\emph{symmetric monoidal category}}
  $(\catC, \otimes, I, \alpha, \lambda, \rho, \sigma)$ is a monoidal category
  $(\catC, \otimes, I, \alpha, \lambda, \rho)$ equipped with a braiding
  $\sigma_{A,B} \colon A \otimes B \to B \otimes A$, which satisfies the hexagon
  equation
  \[\alpha_{A,B,C} ; \sigma_{A,B \tensor C} ; \alpha_{B,C,A} = (\sigma_{A,B} \tensor \im) ; \alpha_{B,A,C} ; (\im \tensor \sigma_{A,C})\]
  and additionally satisifes $\sigma_{A,B} ; \sigma_{B,A} = \im$.
\end{definition}

\begin{remark}[Notation]
  We omit symmetries when this does not cause confusion.
  We write $\tid{a}$ for the morphism $a$ tensored with some identities when these can be deduced from the context.
  For instance, let $f \colon A \to B$, let $h \colon B \to D$ and let $g \colon B \tensor D \to E$.
  We write $\tid{f;h};g$ for the morphism $(f \tensor \im) ; \sigma ; (\im \tensor h) ; g$,
  which could have been also written as $(f \tensor \im) ; (h \tensor \im) ; \sigma  ; g$
\end{remark}

\begin{definition}[\cite{maclane78}]
  A \defining{linksymmetricmonoidalfunctor}{symmetric monoidal functor}
  between two \hyperlink{linksymmetricmonoidalcategory}{symmetric monoidal categories} $(\catC, \sigma^{\catC})$
  and $(\catD, \sigma^{\catD})$ is a monoidal functor $F \colon \catC \to \catD$ such that $\sigma^{\catD} ; \mu = \mu ; F(\sigma^{\catC})$.
\end{definition}

\begin{definition}
  A \defining{linkcartesianmonoidalcategory}{\emph{cartesian monoidal category}} is
  a monoidal category whose tensor is the categorical product and whose unit is
  a terminal object.
\end{definition}

\begin{definition}\label{def:feedbackfunctor}
  A \defining{linkfeedbackfunctor}{\emph{feedback functor}}
  between two \feedbackMonoidalCategories{}
  $(\catC,\fun{F}^{\catC},\fbk^{\catC})$ and $(\catD,\fun{F}^{\catD},\fbk^{\catD})$ is a symmetric monoidal functor $G \colon \catC \to \catD$ such that $G ; \fun{F}^{\catD} = \fun{F}^{\catC} ; G$ and
  \[ G(\fbk[S]^\catC(\fm)) = \fbk[GS]^{\catD}(\mu_{\fun{F}S,A} ; Gf ; \mu^{-1}_{S,B}),\]
  for each $\fm \colon \fun{F}S \tensor X \to S \tensor Y$, where \(\mu_{A,B} \colon G(\sA) \tensor G(\sB) \to G(\sA \tensor \sB )\) is
  the structure morphism of the monoidal functor \(G\).
\end{definition}

\begin{theorem}[see~\cite{katis02}]\label{ax:th:free-feedback}
  \(\St_{\fun{F}}(\catC)\) is the free category with feedback over \((\catC,\fun{F})\).
\end{theorem}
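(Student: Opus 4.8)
The plan is to exhibit $\St_{\fun{F}}(\catC)$ as the value at $(\catC,\fun{F})$ of the left adjoint to the forgetful functor that sends a feedback monoidal category $(\catD,\fun{G},\fbk^{\catD})$ to its underlying pair of a symmetric monoidal category together with a monoidal endofunctor. Concretely, I must (i) equip $\St_{\fun{F}}(\catC)$ with the structure of \Cref{def:feedback}; (ii) produce a canonical inclusion $\iota\colon(\catC,\fun{F})\to\St_{\fun{F}}(\catC)$ in the base; and (iii) show that every symmetric monoidal functor $H\colon\catC\to\catD$ commuting with the endofunctors, $H\fun{F}=\fun{G}H$, factors as $\iota\dcomp\bar H$ for a unique feedback functor $\bar H$ in the sense of \Cref{def:feedbackfunctor}.

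First I would endow $\St_{\fun{F}}(\catC)$ with its feedback structure. The endofunctor $\fun{F}'$ acts as $\fun{F}$ on objects and sends a stateful morphism $[f\colon\fun{F}S\tensor X\to S\tensor Y]$ to $\fun{F}f$, reassociated through the monoidal structure $\phi$ of $\fun{F}$, regarded as a stateful morphism with state $\fun{F}S$. The feedback operation absorbs the loop variable into the state space: a morphism $g\colon\fun{F}'S\tensor X\to S\tensor Y$ of $\St_{\fun{F}}(\catC)$ is represented by some $\tilde g\colon\fun{F}T\tensor\fun{F}S\tensor X\to T\tensor S\tensor Y$ in $\catC$, and $\fbk_{S}(g)$ is the very same $\tilde g$, now read as a morphism $X\to Y$ with enlarged state $T\tensor S$. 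Verifying \ref{axiom:tight}--\ref{axiom:slide} is then bookkeeping in the coend of \Cref{def:stconstruction}: tightening, strength and sliding are immediate from functoriality and the symmetric monoidal structure; vanishing holds because the empty loop leaves the state untouched; and joining holds because iterating feedback over $S$ and then $T$ enlarges the state by $S\tensor T$ through the same reassociation that combining them does, and these are identified under the sliding (dinatural) quotient.

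Next comes the universal property. The inclusion $\iota$ sends $f\colon X\to Y$ to the trivial-state stateful morphism $f\colon\fun{F}I\tensor X\cong X\to Y\cong I\tensor Y$; it is symmetric monoidal and commutes with $\fun{F}'$ on the nose. Given $(\catD,\fun{G},\fbk^{\catD})$ and $H$ as above, I define $\bar H$ to agree with $H$ on objects and to send a representative $f\colon\fun{F}S\tensor X\to S\tensor Y$ to
\[\bar H([f])\defn\fbk[HS]^{\catD}\bigl(\mu^{-1}\dcomp Hf\dcomp\mu\bigr)\colon HX\to HY,\]
where $\mu$ are the coherence isomorphisms of $H$ and $H\fun{F}S=\fun{G}HS$. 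Well-definedness on sliding classes is \emph{exactly} \ref{axiom:slide} in $\catD$ (matching $\fbk[S]((\fun{F}h\tensor\id)\dcomp f)=\fbk[T](f\dcomp(h\tensor\id))$ against the sliding relation of $\St_{\fun{F}}(\catC)$, using $H\fun{F}h=\fun{G}Hh$); the identity $\iota\dcomp\bar H=H$ follows from vanishing \ref{axiom:vanish} since $HI\cong\monoidalunit$; and preservation of feedback follows from joining \ref{axiom:join}, which collapses the state of the representative and the feedback variable into a single loop. Uniqueness is forced, since in $\St_{\fun{F}}(\catC)$ every morphism is literally $[f]=\fbk_{S}(\iota f)$, so any feedback functor extending $H$ must be given by the displayed formula.

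The main obstacle is proving that $\bar H$ is a symmetric monoidal functor, that is, that it preserves composition and tensor. Composition in $\St_{\fun{F}}(\catC)$ juxtaposes the two state spaces $S$ and $T$ and interleaves the representatives through symmetries; applying $\bar H$ turns this into a \emph{pair} of nested feedback loops over $HS$ and $HT$ in $\catD$, whereas $\bar H$ of the composite is a \emph{single} feedback over $H(S\tensor T)\cong HS\tensor HT$. Reconciling the two is the genuine work: I would first isolate two lemmas in the feedback calculus of $\catD$ --- that $\fbk[S](f)\dcomp\fbk[T](g)$ equals a single $\fbk[S\tensor T]$ of an interleaving (using \ref{axiom:strength} to tensor $g$ into the $S$-loop, \ref{axiom:tight} and \ref{axiom:slide} to slide the intermediate wires across, and \ref{axiom:join} to merge the two loops), together with the analogous statement for $\tensor$ --- and then apply them to the explicit composition and tensor formulas of the $\St$-construction. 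This diagrammatic normalization is precisely the content established by Katis, Sabadini and Walters~\cite{katis02}; carrying it out rigorously while tracking the sliding quotient throughout is where essentially all the effort lies.
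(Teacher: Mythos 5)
Your proposal is correct and follows essentially the same route as the paper: both arguments rest on the Katis--Sabadini--Walters normalization fact that any composite of feedbacks collapses (via tightening, strength, joining and sliding) to a single $\fbk[S](f)$, unique up to sliding, which simultaneously forces the formula for $\bar H$ and delivers its well-definedness and functoriality. The paper compresses this into a one-paragraph sketch deferring to \cite{katis02}, whereas you spell out the adjunction data (the feedback structure on $\St_{\fun{F}}(\catC)$, the unit $\iota$, the forced extension $\bar H$) explicitly, but the mathematical content is the same.
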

\begin{proof}[Proof sketch]
  Let $(\catD,\fun{F}^{\catD},\fbk^{\catD})$ be any other symmetric monoidal category with an endofunctor,
  and let $H \colon \catC \to \catD$ be such that $\fun{F} ; H = H ; \fun{F}^{\catD}$.
  We will prove that it can be extended uniquely to a feedback functor $\tilde{H} \colon \St_{\fun{F}}(\catC) \to \catD$.

  It can be proven that any expression involving feedback can be reduced applying the feedback axioms to an expression of the form $\fbk(f)$ for some $f \colon \fun{F}S \tensor X \to S \tensor Y$. After this, the definition of $\tilde{H}$ in this morphism is forced to be $\tilde{H}(\fbk{f}) = \fbk^{\catD}(\catD)$. This reduction is uniquely up to sliding, and the morphisms of the $\St(\bullet)$ construction are precisely morphisms $f \colon \fun{F}S \tensor X \to S \tensor Y$ quotiented by sliding equivalence.  This is the core of the proof in \cite{katis02}.
\end{proof}
\subsection{Markov categories}

\begin{definition}
  \label{def:distributionmonad}\defining{linkfinitedistribution}
  The finite distribution commutative monad $\defining{linkdistr}{\ensuremath{\mathbf{D}}} \colon \Set \to \Set$ associates to each set the set of finite-support probability distributions over it.
  \[\distr(X) \defn
    \left\{ p \colon X \to [0,1] \ \\
      \middle|%
      \sum_{p(x) > 0}^{|\{ x \mid p(x) > 0\}| < \infty}p(x) = 1 \right\}.\]
  We call $\defining{linkstoch}{\ensuremath{\mathbf{Stoch}}}$ to the symmetric monoidal kleisli category of the finite distribution monad, $\kleisli{\distr}$.

  We write $f(y \vert x)$ for the probability $f(x)(y) \in [0, 1]$. Composition, $f ; g$, is defined by
    $$(f \dcomp g)(z \vert x) = \sum_{y \in Y} g(z \vert y) f(y \vert x).$$
  \end{definition}

  The cartesian product $(\times)$ in $\Set$ induces a monoidal (non-cartesian) product on $\kleisli{\distr}$.
That is, $\kleisli{\distr}$ has comonoids $(\blackComonoid)_X: X \to X \times X$ on every object, with $(\blackComonoidUnit)_X \colon X \to 1$ as counit.
However, contrary to what happens in $\Set$, these comultiplications are not natural: \emph{sampling and copying the result} is different from \emph{taking two independent samples}.

\begin{definition}[\defining{linkmarkovcategory}{Markov category}, {{\cite[Definition 2.1]{fritz2020}}}]
  A \emph{Markov category} $\catC$ is a \symmetricMonoidalCategory{} in which
  each object $X \in \catC$ has a cocommutative comonoid structure
  $(X, \varepsilon = \coUnit_{X} \colon X \to I, \delta = \coMult_{X} \colon X \to X \tensor X)$ with
  \begin{itemize}
    \item \emph{uniform} comultiplications, $\blackComonoid_{X \tensor Y} = (\blackComonoid_{X} \tensor \blackComonoid_{Y}) \tid{\sigma_{X,Y}}$;
    \item \emph{uniform} counits, $\coUnit_{X \tensor Y} = \coUnit_{X} \tensor \coUnit_{Y}$; and
    \item \emph{natural} counits, $f; \coUnit_{Y} = \coUnit_{X}$ for each $f \colon X \to Y$.
  \end{itemize}
  Crucially, comultiplications do not need to be natural.
\end{definition}

\begin{remark}[{{\cite[Remark 2.4]{fritz2020}}}]
  Any cartesian category is a Markov category. However, not any Markov category is cartesian, and the most interesting examples are those that fail to be cartesian, such as $\Stoch$.
The failure of comultiplication being natural makes it impossible to apply Fox's theorem (\Cref{th:fox}).
\end{remark}

  The structure of a \MarkovCategory{} is very basic.
  In most cases, we do need extra structure to reason about probabilities:
  this is the role of conditionals and ranges.

\begin{remark}[Notation]
  In a \MarkovCategory{}, given any $f \colon X_{0} \to Y_{0}$ and any $g \colon Y_{0} \tensor X_{0} \tensor X_{1} \to Y_{1}$,
  we write $(f \triangleleft g) \colon X_{0} \tensor X_{1} \to Y_{0} \tensor Y_{1}$ for the morphism defined by
  \[(f \triangleleft g) = \tid{(\coMult_{A}) ; f ; (\coMult_{B})} ; \tid{g},\]
  which is the string diagram in \Cref{figure:notationtriangle}.
  \begin{figure}[h!]
\tikzset{every picture/.style={line width=0.85pt}} %
\begin{tikzpicture}[x=0.75pt,y=0.75pt,yscale=-1,xscale=1]
\draw   (190,90) -- (210,90) -- (210,110) -- (190,110) -- cycle ;
\draw    (210,70) -- (210,78.87) ;
\draw    (190,130) -- (190,170) ;
\draw   (200,140) -- (240,140) -- (240,160) -- (200,160) -- cycle ;
\draw    (220,160) -- (220,170) ;
\draw    (230,70) -- (230,140) ;
\draw    (220,90) -- (220,140) ;
\draw    (210,130) -- (210,140) ;
\draw    (200,90) .. controls (199.8,75.8) and (220.2,75.8) .. (220,90) ;
\draw  [fill={rgb, 255:red, 0; green, 0; blue, 0 }  ,fill opacity=1 ] (207.1,78.87) .. controls (207.1,77.27) and (208.4,75.97) .. (210,75.97) .. controls (211.6,75.97) and (212.9,77.27) .. (212.9,78.87) .. controls (212.9,80.47) and (211.6,81.77) .. (210,81.77) .. controls (208.4,81.77) and (207.1,80.47) .. (207.1,78.87) -- cycle ;
\draw    (200,110) -- (200,118.87) ;
\draw    (190,130) .. controls (189.8,115.8) and (210.2,115.8) .. (210,130) ;
\draw  [fill={rgb, 255:red, 0; green, 0; blue, 0 }  ,fill opacity=1 ] (197.1,118.87) .. controls (197.1,117.27) and (198.4,115.97) .. (200,115.97) .. controls (201.6,115.97) and (202.9,117.27) .. (202.9,118.87) .. controls (202.9,120.47) and (201.6,121.77) .. (200,121.77) .. controls (198.4,121.77) and (197.1,120.47) .. (197.1,118.87) -- cycle ;
\draw (200,100) node  [font=\normalsize]  {$f$};
\draw (220,150) node  [font=\normalsize]  {$g$};
\draw (210,66.6) node [anchor=south] [inner sep=0.75pt]  [font=\small]  {$X_{0}$};
\draw (190,173.4) node [anchor=north] [inner sep=0.75pt]  [font=\small]  {$Y_{0}$};
\draw (220,173.4) node [anchor=north] [inner sep=0.75pt]  [font=\small]  {$Y_{1}$};
\draw (230,66.6) node [anchor=south] [inner sep=0.75pt]  [font=\small]  {$X_{1}$};
\end{tikzpicture}
\caption{The morphism $(f \triangleleft g)$.}
\label{figure:notationtriangle}
  \end{figure}
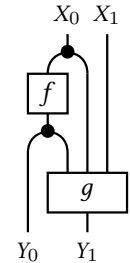
\end{remark}

\begin{proposition}
  Up to symmetries,
  \[(f \triangleleft g) \triangleleft h = f \triangleleft (g \triangleleft h).\]
  We may simply write $(f \triangleleft g \triangleleft h)$ for any of the two, omitting the symmetry.
\end{proposition}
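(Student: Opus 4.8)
The plan is to prove the equation by unfolding the definition of $\triangleleft$ on both sides and reducing each composite to a single canonical string diagram, using only the comonoid axioms of the \MarkovCategory{} together with symmetric monoidal coherence. Throughout I would work in the strictification provided by the coherence theorem (\Cref{theorem:coherence}), so that associators and unitors vanish and the only structural morphisms left to track are the comultiplications $\coMult$ and the symmetries $\sigma$.

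First I would pin down the types, so that the triple composite is well formed: $f \colon X_0 \to Y_0$, then $g \colon Y_0 \tensor X_0 \tensor X_1 \to Y_1$, and finally $h$ with inputs $Y_0, Y_1, X_0, X_1, X_2$ and output $Y_2$. The two bracketings feed $h$ its leading pair $Y_0, Y_1$ in opposite orders --- in $(f \triangleleft g) \triangleleft h$ the outer operation copies the block $Y_0 \tensor Y_1$, whereas in $f \triangleleft (g \triangleleft h)$ the inner operation copies the block $Y_0 \tensor X_0 \tensor X_1$ with $Y_1$ attached in front --- and this is precisely the discrepancy that the phrase \emph{up to symmetries} absorbs.

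Next I would expand each side. Unfolding $(f \triangleleft g)$ and substituting it into the outer $\triangleleft$ yields a copy of the tensor $X_0 \tensor X_1$ and of $Y_0 \tensor Y_1$; the uniformity axiom $\blackComonoid_{X \tensor Y} = (\blackComonoid_X \tensor \blackComonoid_Y) \tid{\sigma_{X,Y}}$ splits each of these into separate copies of the factors. The crucial move is then coassociativity: the copy of $X_0$ performed by the outer operation, composed with the further copy performed inside $f \triangleleft g$, collapses into one ternary comultiplication of $X_0$ routed to $f$, to $g$, and to $h$; symmetrically the copy of the intermediate $Y_0 = f(\cdots)$ merges into a ternary copy routed to the output, to $g$, and to $h$, while $Y_1$ and $X_1$ each split into the appropriate binary copies and $X_2$ is delivered directly to $h$. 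Carrying out the same expansion on $f \triangleleft (g \triangleleft h)$, where the inner copy now acts on $Y_0 \tensor X_0 \tensor X_1$, produces the identical tally of copies by exactly the same two axioms, merely parenthesized the other way.

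The final step is to observe that both expanded diagrams compute the same canonical process --- copy $X_0$ three ways, $X_1$ two ways, send $X_2$ onward, apply $f$ and copy $Y_0$ three ways, apply $g$ and copy $Y_1$ two ways, then apply $h$ --- and that the two coincide once wire orders are reconciled by naturality of $\sigma$ and, where copies are presented swapped, by cocommutativity of $\coMult$. I expect the only real difficulty to be the bookkeeping: tracking which of the several copies of $X_0$ and of $Y_0$ is routed to which of $f$, $g$, $h$, and making sure that every rebracketing of copy maps is justified by a single use of coassociativity, never by an illegitimate appeal to naturality of comultiplication, which fails in a non-cartesian \MarkovCategory{}. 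Establishing this disciplined correspondence, rather than any conceptual leap, is where the care lies.
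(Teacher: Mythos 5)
Your proposal is correct and matches the paper's proof, which is exactly the string-diagram computation you describe: both sides are expanded and identified using coassociativity and cocommutativity of $\coMult$, uniformity of comultiplication on tensor products, and symmetric monoidal coherence (see \Cref{figure:assoctriangle}). Your explicit caution that naturality of $\coMult$ is unavailable and that only coassociativity may be used to rebracket iterated copies is precisely the right discipline.
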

\begin{proof}
  Using string diagrams (\Cref{figure:assoctriangle}). Note that $(\coMult)$ is coassociative and cocommutative.
\end{proof}

\begin{figure}
\tikzset{every picture/.style={line width=0.85pt}} %
\begin{tikzpicture}[x=0.75pt,y=0.75pt,yscale=-1,xscale=1]
\draw   (320,60) -- (340,60) -- (340,80) -- (320,80) -- cycle ;
\draw    (340,40) -- (340,48.87) ;
\draw    (320,100) -- (320,140) ;
\draw   (330,110) -- (370,110) -- (370,130) -- (330,130) -- cycle ;
\draw    (350,130) -- (350,140) ;
\draw    (360,40) -- (360,110) ;
\draw    (350,60) -- (350,110) ;
\draw    (340,100) -- (340,110) ;
\draw    (330,60) .. controls (329.8,45.8) and (350.2,45.8) .. (350,60) ;
\draw  [fill={rgb, 255:red, 0; green, 0; blue, 0 }  ,fill opacity=1 ] (337.1,48.87) .. controls (337.1,47.27) and (338.4,45.97) .. (340,45.97) .. controls (341.6,45.97) and (342.9,47.27) .. (342.9,48.87) .. controls (342.9,50.47) and (341.6,51.77) .. (340,51.77) .. controls (338.4,51.77) and (337.1,50.47) .. (337.1,48.87) -- cycle ;
\draw    (330,80) -- (330,88.87) ;
\draw    (320,100) .. controls (319.8,85.8) and (340.2,85.8) .. (340,100) ;
\draw  [fill={rgb, 255:red, 0; green, 0; blue, 0 }  ,fill opacity=1 ] (327.1,88.87) .. controls (327.1,87.27) and (328.4,85.97) .. (330,85.97) .. controls (331.6,85.97) and (332.9,87.27) .. (332.9,88.87) .. controls (332.9,90.47) and (331.6,91.77) .. (330,91.77) .. controls (328.4,91.77) and (327.1,90.47) .. (327.1,88.87) -- cycle ;
\draw    (360,20) -- (360,28.87) ;
\draw    (340,40) .. controls (339.8,25.8) and (380.2,25.8) .. (380,40) ;
\draw  [fill={rgb, 255:red, 0; green, 0; blue, 0 }  ,fill opacity=1 ] (357.1,28.87) .. controls (357.1,27.27) and (358.4,25.97) .. (360,25.97) .. controls (361.6,25.97) and (362.9,27.27) .. (362.9,28.87) .. controls (362.9,30.47) and (361.6,31.77) .. (360,31.77) .. controls (358.4,31.77) and (357.1,30.47) .. (357.1,28.87) -- cycle ;
\draw    (380,20) -- (380,28.87) ;
\draw    (360,40) .. controls (359.8,25.8) and (400.2,25.8) .. (400,40) ;
\draw  [fill={rgb, 255:red, 0; green, 0; blue, 0 }  ,fill opacity=1 ] (377.1,28.87) .. controls (377.1,27.27) and (378.4,25.97) .. (380,25.97) .. controls (381.6,25.97) and (382.9,27.27) .. (382.9,28.87) .. controls (382.9,30.47) and (381.6,31.77) .. (380,31.77) .. controls (378.4,31.77) and (377.1,30.47) .. (377.1,28.87) -- cycle ;
\draw    (380,40) -- (380,180) ;
\draw    (320,140) -- (320,148.87) ;
\draw    (310,160) .. controls (309.8,145.8) and (330.2,145.8) .. (330,160) ;
\draw  [fill={rgb, 255:red, 0; green, 0; blue, 0 }  ,fill opacity=1 ] (317.1,148.87) .. controls (317.1,147.27) and (318.4,145.97) .. (320,145.97) .. controls (321.6,145.97) and (322.9,147.27) .. (322.9,148.87) .. controls (322.9,150.47) and (321.6,151.77) .. (320,151.77) .. controls (318.4,151.77) and (317.1,150.47) .. (317.1,148.87) -- cycle ;
\draw    (350,140) -- (350,148.87) ;
\draw    (340,160) .. controls (339.8,145.8) and (360.2,145.8) .. (360,160) ;
\draw  [fill={rgb, 255:red, 0; green, 0; blue, 0 }  ,fill opacity=1 ] (347.1,148.87) .. controls (347.1,147.27) and (348.4,145.97) .. (350,145.97) .. controls (351.6,145.97) and (352.9,147.27) .. (352.9,148.87) .. controls (352.9,150.47) and (351.6,151.77) .. (350,151.77) .. controls (348.4,151.77) and (347.1,150.47) .. (347.1,148.87) -- cycle ;
\draw    (400,40) .. controls (401,90.17) and (389.67,121.5) .. (390,180) ;
\draw    (410,20) -- (410,40) ;
\draw    (410,40) .. controls (411,90.17) and (399,119.5) .. (400,180) ;
\draw    (330,160) .. controls (330,179.5) and (360,161.17) .. (360,180) ;
\draw    (360,160) .. controls (360.67,171.5) and (370,169.5) .. (370,180) ;
\draw   (350,180) -- (410,180) -- (410,200) -- (350,200) -- cycle ;
\draw    (310,160) -- (310,220) ;
\draw    (340,160) -- (340,220) ;
\draw    (380,200) -- (380,220) ;
\draw   (470,40) -- (490,40) -- (490,60) -- (470,60) -- cycle ;
\draw    (490,20) -- (490,28.87) ;
\draw    (480,40) .. controls (479.8,25.8) and (500.2,25.8) .. (500,40) ;
\draw  [fill={rgb, 255:red, 0; green, 0; blue, 0 }  ,fill opacity=1 ] (487.1,28.87) .. controls (487.1,27.27) and (488.4,25.97) .. (490,25.97) .. controls (491.6,25.97) and (492.9,27.27) .. (492.9,28.87) .. controls (492.9,30.47) and (491.6,31.77) .. (490,31.77) .. controls (488.4,31.77) and (487.1,30.47) .. (487.1,28.87) -- cycle ;
\draw    (480,60) -- (480,68.87) ;
\draw    (470,80) .. controls (469.8,65.8) and (490.2,65.8) .. (490,80) ;
\draw  [fill={rgb, 255:red, 0; green, 0; blue, 0 }  ,fill opacity=1 ] (477.1,68.87) .. controls (477.1,67.27) and (478.4,65.97) .. (480,65.97) .. controls (481.6,65.97) and (482.9,67.27) .. (482.9,68.87) .. controls (482.9,70.47) and (481.6,71.77) .. (480,71.77) .. controls (478.4,71.77) and (477.1,70.47) .. (477.1,68.87) -- cycle ;
\draw    (515,20) -- (515,85) ;
\draw    (540,20) -- (540,180) ;
\draw    (500,40) -- (500,90) ;
\draw    (500,80) -- (500,100) ;
\draw    (490,110) .. controls (489.8,95.8) and (510.2,95.8) .. (510,110) ;
\draw  [fill={rgb, 255:red, 0; green, 0; blue, 0 }  ,fill opacity=1 ] (497.1,100) .. controls (497.1,98.4) and (498.4,97.1) .. (500,97.1) .. controls (501.6,97.1) and (502.9,98.4) .. (502.9,100) .. controls (502.9,101.6) and (501.6,102.9) .. (500,102.9) .. controls (498.4,102.9) and (497.1,101.6) .. (497.1,100) -- cycle ;
\draw    (515,80) -- (515,100) ;
\draw    (505,110) .. controls (504.8,95.8) and (525.2,95.8) .. (525,110) ;
\draw  [fill={rgb, 255:red, 0; green, 0; blue, 0 }  ,fill opacity=1 ] (512.1,100) .. controls (512.1,98.4) and (513.4,97.1) .. (515,97.1) .. controls (516.6,97.1) and (517.9,98.4) .. (517.9,100) .. controls (517.9,101.6) and (516.6,102.9) .. (515,102.9) .. controls (513.4,102.9) and (512.1,101.6) .. (512.1,100) -- cycle ;
\draw    (475,110) .. controls (474.8,95.8) and (495.2,95.8) .. (495,110) ;
\draw  [fill={rgb, 255:red, 0; green, 0; blue, 0 }  ,fill opacity=1 ] (482.1,100) .. controls (482.1,98.4) and (483.4,97.1) .. (485,97.1) .. controls (486.6,97.1) and (487.9,98.4) .. (487.9,100) .. controls (487.9,101.6) and (486.6,102.9) .. (485,102.9) .. controls (483.4,102.9) and (482.1,101.6) .. (482.1,100) -- cycle ;
\draw    (470,80) .. controls (470.17,102.5) and (459.5,111.17) .. (460,130) ;
\draw    (490,77.1) .. controls (490.67,88.6) and (485,89.5) .. (485,100) ;
\draw   (465,130) -- (505,130) -- (505,150) -- (465,150) -- cycle ;
\draw    (505,110) .. controls (504.83,119.17) and (495.17,119.5) .. (495,130) ;
\draw    (490,110) .. controls (489.83,119.17) and (485.17,119.5) .. (485,130) ;
\draw    (495,110) .. controls (494.83,119.17) and (510.17,119.5) .. (510,130) ;
\draw    (510,110) .. controls (509.83,119.17) and (520.17,119.5) .. (520,130) ;
\draw    (525,110) .. controls (524.83,119.17) and (530.17,119.5) .. (530,130) ;
\draw    (475,110) -- (475,130) ;
\draw    (460,130) -- (460,170) ;
\draw    (520,130) -- (520,180) ;
\draw    (485,150) -- (485,158.87) ;
\draw    (475,170) .. controls (474.8,155.8) and (495.2,155.8) .. (495,170) ;
\draw  [fill={rgb, 255:red, 0; green, 0; blue, 0 }  ,fill opacity=1 ] (482.1,158.87) .. controls (482.1,157.27) and (483.4,155.97) .. (485,155.97) .. controls (486.6,155.97) and (487.9,157.27) .. (487.9,158.87) .. controls (487.9,160.47) and (486.6,161.77) .. (485,161.77) .. controls (483.4,161.77) and (482.1,160.47) .. (482.1,158.87) -- cycle ;
\draw    (530,130) -- (530,180) ;
\draw    (495,170) .. controls (494.5,176.5) and (510.17,173.17) .. (510,180) ;
\draw   (490,180) -- (550,180) -- (550,200) -- (490,200) -- cycle ;
\draw    (460,170) -- (460,215) -- (460,220) ;
\draw    (520,200) -- (520,220) ;
\draw    (475,170) .. controls (474.83,179.17) and (480.17,209.5) .. (480,220) ;
\draw    (510,130) -- (510,170) ;
\draw    (510,170) .. controls (509.5,176.5) and (495.17,173.17) .. (495,180) ;
\draw (330,70) node  [font=\normalsize]  {$f$};
\draw (350,120) node  [font=\normalsize]  {$g$};
\draw (380,190) node  [font=\normalsize]  {$h$};
\draw (480,50) node  [font=\normalsize]  {$f$};
\draw (485,140) node  [font=\normalsize]  {$g$};
\draw (520,190) node  [font=\normalsize]  {$h$};
\draw (360,16.6) node [anchor=south] [inner sep=0.75pt]  [font=\small]  {$X_{0}$};
\draw (380,16.6) node [anchor=south] [inner sep=0.75pt]  [font=\small]  {$X_{1}$};
\draw (410,16.6) node [anchor=south] [inner sep=0.75pt]  [font=\small]  {$X_{2}$};
\draw (310,223.4) node [anchor=north] [inner sep=0.75pt]  [font=\small]  {$Y_{0}$};
\draw (340,223.4) node [anchor=north] [inner sep=0.75pt]  [font=\small]  {$Y_{1}$};
\draw (380,223.4) node [anchor=north] [inner sep=0.75pt]  [font=\small]  {$Y_{2}$};
\draw (490,16.6) node [anchor=south] [inner sep=0.75pt]  [font=\small]  {$X_{0}$};
\draw (515,16.6) node [anchor=south] [inner sep=0.75pt]  [font=\small]  {$X_{1}$};
\draw (540,16.6) node [anchor=south] [inner sep=0.75pt]  [font=\small]  {$X_{2}$};
\draw (460,223.4) node [anchor=north] [inner sep=0.75pt]  [font=\small]  {$Y_{0}$};
\draw (480,223.4) node [anchor=north] [inner sep=0.75pt]  [font=\small]  {$Y_{1}$};
\draw (520,223.4) node [anchor=north] [inner sep=0.75pt]  [font=\small]  {$Y_{2}$};
\draw (425,117.4) node [anchor=north west][inner sep=0.75pt]    {$=$};
\end{tikzpicture}
\caption{Associativity, up to symmetries, of the triangle operation.}
\label{figure:assoctriangle}
\end{figure}
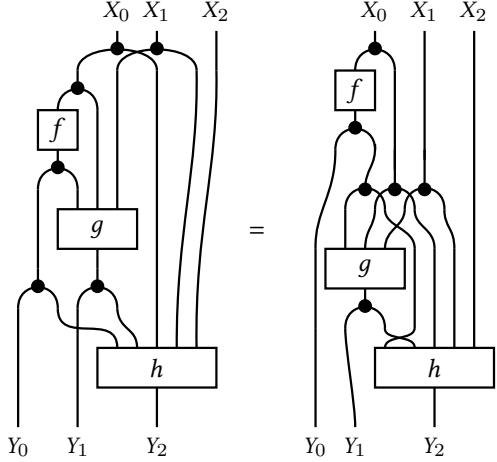

The Markov category $\Stoch$ also has \emph{conditionals}~\cite{fritz2020}, a property which we will use to prove the main result regarding stochastic processes.

\begin{definition}[Conditionals, {{\cite[Definition 11.5]{fritz2020}}}]
  Let $\catC$ be a \MarkovCategory{}. We say that $\catC$ has
  \emph{conditionals} if for every morphism $f \colon A \to X \tensor Y$,
  writing $f_{Y} \colon A \to X$ for its first projection, there
  exists $c_{f} \colon X \tensor A \to Y$ such that
  $f = f_{Y} \triangleleft c_{f}$ (\Cref{figure:conditionalsmarkov}).
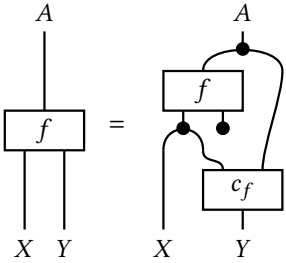
\begin{figure}
\tikzset{every picture/.style={line width=0.85pt}} %
\begin{tikzpicture}[x=0.75pt,y=0.75pt,yscale=-1,xscale=1]
\draw   (100,110) -- (140,110) -- (140,130) -- (100,130) -- cycle ;
\draw    (120,70) -- (120,110) ;
\draw    (110,130) -- (110,170) ;
\draw    (130,130) -- (130,170) ;
\draw    (220,70) -- (220,78.87) ;
\draw    (200,90) .. controls (199.8,75.8) and (240.2,75.8) .. (240,90) ;
\draw  [fill={rgb, 255:red, 0; green, 0; blue, 0 }  ,fill opacity=1 ] (217.1,78.87) .. controls (217.1,77.27) and (218.4,75.97) .. (220,75.97) .. controls (221.6,75.97) and (222.9,77.27) .. (222.9,78.87) .. controls (222.9,80.47) and (221.6,81.77) .. (220,81.77) .. controls (218.4,81.77) and (217.1,80.47) .. (217.1,78.87) -- cycle ;
\draw   (180,90) -- (220,90) -- (220,110) -- (180,110) -- cycle ;
\draw    (210,110) -- (210,118.87) ;
\draw  [fill={rgb, 255:red, 0; green, 0; blue, 0 }  ,fill opacity=1 ] (207.1,118.87) .. controls (207.1,117.27) and (208.4,115.97) .. (210,115.97) .. controls (211.6,115.97) and (212.9,117.27) .. (212.9,118.87) .. controls (212.9,120.47) and (211.6,121.77) .. (210,121.77) .. controls (208.4,121.77) and (207.1,120.47) .. (207.1,118.87) -- cycle ;
\draw    (190,110) -- (190,118.87) ;
\draw    (180,130) .. controls (179.8,115.8) and (200.2,115.8) .. (200,130) ;
\draw  [fill={rgb, 255:red, 0; green, 0; blue, 0 }  ,fill opacity=1 ] (187.1,118.87) .. controls (187.1,117.27) and (188.4,115.97) .. (190,115.97) .. controls (191.6,115.97) and (192.9,117.27) .. (192.9,118.87) .. controls (192.9,120.47) and (191.6,121.77) .. (190,121.77) .. controls (188.4,121.77) and (187.1,120.47) .. (187.1,118.87) -- cycle ;
\draw    (180,130) -- (180,170) ;
\draw   (200,140) -- (240,140) -- (240,160) -- (200,160) -- cycle ;
\draw    (200,130) .. controls (199.8,137.4) and (210.2,133) .. (210,140) ;
\draw    (240,90) .. controls (240.14,102.14) and (230,127.83) .. (230,140) ;
\draw    (220,160) -- (220,170) ;
\draw (120,120) node  []  {$f$};
\draw (200,100) node  []  {$f$};
\draw (220,150) node  []  {$c_{f}$};
\draw (151,113.4) node [anchor=north west][inner sep=0.75pt]    {$=$};
\draw (120,66.6) node [anchor=south] [inner sep=0.75pt]    {$A$};
\draw (220,66.6) node [anchor=south] [inner sep=0.75pt]    {$A$};
\draw (110,173.4) node [anchor=north] [inner sep=0.75pt]    {$X$};
\draw (130,173.4) node [anchor=north] [inner sep=0.75pt]    {$Y$};
\draw (180,173.4) node [anchor=north] [inner sep=0.75pt]    {$X$};
\draw (220,173.4) node [anchor=north] [inner sep=0.75pt]    {$Y$};
\end{tikzpicture}
\caption{Condititionals in a Markov category.}
\label{figure:conditionalsmarkov}
\end{figure}
\end{definition}

\begin{proposition}
  The Markov category $\Stoch$ has conditionals \cite[Example 11.6]{fritz2020}.
\end{proposition}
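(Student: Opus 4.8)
The plan is to unfold the abstract definition of conditionals into an elementary statement about finite-support distributions and then supply the conditional by the usual Bayes-type ratio. First I would fix an arbitrary morphism $f \colon A \to X \tensor Y$ in $\Stoch$, which by \Cref{def:distributionmonad} is a Kleisli map $f \colon A \to \distr(X \times Y)$ with components $f(x, y \mid a)$ summing to $1$ over $(x,y)$ for each $a$. Translating the triangle operation of \Cref{figure:notationtriangle} pointwise through the Kleisli composition of copy, marginal, and conditional, the required equation $f = f_Y \triangleleft c_f$ unpacks to $f(x, y \mid a) = f_Y(x \mid a)\, c_f(y \mid x, a)$, where $f_Y(x \mid a) = \sum_{y} f(x, y \mid a)$ is the projection onto $X$ and $c_f \colon X \tensor A \to Y$ is the morphism to be constructed.

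Next I would define $c_f$ explicitly. For each pair $(x, a)$ with $f_Y(x \mid a) > 0$, I set $c_f(y \mid x, a) = f(x, y \mid a) / f_Y(x \mid a)$; for $(x, a)$ with $f_Y(x \mid a) = 0$, I pick any fixed distribution on $Y$. I would then check that $c_f$ is a genuine $\Stoch$-morphism: its finite support is inherited from that of $f$, and whenever $f_Y(x \mid a) > 0$ the fibre is normalised, since $\sum_y c_f(y \mid x, a) = \big(\sum_y f(x, y \mid a)\big) / f_Y(x \mid a) = 1$.

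Finally I would verify the factorization pointwise. When $f_Y(x \mid a) > 0$, the definition gives $f_Y(x \mid a)\, c_f(y \mid x, a) = f(x, y \mid a)$ immediately. When $f_Y(x \mid a) = 0$, nonnegativity of the summands forces $f(x, y \mid a) = 0$ for every $y$, so both sides of the equation vanish regardless of the arbitrary value assigned to $c_f$ there. Hence $f = f_Y \triangleleft c_f$ holds; since $f$ was arbitrary, $\Stoch$ has conditionals, recovering \cite[Example 11.6]{fritz2020}.

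The only place any care is needed — and the sole obstacle worth flagging — is the behaviour of the conditional on the null set where the marginal vanishes: there is no canonical value to assign, but the factorization is completely insensitive to the choice because both sides are forced to zero precisely on that set. Everything else is a routine matching of the Kleisli composite against the pointwise formula for $(\blackComonoid)$ in $\Stoch$.
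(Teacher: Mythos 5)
Your proof is correct and takes essentially the same route as the paper: both define $c_f$ by the Bayes ratio $f(x,y\mid a)/f_Y(x\mid a)$ on the support of the marginal and make an arbitrary choice on the null set, observing that the factorization is insensitive to that choice; your verification is in fact spelled out more carefully than the paper's sketch. The only point the paper mentions that you omit is the case $Y=\emptyset$ (needed so that ``any fixed distribution on $Y$'' exists), but that case is vacuous since $\distr(\emptyset)=\emptyset$ forces $A=\emptyset$.
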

\begin{proof}
  Let $f \colon A \to X \tensor Y$. If $Y$ is empty, we are automatically done. If not, pick some $y_{0} \in Y$, and define
  \[c_{f}(y|x,a) = \left\{
	\begin{array}{ll}
		f(x,y|a) / \sum_{x \in X} f(x,y|a)  \\ \qquad \mbox{if } f(x,y|a) > 0 \mbox{ for some } x \in X, \\
		(y = y_{0}) \quad \mbox{otherwise}.
	\end{array}
    \right. \]
  It is straightforward to check that this does indeed define a distribution, and that it factors the original $f$ as expected.
\end{proof}

\begin{definition}[Ranges]
  In a Markov category, a \emph{range} for a morphism $f \colon A \to B$ is a morphism $r_{f} \colon A \tensor B \to A \tensor B$
  that
  \begin{enumerate}
    \item does not change its output $f \triangleleft \im_{A \tensor B} = f \triangleleft r_{f}$,
    \item is \emph{deterministic}, meaning $r_{f} ; \coMult_{A \tensor B} = \coMult_{A \tensor B} ; (r_{f} \tensor r_{f})$,
    \item and has the \emph{range} property, $f \triangleleft g = f \triangleleft h$ must imply
    \[(r_{f} \tensor \im) ; g = (r_{f} \tensor \im) ; h\]
    for any suitably typed $g$ and $h$.
  \end{enumerate}
  We say that a Markov category \emph{has ranges} if there exists a range for each morphism of the category.
\end{definition}

\begin{remark}
There already exists a notion of categorical \emph{range} in the literature, due to Cockett, Guo and Hofstra \cite{cockett2012range}.
It arises in parallel to the notion of \emph{support} in \emph{restriction categories} \cite{cockett02}.
The definition better suited for our purposes is different, even if it seems inspired by the same idea.
The main difference is that we are using a \emph{controlled range}; that is, the range of a morphism depends on the input to the original morphism.
We keep the name hoping that it will not cause any confusion, as we do not deal explicitly with restriction categories in this text.
\end{remark}

\begin{proposition}
  The \MarkovCategory{} $\Stoch$ has ranges.
\end{proposition}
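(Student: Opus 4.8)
The plan is to exhibit the range $r_{f}$ of a morphism $f \colon A \to B$ explicitly as the \emph{deterministic retraction that projects the output coordinate onto the support of $f$, fibrewise in the control $a$}. For each $a \in A$ the distribution $f(-\vert a) \in \distr(B)$ sums to $1$, hence has nonempty support, so I may choose a point $b_{0}(a) \in B$ with $f(b_{0}(a)\vert a) > 0$. I would then define $r_{f} \colon A \tensor B \to A \tensor B$ as the function
\[
  r_{f}(a,b) = \begin{cases} (a,b) & \text{if } f(b\vert a) > 0, \\ (a,\, b_{0}(a)) & \text{if } f(b\vert a) = 0, \end{cases}
\]
viewed as a morphism of $\Stoch$ via the Kleisli embedding. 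The idea is exactly the one suggested by the range property: $r_{f}$ leaves untouched the pairs $(a,b)$ that $f$ can actually produce and collapses the unreachable ones back inside the support, keeping the control $a$ fixed.

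Next I would verify the three conditions. Determinism is immediate: $r_{f}$ is (the image of) a genuine function, and in $\Stoch$ the morphisms satisfying $r_{f};\coMult_{A\tensor B} = \coMult_{A\tensor B};(r_{f}\tensor r_{f})$ are precisely the functions, since copying commutes with functions but not with proper distributions. For the first condition, I would unfold both sides into conditional probabilities: $(f \triangleleft \im_{A\tensor B})$ evaluates to $f(b\vert a)\,[(a',b') = (a,b)]$, while $(f \triangleleft r_{f})$ evaluates to $f(b\vert a)\, r_{f}(a',b'\vert a,b)$; since the factor $f(b\vert a)$ annihilates every off-support contribution, the two agree exactly because $r_{f}(a,b) = (a,b)$ whenever $f(b\vert a) > 0$. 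For the range property, the key computation is that $f \triangleleft g = f \triangleleft h$ unfolds to $f(b\vert a)\,g(d\vert a,b,c) = f(b\vert a)\,h(d\vert a,b,c)$, i.e.\ $g$ and $h$ agree at every $(a,b,c)$ with $f(b\vert a) > 0$; and since $r_{f}$ sends each $(a,b)$ to some $(a,b')$ with $f(b'\vert a) > 0$, the composites $(r_{f}\tensor \im);g$ and $(r_{f}\tensor \im);h$ only ever sample $g$ and $h$ on the support of $f$, so they coincide.

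The main obstacle here is not mathematical depth but the bookkeeping needed to translate the abstract $\triangleleft$-notation and the three equalities into concrete statements about the kernels $f(b\vert a)$; in particular the crux is recognising that $f \triangleleft g = f \triangleleft h$ says \emph{precisely} that $g$ and $h$ agree on the support of $f$, which is what makes the projection construction work. The only genuinely necessary ingredient beyond this dictionary is the fibrewise choice of $b_{0}(a)$, justified by the fact that a probability distribution has nonempty support; this lets $r_{f}$ land inside the support uniformly in $a$, and is what guarantees both that $r_{f}$ fixes the reachable pairs and that it respects the control when discharging the range property.
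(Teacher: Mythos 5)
Your construction of $r_{f}$ is exactly the one the paper uses: fix, for each $a$, a point $b_{a}$ in the support of $f(-\vert a)$ and define $r_{f}$ as the deterministic map that fixes on-support pairs and sends off-support pairs to $(a,b_{a})$. The paper leaves the verification of the three range axioms as "straightforward to check," and your unfolding of $\triangleleft$ into kernels — in particular the observation that $f \triangleleft g = f \triangleleft h$ says precisely that $g$ and $h$ agree on the support of $f$ — correctly fills in that routine check.
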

\begin{proof}
  Given $f \colon A \to B$, we know that for each $a \in A$ there exists some $b_{a} \in B$ such that $f(b_{a}| a) > 0$.
  We fix such $b_{a} \in B$, and we define $r_{f} \colon A \tensor B \to A \tensor B$ as
  \[r_{f}(a,b) = \left\{
	\begin{array}{ll}
		(a,b)  & \mbox{if } f(b|a) > 0, \\
		(a,b_{a}) & \mbox{if } f(b|a) = 0.
	\end{array}
  \right.\]
  It is straightforward to check that it satisfies all the properties of ranges.
\end{proof}

\begin{theorem}\label{appendix:productivemarkov}
  Any \MarkovCategory{} with conditionals and ranges is \productive{}.
\end{theorem}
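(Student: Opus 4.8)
The plan is to show that an arbitrary $1$-stage process admits a canonical ``maximally remembering'' representative through which every other representative factors, obtaining the factorization from \emph{conditionals} and the cancellation clause from \emph{ranges}. Recall that $\Stage{1}(\stream{X},\stream{Y}) \cong \coend{M \in \catC} \hom{}(X_{0}, M \tensor Y_{0})$, so a $1$-stage process is a morphism $\beta \colon X_{0} \to M \tensor Y_{0}$ considered up to the relation $\beta \sim \beta ; (r \tensor \im_{Y_{0}})$ for every $r \colon M \to N$. First I would observe that this relation preserves the $Y_{0}$-marginal: discarding the memory and using naturality of the counit, $\beta ; (\coUnit_{M} \tensor \im_{Y_{0}}) = \beta ; (r \tensor \im_{Y_{0}}) ; (\coUnit_{N} \tensor \im_{Y_{0}})$, so a single morphism $q \colon X_{0} \to Y_{0}$ is an invariant of the whole dinatural class.

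Fixing this invariant $q$, I would set $M_{0} \defn Y_{0} \tensor X_{0}$ and define the canonical representative $\alpha_{0} \colon X_{0} \to M_{0} \tensor Y_{0}$ (up to symmetries) as $q \triangleleft \im_{Y_{0} \tensor X_{0}}$: copy the input, run $q$, and keep the pair ``(output, input)'' as memory while re-emitting the output. This choice has two features. For any $s \colon Y_{0} \tensor X_{0} \to N$ one computes $\alpha_{0} ; (s \tensor \im_{Y_{0}}) = q \triangleleft s$. And for any representative $\gamma \colon X_{0} \to N \tensor Y_{0}$ of the same class, conditionals provide a morphism $c_{\gamma} \colon Y_{0} \tensor X_{0} \to N$ with $\gamma = q \triangleleft c_{\gamma}$ (its marginal being the invariant $q$), so that $\gamma = \alpha_{0} ; (c_{\gamma} \tensor \im_{Y_{0}})$. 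Taking $\gamma = \beta$ shows $\alpha_{0}$ lies in the class, and the same computation shows every representative $\alpha_{i}$ factors as $\alpha_{i} = \alpha_{0} ; (s_{i} \tensor \im_{Y_{0}})$ with $s_{i} \defn c_{\alpha_{i}}$. This settles the existence part of ``terminating relative to $\catC$''.

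It remains to verify the cancellation clause, which is where ranges are essential and where I expect the real difficulty. The clause compares the $s_{i}$ after post-composition: from a stage equality of the bracketed composites built from $u$ and $v$ I must deduce $\bra{s_{i} \tensor \im_{A} ; u} = \bra{s_{j} \tensor \im_{A} ; v}$. The subtlety is that conditionals, and hence the $s_{i}$, are only determined \emph{on the support} of $q$, being arbitrary off it. The plan is to unfold the hypothesis, using $\alpha_{0} ; (s \tensor \im) = q \triangleleft s$ together with determinism of ranges (so that $r_{q}$ slides through the copies $\coMult$ introduced by $\triangleleft$), into a single triangle equation of the form $q \triangleleft G = q \triangleleft H$. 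The range/cancellation axiom then yields $(r_{q} \tensor \im) ; G = (r_{q} \tensor \im) ; H$, and the output-invariance axiom $q \triangleleft (-) = q \triangleleft ((r_{q} \tensor \im) ; (-))$ lets me reinsert $r_{q}$ to recover exactly the desired bracket equality. In effect, $\alpha_{0}$ already confines its memory to the support of $q$, so the off-support ambiguity of the $s_{i}$ never reaches the conclusion.

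The main obstacle is the bookkeeping of this last step: the stage equalities are statements about dinatural classes across differing memories $U$ and $V$, not plain equations, so I must first choose compatible representatives and then check that the deterministic range genuinely commutes with the symmetries and copies hidden inside $\triangleleft$ before the three range axioms (output-invariance, determinism, cancellation) apply. As flagged in the sketch of \Cref{th:stochasticprocesses}, I would phrase the whole argument in the language of Markov categories, so that the appeals to conditionals and ranges remain uniform and abstract rather than tied to $\Stoch$.
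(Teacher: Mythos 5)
Your proposal follows essentially the same route as the paper's proof: the canonical representative is $q \triangleleft \im$ with memory $Y_{0}\tensor X_{0}$ (where $q$ is the $Y_{0}$-marginal, invariant by naturality of the counit), the factorization $\alpha_{i} = \alpha_{0};(s_{i}\tensor\im)$ comes from conditionals, and the cancellation clause is obtained by reducing the bracket hypothesis to a triangle equation and applying the range axioms. The only cosmetic difference is that the paper absorbs the range into the witnesses by setting $s_{i} = r ; c_{i}$, whereas you keep $s_{i} = c_{\alpha_{i}}$ and reinsert $r_{q}$ via output-invariance at the end; both resolve the off-support ambiguity of the conditionals in the same way.
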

\begin{proof}
  Given any $\bra{\alpha} \in \Stage{1}(\stream{X},\stream{Y})$, we can define
  \[\alpha_{0} = \coMult_{A} ; \tid{\alpha ; (\coMult_{Y} \tensor \coUnit_{M})}.\]
  This is indeed well-defined because of naturality of the discarding map $(\coUnit)_{M} \colon M \to I$ in any \MarkovCategory{}.
  Let $c_{\alpha} \colon Y \tensor X \to M$ be a conditional of $\alpha$. This representative can then be factored as $\alpha = \alpha_{0} ; \tid{c_{\alpha}}$ (\Cref{figure:productivemarkov}).
  \begin{figure}
\tikzset{every picture/.style={line width=0.85pt}} %

\begin{tikzpicture}[x=0.75pt,y=0.75pt,yscale=-1,xscale=1]
\draw   (390,50) -- (430,50) -- (430,70) -- (390,70) -- cycle ;
\draw    (430,30) -- (430,38.87) ;
\draw    (410,50) .. controls (409.8,35.8) and (450.2,35.8) .. (450,50) ;
\draw  [fill={rgb, 255:red, 0; green, 0; blue, 0 }  ,fill opacity=1 ] (427.1,38.87) .. controls (427.1,37.27) and (428.4,35.97) .. (430,35.97) .. controls (431.6,35.97) and (432.9,37.27) .. (432.9,38.87) .. controls (432.9,40.47) and (431.6,41.77) .. (430,41.77) .. controls (428.4,41.77) and (427.1,40.47) .. (427.1,38.87) -- cycle ;
\draw    (420,70) -- (420,78.87) ;
\draw  [fill={rgb, 255:red, 0; green, 0; blue, 0 }  ,fill opacity=1 ] (417.1,78.87) .. controls (417.1,77.27) and (418.4,75.97) .. (420,75.97) .. controls (421.6,75.97) and (422.9,77.27) .. (422.9,78.87) .. controls (422.9,80.47) and (421.6,81.77) .. (420,81.77) .. controls (418.4,81.77) and (417.1,80.47) .. (417.1,78.87) -- cycle ;
\draw    (400,70) -- (400,78.87) ;
\draw    (390,90) .. controls (389.8,75.8) and (410.2,75.8) .. (410,90) ;
\draw  [fill={rgb, 255:red, 0; green, 0; blue, 0 }  ,fill opacity=1 ] (397.1,78.87) .. controls (397.1,77.27) and (398.4,75.97) .. (400,75.97) .. controls (401.6,75.97) and (402.9,77.27) .. (402.9,78.87) .. controls (402.9,80.47) and (401.6,81.77) .. (400,81.77) .. controls (398.4,81.77) and (397.1,80.47) .. (397.1,78.87) -- cycle ;
\draw    (390,90) -- (390,130) ;
\draw   (410,100) -- (450,100) -- (450,120) -- (410,120) -- cycle ;
\draw    (410,90) .. controls (409.8,97.4) and (420.2,93) .. (420,100) ;
\draw    (450,50) .. controls (450.14,62.14) and (440,87.83) .. (440,100) ;
\draw    (430,120) -- (430,130) ;
\draw   (300,70) -- (340,70) -- (340,90) -- (300,90) -- cycle ;
\draw    (310,90) -- (310,130) ;
\draw    (330,90) -- (330,130) ;
\draw    (320,30) -- (320,70) ;
\draw   (490,60) -- (550,60) -- (550,80) -- (490,80) -- cycle ;
\draw    (520,30) -- (520,60) ;
\draw    (520,80) -- (520,100) ;
\draw    (500,80) -- (500,97.5) -- (500,130) ;
\draw   (510,100) -- (550,100) -- (550,120) -- (510,120) -- cycle ;
\draw    (530,120) -- (530,130) ;
\draw    (540,80) -- (540,100) ;
\draw (430,26.6) node [anchor=south] [inner sep=0.75pt]    {$X$};
\draw (430,133.4) node [anchor=north] [inner sep=0.75pt]    {$M$};
\draw (410,60) node  [font=\normalsize]  {$\alpha $};
\draw (430,110) node  [font=\normalsize]  {$c_{\alpha }$};
\draw (390,133.4) node [anchor=north] [inner sep=0.75pt]    {$Y$};
\draw (320,80) node  [font=\normalsize]  {$\alpha $};
\draw (320,26.6) node [anchor=south] [inner sep=0.75pt]    {$X$};
\draw (310,133.4) node [anchor=north] [inner sep=0.75pt]    {$Y$};
\draw (330.5,133.4) node [anchor=north] [inner sep=0.75pt]    {$M$};
\draw (367.5,86.6) node [anchor=south] [inner sep=0.75pt]    {$=$};
\draw (520,70) node  [font=\normalsize]  {$\alpha _{0}$};
\draw (530,110) node  [font=\normalsize]  {$c_{\alpha }$};
\draw (472.5,87.6) node [anchor=south] [inner sep=0.75pt]    {$=$};
\draw (520,26.6) node [anchor=south] [inner sep=0.75pt]    {$X$};
\draw (500,133.4) node [anchor=north] [inner sep=0.75pt]    {$Y$};
\draw (530,133.4) node [anchor=north] [inner sep=0.75pt]    {$M$};
\end{tikzpicture}
\caption{Productivity for Markov categories.}
\label{figure:productivemarkov}
\end{figure}
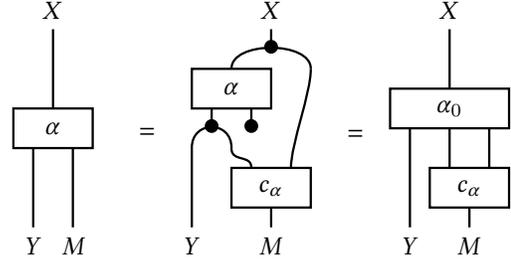

Now assume that for two representatives $\bra{\alpha_{i}} = \bra{\alpha_{j}}$ we have that $\bra{\alpha_{i}; u} = \bra{\alpha_{j} ; v}$.
By naturality of the discarding, $\alpha_{i} ; \tid{\varepsilon} = \alpha_{j} ; \tid{\varepsilon}$, and let $r$ be a range of this map.
Again by naturality of discarding, we have $\tid{\alpha_{i}}; \tid{u} ; \tid{\coUnit} = \tid{\alpha_{j}}; \tid{v} ; \tid{\coUnit}$.
Let then $c_{i}$ and $c_{j}$ be conditionals of $\alpha_{i}$ and $\alpha_{j}$:
we have that $(\alpha_{0}\triangleleft c_{i}); u ; \tid{\coUnit} = (\alpha_{0} \triangleleft c_{j}); v ; \tid{\coUnit}$.
By the properties of ranges (\Cref{figure:propertiesrange}), $(\alpha_{0}\triangleleft r ; c_{i}) ; u ; \coUnit_{M(u)} = (\alpha_{0}\triangleleft r ; c_{j}) ; v ; \coUnit_{M(v)}$,
and thus, $r ; c_{i}; u ; \coUnit_{M(u)} = r ; c_{j}; v ; \coUnit_{M(v)}$.
We pick $s_{i} = r ; c_{i}$ and we have proven that $\bra{\tid{s_{i}} ; u} = \bra{\tid{s_{j}} ; v}$.
\end{proof}

\begin{figure}
\tikzset{every picture/.style={line width=0.85pt}} %
\begin{tikzpicture}[x=0.75pt,y=0.75pt,yscale=-1,xscale=1]
\draw   (160.01,50) -- (200.01,50) -- (200.01,70) -- (160.01,70) -- cycle ;
\draw    (200.01,30) -- (200.01,38.87) ;
\draw    (180.01,50) .. controls (179.81,35.8) and (220.21,35.8) .. (220.01,50) ;
\draw  [fill={rgb, 255:red, 0; green, 0; blue, 0 }  ,fill opacity=1 ] (197.11,38.87) .. controls (197.11,37.27) and (198.41,35.97) .. (200.01,35.97) .. controls (201.61,35.97) and (202.91,37.27) .. (202.91,38.87) .. controls (202.91,40.47) and (201.61,41.77) .. (200.01,41.77) .. controls (198.41,41.77) and (197.11,40.47) .. (197.11,38.87) -- cycle ;
\draw    (190.01,70) -- (190.01,78.87) ;
\draw  [fill={rgb, 255:red, 0; green, 0; blue, 0 }  ,fill opacity=1 ] (187.11,78.87) .. controls (187.11,77.27) and (188.41,75.97) .. (190.01,75.97) .. controls (191.61,75.97) and (192.91,77.27) .. (192.91,78.87) .. controls (192.91,80.47) and (191.61,81.77) .. (190.01,81.77) .. controls (188.41,81.77) and (187.11,80.47) .. (187.11,78.87) -- cycle ;
\draw    (170.01,70) -- (170.01,78.87) ;
\draw    (160.01,90) .. controls (159.81,75.8) and (180.21,75.8) .. (180.01,90) ;
\draw  [fill={rgb, 255:red, 0; green, 0; blue, 0 }  ,fill opacity=1 ] (167.11,78.87) .. controls (167.11,77.27) and (168.41,75.97) .. (170.01,75.97) .. controls (171.61,75.97) and (172.91,77.27) .. (172.91,78.87) .. controls (172.91,80.47) and (171.61,81.77) .. (170.01,81.77) .. controls (168.41,81.77) and (167.11,80.47) .. (167.11,78.87) -- cycle ;
\draw    (160.01,90) -- (160.01,100) ;
\draw   (180.01,130) -- (220.01,130) -- (220.01,150) -- (180.01,150) -- cycle ;
\draw    (180.01,90) .. controls (179.81,97.4) and (190.21,93) .. (190.01,100) ;
\draw    (220.01,50) .. controls (220.15,62.14) and (210.01,87.83) .. (210.01,100) ;
\draw    (200.01,150) -- (200.01,160) ;
\draw   (180.01,100) -- (220.01,100) -- (220.01,120) -- (180.01,120) -- cycle ;
\draw    (190.01,120) -- (190.01,130) ;
\draw    (210.01,120) -- (210.01,130) ;
\draw   (160.01,160) -- (210.01,160) -- (210.01,180) -- (160.01,180) -- cycle ;
\draw    (200,180) -- (200,200) ;
\draw   (270.01,50) -- (310.01,50) -- (310.01,70) -- (270.01,70) -- cycle ;
\draw    (310.01,30) -- (310.01,38.87) ;
\draw    (290.01,50) .. controls (289.81,35.8) and (330.21,35.8) .. (330.01,50) ;
\draw  [fill={rgb, 255:red, 0; green, 0; blue, 0 }  ,fill opacity=1 ] (307.11,38.87) .. controls (307.11,37.27) and (308.41,35.97) .. (310.01,35.97) .. controls (311.61,35.97) and (312.91,37.27) .. (312.91,38.87) .. controls (312.91,40.47) and (311.61,41.77) .. (310.01,41.77) .. controls (308.41,41.77) and (307.11,40.47) .. (307.11,38.87) -- cycle ;
\draw    (300.01,70) -- (300.01,78.87) ;
\draw  [fill={rgb, 255:red, 0; green, 0; blue, 0 }  ,fill opacity=1 ] (297.11,78.87) .. controls (297.11,77.27) and (298.41,75.97) .. (300.01,75.97) .. controls (301.61,75.97) and (302.91,77.27) .. (302.91,78.87) .. controls (302.91,80.47) and (301.61,81.77) .. (300.01,81.77) .. controls (298.41,81.77) and (297.11,80.47) .. (297.11,78.87) -- cycle ;
\draw    (280.01,70) -- (280.01,78.87) ;
\draw    (270.01,90) .. controls (269.81,75.8) and (290.21,75.8) .. (290.01,90) ;
\draw  [fill={rgb, 255:red, 0; green, 0; blue, 0 }  ,fill opacity=1 ] (277.11,78.87) .. controls (277.11,77.27) and (278.41,75.97) .. (280.01,75.97) .. controls (281.61,75.97) and (282.91,77.27) .. (282.91,78.87) .. controls (282.91,80.47) and (281.61,81.77) .. (280.01,81.77) .. controls (278.41,81.77) and (277.11,80.47) .. (277.11,78.87) -- cycle ;
\draw    (270.01,90) -- (270.01,100) ;
\draw   (290.01,130) -- (330.01,130) -- (330.01,150) -- (290.01,150) -- cycle ;
\draw    (290.01,90) .. controls (289.81,97.4) and (300.21,93) .. (300.01,100) ;
\draw    (330.01,50) .. controls (330.15,62.14) and (320.01,87.83) .. (320.01,100) ;
\draw    (310.01,150) -- (310.01,160) ;
\draw   (290.01,100) -- (330.01,100) -- (330.01,120) -- (290.01,120) -- cycle ;
\draw    (300.01,120) -- (300.01,130) ;
\draw    (320.01,120) -- (320.01,130) ;
\draw   (270.01,160) -- (320.01,160) -- (320.01,180) -- (270.01,180) -- cycle ;
\draw    (310,180) -- (310,200) ;
\draw    (270.01,100) .. controls (270.18,119.83) and (260.18,139.5) .. (260.01,160) ;
\draw    (260.01,80) .. controls (259.51,119.17) and (279.51,130.17) .. (280.01,160) ;
\draw    (260.01,30) -- (260.01,80) ;
\draw    (260.01,160) -- (260,200) ;
\draw    (160.02,100) .. controls (160.19,119.83) and (150.19,139.5) .. (150.02,160) ;
\draw    (150.01,80) .. controls (149.51,119.17) and (169.51,130.17) .. (170.01,160) ;
\draw    (150.01,30) -- (150.01,80) ;
\draw    (150.02,160) -- (150,200) ;
\draw    (280,180) -- (280,190) ;
\draw    (170,180) -- (170,190) ;
\draw  [fill={rgb, 255:red, 0; green, 0; blue, 0 }  ,fill opacity=1 ] (167.1,190) .. controls (167.1,188.4) and (168.4,187.1) .. (170,187.1) .. controls (171.6,187.1) and (172.9,188.4) .. (172.9,190) .. controls (172.9,191.6) and (171.6,192.9) .. (170,192.9) .. controls (168.4,192.9) and (167.1,191.6) .. (167.1,190) -- cycle ;
\draw  [fill={rgb, 255:red, 0; green, 0; blue, 0 }  ,fill opacity=1 ] (277.1,190) .. controls (277.1,188.4) and (278.4,187.1) .. (280,187.1) .. controls (281.6,187.1) and (282.9,188.4) .. (282.9,190) .. controls (282.9,191.6) and (281.6,192.9) .. (280,192.9) .. controls (278.4,192.9) and (277.1,191.6) .. (277.1,190) -- cycle ;
\draw (200.01,26.6) node [anchor=south] [inner sep=0.75pt]    {$X$};
\draw (180.01,60) node  [font=\normalsize]  {$\alpha _{i}$};
\draw (200.01,140) node  [font=\normalsize]  {$c_{i}$};
\draw (200.01,110.5) node  [font=\normalsize]  {$r$};
\draw (185.01,170) node  [font=\normalsize]  {$u$};
\draw (310.01,26.6) node [anchor=south] [inner sep=0.75pt]    {$X$};
\draw (290.01,60) node  [font=\normalsize]  {$\alpha _{j}$};
\draw (310.01,140) node  [font=\normalsize]  {$c_{j}$};
\draw (310.01,110.5) node  [font=\normalsize]  {$r$};
\draw (295.01,170) node  [font=\normalsize]  {$v$};
\draw (237.51,107.6) node [anchor=south] [inner sep=0.75pt]    {$=$};
\draw (150.01,26.6) node [anchor=south] [inner sep=0.75pt]    {$Z$};
\draw (150,203.4) node [anchor=north] [inner sep=0.75pt]    {$Y$};
\draw (200,203.4) node [anchor=north] [inner sep=0.75pt]    {$W$};
\draw (260.01,26.6) node [anchor=south] [inner sep=0.75pt]    {$Z$};
\draw (260,203.4) node [anchor=north] [inner sep=0.75pt]    {$Y$};
\draw (310,203.4) node [anchor=north] [inner sep=0.75pt]    {$W$};
\end{tikzpicture}
\caption{Applying the properties of range.}
\label{figure:propertiesrange}
\end{figure}
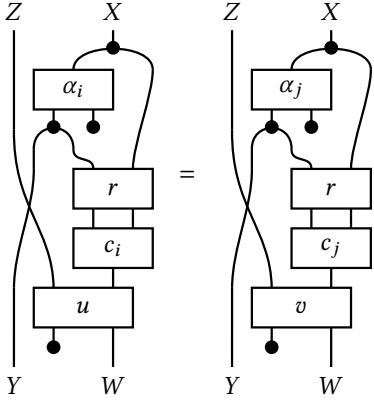

\subsection{Stochastic processes}

\begin{definition}[Controlled stochastic process]\label{appendix:def:stoch-rep-process}
  Let \(\stream{X} = \streamExpr{X}\) and \(\stream{Y} = \streamExpr{Y}\) be infinite sequences of sets.
  A controlled \emph{stochastic process} \(\stream{f} \colon \stream{X} \to \stream{Y}\) is an infinite sequence \(\stream{f} = \streamExpr{f}\) of functions \(f_{n} \colon X_{n} \times \dots \times X_{1} \to \distr(Y_{n} \times \dots \times Y_{1})\) such that \(f_{n}\) coincides with the marginal distribution of \(f_{n+1}\) on the first \(n\) variables.
  In other words, \(f_{n+1} \dcomp D\proj[Y_{0},\dots,Y_{n}] = \proj[X_{0},\dots,X_{n}] \dcomp f_{n}\).
  \[
  \begin{tikzcd}
    X_0 \times \dots \times X_{n+1} \rar{f_{n+1}} \dar[swap]{\pi_{0,\dots,n}} & D(Y_0 \times \dots \times Y_{n+1}) \dar{D\pi_{0,\dots,n}} \\
    X_0 \times \dots \times X_{n}  \rar{f_{n}} & D(Y_0 \times \dots \times Y_{n})
  \end{tikzcd}\]
  Let \(\Marg\) be the category with objects infinite sequences of sets
  \(\stream{X} = \streamExpr{X}\) and morphisms controlled stochastic
  processes \(\stream{f} = \streamExpr{f}\)
  with composition and identities defined componentwise in \(\Stoch\).
\end{definition}

\begin{proposition}[Factoring as conditionals]
  A stochastic process $f \colon \stream{X} \to \stream{Y}$ can be always written as
  \[f_{n} = c_{0} \triangleleft c_{1} \triangleleft \dots \triangleleft c_{n},\]
  for some family of functions
  \[c_{n} \colon Y_{0} \times \dots \times Y_{n-1} \times X_{0} \times \dots \times X_{n} \to Y_{n},\]
  called the \emph{conditionals} of the stochastic process.
\end{proposition}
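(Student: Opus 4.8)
The plan is to prove the factorisation by induction on $n$, peeling off the topmost conditional from $f_{n+1}$ at each stage and using causality to recognise the leftover marginal as $f_n$. I would build the family $\{c_k\}$ incrementally: set $c_0 \defn f_0$ and, for each $n$, let $c_{n+1}$ be a conditional of $f_{n+1}$ taken with respect to the splitting of its codomain as $(Y_0 \tensor \dots \tensor Y_n) \tensor Y_{n+1}$. Such a conditional exists because $\Stoch$ has conditionals. The type handed to us by the conditionals definition, $c_{n+1} \colon Y_0 \tensor \dots \tensor Y_n \tensor X_0 \tensor \dots \tensor X_{n+1} \to Y_{n+1}$, is exactly the one demanded in the statement, so no retyping is needed. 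It then remains to prove $f_n = c_0 \triangleleft \dots \triangleleft c_n$ for all $n$.

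The base case $f_0 = c_0$ is immediate. For the inductive step, applying conditionals to $f_{n+1} \colon X_0 \tensor \dots \tensor X_{n+1} \to \distr(Y_0 \tensor \dots \tensor Y_{n+1})$ with the output split as above yields $f_{n+1} = m_{n+1} \triangleleft c_{n+1}$, where $m_{n+1}$ is the first projection of $f_{n+1}$, that is, its marginal over $Y_{n+1}$. The crucial observation is that the marginalisation property (causality) of a stochastic process identifies this marginal with $f_n$ precomposed with the projection that discards $X_{n+1}$; in symbols, $m_{n+1} = \pi_{0,\dots,n} \dcomp f_n$.

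The heart of the argument is to turn $m_{n+1} \triangleleft c_{n+1}$ into $f_n \triangleleft c_{n+1}$. In the triangle $m_{n+1} \triangleleft c_{n+1}$, the shared input $X_{n+1}$ is copied, with one copy fed into $m_{n+1}$, where it is discarded (this is the content of $m_{n+1} = \pi_{0,\dots,n} \dcomp f_n$). Using the counit law of the comonoid $(\coMult, \coUnit)$ --- copying a wire and discarding one leg is the identity --- this copy can be erased, so that $X_{n+1}$ survives only along the leg feeding $c_{n+1}$. What remains is precisely $f_n \triangleleft c_{n+1}$, with $f_n$ acting on $X_0, \dots, X_n$ and $X_{n+1}$ routed to $c_{n+1}$ alone. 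The induction hypothesis $f_n = c_0 \triangleleft \dots \triangleleft c_n$ together with associativity of $(\triangleleft)$ then gives $f_{n+1} = c_0 \triangleleft \dots \triangleleft c_n \triangleleft c_{n+1}$.

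The main obstacle is this last diagrammatic rewriting: one must carefully track how the inputs $X_0, \dots, X_{n+1}$ are copied and routed to $m_{n+1}$ versus $c_{n+1}$, and check that absorbing the discarded $X_{n+1}$ leg via the counit law really produces $f_n \triangleleft c_{n+1}$ of the correct type rather than a morphism that still carries a spurious $X_{n+1}$ dependence inside its first factor. Everything else is assembly: the existence of conditionals in $\Stoch$, associativity of the triangle, and the causality square are all in hand, so the proof amounts to stitching them together around this counit manipulation.
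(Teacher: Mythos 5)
Your proof is correct and follows essentially the same route as the paper's: take $c_{n+1}$ to be a conditional of $f_{n+1}$ with respect to the splitting $(Y_0\times\dots\times Y_n)\times Y_{n+1}$, identify the first factor of the resulting triangle with $f_n$ via the marginalisation property, and conclude by induction and associativity of $(\triangleleft)$. The one place you go beyond the paper is the counit rewiring that turns $(\pi_{0,\dots,n}\dcomp f_n)\triangleleft c_{n+1}$ into $f_n\triangleleft c_{n+1}$ — the paper elides this by writing $f_{n+1}\dcomp\coUnit_{Y_{n+1}} = f_n$ outright, and your explicit treatment (splitting the copy of the input via uniformity of $\coMult$ and absorbing the discarded $X_{n+1}$ leg by the counit law) is a correct and worthwhile filling of that gap.
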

\begin{proof}
  We proceed by induction, noting first that $c_{0} = f_{0}$.
  In the general case, we apply conditionals to rewrite $f_{n+1} = (f_{n+1} ; (\coUnit)_{Y_{n+1}}) \triangleleft c_{n+1}$.
  Because of the marginalization property, we know that $f_{n+1} ; (\coUnit)_{Y_{n+1}} = f_{n}$.
  So finally, $f_{n+1} = f_{n} \triangleleft c_{n+1}$, which by the induction hypothesis gives the desired result.
\end{proof}

\begin{proposition}\label{prop:condtoproc}
  If two families of conditionals give rise to the same stochastic process,
  \[c_{0} \triangleleft c_{1} \triangleleft \dots \triangleleft c_{n} =
  c_{0}' \triangleleft c_{1}' \triangleleft \dots \triangleleft c_{n}',\]
then, they also give rise to the same n-stage processes in $\Stoch$,
\[\bra{c_{0} \triangleleft \im |c_{1} \triangleleft \im |\dots|c_{n} \triangleleft \im} = \bra{c_{0}' \triangleleft \im |c_{1}' \triangleleft \im|\dots|c_{n}' \triangleleft \im}.\]
\end{proposition}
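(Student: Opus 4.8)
The plan is to produce explicit \emph{sliding witnesses} at every memory boundary and to show that, after sliding, the two $n$-stage representations become literally the same tuple of morphisms. The witnesses will be the \emph{ranges} of the prefix processes, and the key leverage is the range property (item (3) of the definition of ranges), exactly as exploited in the productivity proof (\Cref{appendix:productivemarkov}).

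First I would note that the single hypothesis $c_{0} \triangleleft \dots \triangleleft c_{n} = c_{0}' \triangleleft \dots \triangleleft c_{n}'$ implies the analogous equality for every prefix $m \leq n$: discarding the outputs $Y_{m+1}, \dots, Y_{n}$ and using naturality of $\coUnit$ collapses each trailing conditional together with the copies feeding it, so that $c_{0} \triangleleft \dots \triangleleft c_{m} = c_{0}' \triangleleft \dots \triangleleft c_{m}' = f_{m}$. Next I would observe that in both tuples the memory after stage $m$ carries the full history $Y_{0} \times \dots \times Y_{m} \times X_{0} \times \dots \times X_{m}$, so the two representations have identical memory \emph{types}; moreover the morphism producing this memory from the inputs is the same in both families, since the inputs are copied deterministically and the joint distribution over the stored outputs is precisely $f_{m}$, which we just argued agrees. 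Call this common memory-producing morphism $P_{m}$.

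The central step is then purely local to each bar. Let $r_{m}$ be a range of $P_{m}$. By item (1) of the range definition, inserting $r_{m}$ into the memory channel immediately after $P_{m}$ does not change the stage process, so I may freely place it there in the $c$-representation; being deterministic (item (2)), it slides across the bar by dinaturality (\Cref{remark:stagenotation}) into the input of the next stage, where it precomposes the conditional $c_{m+1}$. Since $c_{m+1}$ and $c_{m+1}'$ are both conditionals relative to $P_{m}$ of the same $f_{m+1}$, we have $P_{m} \triangleleft c_{m+1} = P_{m} \triangleleft c_{m+1}'$ (both compute $f_{m+1}$ with the history retained), so item (3) gives $(r_{m} \tensor \im) ; c_{m+1} = (r_{m} \tensor \im) ; c_{m+1}'$. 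Propagating this equality through the copy bookkeeping of the $\triangleleft \im$ pattern — the same manipulation already carried out diagrammatically in \Cref{figure:propertiesrange} — turns the $(m{+}1)$-st stage morphism of the $c$-representation into that of the $c'$-representation. Performing this at every boundary $m = 0, \dots, n-1$ aligns all stage morphisms, while the residual witness on the final memory $M_{n}$ is absorbed by dinaturality, since the last memory is discarded (\Cref{remark:stagenotation}). Hence the two tuples coincide in $\Stage{n}(\stream{X},\stream{Y})$.

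I expect the main obstacle to be precisely this local step: threading $r_{m}$ through the copies and symmetries hidden in the $\triangleleft \im$ notation and checking that item (3), stated for bare conditionals, still delivers the equality once $Y_{m+1}$ is simultaneously emitted and written to memory. This is where the concrete reasoning of \Cref{figure:propertiesrange} is indispensable, and where the availability of both \emph{conditionals} and \emph{ranges} in $\Stoch$ — i.e. productivity — is genuinely used rather than merely assumed.
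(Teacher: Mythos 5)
Your overall strategy --- manufacture sliding witnesses from ranges, transfer $c_{m+1}$ to $c_{m+1}'$ via the range property, and absorb the last witness by dinaturality --- is the same as the paper's, and your use of item (3) is sound: $P_m \triangleleft c_{m+1} = P_m \triangleleft c_{m+1}'$ does follow from the prefix equality, and it does yield $(r_m \tensor \im);c_{m+1} = (r_m \tensor \im);c_{m+1}'$. The gap is in the step where you insert $r_m$ ``immediately after $P_m$'' by item (1). Item (1) gives an equality of \emph{composites}, $P_m \triangleleft \im = P_m \triangleleft r_m$, where $P_m$ is the composite of the first $m{+}1$ stages; but an element of $\Stage{n}(\stream{X},\stream{Y})$ is a tuple of stage morphisms up to dinaturality, and equality of composites is strictly weaker than equality in this coend --- that distinction is the entire point of the observational quotient. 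Concretely, appending your $\tilde{r}_m$ to the stage-$m$ morphism is not an equality of that single morphism (the stage morphism $c_m \triangleleft \im$ is defined on all incoming memory states, while $\tilde{r}_m$ is only guaranteed to act as the identity on the support of $P_m$), and no single dinaturality move removes it; so the insertion is unjustified as stated.

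The paper closes exactly this hole by defining the ranges \emph{locally and inductively}: $r_0 = \im$ and $r_{i+1}$ a range of $r_i ; \coMult ; c_i$. Once the first $i$ ranges have been slid in, the stage-$i$ morphism literally \emph{is} $r_i ; \coMult ; c_i \triangleleft \im$, so item (1) applied to $r_i;\coMult;c_i$ gives $r_i;\coMult;c_i \triangleleft \im = r_i;\coMult;c_i \triangleleft r_{i+1}$ as an equality of that one stage morphism, and only then is $r_{i+1}$ slid across the bar by dinaturality (\Cref{remark:stagenotation}). This forces a left-to-right induction threading one range at a time, together with a second induction establishing $r_i;\coMult;c_i \triangleleft \dots \triangleleft c_n = r_i;\coMult;c_i' \triangleleft \dots \triangleleft c_n'$ so that the range property is applied with the correct $f$ at each boundary. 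Your proposal would need to be restructured along these lines; ranges of the global prefixes cannot simply be dropped in at the bars. (A minor further point: determinism of $r_m$ is not what licenses sliding across a bar --- dinaturality slides arbitrary morphisms, so item (2) plays no role there.)
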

\begin{proof}
  We start by defining a family of morphisms $r_{n}$ by induction.
  We take $r_{0} = \im$ and $r_{n+1}$ to be a \emph{range} of $r_{n}; \coMult; c_{n}$.

  Let us prove now that for any $n \in \naturals$ and $i \leq n$,
  \[r_{i};\coMult; c_{i} \triangleleft \dots \triangleleft c_{n} = r_{i}; \coMult ; c_{i}' \triangleleft \dots \triangleleft c_{n}'.\]
  We proceed by induction. Observing that $c_{0} = c_{0}'$, we prove it for $n = 0$ and also for the case $i = 0$ for any $n \in \naturals$.
Assume we have it proven for $n$, so in particular we know that $r_{i}; \coMult ; c_{i} = r_{i}; \coMult ; c_{i}'$ for any $i \leq n$. Now, by induction on $i$, we can use the properties of ranges to show that
  \[\begin{gathered}
      r_{i}; \coMult ; c_{i} \triangleleft \dots \triangleleft c_{n} = r_{i}; \coMult ; c_{i}' \triangleleft \dots \triangleleft c_{n}' \\
      (r_{i}; \coMult ; c_{i} \triangleleft \im) ; c_{i+1} \triangleleft  \dots \triangleleft c_{n} =
      (r_{i}; \coMult ; c_{i}' \triangleleft \im) ; c_{i+1}' \triangleleft \dots \triangleleft c_{n}' \\
      (r_{i}; \coMult ; c_{i} \triangleleft r_{i+1}) ; c_{i+1} \triangleleft  \dots \triangleleft c_{n} =
      (r_{i}; \coMult ; c_{i}' \triangleleft r_{i+1}) ; c_{i+1}' \triangleleft \dots \triangleleft c_{n}' \\
      r_{i+1} ; \coMult ;  c_{i+1} \triangleleft  \dots \triangleleft c_{n} =
      r_{i+1} ; \coMult ;  c_{i+1}' \triangleleft \dots \triangleleft c_{n}'. \\
    \end{gathered}\]
  In particular, $r_{n}; \coMult ; c_{n} = r_{n} ; \coMult ; c_{n}'$.

  Now, we claim the following for each $n \in \naturals$ and each $i \leq n$,
  \[\begin{gathered}\bra{c_{0} \triangleleft \im | \dots | c_{n} \triangleleft \im} = \\
  \bra{r_{0} (c_{0} \triangleleft \im) | \dots | r_{i} (c_{i} \triangleleft \im) | c_{i+1} | \dots | c_{n} \triangleleft \im}.\end{gathered}\]
  It is clear for $n = 0$ and for $i = 0$. In the inductive case for $i$,
  \[\begin{aligned}
      \bra{r_{0} (c_{0} \triangleleft \im) | \dots | r_{i} (c_{i} \triangleleft \im) | c_{i+1}\triangleleft \im | \dots | c_{n} \triangleleft \im} = \\
      \bra{r_{0} (c_{0} \triangleleft \im) | \dots | r_{i} ; \coMult ; c_{i} \triangleleft \im | c_{i+1}\triangleleft \im | \dots | c_{n} \triangleleft \im} = \\
      \bra{r_{0} (c_{0} \triangleleft \im) | \dots | r_{i} ; \coMult ; c_{i} \triangleleft r_{i+1} | c_{i+1} \triangleleft \im | \dots | c_{n} \triangleleft \im} = \\
      \bra{r_{0} (c_{0} \triangleleft \im) | \dots | r_{i} ; \coMult ; c_{i} \triangleleft \im | r_{i+1} (c_{i+1}\triangleleft \im) | \dots | c_{n} \triangleleft \im} = \\
      \bra{r_{0} (c_{0} \triangleleft \im) | \dots | r_{i} (c_{i} \triangleleft \im) | r_{i+1} (c_{i+1}\triangleleft \im) | \dots | c_{n} \triangleleft \im}.
   \end{aligned}\]
 A particular case of this claim is then that
 \[\begin{aligned}
     \bra{c_{0} \triangleleft \im | \dots |c_{n} \triangleleft \im} = \\
     \bra{r_{0} (c_{0} \triangleleft \im) | \dots| r_{n} (c_{n} \triangleleft \im)} = \\
     \bra{r_{0} (c_{0}' \triangleleft \im) | \dots| r_{n} (c_{n}' \triangleleft \im)} = \\
     \bra{c_{0}' \triangleleft \im | \dots |c_{n}' \triangleleft \im}.
   \end{aligned}\]
 This can be then proven for any $n \in \naturals$.
\end{proof}

\begin{corollary}
  Any stochastic process $f \in \StochProc(\stream{X},\stream{Y})$ with
  a family of conditionals $c_{n}$ gives rise to the observational sequence
  \[\begin{gathered}
      \obs(f) = [\left\langle (c_{n} \triangleleft \im) \colon (X_{0} \times Y_{0} \times \dots \times X_{n-1} \times Y_{n-1}) \times X_{n} \to \right. \\ \left. (X_{0} \times Y_{0} \times \dots \times X_{n} \times Y_{n}) \times Y_{n}  \right\rangle ]_{\approx},
    \end{gathered}\]
  which is independent of the chosen family of conditionals.
\end{corollary}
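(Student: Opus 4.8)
The plan is to exhibit the claimed extensional sequence, form its observational class, and then reduce the independence statement entirely to \Cref{prop:condtoproc}, so that no fresh computation in $\Stoch$ is required. First I would check that the family $(c_{n} \triangleleft \im)$ genuinely constitutes an \extensionalSequence{}. By the preceding factoring proposition, any stochastic process $f$ decomposes as $f_{n} = c_{0} \triangleleft c_{1} \triangleleft \dots \triangleleft c_{n}$, where the conditional $c_{n}$ has domain $Y_{0} \times \dots \times Y_{n-1} \times X_{0} \times \dots \times X_{n}$ and codomain $Y_{n}$. Setting $M_{n} \defn \prod_{i=0}^{n}(X_{i} \times Y_{i})$ for the running history channel, and reordering factors by symmetries, the morphism $c_{n} \triangleleft \im$ acquires the type $M_{n-1} \tensor X_{n} \to M_{n} \tensor Y_{n}$: the $\triangleleft$-with-identity simultaneously feeds the accumulated history into $c_{n}$, produces the fresh output $Y_{n}$, and copies the history together with the new $X_{n}$ and $Y_{n}$ into the updated memory. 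Thus $\braket{c_{n} \triangleleft \im}$ is a well-typed element of $\eSeq(\stream{X},\stream{Y})$, and $\obs(f)$ is defined as its \observationalEquality{} class in $\oSeq(\stream{X},\stream{Y})$ (\Cref{def:observationalsequence}).

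The real content is the independence claim, and this is where I would lean directly on \Cref{prop:condtoproc}. Suppose $(c_{n})$ and $(c_{n}')$ are two families of conditionals for the same process $f$. Then $c_{0} \triangleleft \dots \triangleleft c_{n} = f_{n} = c_{0}' \triangleleft \dots \triangleleft c_{n}'$ for every $n \in \naturals$, so the hypothesis of \Cref{prop:condtoproc} holds, and its conclusion gives the equality of $n$-stage processes $\bra{c_{0} \triangleleft \im | \dots | c_{n} \triangleleft \im} = \bra{c_{0}' \triangleleft \im | \dots | c_{n}' \triangleleft \im}$ for each $n$. Since the $n$-stage truncation of $\braket{c_{n} \triangleleft \im}$ is exactly the bracket $\bra{c_{0} \triangleleft \im | \dots | c_{n} \triangleleft \im}$, the two extensional sequences have equal truncations at every stage; by the definition of observational equivalence (\Cref{def:observationallyequal}), they are therefore observationally equal and determine the same class. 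Hence $\obs(f)$ does not depend on the chosen conditionals.

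The one technical point to pin down carefully --- though not a genuine obstacle --- is the bookkeeping identifying the $n$-stage truncation of $\braket{c_{n} \triangleleft \im}$ with the bracket used in \Cref{prop:condtoproc}, including the fact that the memory reorderings absorbed into $\triangleleft$ are precisely the dinatural symmetries that slide past the bars (\Cref{remark:stagenotation}). Once this matching is in place, the corollary follows immediately from \Cref{prop:condtoproc} together with \Cref{def:observationallyequal}. I expect the hardest step to be \emph{not} in this corollary at all: the inductive range argument showing that the $n$-stage processes agree already lives inside the proof of \Cref{prop:condtoproc}, so here only the assembly and the type-checking of $c_{n} \triangleleft \im$ remain.
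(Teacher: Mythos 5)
Your proof is correct and takes essentially the same route as the paper: both reduce the independence claim to \Cref{prop:condtoproc} and then conclude from the fact that two extensional sequences with equal $n$-stage truncations for every $n$ determine the same observational class. The only cosmetic difference is that you appeal directly to \Cref{def:observationallyequal} where the paper invokes productivity of $\Stoch$, and you additionally spell out the typing of $c_{n} \triangleleft \im$; neither changes the strategy.
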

\begin{proof}
  Any two families of conditionals for $f$ give rise to the same n-stage processes in $\Stoch{}$ (by \Cref{prop:condtoproc}).
  Being a \productive{} category, observational sequences are determined by their n-stage procesess.
\end{proof}

\begin{proposition}
  An observational sequence in $\Stoch$,
  \[[\braket{g_{n} \colon M_{n-1} \tensor X_{n} \to M_{n} \tensor Y_{n}}]_{\approx} \in \oSeq(\stream{X},\stream{Y})\]
  gives rise to a stochastic process $\mathrm{proc}(g) \in \StochProc(\stream{X},\stream{Y})$
  defined by
  $\mathrm{proc}(g)_{n} = \tid{g_{0}} ; \tid{g_{1}} ; \dots ; \tid{g_{n}} ; \tid{\varepsilon_{M_{n}}}$.
\end{proposition}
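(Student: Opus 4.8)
The plan is to verify three things in turn: that the stated formula types as a morphism of $\Stoch$, that it depends only on the $n$-stage truncation $\bra{g_{0}|\dots|g_{n}}$ (and hence on the observational class rather than on any chosen representative), and that the resulting family satisfies the marginalisation property of \Cref{appendix:def:stoch-rep-process}. For the typing, note that $g_{i} \colon M_{i-1} \tensor X_{i} \to M_{i} \tensor Y_{i}$ with $M_{-1} = I$, so the composite $\tid{g_{0}} ; \dots ; \tid{g_{n}}$ threads the memory wire $M_{i}$ from the output of $g_{i}$ into the input of $g_{i+1}$ while accumulating the outputs $Y_{0},\dots,Y_{n}$ and leaving the as-yet-unused inputs untouched; discarding the final memory with $\tid{\varepsilon_{M_{n}}}$ then yields a morphism $X_{0} \times \dots \times X_{n} \to \distr(Y_{0} \times \dots \times Y_{n})$, as required.

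For well-definedness, I would first show that $\mathrm{proc}(g)_{n}$ is invariant under dinaturality in the memory channels, i.e. under sliding. Sliding a morphism $r_{i} \colon M_{i} \to N_{i}$ from the output of $g_{i}$ to the input of $g_{i+1}$ leaves $\tid{g_{0}};\dots;\tid{g_{n}}$ unchanged, because along the memory wire these two placements give literally the same composite. The only subtlety is the last bar: dinaturality removes the trailing $r_{n}$ altogether, and correspondingly $r_{n}$ is absorbed in $\mathrm{proc}$ by naturality of discarding, $r_{n} ; \coUnit = \coUnit$. Thus $\mathrm{proc}(g)_{n}$ factors through the $n$-stage process $\bra{g_{0}|\dots|g_{n}} \in \Stage{n}(\stream{X},\stream{Y})$. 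Since observationally equivalent sequences are exactly those agreeing on every $n$-stage truncation (\Cref{def:observationallyequal}), this makes $\mathrm{proc}(g)$ independent of the chosen extensional and observational representatives.

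It then remains to verify marginalisation, $\mathrm{proc}(g)_{n+1} ; \distr\pi_{0,\dots,n} = \pi_{0,\dots,n} ; \mathrm{proc}(g)_{n}$. Post-composing $\mathrm{proc}(g)_{n+1}$ with the projection that discards $Y_{n+1}$ gives $\tid{g_{0}};\dots;\tid{g_{n}};\tid{g_{n+1}};\coUnit_{M_{n+1} \tensor Y_{n+1}}$. By naturality of the counit in a Markov category, $g_{n+1};\coUnit_{M_{n+1}\tensor Y_{n+1}} = \coUnit_{M_{n}\tensor X_{n+1}}$, so the block $\tid{g_{n+1}}$ followed by the discards collapses to discarding the wires $M_{n}$ and $X_{n+1}$. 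Discarding $X_{n+1}$ at this point equals, again by naturality, projecting it away at the input via $\pi_{0,\dots,n}$, while discarding $M_{n}$ is precisely the $\tid{\varepsilon_{M_{n}}}$ appearing in the definition of $\mathrm{proc}(g)_{n}$. This yields $\pi_{0,\dots,n} ; \mathrm{proc}(g)_{n}$ on the nose.

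I expect the main obstacle to be bookkeeping rather than conceptual: keeping track, under the $\tid{-}$ convention, of exactly which wires each $g_{i}$ acts on and which are carried along, so that the two naturality-of-discarding moves (in the sliding argument and in the marginalisation argument) can be applied without hidden symmetries getting in the way. Everything else follows from the single structural fact that discarding is natural in $\Stoch$, which is what distinguishes it as a Markov category and makes the marginalisation property automatic.
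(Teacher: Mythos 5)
Your proposal is correct and follows essentially the same route as the paper's proof: well-definedness via invariance of the composite under sliding (with the trailing dinaturality morphism absorbed by naturality of discarding), followed by marginalisation via naturality of the counit in $\Stoch$. If anything, you are slightly more explicit than the paper at the two points it glosses over — that the discarded $X_{n+1}$ wire must be traded for precomposition with the projection $\pi_{0,\dots,n}$, and that well-definedness on observational classes follows directly from the definition of observational equivalence as agreement on all $n$-stage truncations.
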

\begin{proof}
  The symmetric monoidal category $\Stoch$ is \productive{}: by \Cref{lemma:observationalstatefulisterminalsequence}, observational sequences are determined by their n-stage truncations
  \[\bra{g_{0}|\dots|g_{n}} \in \Stage{n}(\stream{X},\stream{Y}).\]
  Each n-stage truncation gives rise to the n-th component of the stochastic process,
  $\mathrm{proc}(g)_{n} = \tid{g_{0}} ; \tid{g_{1}} ; \dots ; \tid{g_{n}} ; \tid{\varepsilon_{M_{n}}}$,
  and this is well-defined: composing the morphisms is invariant to \emph{sliding equivalence}, and the last discarding map is natural.

  It only remains to show that they satisfy the marginalisation property. Indeed,
  \[\begin{aligned}
      \mathrm{proc}(g)_{n+1} ; \tid{\varepsilon_{n+1}}
      &= \tid{g_{0}} ; \tid{g_{1}} ; \dots ; \tid{g_{n+1}} ; \tid{\varepsilon_{M_{n+1}}} ; \tid{\varepsilon_{Y_{n+1}}} \\
      &= \tid{g_{0}} ; \tid{g_{1}} ; \dots ; \tid{g_{n}} ; \tid{\varepsilon_{M_{n}}} \\
      &= \mathrm{proc}(g)_n.
    \end{aligned}\]
  Thus, $\mathrm{proc}(g)$ is a stochastic process in $\StochProc(\stream{X},\stream{Y})$.
\end{proof}

\begin{proposition}\label{prop:procobs}
  Let $f \in \StochProc(\stream{X},\stream{Y})$, we have that $\proc(\obs(f)) = f$.
\end{proposition}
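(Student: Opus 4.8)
The plan is to unfold all three constructions into concrete composites in $\Stoch$ and then verify the resulting equation by induction on the stage $n$. First I would fix a family of conditionals $c_{0}, c_{1}, \dots$ for $f$, so that by the factoring-as-conditionals proposition $f_{n} = c_{0} \triangleleft c_{1} \triangleleft \dots \triangleleft c_{n}$. The preceding corollary lets me choose the conditional representative $g_{i} = (c_{i} \triangleleft \im)$ for $\obs(f)$, and the proposition defining $\proc$ gives $\proc(\obs(f))_{n} = \tid{g_{0}} ; \tid{g_{1}} ; \dots ; \tid{g_{n}} ; \tid{\varepsilon_{M_{n}}}$, where $M_{n} \defn X_{0} \tensor Y_{0} \tensor \dots \tensor X_{n} \tensor Y_{n}$. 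Since a stochastic process is determined componentwise (\Cref{def:stoch-rep-process}), it suffices to prove, for every $n$, the identity
\[\tid{(c_{0} \triangleleft \im)} ; \dots ; \tid{(c_{n} \triangleleft \im)} ; \tid{\varepsilon_{M_{n}}} = c_{0} \triangleleft \dots \triangleleft c_{n}.\]

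The key point is that each $g_{i} = (c_{i} \triangleleft \im)$ reads the whole accumulated memory $M_{i-1}$ together with the fresh input $X_{i}$, produces $Y_{i}$ via $c_{i}$, appends $(X_{i}, Y_{i})$ to the memory to form $M_{i}$, and emits a copy of $Y_{i}$ as its visible output. Because the memory is discarded only at the very end, I would not induct on $\proc(\obs(f))_{n}$ directly; instead I would strengthen the induction hypothesis to a statement about the memory-carrying composite $K_{n} \defn \tid{g_{0}} ; \dots ; \tid{g_{n}}$. Concretely, I would prove that $K_{n}$ records the full history $(X_{0}, Y_{0}, \dots, X_{n}, Y_{n})$ in its memory while emitting copies of the $Y_{i}$, and that composing the memory's $Y$-projection recovers $c_{0} \triangleleft \dots \triangleleft c_{n}$. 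The inductive step uses the associativity of the triangle operation (\Cref{figure:assoctriangle}), the comonoid counit and coassociativity laws, and naturality of the discarding map $\coUnit$ in any Markov category, which lets me prune the wires that are ultimately discarded.

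Given the strengthened hypothesis, the conclusion is immediate: discarding all of $M_{n}$ while keeping the emitted copies collapses, by the counit law, to projecting $M_{n}$ onto its $Y$-components, which by hypothesis equals $c_{0} \triangleleft \dots \triangleleft c_{n} = f_{n}$; hence $\proc(\obs(f))_{n} = f_{n}$ for all $n$ and $\proc(\obs(f)) = f$. The base case $n = 0$ reduces, after one application of the counit law, to $c_{0} = f_{0}$. I expect the main obstacle to be purely the string-diagram bookkeeping in the inductive step: tracking exactly how $g_{n+1}$ appends $(X_{n+1}, Y_{n+1})$ to the memory while emitting a copy, and checking that naturality of $\coUnit$ together with the associativity of $\triangleleft$ makes the extra memory wires disappear, so that the composite genuinely collapses to the iterated conditional. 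Everything else is routine rewriting in the graphical calculus of $\Stoch$.
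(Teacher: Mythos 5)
Your proposal is correct and follows essentially the same route as the paper: the paper's proof is just the one-line identity $f_{n} = c_{0} \triangleleft \dots \triangleleft c_{n} = (c_{0} \triangleleft \im) ; \dots ; (c_{n} \triangleleft \im) ; \varepsilon_{M_{n}}$, which is precisely the equation you set out to verify. Your induction with the strengthened memory-carrying hypothesis, using associativity of $\triangleleft$, the comonoid laws, and naturality of $\coUnit$, simply fills in the string-diagram bookkeeping that the paper leaves implicit.
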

\begin{proof}
  Indeed, for $c_{n}$ some family of conditionals,
  \[f_{n} = c_{0} \triangleleft \dots \triangleleft c_{n}
    = (c_{0} \triangleleft \im) ; \dots ; (c_{n} \triangleleft \im) ; \varepsilon_{M_{n}}.\qedhere\]
\end{proof}

\begin{theorem}\label{theorem:obstoch}
  Observational sequences in $\Stoch$ are in bijection with stochastic processes.
\end{theorem}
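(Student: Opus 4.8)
The plan is to show that the two constructions $\obs \colon \StochProc(\stream{X},\stream{Y}) \to \oSeq(\stream{X},\stream{Y})$ and $\proc \colon \oSeq(\stream{X},\stream{Y}) \to \StochProc(\stream{X},\stream{Y})$ are mutually inverse. One direction is already in hand: \Cref{prop:procobs} gives $\proc(\obs(f)) = f$ for every stochastic process $f$, so $\proc$ is a retraction and $\obs$ a section. It therefore remains to prove the reverse identity $\obs(\proc(g)) = g$ for every observational sequence $g$, and this is the whole remaining content of the theorem.

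First I would fix a representative $\braket{g_n \colon M_{n-1} \tensor X_n \to M_n \tensor Y_n}$ of $g$. Because $\Stoch$ is a Markov category with conditionals and ranges, it is \productive{} (\Cref{appendix:productivemarkov}); hence an observational sequence is completely determined by its family of $n$-stage truncations $\bra{g_0 | \dots | g_n} \in \Stage{n}(\stream{X},\stream{Y})$. So it suffices to show that, for every $n$, the truncation of $\obs(\proc(g))$ coincides with $\bra{g_0 | \dots | g_n}$. By the construction of $\obs$, after choosing a family of conditionals $c_k$ of the process $\proc(g)$, that truncation is $\bra{c_0 \triangleleft \im | \dots | c_n \triangleleft \im}$; and by the construction of $\proc$ together with the defining marginalisation property of conditionals, both $\bra{g_0 | \dots | g_n}$ and $\bra{c_0 \triangleleft \im | \dots | c_n \triangleleft \im}$ compose to the same process marginals $\proc(g)_m = \tid{g_0} ; \dots ; \tid{g_m} ; \tid{\varepsilon_{M_m}}$ for all $m \le n$.

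The crux is then to prove that two $n$-stage processes with the same marginals are equal in $\Stage{n}(\stream{X},\stream{Y})$. I would establish this by the same range-sliding technique used in \Cref{prop:condtoproc}: build range morphisms $r_k$ inductively (with $r_{k+1}$ a range of $r_k ; \coMult ; c_k$, and analogously on the $g$-side), then slide them past the bars using dinaturality so that each factor is pinned down up to the equivalence witnessed by ranges. Concretely, I expect to first rewrite the general representative $\braket{g_n}$ into conditional form, using conditionals of $\Stoch$ to factor each $g_k$ and showing $\bra{g_0 | \dots | g_n}$ equals some $\bra{d_0 \triangleleft \im | \dots | d_n \triangleleft \im}$; this reduces the comparison to two conditional families of the same process, at which point \Cref{prop:condtoproc} applies verbatim to give $\bra{g_0 | \dots | g_n} = \bra{c_0 \triangleleft \im | \dots | c_n \triangleleft \im}$.

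The main obstacle is precisely this last step: showing that an $n$-stage process over $\Stoch$ is reconstructible from the process it computes. Neither conditionals nor ranges alone suffice — conditionals recover a factorisation of the marginals, while ranges guarantee that the ``hidden'' memory data not affecting later outputs can be slid away under dinaturality. It is the interplay of the two, exactly as packaged in the productivity argument of \Cref{appendix:productivemarkov} and reused in \Cref{prop:condtoproc}, that makes the $n$-stage truncation an invariant of the underlying process and hence forces $\obs$ and $\proc$ to be inverse bijections.
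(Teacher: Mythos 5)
Your proposal is correct and uses essentially the same ingredients as the paper: the identity $\proc(\obs(f)) = f$ from \Cref{prop:procobs}, the inductive rewriting of an arbitrary $n$-stage process $\bra{g_0|\dots|g_n}$ into conditional form $\bra{c_0 \triangleleft \im|\dots|c_n \triangleleft \im}$ via conditionals and dinaturality, and \Cref{prop:condtoproc} to identify two conditional families of the same process. The only difference is packaging — the paper argues ``injective (left inverse exists) plus surjective (conditional-form rewriting)'' while you verify both composites $\proc \circ \obs$ and $\obs \circ \proc$ explicitly, which costs one extra (but already available) appeal to \Cref{prop:condtoproc} and changes nothing of substance.
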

\begin{proof}
  The function $\obs$ is injective by \Cref{prop:procobs}.
  We only need to show it is also surjective.

  We will prove that any n-stage process $\bra{g_{0}|\dots|g_{n}}$
  can be equivalently written in the form $\bra{(c_{0} \triangleleft \im)|\dots|(c_{n} \triangleleft \im)}$.
  We proceed by induction.
  Given any $\bra{g_{0}}$ we use conditionals and dinaturality to rewrite it as
  \[\bra{g_{0}} = \bra{c_{0} \triangleleft c_{M}} = \bra{c_{0} \triangleleft \im}.\]
  Given any $\bra{c_{0} \triangleleft \im|\dots|c_{n} \triangleleft \im |g_{n+1}}$,
  we use again conditionals and dinaturality to rewrite it as
  \[\begin{aligned}
    \bra{c_{0} \triangleleft \im|\dots|c_{n} \triangleleft \im|g_{n+1}} = \\
    \bra{c_{0} \triangleleft \im|\dots|c_{n} \triangleleft \im|c_{n+1}\triangleleft c_{M}} = \\
    \bra{c_{0} \triangleleft \im|\dots|c_{n} \triangleleft \im|c_{n+1}\triangleleft \im}.
  \end{aligned}\]
   We have shown that $\obs$ is both injective and surjective.
\end{proof}

\begin{theorem}[From \Cref{th:stochasticprocesses}]\label{theorem:obstochiso}
  The category $\Stoch$ of stochastic processes is monoidally isomorphic to the category $\STREAM$ over $\Stoch$.
\end{theorem}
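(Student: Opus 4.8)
The plan is to promote the hom-set bijection of \Cref{theorem:obstoch} to a monoidal isomorphism of categories. First I would fix the identification $\STREAM \cong \oSeq$ over $\Stoch$: since $\Stoch$ is a Markov category with conditionals and ranges, it is productive (\Cref{appendix:productivemarkov}), so by \Cref{theorem:observationalfinalcoalgebra} monoidal streams over $\Stoch$ are exactly observational sequences. Both $\StochProc$ and $\STREAM$ have the same objects, namely infinite sequences of sets, so the maps $\obs$ and $\proc$ are already identity-on-objects, and \Cref{theorem:obstoch} shows they are mutually inverse bijections on every hom-set. It therefore remains only to check that one of them, say $\proc$, is a monoidal functor $\STREAM \to \StochProc$; being a bijection on objects and morphisms with monoidal structure, it is then automatically a monoidal isomorphism, and its inverse $\obs$ is the claimed isomorphism.

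Next I would verify functoriality using the explicit formula $\proc(g)_{n} = \tid{g_{0}} ; \dots ; \tid{g_{n}} ; \tid{\varepsilon_{M_{n}}}$ together with productivity, which guarantees that every stream is determined by its $n$-stage truncations $\bra{g_{0}|\dots|g_{n}}$. For identities, $\proc(\im_{\stream{X}})_{n}$ is a composite of identities followed by discarding the unit memory, hence the componentwise identity, so $\proc(\im) = \im$. For composition, I would take streams $g \colon \stream{X} \to \stream{Y}$ and $h \colon \stream{Y} \to \stream{Z}$ and compute the $n$-stage truncation of their composite from \Cref{def:sequentialstream}: the memory is $M(g) \tensor M(h)$ and stage $n$ interleaves $g_{n}$ and $h_{n}$ along the $Y_{n}$-wire. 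Composing the stages and discarding the final memory, and crucially using that the counit $\coUnit$ is natural in any Markov category (so the intermediate $M(h)$-memory may be marginalised early without affecting the output), the composite collapses to $\proc(g)_{n} \dcomp \proc(h)_{n}$, which is the $n$-th component of componentwise composition in $\StochProc$. An identical argument with the parallel-composition formula of \Cref{def:parallelstream}, using the uniformity of the comonoid structure of the Markov category, gives $\proc(f \tensor g) = \proc(f) \tensor \proc(g)$, and $\proc$ sends the unit stream to the unit process; since the associators, unitors and symmetry of both categories are inherited from the shared underlying structure of $\NSet$, they are preserved on the nose, so $\proc$ is monoidal.

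The main obstacle is this composition step: reconciling the genuinely coinductive, memory-tensoring stream composition of \Cref{def:sequentialstream} with the trivial componentwise composition of stochastic processes. The whole weight of the argument rests on productivity, which reduces everything to finite $n$-stage truncations, and on naturality of discarding in $\Stoch$ — the same property already exploited in \Cref{prop:condtoproc} and \Cref{appendix:productivemarkov} — which lets the extra memory wires produced by stream composition be marginalised away. Once composition is settled, the remaining checks for identities, the unit and the symmetry are routine diagram manipulations in a Markov category, and monoidal isomorphism follows since $\proc$ is bijective on both objects and morphisms and strictly preserves the tensor of objects.
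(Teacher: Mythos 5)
Your proposal is correct and follows essentially the same route as the paper: it leans on the bijection of \Cref{theorem:obstoch}, verifies that $\proc$ preserves identities, composition and tensor via the explicit formula $\proc(g)_{n} = \tid{g_{0}} ; \dots ; \tid{g_{n}} ; \tid{\varepsilon_{M_{n}}}$, and uses naturality and uniformity of discarding to marginalise the intermediate memory wires — exactly the computation in the paper's proof. The only cosmetic difference is that you make the reliance on productivity and the $\STREAM \cong \oSeq$ identification explicit up front, which the paper leaves implicit in the surrounding lemmas.
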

\begin{proof}
  We have shown in \Cref{theorem:obstoch} that $\proc$ is a bijection.
  Let us show that it preserves compositions. Indeed,
  \[\begin{aligned}
      \proc(g ; h)_{n}
      & = \tid{g_0} ; \tid{h_{0}} ; \dots ; \tid{g_{n}}; \tid{h_{n}}; \tid{\varepsilon_{M_{n} \tensor N_{n}}} \\
      & = \tid{g_0} ; \dots ; \tid{g_{n}}; \tid{h_{0}} ; \dots ; \tid{h_{n}}; \tid{(\varepsilon_{M_{n}} \tensor \varepsilon_{N_{n}})} \\
      & = \tid{g_0} \dots \tid{g_{n}} ;\tid{\varepsilon_{M_{n}}} ; \tid{h_{0}} \dots \tid{h_{n}} ; \tid{\varepsilon_{N_{n}}} \\
      & = \proc(g)_{n} ; \proc(h)_{n}.
  \end{aligned}\]
It also trivially preserves the identity. It induces thus an identity-on-objects functor which is moreover
an equivalence of categories. Let us finally show that it preserves tensoring of morphisms.
  \[\begin{aligned}
      \proc(g \tensor h)_{n}
      & = \tid{(g_0 \tensor h_{0})} ; \dots  ;\tid{(g_{n} \tensor h_{n})} ; \tid{\varepsilon_{M_{n} \tensor N_{n}}} \\
      & = \tid{(g_0 \tensor h_{0})} ; \dots \tid{((g_{n}  \varepsilon_{M_{n}}) \tensor (h_{n} \varepsilon_{N_{n}}))}  \\
      & = \tid{(g_0 \dots g_{n}  \varepsilon_{M_{n}})} \tensor \tid{(h_{0} \dots h_{n}  \varepsilon_{N_{n}})} \\
      & = \proc(g)_{n} \tensor \proc(h)_{n}.
  \end{aligned}\]
It is thus also a monoidal equivalence.
\end{proof}

\section{Coend Calculus and Profunctors}

\emph{Coend calculus} is the name given to the a branch of category theory that
describes the behaviour of certain colimits called \emph{coends}.
MacLane \cite{maclane78} and Loregian \cite{loregian2021} give complete presentations of coend calculus.

\begin{definition}
  \defining{linkcoend}{\emph{Coends}} are the coequalizers of the action of morphisms on both arguments of a profunctor.
\[\mathrm{coend}(P) \coloneqq \mathrm{coeq}\left( \begin{tikzcd}[column sep=small]
\coprod_{f \colon B \to A} P(A,B) \rar[yshift=-0.5ex, swap] \rar[yshift=0.5ex] &
\coprod_{X \in \catC} P(X,X)
\end{tikzcd}\right).\]
Coends are usually denoted with a superscripted integral, drawing on an
analogy with classical calculus.
\[\coend{X \in \catC} P(X,X) \defn \mathrm{coend}(P).\]
\end{definition}

\begin{proposition}
  [\defining{linkcoyoneda}{Yoneda} reduction] Let $\catC$ be any category and
  let $F \colon \catC \to \Set$ be a functor; the following isomorphism holds
  for any given object $A \in \catC$.
\label{prop:yonedareduction}
\[\coend{X \in \catC} \idProf(X,A) \times FX \cong FA. \]
Following the analogy with classical analysis, the $\idProf$ profunctor works as
a Dirac's delta.
\end{proposition}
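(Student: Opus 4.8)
The plan is to prove the isomorphism by hand, exhibiting mutually inverse functions, since the paper has defined the coend concretely as the coproduct $\sum_{X} \idProf(X,A) \times FX$ quotiented by the dinatural equivalence $(\sim)$. First I would pin down the variances: writing the integrand as $P(X^{-},X^{+}) = \idProf(X^{-},A) \times F(X^{+})$, the hom-set $\idProf(X,A)$ is contravariant in $X$ (precomposition) while $FX$ is covariant, so $P$ is a genuine functor $\catC^{op}\times\catC \to \Set$ and the coend is well-formed. Specializing the paper's dinaturality relation $P(M,r)(p) \sim P(r,N)(p)$ to $p=(g,s)\in \idProf(M,A)\times FN$ and $r \colon N \to M$ yields the generating relation $(g,\,(Fr)(s)) \sim (g\circ r,\, s)$.

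Next, define the forward map $\varphi \colon \coend{X} \idProf(X,A)\times FX \to FA$ on representatives by $\varphi[(g,s)] = (Fg)(s)$, for $g \colon X \to A$ and $s \in FX$. The only work here is well-definedness: applied to the two sides of the generating relation we get $(Fg)((Fr)(s))$ and $(F(g\circ r))(s)$, which agree by functoriality of $F$, so $\varphi$ descends to the quotient. In the reverse direction, define $\psi \colon FA \to \coend{X}\idProf(X,A)\times FX$ by $\psi(a) = [(\id{A}, a)]$, i.e.\ the class of the pair taken at $X=A$.

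Then I would verify that the two composites are identities. One direction is immediate: $\varphi(\psi(a)) = (F\id{A})(a) = a$. The other direction is where the dinatural quotient does the real work. Given a class $[(g,s)]$ with $g \colon X \to A$ and $s\in FX$, instantiate the generating relation with $M=A$, $N=X$, $r=g$, and the element $(\id{A},s) \in \idProf(A,A)\times FX$; this gives $(\id{A},\,(Fg)(s)) \sim (\id{A}\circ g,\, s) = (g,s)$. Hence $\psi(\varphi[(g,s)]) = [(\id{A},(Fg)(s))] = [(g,s)]$, so $\varphi$ and $\psi$ are mutually inverse and the isomorphism follows.

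The main obstacle is bookkeeping rather than depth: one must keep the two roles of $X$ (contravariant in the hom, covariant in $F$) distinct so that the dinaturality relation is applied with the correct variance, and recognize that the choice $r = g$ is exactly what collapses an arbitrary representative to the canonical one sitting at $X = A$. This is the standard \emph{co-Yoneda} (``ninja Yoneda'') argument; if naturality of the isomorphism in $A$ is needed, it falls out of the same computation, since $\varphi$ is built from the functorial action of $F$.
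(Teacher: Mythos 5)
Your proof is correct and is the standard co-Yoneda argument; the paper itself states this proposition without proof, deferring to the cited presentations of coend calculus, and your argument is exactly the one found there. The two inverse maps, the well-definedness check on the generating dinaturality relation, and the key instantiation $r = g$ collapsing an arbitrary representative to the canonical one at $X = A$ are all handled correctly.
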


\begin{proposition}
  [\defining{linkfubini}{Fubini rule}] Coends commute between them; that is,
  there exists a natural isomorphism
  \label{prop:fubinirule}
  \[\begin{aligned}
    & \coend{X_{1} \in \catC} \coend{X_{2} \in \catC} P(X_{1},X_{2},X_{1},X_{2}) \\
    \cong & \\
    & \coend{X_{2} \in \catC} \coend{X_{1} \in \catC}  P(X_{1},X_{2},X_{1},X_{2}).
  \end{aligned}\]
  In fact, they are both isomorphic to the coend over the product category,
  \[ \coend{(X_{1}, X_{2}) \in \catC \times \catC} P(X_{1},X_{2},X_{1},X_{2}). \]
  Following the analogy with classical analysis, coends follow the Fubini rule
  for integrals.
\end{proposition}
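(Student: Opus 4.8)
The plan is to prove that the three objects are canonically isomorphic by exhibiting all of them as corepresenting the same $\Set$-valued functor: the functor sending a set $S$ to the collection of \emph{cowedges} from $P$ into $S$ over the product category $\catC\times\catC$. Recall from the coequalizer presentation of coends that a coend is exactly the universal (initial) cowedge; concretely, a cowedge into $S$ is a family of maps $\omega_{X_1,X_2}\colon P(X_1,X_2,X_1,X_2)\to S$ dinatural with respect to every morphism of $\catC\times\catC$, and $\coend{(X_1,X_2)\in\catC\times\catC}P(X_1,X_2,X_1,X_2)$ is initial among these. It therefore suffices to produce, naturally in $S$, a bijection between cowedges over $\catC\times\catC$ and cowedges for each of the two iterated coends; the isomorphisms then follow from uniqueness of corepresenting objects (Yoneda).

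The first and central step is a decomposition lemma: a family $\{\omega_{X_1,X_2}\}$ is dinatural for every morphism $(f_1,f_2)$ of $\catC\times\catC$ if and only if it is dinatural in $X_1$ for each fixed $X_2$ \emph{and} dinatural in $X_2$ for each fixed $X_1$. The forward implication is immediate by specialising to the axis-aligned morphisms $(f_1,\im)$ and $(\im,f_2)$. For the converse I would factor a general morphism as $(f_1,f_2)=(f_1,\im)\circ(\im,f_2)$, which passes through the intermediate object $(X_1,Y_2)$, and reassemble the joint dinaturality condition from the two separate ones using the bifunctoriality of $P$.

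Granting the lemma, the second step is routine reindexing: a family dinatural in $X_2$ for each fixed $X_1$ is, for each $X_1$, a cowedge of the inner profunctor $X_2\mapsto P(X_1,X_2,X_1,X_2)$, hence by the universal property a single map out of $\coend{X_2}P(X_1,X_2,X_1,X_2)$; the residual dinaturality in $X_1$ turns this assignment into a cowedge of the outer coend. Thus cowedges over $\catC\times\catC$ correspond naturally to cowedges for $\coend{X_1}\coend{X_2}P(X_1,X_2,X_1,X_2)$, and by the symmetric argument also to cowedges for $\coend{X_2}\coend{X_1}P(X_1,X_2,X_1,X_2)$.

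The main obstacle is precisely the converse of the decomposition lemma: separately dinatural families do not in general glue to a jointly dinatural one, because dinatural families do not compose. The delicate point is to verify that the two factorisations of $(f_1,f_2)$ — through $(X_1,Y_2)$ and through $(Y_1,X_2)$ — yield the same gluing, so that the product-category wedge condition really is equivalent to the conjunction of the two axis-aligned conditions. If this bookkeeping proves too fragile to carry out by hand, I would instead adopt the more structural route of identifying a coend with a colimit over the twisted-arrow category $\mathrm{Tw}(\catC)$, and then use the isomorphism $\mathrm{Tw}(\catC\times\catC)\cong\mathrm{Tw}(\catC)\times\mathrm{Tw}(\catC)$ together with the commutation of colimits with colimits over a product index category; this reduces Fubini for coends to the ordinary Fubini theorem for iterated colimits.
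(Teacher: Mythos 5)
The paper gives no proof of this proposition: it is stated as standard background on coend calculus, with the argument deferred to MacLane and Loregian, and your proposal reconstructs essentially that textbook argument (corepresentability via cowedges, the decomposition lemma for dinaturality over a product category, and reindexing through the inner coend). The argument is correct; the only thing worth correcting is your closing worry, which is not a genuine obstacle. The converse of the decomposition lemma is a single equational chain rather than a gluing problem. Writing the cowedge components as $\omega_{X_1,X_2} \colon P(X_1,X_2,X_1,X_2) \to S$ with the first two slots contravariant, take any $p \in P(Y_1,Y_2,X_1,X_2)$. Dinaturality in the second variable at fixed $X_1$, instantiated at $P(f_1,\im,\im,\im)(p)$, gives $\omega_{X_1,X_2}\bigl(P(f_1,f_2,\im,\im)(p)\bigr) = \omega_{X_1,Y_2}\bigl(P(f_1,\im,\im,f_2)(p)\bigr)$; dinaturality in the first variable at fixed $Y_2$, instantiated at $P(\im,\im,\im,f_2)(p)$, gives $\omega_{X_1,Y_2}\bigl(P(f_1,\im,\im,f_2)(p)\bigr) = \omega_{Y_1,Y_2}\bigl(P(\im,\im,f_1,f_2)(p)\bigr)$. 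Chaining these two equalities through the common middle term is exactly joint dinaturality for $(f_1,f_2)$; the other factorisation, through $(Y_1,X_2)$, proves the same equation between the same two maps, so there is nothing to reconcile, and no composition of dinatural transformations is invoked --- only bifunctoriality of $P$. The twisted-arrow fallback is a valid (and more structural) alternative, but the elementary argument already closes.
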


A \defining{linkprofunctor}{profunctor} from a category $\catA$ to a category $\catB$ is a functor $P \colon \catA^{op} \times \catB \to \Set$.
They can be seen as a categorification of the concept of \emph{relations}, functions $A \times B \to 2$.
Under this analogy, existential quantifiers correspond to \emph{coends}.
The canonical example of a profunctor is, $\idProf \colon \catA^{op} \times \catA \to \Set$, the profunctor that returns the set of morphisms between two objects.
Many operations relating families of processes are more easily defined in terms of profunctors: for instance, sequential composition connects the outputs of a family of processes to the outputs of another family.

\begin{definition}[Sequential composition]
  Two profunctors $P \colon \catA^{op} \times \catB \to \Set$ and
  $Q \colon \catB^{op} \times \catC \to \Set$ compose \defining{linksequential}{sequentially} into
  a profunctor $P \diamond Q \colon \catA^{op} \times \catC \to \Set$ defined by
  \[(P \diamond Q)(A,C) \defn \coend{B \in \catB} P(A,B) \times Q(B,C).\] The
  \defining{linkidprof}{hom-profunctor}
  $\idProf \colon \catA^{op} \times \catA \to \Set$ that returns the
  set of morphisms between two objects is the unit for sequential composition.
  Sequential composition is associative up to isomorphism.
\end{definition}

\begin{definition}[Parallel composition]
  Two profunctors  $P \colon \catA_{1}^{op} \times \catB_{1} \to \Set$ and
  $Q \colon \catA_{2}^{op} \times \catB_{2} \to \Set$ compose \defining{linkparallel}{\emph{in parallel}}
  into a profunctor $P \times Q \colon \catA^{op}_{1} \times \catA^{op}_{2} \times \catB_{1} \times \catB_{2} \to \Set$
  defined by
  \[(P \times Q)(A,A',B,B') \defn P(A,B) \times Q(A',B').\]
\end{definition}

\begin{definition}[Intensional communicating composition]
  \defining{linkintensionalcommunicatingcomposition}{}
  Let $\catA, \catB, \catC$ be categories and let $\catB$ have a monoidal
  structure. Let
  $P \colon \catA^{op} \times \catB \to \Set$ and
  $Q \colon \catB^{op} \times \catC^{\naturals} \to \Set$ be a
  pair of profunctors. Their \emph{intensional communicating composition} is the
  profunctor
  $\intensionalBox{P}{Q} \colon \catA^{op} \times \catB^{op} \times \catB \times \catC \to \Set$
  defined as
  \[(\intensionalBox{P}{Q})(A,B;B',C) \defn \sum_{M \in \catB} P(A,B \otimes M) \times Q(M \otimes B',C).\]
\end{definition}

\begin{remark}
  Let $\catC$ be a monoidal category and let $P \colon \catC^{op} \times \catC \to \Set$ and
  $Q \colon \NcatC^{op} \times \NcatC \to \Set$ be a pair of profunctors.
  Note that $\NcatC \cong \catC \times \NcatC$, and so the second profunctor can be interpreted as having type $Q \colon \catC^{op} \times (\NcatC^{op} \times \NcatC) \to \Set$.
In this case, their intensional communicating composition is defined by
  \[(\intensionalBox{P}{Q})(\stream{X};\stream{Y}) \coloneqq \sum_{M \in \catC} P(X_{0}, M \tensor Y_{0}) \times Q(M \cdot \tail{\stream{X}}, \tail{\stream{Y}}).\]
  This is the composition we use when we describe the endofunctor
  $(\intensionalBox{\hom{}}{\ \bullet}) \colon [\NcatC^{op} \times \NcatC, \Set] \to [\NcatC^{op} \times \NcatC, \Set]$.
  \[(\intensionalBox{\hom{}}{\ Q})(\stream{X};\stream{Y}) \coloneqq \sum_{M \in \catC} \hom{}(X_{0}, M \tensor Y_{0}) \times Q(M \cdot \tail{\stream{X}}, \tail{\stream{Y}}).\]
\end{remark}

\begin{definition}[Communicating profunctor composition]
  Let $\catA, \catB, \catC$ be categories and let $\catB$ have a monoidal
  structure. Two profunctors $P \colon \catA^{op} \times \catB \to \Set$ and
  $Q \colon \catB^{op} \times \catC \to \Set$ \defining{linkocomm}{compose communicating along}
  $\catB$ into the profunctor
  $(P \andThen Q) \colon \catA^{op} \times \catB \times \catB^{op} \times \catC \to \Set$
  defined by
  \[(P \andThen Q)(A,B;B',C) \defn \coend{M} P(A,B \otimes M) \times Q(M \otimes B',C).\]
  The profunctors $\idProf(I,\bullet) \colon \catB \to \Set$ and
  $\idProf(\bullet,I) \colon \catB^{op} \to \Set$ are left and right units with
  respect to communicating composition. The communicating composition of three
  profunctors $P \colon \catA^{op} \times \catB \to \Set$,
  $Q \colon \catB^{op} \times \catC \to \Set$ and
  $R \colon \catC^{op} \times \catD \to \Set$ is associative up to isomorphism and a
  representative can be written simply by
  $(P \andThen Q \andThen R) \colon \catA^{op} \times \catB \times \catB^{op} \times \catC \times \catC^{op} \times \catD \to \Set$,
  where both $\catB$ and $\catC$ are assumed to have a monoidal structure.
\end{definition}

\begin{remark}
  This is the composition we use when we describe the endofunctor
  $(\hom{} \andThen\ \bullet) \colon [\NcatC^{op} \times \NcatC, \Set] \to [\NcatC^{op} \times \NcatC, \Set]$.
  \[(\hom{} \andThen\ Q)(\stream{X};\stream{Y}) \coloneqq \coend{M \in \catC} \hom{}(X_{0}, M \tensor Y_{0}) \times Q(M \cdot \tail{\stream{X}}, \tail{\stream{Y}}).\]
\end{remark}
\subsection{Initial algebras, final coalgebras}

\begin{definition}[Algebras and coalgebras]
  Let $\catC$ be a category and let $F \colon \catC \to \catC$ be an endofunctor. An
  \emph{algebra} $(X, \alpha)$ is an object $X \in \catC$, together with a
  morphism $\alpha \colon FX \to X$. A \emph{coalgebra} $(Y,\beta)$ is an
  object $Y \in \catC$, together with a morphism $\beta \colon Y \to FY$.

  An \emph{algebra morphism} $f \colon (X,\alpha) \to (X',\alpha')$ is a morphism
  $f \colon X \to X'$ such that the diagram on the left commutes. A \emph{coalgebra
  morphism} $g \colon (Y,\beta) \to (Y',\beta')$ is a morphism
  $f \colon Y \to Y'$ such that the diagram on the right commutes.
  \[\begin{tikzcd}
      FX \dar[swap]{\alpha} \rar{Ff} & FX' \dar{\alpha'} & Y \rar{g}\dar[swap]{\beta} & Y' \dar{\beta'} \\
    X \rar{f} & X' & FY \rar{Fg} & FY'
  \end{tikzcd}\]
  Algebras for an endofunctor form a category with algebra morphisms between them.
  The initial algebra is the initial object in this category.
  Coalgebras for an endofunctor form a category with coalgebra morphisms between them.
  The final coalgebra is the terminal object in this category.
\end{definition}

\begin{definition}[Fixpoints of an endofunctor]
  Let $\catC$ be a category and let $F \colon \catC \to \catC$ be an endofunctor.
  A \emph{fixpoint} is an algebra $(X,\alpha)$ such that $\alpha \colon FX \to X$
  is an isomorphism. Equivalently, a fixpoint is a coalgebra $(Y,\beta)$ such that
  $\beta \colon Y \to FY$ is an isomorphism.

  Fixpoints form a category with algebra morphisms (or, equivalently, coalgebra morphisms) between
  them.
\end{definition}

\begin{theorem}[Lambek, \cite{lambek68}]
  \label{ax:th:lambektheorem}
  The final coalgebra of a functor is a fixpoint.
  As a consequence, when it exists, it is the final fixpoint.
\end{theorem}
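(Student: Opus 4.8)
The plan is to exploit the universal property of the final coalgebra twice, together with the fact that the identity is always a coalgebra endomorphism. Write $(Z,\gamma)$ for the assumed final coalgebra, with $\gamma \colon Z \to FZ$; the goal is to exhibit a two-sided inverse for $\gamma$, which makes $(Z,\gamma)$ a fixpoint.

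First I would observe that applying $F$ to the structure map produces a new coalgebra $(FZ, F\gamma)$, and that finality supplies a unique coalgebra morphism $h \colon (FZ, F\gamma) \to (Z,\gamma)$. Unfolding the definition of coalgebra morphism (in the paper's diagrammatic convention), this $h$ satisfies $h ; \gamma = F\gamma ; Fh$. The second, equally cheap observation is that $\gamma$ is itself a coalgebra morphism $(Z,\gamma) \to (FZ, F\gamma)$, since the required equation $\gamma ; F\gamma = \gamma ; F\gamma$ holds trivially.

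The heart of the argument is to compose these two morphisms in both orders. On one side, $\gamma ; h$ is a coalgebra endomorphism of the final coalgebra; since $\im_Z$ is also such an endomorphism and finality forces uniqueness of morphisms into $(Z,\gamma)$, I conclude $\gamma ; h = \im_Z$. On the other side I would compute $h ; \gamma = F\gamma ; Fh = F(\gamma ; h) = F(\im_Z) = \im_{FZ}$, using functoriality of $F$ and the identity just established. Hence $h$ is a two-sided inverse of $\gamma$, so $\gamma$ is an isomorphism and $(Z,\gamma)$ is a fixpoint.

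For the stated consequence, I would recall that fixpoints form a full subcategory of the category of $F$-coalgebras. Since $(Z,\gamma)$ is a fixpoint by the above, it lies in this subcategory; being final in the whole category of coalgebras, it is a fortiori final among fixpoints, i.e.\ the final fixpoint. The only place requiring care—really the sole subtlety in an otherwise formal argument—is the direction of composition in the coalgebra-morphism law: one must check that it is $\gamma ; h$, not $h ; \gamma$, that finality pins down as the identity, and then feed that identity through $F$ to recover the other composite. Beyond this bookkeeping there is no genuine obstacle.
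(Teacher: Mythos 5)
Your argument is correct and is exactly the standard proof of Lambek's lemma (the one the paper relies on by citation, since it gives no proof of its own): finality yields $h \colon (FZ,F\gamma) \to (Z,\gamma)$, uniqueness of endomorphisms of the final coalgebra forces $\gamma ; h = \im_Z$, and functoriality then gives $h ; \gamma = F(\gamma ; h) = \im_{FZ}$. The deduction that a final coalgebra which happens to be a fixpoint is final in the full subcategory of fixpoints is likewise correct.
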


\begin{theorem}[Adamek, \cite{adamek74}]
  \label{ax:th:adamek}
  Let $\catD$ be a category with a final object $1$ and $\omega$-shaped
  limits. Let $F \colon \catD \to \catD$ be an endofunctor. We write $L \defn \lim\nolimits_{n} F^{n}1$ for
  the limit of the following $\omega$-chain, which is called the \emph{terminal sequence}.
  \[1 \overset{!}\longleftarrow F1 \overset{F!}\longleftarrow FF1 \overset{FF!}\longleftarrow FFF1 \overset{FFF!}\longleftarrow \dots \]
  Assume that $F$ preserves this limit, meaning that the canonical morphism $FL \to L$ is an isomorphism.
  Then, $L$ is the final $F$-coalgebra.
\end{theorem}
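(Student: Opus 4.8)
The plan is to treat $L$ as the limit of the terminal sequence and to read off its coalgebra structure directly from the preservation hypothesis, after which finality becomes a routine induction along the sequence. First I would fix notation: write $d_n \colon F^{n+1}1 \to F^n 1$ for the connecting maps, with $d_0 \defn \,!$ the unique map $F1 \to 1$ and $d_{n+1} \defn F d_n$, and write $\pi_n \colon L \to F^n 1$ for the limit projections, so that $\pi_{n+1} ; d_n = \pi_n$. The canonical morphism $FL \to L$ is then assembled as follows: set $q_0 \defn \,!$ and $q_{n+1} \defn F\pi_n \colon FL \to F^{n+1}1$, and check by a one-line induction (using $\pi_n ; d_{n-1} = \pi_{n-1}$ and functoriality, with the base case handled by uniqueness of maps into $1$) that $(q_n)_n$ is a cone over the terminal sequence. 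By the universal property of $L$ this cone factors through a unique $\psi \colon FL \to L$ with $\psi ; \pi_n = q_n$, and this $\psi$ is exactly the comparison map whose invertibility is the meaning of ``$F$ preserves the limit''. Hence $\gamma \defn \psi^{-1} \colon L \to FL$ makes $(L,\gamma)$ a coalgebra (in fact a fixpoint).

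Next, for finality, given an arbitrary coalgebra $(Y,\beta)$ I would build the canonical cone into $L$ by iterating $\beta$: set $c_0 \defn \,!$ and $c_{n+1} \defn \beta ; F c_n$. The cone condition $c_{n+1} ; d_n = c_n$ holds by induction (base case by uniqueness of maps to $1$; step by $c_{n+1} ; d_n = \beta ; F(c_n ; d_{n-1}) = \beta ; F c_{n-1} = c_n$), so the $(c_n)$ induce a unique $h \colon Y \to L$ with $h ; \pi_n = c_n$. To see $h$ is a coalgebra morphism I would verify $\beta ; F h ; \psi = h$ by comparing both sides after postcomposition with each $\pi_n$: the right side gives $c_n$, and for $n \geq 1$ the left gives $\beta ; F h ; q_n = \beta ; F(h ; \pi_{n-1}) = \beta ; F c_{n-1} = c_n$, while for $n = 0$ both are $\,!$. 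Since maps into the limit are determined by their composites with the $\pi_n$, this yields $\beta ; F h ; \psi = h$, equivalently $h ; \gamma = \beta ; F h$, which is the coalgebra-morphism equation.

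Finally, uniqueness: any coalgebra morphism $g \colon (Y,\beta) \to (L,\gamma)$ satisfies $g ; \gamma = \beta ; F g$, hence $g = \beta ; F g ; \psi$. Postcomposing with $\pi_n$ and inducting exactly as above forces $g ; \pi_n = c_n$ for every $n$, so $g = h$ by the universal property of the limit, completing the proof that $(L,\gamma)$ is the final $F$-coalgebra.

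The step I expect to be the main obstacle is the first paragraph: pinning down that the abstract comparison map $FL \to L$ coming from ``$F$ preserves the limit'' really is the concrete map $\psi$ assembled from the cone $(q_n)$, and keeping the direction of the isomorphism straight so that the coalgebra structure is $\gamma = \psi^{-1}$ rather than $\psi$ itself. Once that identification is made, both existence and uniqueness collapse to the same inductive bookkeeping along the terminal sequence, the only genuinely special case being $n = 0$, which is dispatched by the universal property of the terminal object $1$.
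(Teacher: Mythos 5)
Your proof is correct and complete. The paper does not actually prove this statement --- it quotes Adamek's theorem from the literature (\cite{adamek74}) and uses it as a black box --- so there is no in-paper argument to compare against; what you have written is the standard textbook proof. All three pieces check out: the cone $(q_n)$ with $q_0 = {!}$ and $q_{n+1} = F\pi_n$ does induce the canonical comparison map $\psi \colon FL \to L$, so the preservation hypothesis gives you $\gamma = \psi^{-1}$ in the right direction; the iterated cone $c_{n+1} = \beta ; Fc_n$ yields the mediating map $h$, and your verification that $\beta ; Fh ; \psi$ and $h$ agree after each $\pi_n$ (with the $n=0$ case dispatched by terminality of $1$) correctly establishes the coalgebra-morphism equation; and the uniqueness argument, which reruns the same induction starting from $g = \beta ; Fg ; \psi$, is exactly right. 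No gaps.
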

\subsection{Size concerns, limits and colimits}

\begin{remark}
We call $\Set$ to the category of sets and functions below a certain Grothendieck universe.
We do take colimits (and coends) over this category without creating size issues: we can be sure of their existence in our metatheoretic category of sets.
\end{remark}

\begin{proposition}
  Terminal coalgebras exist in $\Set$.
  More generally, the category of sets below a certain regular uncountable cardinal is algebraically complete and cocomplete; meaning that every $\Set$-endofunctor has a terminal coalgebra and an initial algebra.
  See \cite[Theorem 13]{adamek19}.
\end{proposition}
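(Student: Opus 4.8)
The plan is to obtain both universal objects as (co)limits of transfinite (co)algebra sequences, extending the $\omega$-indexed construction of \Cref{ax:th:adamek} past $\omega$. For the terminal coalgebra I would prolong the terminal sequence $1 \leftarrow F1 \leftarrow FF1 \leftarrow \dots$ to an ordinal-indexed chain $(W_\alpha)$, with $W_0 = 1$, $W_{\alpha+1} = FW_\alpha$, and $W_\lambda = \lim_{\alpha < \lambda} W_\alpha$ at limit ordinals $\lambda$. Dually, for the initial algebra I would run the initial sequence $0 \to F0 \to FF0 \to \dots$ as an ordinal-indexed chain $(V_\alpha)$ with colimits at limit stages. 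In either case, as soon as a connecting map $W_{\alpha+1} \to W_\alpha$ (respectively $V_\alpha \to V_{\alpha+1}$) is an isomorphism, the object is a fixpoint of $F$, and \Cref{ax:th:lambektheorem} then identifies it as the terminal coalgebra (respectively, dually, the initial algebra).

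The two things I would then have to check are that the chains never leave $\Set$ and that they stabilize. Staying inside the category is where the cardinal hypothesis does its work: if $\Set$ consists of the sets of cardinality below a regular uncountable $\kappa$, then $F$ sends $\kappa$-small sets to $\kappa$-small sets, and regularity guarantees that a union of fewer than $\kappa$ many $\kappa$-small sets is again $\kappa$-small; hence every $W_\alpha$ and $V_\alpha$ with $\alpha < \kappa$ remains an object of $\Set$. Stabilization I would deduce from \emph{boundedness}: because $F$ is an endofunctor of a category of $\kappa$-small sets, one shows it is $\kappa$-accessible, so that for the initial chain the colimit at stage $\lambda \le \kappa$ is preserved — this is exactly the dual of \Cref{ax:th:adamek} — and the connecting maps become isomorphisms there.

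The hard part is the analogous convergence for the terminal coalgebra: limits of set functors are far less well-behaved than colimits, so preservation of the limit defining $W_\lambda$ cannot be taken for granted, and one must instead argue, via a Worrell-style cardinality count on the behaviours classified by the $W_\alpha$, that the connecting maps of the terminal chain eventually become monomorphisms and then isomorphisms within the ambient universe. The delicate choice of $\kappa$ is also what rescues the word ``every'': pathological functors such as covariant power-set fail to restrict to an endofunctor of $\Set$ unless $\kappa$ is a strong limit, so keeping $\kappa$ regular but not strong limit excludes precisely the counterexamples to algebraic cocompleteness. Packaging this size bookkeeping together with the terminal-chain convergence is the content of \cite[Theorem 13]{adamek19}, on which I would ultimately rely.
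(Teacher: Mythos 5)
The paper itself offers no argument for this proposition beyond the citation, so what matters is whether your sketch would go through, and it would not: there is a concrete error in the size bookkeeping for the terminal chain. The observation that a union of fewer than $\kappa$ many $\kappa$-small sets is $\kappa$-small correctly keeps the initial chain $(V_\alpha)$ inside the category, because colimits of chains of sets are quotients of unions; but a limit of a chain is a subset of a \emph{product}, and products of $\kappa$-small sets need not be $\kappa$-small. The failure already occurs at stage $\omega$: for $FX = 2 \times X$ the terminal chain has $W_n \cong 2^n$, all finite, yet $W_\omega = \lim_n W_n \cong 2^{\naturals}$ has cardinality $2^{\aleph_0} \geq \omega_1$. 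So for $\kappa = \omega_1$ the chain leaves the category before regularity can help, and the duality you assume between the two transfinite constructions genuinely breaks.

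This is not a presentational slip but a reflection of the mathematics: $\Set_{\omega_1}$ is in fact \emph{not} algebraically cocomplete. For $FX = 2 \times X$, any terminal coalgebra $Z$ among countable sets would receive, for each stream $s \in 2^{\naturals}$, the unique homomorphism $h_s$ from the coalgebra $n \mapsto (s_n, n+1)$ on $\naturals$; an induction on the emitted outputs shows $h_s(0) \neq h_{s'}(0)$ whenever $s \neq s'$, so $|Z| \geq 2^{\aleph_0}$ and $Z$ cannot be countable. This also refutes your closing claim that ``regular but not strong limit'' excludes precisely the counterexamples: $\omega_1$ is regular, uncountable and not a strong limit, yet fails. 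The word ``certain'' in the statement is therefore doing real work --- the cardinal must additionally absorb the infinite products arising at limit stages of terminal chains, or else the terminal coalgebra must be produced by a route other than the naive terminal chain --- and supplying that is exactly the content of the cited theorem that your sketch passes over.
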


\begin{theorem}[Coproducts commute with connected limits]
  \defining{linkconnectedlimits}
  Let $I$ be a set, understood as a discrete category, and let $\catA$ be a
  connected category with $F \colon I \times \catA \to \Set$ a functor. The
  canonical morphism
  \[ \sum_{i \in I} \lim_{a \in A} F(i,a) \to \lim_{a \in A}\sum_{i \in I} F(i,a)\]
  is an isomorphism.

  In particular, let $F_{n} \colon I \to \Set$ be a family of functors indexed
  by the natural numbers with a family of natural transformations
  $\alpha_{n} \colon F_{n+1} \to F_{n}$. The canonical morphism
  \[\sum_{i \in I} \lim_{n \in \naturals} F_{n}(i) \to \lim_{n \in \naturals}\sum_{i \in I} F_{n}(i) \]
  is an isomorphism.
\end{theorem}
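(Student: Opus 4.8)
The plan is to establish the bijection by unfolding both sides explicitly in $\Set$ and isolating the single point where connectedness of $\catA$ is needed. First I would describe the elements. An element of $\sum_{i \in I} \lim_{a} F(i,a)$ is a pair $(i, (x_a)_a)$ where $i \in I$ and $(x_a)_a$ is a compatible family with $x_a \in F(i,a)$ and $F(i,f)(x_a) = x_{a'}$ for every $f \colon a \to a'$ in $\catA$. An element of $\lim_a \sum_{i} F(i,a)$ is a compatible family $(y_a)_a$ with $y_a \in \sum_i F(i,a)$; writing $y_a = (i_a, z_a)$ with $z_a \in F(i_a, a)$, compatibility says $(\sum_i F(i,f))(i_a, z_a) = (i_{a'}, z_{a'})$. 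The canonical comparison map sends $(i, (x_a)_a)$ to the family whose $a$-component is $x_a$ pushed along the coprojection $F(i,a) \hookrightarrow \sum_j F(j,a)$, which is compatible because the coprojections are natural in $a$.

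The decisive—and essentially only substantive—step is the observation that the coproduct functor acts coordinatewise on the index: for any $f \colon a \to a'$ one has $(\sum_i F(i,f))(i_a, z_a) = (i_a, F(i_a, f)(z_a))$, since $\sum_i F(i,f)$ maps the $i_a$-th summand into the $i_a$-th summand. Matching this against the compatibility of $(y_a)_a$ forces $i_{a'} = i_a$ whenever there is a morphism $a \to a'$. Here connectedness enters: any two objects of $\catA$ are joined by a finite zig-zag of morphisms, and since the index is preserved along every edge of the zig-zag traversed in either direction, the assignment $a \mapsto i_a$ is globally constant, say equal to $i$. The family then collapses to a single index $i$ together with the compatible family $(z_a)_a$ in $F(i,-)$, which is exactly an element of the left-hand side; this recovery is inverse to the canonical map, so the map is a bijection. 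I expect this promotion from \emph{constant along morphisms} to \emph{globally constant} to be the crux, and I would phrase the zig-zag carefully so that it is clear the index is preserved in both directions, which is precisely why connectedness (and not merely directedness) is the right hypothesis.

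For the ``in particular'' statement I would simply recognise $\omega$-shaped limits as connected limits. The diagram $\cdots \to F_2(i) \to F_1(i) \to F_0(i)$ determined by the $\alpha_n$ is a limit over the shape $\naturals^{op}$, i.e. $0 \leftarrow 1 \leftarrow 2 \leftarrow \cdots$, which is connected because it is a chain. Packaging the data as a functor $G \colon I \times \naturals^{op} \to \Set$, $G(i,n) = F_n(i)$ with $G(i, n{+}1 \to n) = (\alpha_n)_i$, identifies the two sides of the second comparison map with the two sides of the first applied to $G$ and $\catA = \naturals^{op}$, so the special case follows immediately from the general one.
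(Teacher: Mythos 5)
Your proof is correct and follows essentially the same route as the paper's: the paper's (much terser) argument likewise observes that since $I$ is discrete the summand index is preserved along every morphism of the diagram, so connectedness forces the local choice of $i$ to be a single global choice. Your version merely makes explicit the element-wise description, the zig-zag argument, and the reduction of the $\omega$-chain case to connectedness, all of which the paper leaves implicit.
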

\begin{proof}
  Note that there are no morphisms between any two indices $i, j \in I$.
  Once some $i \in I$ is chosen in any factor of the connected limit, it
  forces any other factor to also choose $i \in I$. This makes the local
  choice of $i \in I$ be equivalent to the global choice of $i \in I$.
\end{proof}
\section{The $\mathsf{List}^{+}$ opmonoidal comonad and Fox's theorem}
\begin{proposition}\label{prop:cartesianproductive}
  Cartesian monoidal categories are \productive{}.
\end{proposition}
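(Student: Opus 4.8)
The plan is to exhibit, for every $1$-stage process, a canonical terminating component built from the universal property of the cartesian product, and then to verify the auxiliary observation condition using the projections (natural discarding maps) that cartesianity provides; this is exactly the point where we need more than the mere existence of a factorization.

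First I would fix a $1$-stage process $\alpha \in \Stage{1}(\stream{X},\stream{Y})$, that is, a class in $\coend{M}\hom{}(X_{0}, M \times Y_{0})$ with representatives $\alpha_{i} \colon X_{0} \to M_{i} \times Y_{0}$. Since the tensor is the categorical product, each representative splits uniquely as $\alpha_{i} = \langle a_{i}, b_{i}\rangle$ with $a_{i} = \alpha_{i} ; \pi_{M_{i}}$ and $b_{i} = \alpha_{i} ; \pi_{Y_{0}}$. The generating dinaturality relation slides a map $r$ along the memory channel as $\alpha_{i} ; (r \times \im_{Y_{0}})$, and this is the identity on the $Y_{0}$-component; hence $b_{i} = b_{j} =: b$ for any two representatives, so the output component $b$ is an invariant of the class $\alpha$. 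I would then take $M_{0} := X_{0}$ and define the terminating component $\alpha_{0} := \langle \im_{X_{0}}, b\rangle \colon X_{0} \to X_{0} \times Y_{0}$, which stores a full copy of the input in memory while emitting $b$. Every representative factors as $\alpha_{i} = \langle a_{i}, b\rangle = \langle \im_{X_{0}}, b\rangle ; (a_{i} \times \im_{Y_{0}}) = \alpha_{0} ; (s_{i} \times \im_{Y_{0}})$ with $s_{i} := a_{i} \colon X_{0} \to M_{i}$, which is precisely the decomposition required by \Cref{def:productive}. (This also recovers the Yoneda reduction behind the cartesian streams of \Cref{th:cartesianstreams}: each class has the canonical representative $\langle \im_{X_{0}}, b\rangle$.)

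It then remains to check the observation condition. Composing $\alpha_{i} \times \im_{A}$ with $u \times \im_{Y_{0}}$ (after a symmetry reordering $M_{i} \times Y_{0} \times A \cong M_{i} \times A \times Y_{0}$) produces the morphism $(a_{i} \times \im_{A}) ; u$ on the $U \times B$ part, paired with the unchanged, shared $Y_{0}$-output $b$; and $s_{i} \times \im_{A} ; u$ is exactly that same $U \times B$ part. Thus $\bra{\alpha_{i} \times \im_{A} ; u \times \im_{Y_{0}}}$ differs from $\bra{s_{i} \times \im_{A} ; u}$ only by the extra $Y_{0}$-output. In a cartesian category this output can be discarded by post-composing with the projection $\pi_{U \times B}$, which is natural and hence descends to dinatural classes; applying it to the hypothesised equality $\bra{\alpha_{i} \times \im_{A} ; u \times \im_{Y_{0}}} = \bra{\alpha_{j} \times \im_{A} ; v \times \im_{Y_{0}}}$ yields $\bra{s_{i} \times \im_{A} ; u} = \bra{s_{j} \times \im_{A} ; v}$, as required.

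The main obstacle is this last step. The factorization of the first part would hold in any category admitting the relevant splitting, but the observation condition genuinely needs the projections as natural discarding maps — the feature distinguishing cartesian from general symmetric monoidal categories. Once the invariance of the $Y_{0}$-component and the discardability of outputs are isolated, the remainder is routine bookkeeping with symmetries and the universal property of the product, which I would relegate to the appendix.
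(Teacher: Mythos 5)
Your proof is correct and follows essentially the same route as the paper's: you take $M_{0}=X_{0}$, define the terminating component by copying the input and pairing it with the representative-independent $Y_{0}$-component (the paper writes this as $\alpha_{0}=\delta_{X_{0}};\tid{\alpha_{Y}}$ where $\alpha_{Y}=\alpha;\tid{\coUnit_{M}}$), factor each representative through its memory projection $s_{i}=\alpha_{i};\tid{\coUnit_{Y}}$, and verify the observation condition by discarding the $Y_{0}$ output via the natural projection. The only difference is cosmetic: you phrase the argument with product pairings $\langle-,-\rangle$ and projections where the paper uses the comonoid structure $(\delta,\coUnit)$ supplied by Fox's theorem.
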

\begin{proof}
  Let $\bra{\alpha} \in \Stage{1}(\stream{X},\stream{Y})$.
For some given representative $\alpha \colon X_{0} \to M \tensor Y_{0}$, we define the two projections $\alpha_{Y} = \alpha ; \tid{\coUnit_{M}} \colon X_{0} \to Y_{0}$ and $\alpha_{M} = \alpha ; \tid{\coUnit_{Y}}$.
The second projection $\alpha_{M}$ depends on the specific representative $\alpha$ we have chosen; however, the first projection $\alpha_{Y}$ is defined independently of the specific representative $\alpha$, as a consequence of naturality of the discarding map (see Fox's theorem for cartesian monoidal categories \Cref{th:fox}).
We define $\alpha_{0} = \delta_{X_{0}} ; \tid{\alpha_{Y}}$.
Then, we can factor any representative as $\alpha = \alpha_{0} ; \tid{\alpha_{M}}$ (see \Cref{figure:productivity:cartesian}).
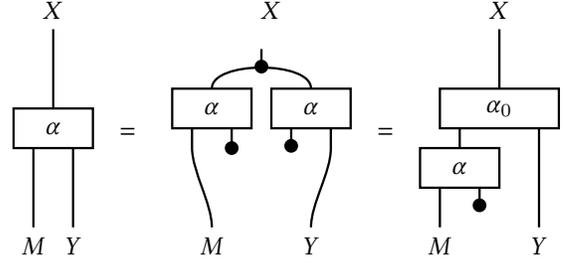
\begin{figure}
\tikzset{every picture/.style={line width=0.85pt}} %
\begin{tikzpicture}[x=0.75pt,y=0.75pt,yscale=-1,xscale=1]
\draw   (380,60) -- (420,60) -- (420,80) -- (380,80) -- cycle ;
\draw    (425,40) -- (425,48.87) ;
\draw    (400,60) .. controls (399.8,45.8) and (450.2,45.8) .. (450,60) ;
\draw  [fill={rgb, 255:red, 0; green, 0; blue, 0 }  ,fill opacity=1 ] (422.1,48.87) .. controls (422.1,47.27) and (423.4,45.97) .. (425,45.97) .. controls (426.6,45.97) and (427.9,47.27) .. (427.9,48.87) .. controls (427.9,50.47) and (426.6,51.77) .. (425,51.77) .. controls (423.4,51.77) and (422.1,50.47) .. (422.1,48.87) -- cycle ;
\draw    (410,80) -- (410,90) ;
\draw    (390,80) -- (390,90) ;
\draw    (460,90) .. controls (460.14,102.14) and (450,117.83) .. (450,130) ;
\draw   (300,70) -- (340,70) -- (340,90) -- (300,90) -- cycle ;
\draw    (310,90) -- (310,130) ;
\draw    (330,90) -- (330,130) ;
\draw    (320,30) -- (320,70) ;
\draw   (515,60) -- (575,60) -- (575,80) -- (515,80) -- cycle ;
\draw    (545,30) -- (545,60) ;
\draw    (565,80) -- (565,97.5) -- (565,130) ;
\draw    (525,80) -- (525,90) ;
\draw   (430,60) -- (470,60) -- (470,80) -- (430,80) -- cycle ;
\draw    (460,80) -- (460,90) ;
\draw    (440,80) -- (440,88.87) ;
\draw  [fill={rgb, 255:red, 0; green, 0; blue, 0 }  ,fill opacity=1 ] (437.1,88.87) .. controls (437.1,87.27) and (438.4,85.97) .. (440,85.97) .. controls (441.6,85.97) and (442.9,87.27) .. (442.9,88.87) .. controls (442.9,90.47) and (441.6,91.77) .. (440,91.77) .. controls (438.4,91.77) and (437.1,90.47) .. (437.1,88.87) -- cycle ;
\draw    (390,90) .. controls (390.14,102.14) and (400,117.83) .. (400,130) ;
\draw  [fill={rgb, 255:red, 0; green, 0; blue, 0 }  ,fill opacity=1 ] (407.1,90) .. controls (407.1,88.4) and (408.4,87.1) .. (410,87.1) .. controls (411.6,87.1) and (412.9,88.4) .. (412.9,90) .. controls (412.9,91.6) and (411.6,92.9) .. (410,92.9) .. controls (408.4,92.9) and (407.1,91.6) .. (407.1,90) -- cycle ;
\draw   (505,90) -- (545,90) -- (545,110) -- (505,110) -- cycle ;
\draw    (515,110) -- (515,130) ;
\draw    (535,110) -- (535,118.87) ;
\draw  [fill={rgb, 255:red, 0; green, 0; blue, 0 }  ,fill opacity=1 ] (532.1,118.87) .. controls (532.1,117.27) and (533.4,115.97) .. (535,115.97) .. controls (536.6,115.97) and (537.9,117.27) .. (537.9,118.87) .. controls (537.9,120.47) and (536.6,121.77) .. (535,121.77) .. controls (533.4,121.77) and (532.1,120.47) .. (532.1,118.87) -- cycle ;

\draw (430,26.6) node [anchor=south] [inner sep=0.75pt]    {$X$};
\draw (400,70) node  [font=\normalsize]  {$\alpha $};
\draw (450,133.4) node [anchor=north] [inner sep=0.75pt]    {$Y$};
\draw (320,80) node  [font=\normalsize]  {$\alpha $};
\draw (320,26.6) node [anchor=south] [inner sep=0.75pt]    {$X$};
\draw (330,133.4) node [anchor=north] [inner sep=0.75pt]    {$Y$};
\draw (310,133.4) node [anchor=north] [inner sep=0.75pt]    {$M$};
\draw (357.5,86.6) node [anchor=south] [inner sep=0.75pt]    {$=$};
\draw (545,70) node  [font=\normalsize]  {$\alpha _{0}$};
\draw (545,26.6) node [anchor=south] [inner sep=0.75pt]    {$X$};
\draw (565,133.4) node [anchor=north] [inner sep=0.75pt]    {$Y$};
\draw (400,133.4) node [anchor=north] [inner sep=0.75pt]    {$M$};
\draw (450,70) node  [font=\normalsize]  {$\alpha $};
\draw (487.5,86.6) node [anchor=south] [inner sep=0.75pt]    {$=$};
\draw (525,100) node  [font=\normalsize]  {$\alpha $};
\draw (515,133.4) node [anchor=north] [inner sep=0.75pt]    {$M$};
\end{tikzpicture}
\caption{{Productivity for cartesian categories.}}
\label{figure:productivity:cartesian}
\end{figure}
Now, assume that we have two representatives $\bra{\alpha_{i}} = \bra{\alpha_{j}}$
for which $\bra{\alpha_{i} ; u} = \bra{\alpha_{j} ; v}$.
By naturality of the discarding map, $\alpha_{i} ; \tid{\coUnit} = \alpha_{j} ; \tid{\coUnit}$,
and we call this map $\alpha_{Y}$.
Again by naturality of the discarding map, $\alpha_{i} ; u ; \tid{\coUnit} = \alpha_{j} ; v ; \tid{\coUnit}$,
and discarding the output in $Y$, we get that $\alpha_{M,i} ; u ;\tid{\coUnit} = \alpha_{M,j} ; v ; \tid{\coUnit}$,
which implies $\bra{\alpha_{M,i} ; u} = \bra{\alpha_{M,j} ; v}$.
\end{proof}

\begin{definition}[Opmonoidal comonad] In a monoidal category $(\catC,\otimes,I)$,
  a comonad $(R,\varepsilon,\delta)$
  is an \emph{opmonoidal comonad} when the endofunctor
  $R \colon \catC \to \catC$ is oplax monoidal with laxators
  $\psi_{X,Y} \colon R(X \otimes Y) \to RX \otimes RY$ and $\psi^{I} \colon RI \to I$,
  and both the counit $\varepsilon_{X} \colon RX \to X$ and the comultiplication
  $\delta_{X} \colon RX \to RRX$ are monoidal natural transformations.

  Explicitly, $\varepsilon_{I} = \psi_{0}$,
  $\varepsilon_{X \otimes Y} = \psi_{X,Y} ; (\varepsilon_X \otimes \varepsilon_Y)$,
  $\delta_{I} ; \psi_{0} ; \psi_{0} = \psi_{0}$ and
  $\delta_{X \otimes Y} ; \psi_{X,Y} ; \psi_{RX,RY} = \psi_{X,Y} ; (\delta_{X} \otimes \delta_{Y})$.

  Alternatively, an \emph{opmonoidal comonad} is a comonoid in the bicategory
  $\mathbf{MonOplax}$ of oplax monoidal functors with composition and monoidal
  natural transformations between them.
\end{definition}

\begin{definition}
  Let $(\catC,\otimes,I)$ be a symetric monoidal category. There is a functor
  $\defining{linknelist}{\ensuremath{\fun{List}^{+}}} \colon \catC \to \catC$ defined on objects by
  \[\neList(X)_{n} \coloneqq \bigotimes^{n}_{i=0} X_{i}.\]
  This functor is monoidal, with oplaxators $\psi^{+}_{0} \colon \neList(I) \to I$
  and $\psi_{X,Y} \colon \neList(X \otimes Y) \to \neList(X) \otimes \neList(Y)$ given by symmetries,
  associators and unitors.
\end{definition}

\begin{theorem}[From \Cref{th:nelist}]
  \label{ax:th:nelist}
  The opmonoidal functor $\neList$ has an opmonoidal comonad structure if and only
  if its base monoidal category $(\catC, \otimes, I)$ is cartesian monoidal.
\end{theorem}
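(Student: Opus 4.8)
The plan is to prove the two implications separately, with the forward direction (opmonoidal comonad $\Rightarrow$ cartesian) carrying essentially all of the content. For the easy direction, I would assume $(\catC,\tensor,I)$ is cartesian and equip $\neList$ with the \emph{prefix comonad} structure of Uustalu and Vene: the counit $\varepsilon_{\stream{X}} \colon \neList(\stream{X}) \to \stream{X}$ is, in each degree $n$, the projection $\prod_{i=0}^{n} X_i \to X_n$ onto the last factor, and the comultiplication $\delta_{\stream{X}} \colon \neList(\stream{X}) \to \neList(\neList(\stream{X}))$ is, in degree $n$, the tuple of prefix projections $\prod_{i=0}^{n} X_i \to \prod_{i=0}^{k} X_i$ for $0 \le k \le n$. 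The comonad laws, together with the fact that $\varepsilon$ and $\delta$ are \emph{monoidal} natural transformations (so that $\neList$ really becomes an opmonoidal comonad), then reduce via the universal property of the product to routine identities between projections and the oplaxators $\psi$; I would check these but not belabor them.

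For the hard direction, I would assume $\neList$ carries an opmonoidal comonad structure $(\varepsilon,\delta,\psi)$ and extract from it a natural, uniform, counital comagma on every object of $\catC$, so that the refined Fox theorem applies. The key move is to read off the low-degree components on \emph{probe sequences}. Probing the counit in degree $1$ at the sequence $(A,I,I,\dots)$ yields a map $A \tensor I \to I$, hence a discarding map $\coUnit_A \colon A \to I$; probing the comultiplication in degree $1$ at the same sequence yields, since $\neList(\neList(\stream{X}))_1 = X_0 \tensor (X_0 \tensor X_1)$, a map $A \tensor I \to A \tensor (A \tensor I)$, hence a copy map $\coMult_A \colon A \to A \tensor A$ (after unitors). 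I would then verify three things: (i) \textbf{naturality} of $\coUnit$ and $\coMult$ in $A$, inherited from naturality of $\varepsilon$ and $\delta$ in $\NcatC$ along morphisms of the form $(f,\id_I,\id_I,\dots)$; (ii) the \textbf{counit laws} $\coMult_A ; (\coUnit_A \tensor \im) = \im = \coMult_A ; (\im \tensor \coUnit_A)$, obtained by specialising the two comonad counit laws $\delta ; \neList\varepsilon = \im$ and $\delta ; \varepsilon_{\neList} = \im$ to degree $1$; and (iii) \textbf{uniformity} with respect to $\tensor$, namely $\coUnit_{A \tensor B} = \coUnit_A \tensor \coUnit_B$ and $\coMult_{A \tensor B} = (\coMult_A \tensor \coMult_B);\tid{\sigma_{A,B}}$ up to coherence, which is exactly the content of $\varepsilon$ and $\delta$ being monoidal natural transformations (the equations $\varepsilon_{\stream{X} \tensor \stream{Y}} = \psi_{\stream{X},\stream{Y}};(\varepsilon_{\stream{X}} \tensor \varepsilon_{\stream{Y}})$ and its $\delta$-analogue, read in degree $1$). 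Having produced a natural, uniform, counital comagma on $(\catC,\tensor,I)$, I would invoke the refined statement of Fox's theorem (\Cref{th:refinedfoxappendix}) to conclude that $\catC$ is cartesian monoidal.

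The main obstacle is the bookkeeping in step (iii): one must check that the comagma extracted in degree $1$ is genuinely the one whose uniformity is governed by the oplaxators, and that the symmetries, associators, and unitors making up $\psi$ do not introduce spurious copies of the unit that would break the counit law. This is precisely where the \emph{refined} version of Fox's theorem pays off — it is designed to require only a natural, uniform, counital comagma rather than a full commutative comonoid, so I never need to extract coassociativity or cocommutativity (which would otherwise force me to probe higher degrees of $\delta$ and track much more coherence data). The classical Fox theorem (\Cref{th:fox}) would then follow as a special case, but the refinement is what makes the argument tractable here.
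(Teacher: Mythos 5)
Your proposal is correct and follows essentially the same route as the paper: the easy direction uses the prefix-projection comonad, and the hard direction probes $\varepsilon$ and $\delta$ at a sequence padded with monoidal units to extract a natural, uniform, counital comagma on each object, then invokes the refined Fox theorem (\Cref{th:refinedfoxappendix}) to conclude cartesianity. The paper's appendix proof does exactly this (it phrases the probe as "choosing $n=2$, $X_0 = X$, $X_1 = I$", which is your degree-$1$ component with unit padding), so there is nothing substantive to add.
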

\begin{proof}
  When $\catC$ is cartesian, we can construct the comonad structure using projections
  $\prod\nolimits_{i=0}^{n} X_{i} \to X_{n}$
  and copying together with braidings
  $\prod\nolimits_{i=0}^{n} X_{i} \to \prod\nolimits_{i=0}^{n}\prod\nolimits_{k=0}^{i} X_{k}$.
  These are monoidal natural transformations making $\neList$ a monoidal comonad.

  Suppose $(L,\varepsilon,\delta)$ is an opmonoidal comonad structure. This means it has
  families of natural transformations
  \[\delta_{n} \colon \bigotimes^{n}_{i=0} X_{i} \to \bigotimes^{n}_{i=0} \bigotimes^{i}_{k=0} X_{k}
    \mbox{ and } \varepsilon_{n} \colon \bigotimes^{n}_{i=0} X_{i} \to X_{n}.\]
  We will use these to construct a uniform counital comagma structure on every
  object of the category. By a refined version of Fox's theorem (\Cref{th:refinedfoxappendix}),
  this will imply that $\catC$ is cartesian monoidal.

  Let $X \in \catC$ be any object.
  Choosing $n=2$, $X_{0} = X$ and $X_{1} = I$; and using coherence maps,
  we get $\delta_{2} \colon X \to X \tensor X$ and $\varepsilon_{2} \colon X \to I$.
  These are coassociative, counital, natural and uniform because the corresponding transformations
$\delta$ and $\varepsilon$ are themselves coassociative, counital, natural and monoidal.
This induces a uniform comagma structure in every object $(X,\delta_{2},\varepsilon_{2})$;
with this structure,
every morphism of the category is a comagma homomorphism
because $\delta_{2}$ and $\varepsilon_{2}$ are natural.
\end{proof}

\begin{theorem}[Fox's theorem~\cite{fox76}]
  \label{th:fox}
  A \symmetricMonoidalCategory{} $(\catC,\otimes,I)$ is cartesian monoidal if and
  only if every object $X \in \catC$ has a cocommutative comonoid structure
  $(X,\varepsilon_{X},\delta_{X})$, every morphism of the category
  $f \colon X \to Y$ is a comonoid homomorphism, and this structure is uniform
  across the monoidal category: meaning that
  $\varepsilon_{X \otimes Y} = \varepsilon_{X} \otimes \varepsilon_{Y}$, that
  $\varepsilon_{I} = \mathrm{id}$, that $\delta_{I} = \mathrm{id}$ and that
  $\delta_{X \otimes Y} = (\delta_{X} \otimes \delta_{Y}) ; (\mathrm{id} \otimes \sigma_{X,Y} \otimes \mathrm{id})$.
\end{theorem}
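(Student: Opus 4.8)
The statement is an equivalence whose two implications are of very different character: the \emph{only if} direction is immediate from the universal property of products, whereas the \emph{if} direction carries all the content.

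For \emph{only if}, I would assume $\otimes$ is the categorical product and $I$ the terminal object, and define $\varepsilon_X \colon X \to I$ to be the unique map to the terminal object and $\delta_X \colon X \to X \otimes X$ to be the diagonal $\langle \im_X, \im_X \rangle$. Coassociativity, cocommutativity and counitality are then routine consequences of the universal property, and every $f \colon X \to Y$ is a comonoid homomorphism because both $f ; \delta_Y$ and $\delta_X ; (f \otimes f)$ equal the pairing $\langle f, f \rangle$, while $f ; \varepsilon_Y = \varepsilon_X$ since $I$ is terminal. The uniformity equations hold because diagonals and terminal maps are canonical.

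For \emph{if}, I would first show $I$ is terminal: given any $g \colon X \to I$, naturality of $\varepsilon$ gives $g ; \varepsilon_I = \varepsilon_X$, and since $\varepsilon_I = \im$ this forces $g = \varepsilon_X$, so $\varepsilon_X$ is the unique map $X \to I$. Next I would exhibit $\otimes$ as the product by defining projections $\pi_1 = (\im_X \otimes \varepsilon_Y)$ and $\pi_2 = (\varepsilon_X \otimes \im_Y)$ (absorbing unitors, as permitted by coherence), together with the pairing $\langle f, g \rangle \defn \delta_Z ; (f \otimes g)$ for $f \colon Z \to X$ and $g \colon Z \to Y$. The equation $\langle f, g \rangle ; \pi_1 = f$ follows by rewriting $g ; \varepsilon_Y = \varepsilon_Z$ (naturality of $\varepsilon$) and then applying the counit law $\delta_Z ; (\im_Z \otimes \varepsilon_Z) = \im_Z$; the equation for $\pi_2$ is symmetric.

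The main obstacle is \emph{uniqueness} of the pairing, and this is precisely where the remaining hypotheses enter. Given any $h \colon Z \to X \otimes Y$ I would prove $h = \langle h ; \pi_1,\ h ; \pi_2 \rangle$. Expanding the right-hand side as $\delta_Z ; (h \otimes h) ; (\pi_1 \otimes \pi_2)$ and using that $h$ is a comonoid homomorphism (naturality of $\delta$) rewrites it as $h ; \delta_{X \otimes Y} ; (\pi_1 \otimes \pi_2)$, so it suffices to prove the single identity $\delta_{X \otimes Y} ; (\pi_1 \otimes \pi_2) = \im_{X \otimes Y}$. Substituting the uniformity expression $\delta_{X \otimes Y} = (\delta_X \otimes \delta_Y) ; (\im \otimes \sigma_{X,Y} \otimes \im)$, the braiding regroups the four output wires into two independent copies of $(X, Y)$; the projection $\pi_1$ discards the $Y$-component of the first copy and $\pi_2$ discards the $X$-component of the second, so by the two counit laws applied to $\delta_X$ and $\delta_Y$ the whole composite collapses to $\im_X \otimes \im_Y$. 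I expect the coherent bookkeeping of these wires — tracking associators, unitors and the symmetry — to be the fiddliest step, and would carry it out diagrammatically. Once uniqueness is established, $\otimes$ satisfies the universal property of the binary product and $I$ is terminal, so $\catC$ is cartesian monoidal.
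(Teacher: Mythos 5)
Your proof is correct and is the standard argument, but note that the paper does not actually prove this statement: it cites Fox~\cite{fox76} for the classical theorem and instead proves a refinement (\Cref{th:refinedfoxappendix}) whose hypotheses demand only a uniform, natural, \emph{counital comagma} structure, deriving coassociativity and cocommutativity from uniformity and then invoking the classical theorem as a black box. The interesting point of comparison is that your argument never uses coassociativity or cocommutativity at all: terminality of $I$ needs only naturality of $\varepsilon$ together with $\varepsilon_I = \mathrm{id}$; the projection equations $\langle f,g\rangle ; \pi_i$ need only the two counit laws and naturality of $\varepsilon$; and the uniqueness step needs only that $h$ is a $\delta$-homomorphism, the uniformity equation for $\delta_{X\otimes Y}$, and the counit laws. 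So your proof, carried out verbatim, establishes the paper's refined version directly and renders its reduction step (deriving cocommutativity and coassociativity before citing Fox) unnecessary. The wire bookkeeping you flag does close as you expect: after $(\delta_X\otimes\delta_Y);(\mathrm{id}\otimes\sigma_{X,Y}\otimes\mathrm{id})$ the four strands read $X,Y,X,Y$, and $\pi_1\otimes\pi_2$ discards exactly the second copy produced by $\delta_X$ and the first copy produced by $\delta_Y$, so the two counit laws collapse the composite to $\mathrm{id}_X\otimes\mathrm{id}_Y$.
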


\begin{remark}
  Most sources ask the comonoid structure in Fox's theorem
  (\Cref{th:fox}) to be cocommutative~\cite{fox76,fong2019supplying}.
  However, cocommutativity and coassociativity of the comonoid structure are implied by the fact that the structure is uniform and natural.
  We present an original refined version of Fox's theorem.
\end{remark}

\begin{theorem}[Refined Fox's theorem]
  \label{th:refinedfoxappendix}
  A symmetric monoidal category $(\catC,\otimes,I)$ is cartesian monoidal if and
  only if every object $X \in \catC$ has a counital comagma structure
  $(X,\varepsilon_{X},\delta_{X})$, or $(X,\blackComonoidUnit_{X},\blackComonoid_{X})$, every morphism of the category
  $f \colon X \to Y$ is a comagma homomorphism, and this structure is uniform
  across the monoidal category: meaning that
  $\varepsilon_{X \otimes Y} = \varepsilon_{X} \otimes \varepsilon_{Y}$,
  $\varepsilon_{I} = \mathrm{id}$, $\delta_{I} = \mathrm{id}$ and
  $\delta_{X \otimes Y} = (\delta_{X} \otimes \delta_{Y}) ; (\im \tensor \sigma_{X,Y} \tensor \im)$.
\end{theorem}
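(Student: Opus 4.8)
The forward implication is immediate: in a cartesian monoidal category the diagonal $\delta_X = \langle \im_X, \im_X\rangle$ and the unique map $\varepsilon_X \colon X \to I$ to the terminal object furnish a uniform, natural comagma on every object, and every morphism is automatically a comagma homomorphism by the universal properties of products and of the terminal object. The plan is therefore to spend the whole argument on the converse, assuming the hypothesised counital comagma $(X,\varepsilon_X,\delta_X)$ and showing that $\otimes$ realises the categorical product and $I$ the terminal object. Throughout I would invoke the coherence theorem (\Cref{theorem:coherence}, \Cref{remark:usingcoherence}) to work strictly and suppress associators and unitors.

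First I would show that $I$ is terminal. Given any $f \colon X \to I$, compatibility of $f$ with the counits (it is a comagma homomorphism) gives $f ; \varepsilon_I = \varepsilon_X$, and uniformity forces $\varepsilon_I = \im_I$; hence $f = \varepsilon_X$, so $\varepsilon_X$ is the unique morphism $X \to I$. Next I would introduce projections $\pi_1 \defn \im_X \otimes \varepsilon_Y \colon X \otimes Y \to X$ and $\pi_2 \defn \varepsilon_X \otimes \im_Y \colon X \otimes Y \to Y$, and the pairing $\langle f,g\rangle \defn \delta_Z ; (f \otimes g)$ for $f \colon Z \to X$ and $g \colon Z \to Y$. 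That the pairing splits the projections is a short computation using one counit law together with the homomorphism property of the other factor: $\langle f,g\rangle ; \pi_1 = \delta_Z ; (f \otimes (g;\varepsilon_Y)) = \delta_Z ; (f \otimes \varepsilon_Z) = f$, using $g ; \varepsilon_Y = \varepsilon_Z$ and $\delta_Z ; (\im_Z \otimes \varepsilon_Z) = \im_Z$, and symmetrically $\langle f,g\rangle ; \pi_2 = g$.

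The crux, and the step I expect to be the main obstacle, is uniqueness of the pairing, which I would reduce to the single identity
\[\delta_{X \otimes Y} ; (\pi_1 \otimes \pi_2) = \im_{X \otimes Y}.\]
Expanding $\delta_{X\otimes Y}$ by the uniformity law $\delta_{X\otimes Y} = (\delta_X \otimes \delta_Y);(\im_X \otimes \sigma_{X,Y} \otimes \im_Y)$ and post-composing with $\pi_1 \otimes \pi_2$, the two counit laws collapse the expression to the identity; this is a diagrammatic verification in which the only delicate bookkeeping is tracking the symmetry $\sigma_{X,Y}$ against the two counits. Granting the identity, for any $h \colon Z \to X \otimes Y$ I would compute, using that $h$ is a comagma homomorphism,
\[h = h ; \delta_{X\otimes Y} ; (\pi_1 \otimes \pi_2) = \delta_Z ; (h \otimes h) ; (\pi_1 \otimes \pi_2) = \delta_Z ; ((h;\pi_1) \otimes (h;\pi_2)) = \langle h;\pi_1,\, h;\pi_2\rangle,\]
so $h$ is determined by its projections.

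Together with terminality of $I$, this exhibits $\otimes$ (with $\pi_1,\pi_2$) as the categorical product, and hence $(\catC,\otimes,I)$ as cartesian monoidal. Finally I would emphasise the point of the refinement over \Cref{th:fox}: nowhere does the argument invoke coassociativity or cocommutativity of $\delta$; it uses only the counit laws, the fact that every morphism is a comagma homomorphism, and the uniformity of $\varepsilon$ and $\delta$. (The missing comonoid axioms then hold \emph{a posteriori}, since the recovered cartesian structure supplies its canonical coassociative, cocommutative diagonal, which by the uniqueness just proved must coincide with the given $\delta$.)
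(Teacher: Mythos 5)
Your proposal is correct, but it takes a genuinely different route from the paper. The paper's proof never constructs the product directly: it observes that, since $\delta_{X} \colon X \to X \otimes X$ is itself required to be a comagma homomorphism, the equation $\delta_{X} ; (\delta_{X} \otimes \delta_{X}) = \delta_{X} ; \delta_{X \otimes X}$ combined with the uniformity law for $\delta_{X \otimes X}$ yields, after discarding the appropriate outputs with $\varepsilon_{X}$, both cocommutativity and coassociativity of $\delta_{X}$; the comagma is therefore automatically a cocommutative comonoid, and the classical Fox theorem (\Cref{th:fox}) is then invoked as a black box. You instead bypass the comonoid axioms entirely and re-run the hard direction of Fox's theorem from scratch: terminality of $I$ from counit-preservation, projections $\im \otimes \varepsilon$ and $\varepsilon \otimes \im$, pairing $\delta ; (f \otimes g)$, and uniqueness via the identity $\delta_{X \otimes Y} ; (\pi_{1} \otimes \pi_{2}) = \im$, which indeed needs only the two counit laws and the uniformity of $\delta$. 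Both arguments are sound (your reading that comagma homomorphisms preserve counits is the intended one, and is what the paper itself uses when it appeals to naturality of discarding). The paper's route is shorter given \Cref{th:fox} and isolates exactly which classical hypotheses are redundant by deriving them; yours is self-contained and makes transparent \emph{why} they are redundant — the product construction never consults coassociativity or cocommutativity — at the cost of duplicating the standard Fox argument. A minor presentational gap: you leave the verification of $\delta_{X \otimes Y};(\pi_{1} \otimes \pi_{2}) = \im_{X \otimes Y}$ as a diagram chase; for completeness one should note that pulling the two counits back through $\im \otimes \sigma_{X,Y} \otimes \im$ reduces it to $(\delta_{X};(\im \otimes \varepsilon_{X})) \otimes (\delta_{Y};(\varepsilon_{Y} \otimes \im))$, which is the identity by the two counit laws.
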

\begin{proof}
  We prove that such a comagma structure is necessarily coassociative and cocommutative.
  Note that any comagma homomorphism $f \colon A \to B$ must satisfy $\delta_{A} ; (f \tensor f) = f ; \delta_{B}$.
  In particular, $\delta_{X} \colon X \to X \tensor X$ must itself be a comagma homomorphism
  (see \Cref{figure:comultiplication}), meaning that
  \begin{equation}\label{eq:comultiplicationhomomorphism}
    \delta_{X} ; (\delta_X \tensor \delta_{X}) =
    \delta_{X} ; \delta_{X \tensor X} =
    \delta_{X} ; (\delta_X \tensor \delta_{X}) ; (\im \tensor \sigma_{X,Y} \tensor \im),
  \end{equation}
  where the second equality follows by uniformity.
\begin{figure}[h]
\tikzset{every picture/.style={line width=0.85pt}}
\begin{tikzpicture}[x=0.75pt,y=0.75pt,yscale=-1,xscale=1]
\draw    (25,65) .. controls (24.8,50.8) and (44.2,51.6) .. (45,65) ;
\draw  [fill={rgb, 255:red, 0; green, 0; blue, 0 }  ,fill opacity=1 ] (32.1,55) .. controls (32.1,53.4) and (33.4,52.1) .. (35,52.1) .. controls (36.6,52.1) and (37.9,53.4) .. (37.9,55) .. controls (37.9,56.6) and (36.6,57.9) .. (35,57.9) .. controls (33.4,57.9) and (32.1,56.6) .. (32.1,55) -- cycle ;
\draw    (35,45) -- (35,55) ;
\draw    (35,45) .. controls (34.8,30.8) and (65.2,30.8) .. (65,45) ;
\draw  [fill={rgb, 255:red, 0; green, 0; blue, 0 }  ,fill opacity=1 ] (47.1,35) .. controls (47.1,33.4) and (48.4,32.1) .. (50,32.1) .. controls (51.6,32.1) and (52.9,33.4) .. (52.9,35) .. controls (52.9,36.6) and (51.6,37.9) .. (50,37.9) .. controls (48.4,37.9) and (47.1,36.6) .. (47.1,35) -- cycle ;
\draw    (25,65) -- (25,80) ;
\draw    (45,65) -- (45,80) ;
\draw    (50,15) -- (50,35) ;
\draw    (55,65) .. controls (54.8,50.8) and (74.2,51.6) .. (75,65) ;
\draw  [fill={rgb, 255:red, 0; green, 0; blue, 0 }  ,fill opacity=1 ] (62.1,55) .. controls (62.1,53.4) and (63.4,52.1) .. (65,52.1) .. controls (66.6,52.1) and (67.9,53.4) .. (67.9,55) .. controls (67.9,56.6) and (66.6,57.9) .. (65,57.9) .. controls (63.4,57.9) and (62.1,56.6) .. (62.1,55) -- cycle ;
\draw    (65,45) -- (65,55) ;
\draw    (55,65) -- (55,80) ;
\draw    (75,65) -- (75,80) ;
\draw    (115,65) .. controls (114.8,50.8) and (135.4,53.8) .. (135,60) ;
\draw  [fill={rgb, 255:red, 0; green, 0; blue, 0 }  ,fill opacity=1 ] (122.1,55) .. controls (122.1,53.4) and (123.4,52.1) .. (125,52.1) .. controls (126.6,52.1) and (127.9,53.4) .. (127.9,55) .. controls (127.9,56.6) and (126.6,57.9) .. (125,57.9) .. controls (123.4,57.9) and (122.1,56.6) .. (122.1,55) -- cycle ;
\draw    (125,45) -- (125,55) ;
\draw    (125,45) .. controls (124.8,30.8) and (155.2,30.8) .. (155,45) ;
\draw  [fill={rgb, 255:red, 0; green, 0; blue, 0 }  ,fill opacity=1 ] (137.1,35) .. controls (137.1,33.4) and (138.4,32.1) .. (140,32.1) .. controls (141.6,32.1) and (142.9,33.4) .. (142.9,35) .. controls (142.9,36.6) and (141.6,37.9) .. (140,37.9) .. controls (138.4,37.9) and (137.1,36.6) .. (137.1,35) -- cycle ;
\draw    (115,70) -- (115,80) ;
\draw    (140,15) -- (140,35) ;
\draw    (145,60) .. controls (144.6,50.2) and (164.2,51.6) .. (165,65) ;
\draw  [fill={rgb, 255:red, 0; green, 0; blue, 0 }  ,fill opacity=1 ] (152.1,55) .. controls (152.1,53.4) and (153.4,52.1) .. (155,52.1) .. controls (156.6,52.1) and (157.9,53.4) .. (157.9,55) .. controls (157.9,56.6) and (156.6,57.9) .. (155,57.9) .. controls (153.4,57.9) and (152.1,56.6) .. (152.1,55) -- cycle ;
\draw    (155,45) -- (155,55) ;
\draw    (165,65) -- (165,75) ;
\draw    (115,65) -- (115,75) ;
\draw    (135,60) .. controls (135.4,74.2) and (144.6,65.8) .. (145,80) ;
\draw    (165,70) -- (165,80) ;
\draw    (145,60) .. controls (145.4,74.2) and (134.6,65.8) .. (135,80) ;
\draw (86,33.4) node [anchor=north west][inner sep=0.75pt]    {$=$};
\end{tikzpicture}
\caption{Comultiplication is a comagma homomorphism.}
\label{figure:comultiplication}
\end{figure}
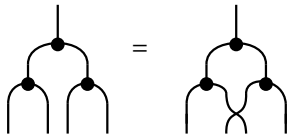

Now, we prove cocommutativity (\Cref{string:cocommutative}): composing both sides of \Cref{eq:comultiplicationhomomorphism} with $(\epsilon_{X} \tensor \im \tensor \im \tensor \epsilon_{X})$ discards the two external outputs and gives $\delta_{X} = \delta_{X} ; \sigma_{X}$.
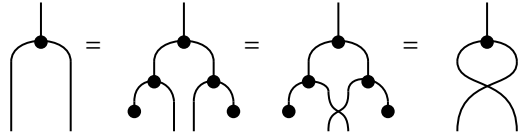
\begin{figure}[H]
\tikzset{every picture/.style={line width=0.75pt}} %
\begin{tikzpicture}[x=0.75pt,y=0.75pt,yscale=-1,xscale=1]
\draw    (72.1,65) .. controls (71.9,50.8) and (91.3,51.6) .. (92.1,65) ;
\draw  [fill={rgb, 255:red, 0; green, 0; blue, 0 }  ,fill opacity=1 ] (79.2,55) .. controls (79.2,53.4) and (80.5,52.1) .. (82.1,52.1) .. controls (83.7,52.1) and (85,53.4) .. (85,55) .. controls (85,56.6) and (83.7,57.9) .. (82.1,57.9) .. controls (80.5,57.9) and (79.2,56.6) .. (79.2,55) -- cycle ;
\draw    (82.1,45) -- (82.1,55) ;
\draw    (82.1,45) .. controls (81.9,30.8) and (112.3,30.8) .. (112.1,45) ;
\draw  [fill={rgb, 255:red, 0; green, 0; blue, 0 }  ,fill opacity=1 ] (94.2,35) .. controls (94.2,33.4) and (95.5,32.1) .. (97.1,32.1) .. controls (98.7,32.1) and (100,33.4) .. (100,35) .. controls (100,36.6) and (98.7,37.9) .. (97.1,37.9) .. controls (95.5,37.9) and (94.2,36.6) .. (94.2,35) -- cycle ;
\draw    (92.1,65) -- (92.1,80) ;
\draw    (97.1,15) -- (97.1,35) ;
\draw    (102.1,65) .. controls (101.9,50.8) and (121.3,51.6) .. (122.1,65) ;
\draw  [fill={rgb, 255:red, 0; green, 0; blue, 0 }  ,fill opacity=1 ] (109.2,55) .. controls (109.2,53.4) and (110.5,52.1) .. (112.1,52.1) .. controls (113.7,52.1) and (115,53.4) .. (115,55) .. controls (115,56.6) and (113.7,57.9) .. (112.1,57.9) .. controls (110.5,57.9) and (109.2,56.6) .. (109.2,55) -- cycle ;
\draw    (112.1,45) -- (112.1,55) ;
\draw    (102.1,65) -- (102.1,80) ;
\draw    (122.1,65) -- (122.1,70) ;
\draw    (150,65) .. controls (149.8,50.8) and (170.4,53.8) .. (170,60) ;
\draw  [fill={rgb, 255:red, 0; green, 0; blue, 0 }  ,fill opacity=1 ] (157.1,55) .. controls (157.1,53.4) and (158.4,52.1) .. (160,52.1) .. controls (161.6,52.1) and (162.9,53.4) .. (162.9,55) .. controls (162.9,56.6) and (161.6,57.9) .. (160,57.9) .. controls (158.4,57.9) and (157.1,56.6) .. (157.1,55) -- cycle ;
\draw    (160,45) -- (160,55) ;
\draw    (160,45) .. controls (159.8,30.8) and (190.2,30.8) .. (190,45) ;
\draw  [fill={rgb, 255:red, 0; green, 0; blue, 0 }  ,fill opacity=1 ] (172.1,35) .. controls (172.1,33.4) and (173.4,32.1) .. (175,32.1) .. controls (176.6,32.1) and (177.9,33.4) .. (177.9,35) .. controls (177.9,36.6) and (176.6,37.9) .. (175,37.9) .. controls (173.4,37.9) and (172.1,36.6) .. (172.1,35) -- cycle ;
\draw    (175,15) -- (175,35) ;
\draw    (180,60) .. controls (179.6,50.2) and (199.2,51.6) .. (200,65) ;
\draw  [fill={rgb, 255:red, 0; green, 0; blue, 0 }  ,fill opacity=1 ] (187.1,55) .. controls (187.1,53.4) and (188.4,52.1) .. (190,52.1) .. controls (191.6,52.1) and (192.9,53.4) .. (192.9,55) .. controls (192.9,56.6) and (191.6,57.9) .. (190,57.9) .. controls (188.4,57.9) and (187.1,56.6) .. (187.1,55) -- cycle ;
\draw    (190,45) -- (190,55) ;
\draw    (200,65) -- (200,70) ;
\draw    (150,65) -- (150,70) ;
\draw    (170,60) .. controls (170.4,74.2) and (179.6,65.8) .. (180,80) ;
\draw    (180,60) .. controls (180.4,74.2) and (169.6,65.8) .. (170,80) ;
\draw  [fill={rgb, 255:red, 0; green, 0; blue, 0 }  ,fill opacity=1 ] (119.2,70) .. controls (119.2,68.4) and (120.5,67.1) .. (122.1,67.1) .. controls (123.7,67.1) and (125,68.4) .. (125,70) .. controls (125,71.6) and (123.7,72.9) .. (122.1,72.9) .. controls (120.5,72.9) and (119.2,71.6) .. (119.2,70) -- cycle ;
\draw    (72.1,65) -- (72.1,70) ;
\draw  [fill={rgb, 255:red, 0; green, 0; blue, 0 }  ,fill opacity=1 ] (69.2,70) .. controls (69.2,68.4) and (70.5,67.1) .. (72.1,67.1) .. controls (73.7,67.1) and (75,68.4) .. (75,70) .. controls (75,71.6) and (73.7,72.9) .. (72.1,72.9) .. controls (70.5,72.9) and (69.2,71.6) .. (69.2,70) -- cycle ;
\draw    (10,45) -- (10,80) ;
\draw    (10,45) .. controls (9.8,30.8) and (40.2,30.8) .. (40,45) ;
\draw  [fill={rgb, 255:red, 0; green, 0; blue, 0 }  ,fill opacity=1 ] (22.1,35) .. controls (22.1,33.4) and (23.4,32.1) .. (25,32.1) .. controls (26.6,32.1) and (27.9,33.4) .. (27.9,35) .. controls (27.9,36.6) and (26.6,37.9) .. (25,37.9) .. controls (23.4,37.9) and (22.1,36.6) .. (22.1,35) -- cycle ;
\draw    (25,15) -- (25,35) ;
\draw    (40,45) -- (40,80) ;
\draw  [fill={rgb, 255:red, 0; green, 0; blue, 0 }  ,fill opacity=1 ] (147.1,70) .. controls (147.1,68.4) and (148.4,67.1) .. (150,67.1) .. controls (151.6,67.1) and (152.9,68.4) .. (152.9,70) .. controls (152.9,71.6) and (151.6,72.9) .. (150,72.9) .. controls (148.4,72.9) and (147.1,71.6) .. (147.1,70) -- cycle ;
\draw  [fill={rgb, 255:red, 0; green, 0; blue, 0 }  ,fill opacity=1 ] (197.1,70) .. controls (197.1,68.4) and (198.4,67.1) .. (200,67.1) .. controls (201.6,67.1) and (202.9,68.4) .. (202.9,70) .. controls (202.9,71.6) and (201.6,72.9) .. (200,72.9) .. controls (198.4,72.9) and (197.1,71.6) .. (197.1,70) -- cycle ;
\draw    (235,45) .. controls (234.8,30.8) and (265.2,30.8) .. (265,45) ;
\draw  [fill={rgb, 255:red, 0; green, 0; blue, 0 }  ,fill opacity=1 ] (247.1,35) .. controls (247.1,33.4) and (248.4,32.1) .. (250,32.1) .. controls (251.6,32.1) and (252.9,33.4) .. (252.9,35) .. controls (252.9,36.6) and (251.6,37.9) .. (250,37.9) .. controls (248.4,37.9) and (247.1,36.6) .. (247.1,35) -- cycle ;
\draw    (250,15) -- (250,35) ;
\draw    (265,45) .. controls (265.4,59.2) and (235,51.75) .. (235,80) ;
\draw    (235,45) .. controls (235.4,59.2) and (265,51.75) .. (265,80) ;
\draw (126,33.4) node [anchor=north west][inner sep=0.75pt]    {$=$};
\draw (46,33.4) node [anchor=north west][inner sep=0.75pt]    {$=$};
\draw (206,33.4) node [anchor=north west][inner sep=0.75pt]    {$=$};
\end{tikzpicture}
\caption{Cocommutativity}
\label{string:cocommutative}
\end{figure}

Now, we prove coassociativity (\Cref{string:coassociativity}): composing both sides of \Cref{eq:comultiplicationhomomorphism} with $(\im \tensor \epsilon_{X} \tensor \im \tensor \im)$ discards one of the middle outputs and gives $\delta_{X} ; (\im \tensor \delta_{X}) = \delta_{X} ; (\delta_{X} \tensor \im)$.
\begin{figure}
\tikzset{every picture/.style={line width=0.75pt}} %

\begin{tikzpicture}[x=0.75pt,y=0.75pt,yscale=-1,xscale=1]
\draw    (80,65) .. controls (79.8,50.8) and (99.2,51.6) .. (100,65) ;
\draw  [fill={rgb, 255:red, 0; green, 0; blue, 0 }  ,fill opacity=1 ] (87.1,55) .. controls (87.1,53.4) and (88.4,52.1) .. (90,52.1) .. controls (91.6,52.1) and (92.9,53.4) .. (92.9,55) .. controls (92.9,56.6) and (91.6,57.9) .. (90,57.9) .. controls (88.4,57.9) and (87.1,56.6) .. (87.1,55) -- cycle ;
\draw    (90,45) -- (90,55) ;
\draw    (90,45) .. controls (89.8,30.8) and (120.2,30.8) .. (120,45) ;
\draw  [fill={rgb, 255:red, 0; green, 0; blue, 0 }  ,fill opacity=1 ] (102.1,35) .. controls (102.1,33.4) and (103.4,32.1) .. (105,32.1) .. controls (106.6,32.1) and (107.9,33.4) .. (107.9,35) .. controls (107.9,36.6) and (106.6,37.9) .. (105,37.9) .. controls (103.4,37.9) and (102.1,36.6) .. (102.1,35) -- cycle ;
\draw    (80,65) -- (80,80) ;
\draw    (100,65) -- (100,80) ;
\draw    (105,15) -- (105,35) ;
\draw    (110,65) .. controls (109.8,50.8) and (129.2,51.6) .. (130,65) ;
\draw  [fill={rgb, 255:red, 0; green, 0; blue, 0 }  ,fill opacity=1 ] (117.1,55) .. controls (117.1,53.4) and (118.4,52.1) .. (120,52.1) .. controls (121.6,52.1) and (122.9,53.4) .. (122.9,55) .. controls (122.9,56.6) and (121.6,57.9) .. (120,57.9) .. controls (118.4,57.9) and (117.1,56.6) .. (117.1,55) -- cycle ;
\draw    (120,45) -- (120,55) ;
\draw    (130,65) -- (130,80) ;
\draw    (110,65) -- (110,70) ;
\draw    (160,65) .. controls (159.8,50.8) and (180.4,53.8) .. (180,60) ;
\draw  [fill={rgb, 255:red, 0; green, 0; blue, 0 }  ,fill opacity=1 ] (167.1,55) .. controls (167.1,53.4) and (168.4,52.1) .. (170,52.1) .. controls (171.6,52.1) and (172.9,53.4) .. (172.9,55) .. controls (172.9,56.6) and (171.6,57.9) .. (170,57.9) .. controls (168.4,57.9) and (167.1,56.6) .. (167.1,55) -- cycle ;
\draw    (170,45) -- (170,55) ;
\draw    (170,45) .. controls (169.8,30.8) and (200.2,30.8) .. (200,45) ;
\draw  [fill={rgb, 255:red, 0; green, 0; blue, 0 }  ,fill opacity=1 ] (182.1,35) .. controls (182.1,33.4) and (183.4,32.1) .. (185,32.1) .. controls (186.6,32.1) and (187.9,33.4) .. (187.9,35) .. controls (187.9,36.6) and (186.6,37.9) .. (185,37.9) .. controls (183.4,37.9) and (182.1,36.6) .. (182.1,35) -- cycle ;
\draw    (160,70) -- (160,80) ;
\draw    (185,15) -- (185,35) ;
\draw    (190,60) .. controls (189.6,50.2) and (209.2,51.6) .. (210,65) ;
\draw  [fill={rgb, 255:red, 0; green, 0; blue, 0 }  ,fill opacity=1 ] (197.1,55) .. controls (197.1,53.4) and (198.4,52.1) .. (200,52.1) .. controls (201.6,52.1) and (202.9,53.4) .. (202.9,55) .. controls (202.9,56.6) and (201.6,57.9) .. (200,57.9) .. controls (198.4,57.9) and (197.1,56.6) .. (197.1,55) -- cycle ;
\draw    (200,45) -- (200,55) ;
\draw    (210,65) -- (210,80) ;
\draw    (160,65) -- (160,75) ;
\draw    (180,60) .. controls (180.4,74.2) and (190,66) .. (190,75) ;
\draw    (190,60) .. controls (190.4,74.2) and (179.6,65.8) .. (180,80) ;
\draw    (15,65) .. controls (14.8,50.8) and (34.2,51.6) .. (35,65) ;
\draw  [fill={rgb, 255:red, 0; green, 0; blue, 0 }  ,fill opacity=1 ] (22.1,55) .. controls (22.1,53.4) and (23.4,52.1) .. (25,52.1) .. controls (26.6,52.1) and (27.9,53.4) .. (27.9,55) .. controls (27.9,56.6) and (26.6,57.9) .. (25,57.9) .. controls (23.4,57.9) and (22.1,56.6) .. (22.1,55) -- cycle ;
\draw    (25,45) -- (25,55) ;
\draw    (25,45) .. controls (24.8,30.8) and (55.2,30.8) .. (55,45) ;
\draw  [fill={rgb, 255:red, 0; green, 0; blue, 0 }  ,fill opacity=1 ] (37.1,35) .. controls (37.1,33.4) and (38.4,32.1) .. (40,32.1) .. controls (41.6,32.1) and (42.9,33.4) .. (42.9,35) .. controls (42.9,36.6) and (41.6,37.9) .. (40,37.9) .. controls (38.4,37.9) and (37.1,36.6) .. (37.1,35) -- cycle ;
\draw    (15,65) -- (15,80) ;
\draw    (35,65) -- (35,80) ;
\draw    (40,15) -- (40,35) ;
\draw    (55,45) -- (55,80) ;
\draw  [fill={rgb, 255:red, 0; green, 0; blue, 0 }  ,fill opacity=1 ] (107.1,70) .. controls (107.1,68.4) and (108.4,67.1) .. (110,67.1) .. controls (111.6,67.1) and (112.9,68.4) .. (112.9,70) .. controls (112.9,71.6) and (111.6,72.9) .. (110,72.9) .. controls (108.4,72.9) and (107.1,71.6) .. (107.1,70) -- cycle ;
\draw  [fill={rgb, 255:red, 0; green, 0; blue, 0 }  ,fill opacity=1 ] (187.1,75) .. controls (187.1,73.4) and (188.4,72.1) .. (190,72.1) .. controls (191.6,72.1) and (192.9,73.4) .. (192.9,75) .. controls (192.9,76.6) and (191.6,77.9) .. (190,77.9) .. controls (188.4,77.9) and (187.1,76.6) .. (187.1,75) -- cycle ;
\draw    (240,45) -- (240,80) ;
\draw    (240,45) .. controls (239.8,30.8) and (270.2,30.8) .. (270,45) ;
\draw  [fill={rgb, 255:red, 0; green, 0; blue, 0 }  ,fill opacity=1 ] (252.1,35) .. controls (252.1,33.4) and (253.4,32.1) .. (255,32.1) .. controls (256.6,32.1) and (257.9,33.4) .. (257.9,35) .. controls (257.9,36.6) and (256.6,37.9) .. (255,37.9) .. controls (253.4,37.9) and (252.1,36.6) .. (252.1,35) -- cycle ;
\draw    (255,15) -- (255,35) ;
\draw    (260,65) .. controls (259.8,50.8) and (279.2,51.6) .. (280,65) ;
\draw  [fill={rgb, 255:red, 0; green, 0; blue, 0 }  ,fill opacity=1 ] (267.1,55) .. controls (267.1,53.4) and (268.4,52.1) .. (270,52.1) .. controls (271.6,52.1) and (272.9,53.4) .. (272.9,55) .. controls (272.9,56.6) and (271.6,57.9) .. (270,57.9) .. controls (268.4,57.9) and (267.1,56.6) .. (267.1,55) -- cycle ;
\draw    (270,45) -- (270,55) ;
\draw    (260,65) -- (260,80) ;
\draw    (280,65) -- (280,80) ;
\draw (140,33.4) node [anchor=north west][inner sep=0.75pt]    {$=$};
\draw (60,33.4) node [anchor=north west][inner sep=0.75pt]    {$=$};
\draw (211,33.4) node [anchor=north west][inner sep=0.75pt]    {$=$};
\end{tikzpicture}
\caption{Coassociativity}
\label{string:coassociativity}
\end{figure}
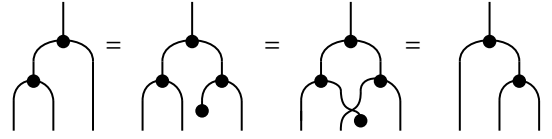

A coassociative and cocommutative comagma is a cocommutative comonoid.
We can then apply the classical form of Fox's theorem (\Cref{th:fox}).
\end{proof}

\paragraph{Distributive laws}
One could hope to add effects such as probability or non-determinism to set-based streams via the bikleisli category arising from a monad-comonad distributive law $\neList \circ \stream{T} \Rightarrow \stream{T} \circ \neList$ \cite{power99,beck69}, as proposed by Uustalu and Vene~\cite{uustalu05}.
This would correspond to a lifting of the $\neList$ comonad to the kleisli category of some commutative monad $T$; the arrows $\stream{X} \to \stream{Y}$ of such a category would look as follows,
\[f_{n} \colon X_{1} \times \dots \times X_{n} \to TY_{n}.\]
However, we have already shown that this will not result in a monoidal comonad whenever $\kleisli{T}$ is not cartesian.
To see explicitly what fails, we use the string diagrams to show how composition should work for the case $n = 2$ (\Cref{string:klcomposition}). This composition is not associative or unital whenever the kleisli category does not have natural comultiplications or counits, respectively (\Cref{string:klunitality,string:klassoc}).
\begin{figure}[h]
\tikzset{every picture/.style={line width=0.85pt}} %
\begin{tikzpicture}[x=0.75pt,y=0.75pt,yscale=-1,xscale=1]
\draw    (325,70) -- (325,78.87) ;
\draw    (310,90) .. controls (309.8,75.8) and (340.2,75.8) .. (340,90) ;
\draw  [fill={rgb, 255:red, 0; green, 0; blue, 0 }  ,fill opacity=1 ] (322.1,78.87) .. controls (322.1,77.27) and (323.4,75.97) .. (325,75.97) .. controls (326.6,75.97) and (327.9,77.27) .. (327.9,78.87) .. controls (327.9,80.47) and (326.6,81.77) .. (325,81.77) .. controls (323.4,81.77) and (322.1,80.47) .. (322.1,78.87) -- cycle ;
\draw   (340,120) -- (380,120) -- (380,140) -- (340,140) -- cycle ;
\draw   (315,160) -- (355,160) -- (355,180) -- (315,180) -- cycle ;
\draw    (360,70) .. controls (360.14,82.14) and (370,107.83) .. (370,120) ;
\draw   (300,90) -- (320,90) -- (320,110) -- (300,110) -- cycle ;
\draw    (340,90) .. controls (340.14,102.14) and (350,107.83) .. (350,120) ;
\draw    (335,180) -- (335,195) ;
\draw    (360,140) .. controls (360.14,152.14) and (345,147.83) .. (345,160) ;
\draw    (310,110) .. controls (310.2,135) and (324.6,139.8) .. (325,160) ;
\draw (325,66.6) node [anchor=south] [inner sep=0.75pt]    {$X_{0}$};
\draw (360,130) node  [font=\footnotesize]  {$f_{1}$};
\draw (310,100) node  [font=\footnotesize]  {$f_{0}$};
\draw (335,170) node  [font=\footnotesize]  {$g_{1}$};
\draw (360,66.6) node [anchor=south] [inner sep=0.75pt]    {$X_{1}$};
\draw (335,198.4) node [anchor=north] [inner sep=0.75pt]    {$Y_{1}$};
\end{tikzpicture}
\caption{Composition in the case $n=2$.}
\label{string:klcomposition}
\end{figure}
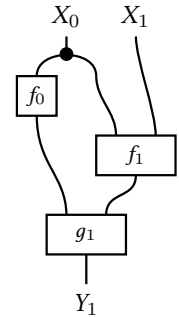
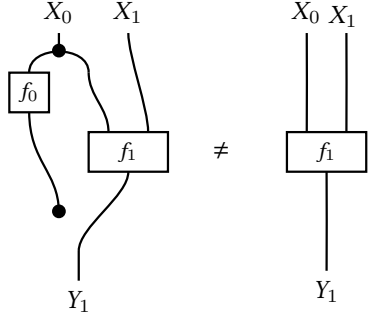
\begin{figure}[h]
\tikzset{every picture/.style={line width=0.85pt}} %
\begin{tikzpicture}[x=0.75pt,y=0.75pt,yscale=-1,xscale=1]
\draw    (325,70) -- (325,78.87) ;
\draw    (310,90) .. controls (309.8,75.8) and (340.2,75.8) .. (340,90) ;
\draw  [fill={rgb, 255:red, 0; green, 0; blue, 0 }  ,fill opacity=1 ] (322.1,78.87) .. controls (322.1,77.27) and (323.4,75.97) .. (325,75.97) .. controls (326.6,75.97) and (327.9,77.27) .. (327.9,78.87) .. controls (327.9,80.47) and (326.6,81.77) .. (325,81.77) .. controls (323.4,81.77) and (322.1,80.47) .. (322.1,78.87) -- cycle ;
\draw   (340,120) -- (380,120) -- (380,140) -- (340,140) -- cycle ;
\draw    (360,70) .. controls (360.14,82.14) and (370,107.83) .. (370,120) ;
\draw   (300,90) -- (320,90) -- (320,110) -- (300,110) -- cycle ;
\draw    (340,90) .. controls (340.14,102.14) and (350,107.83) .. (350,120) ;
\draw    (335,180) -- (335,195) ;
\draw    (360,140) .. controls (360.14,152.14) and (335,167.83) .. (335,180) ;
\draw    (310,110) .. controls (310.2,135) and (324.6,139.8) .. (325,160) ;
\draw  [fill={rgb, 255:red, 0; green, 0; blue, 0 }  ,fill opacity=1 ] (322.1,160) .. controls (322.1,158.4) and (323.4,157.1) .. (325,157.1) .. controls (326.6,157.1) and (327.9,158.4) .. (327.9,160) .. controls (327.9,161.6) and (326.6,162.9) .. (325,162.9) .. controls (323.4,162.9) and (322.1,161.6) .. (322.1,160) -- cycle ;
\draw   (440,120) -- (480,120) -- (480,140) -- (440,140) -- cycle ;
\draw    (470,70) .. controls (470.14,82.14) and (470,107.83) .. (470,120) ;
\draw    (450,70) .. controls (450.14,82.14) and (450,107.83) .. (450,120) ;
\draw    (460,140) .. controls (460.14,152.14) and (460,177.83) .. (460,190) ;
\draw (325,66.6) node [anchor=south] [inner sep=0.75pt]    {$X_{0}$};
\draw (360,130) node  [font=\footnotesize]  {$f_{1}$};
\draw (310,100) node  [font=\footnotesize]  {$f_{0}$};
\draw (360,66.6) node [anchor=south] [inner sep=0.75pt]    {$X_{1}$};
\draw (335,198.4) node [anchor=north] [inner sep=0.75pt]    {$Y_{1}$};
\draw (460,130) node  [font=\footnotesize]  {$f_{1}$};
\draw (401,123.4) node [anchor=north west][inner sep=0.75pt]    {$\neq $};
\draw (450,66.6) node [anchor=south] [inner sep=0.75pt]    {$X_{0}$};
\draw (468.5,68.6) node [anchor=south] [inner sep=0.75pt]    {$X_{1}$};
\draw (460,193.4) node [anchor=north] [inner sep=0.75pt]    {$Y_{1}$};
\end{tikzpicture}
\caption{Failure of unitality if discarding is not natural, as it happens, for instance, with partial functions.}
\label{string:klunitality}
\end{figure}

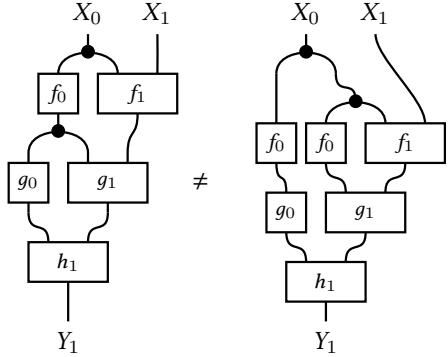
\begin{figure}
\tikzset{every picture/.style={line width=0.85pt}} %
\begin{tikzpicture}[x=0.75pt,y=0.75pt,yscale=-1,xscale=1]
\draw   (330,90) -- (370,90) -- (370,110) -- (330,110) -- cycle ;
\draw   (315,135) -- (355,135) -- (355,155) -- (315,155) -- cycle ;
\draw    (360,70) .. controls (360.14,82.14) and (360,77.83) .. (360,90) ;
\draw   (300,90) -- (320,90) -- (320,110) -- (300,110) -- cycle ;
\draw    (315,195) -- (315,215) ;
\draw    (350,110) .. controls (350.14,122.14) and (345,117.83) .. (345,130) ;
\draw    (325,70) -- (325,78.87) ;
\draw    (310,90) .. controls (309.8,75.8) and (340.2,75.8) .. (340,90) ;
\draw  [fill={rgb, 255:red, 0; green, 0; blue, 0 }  ,fill opacity=1 ] (322.1,78.87) .. controls (322.1,77.27) and (323.4,75.97) .. (325,75.97) .. controls (326.6,75.97) and (327.9,77.27) .. (327.9,78.87) .. controls (327.9,80.47) and (326.6,81.77) .. (325,81.77) .. controls (323.4,81.77) and (322.1,80.47) .. (322.1,78.87) -- cycle ;
\draw    (310,110) -- (310,118.87) ;
\draw    (295,130) .. controls (294.8,115.8) and (325.2,115.8) .. (325,130) ;
\draw  [fill={rgb, 255:red, 0; green, 0; blue, 0 }  ,fill opacity=1 ] (307.1,118.87) .. controls (307.1,117.27) and (308.4,115.97) .. (310,115.97) .. controls (311.6,115.97) and (312.9,117.27) .. (312.9,118.87) .. controls (312.9,120.47) and (311.6,121.77) .. (310,121.77) .. controls (308.4,121.77) and (307.1,120.47) .. (307.1,118.87) -- cycle ;
\draw   (285,135) -- (305,135) -- (305,155) -- (285,155) -- cycle ;
\draw   (295,175) -- (335,175) -- (335,195) -- (295,195) -- cycle ;
\draw    (295,160) .. controls (295.14,172.14) and (305,162.83) .. (305,175) ;
\draw    (335,160) .. controls (335.14,172.14) and (325,162.83) .. (325,175) ;
\draw    (435,70) -- (435,78.87) ;
\draw    (420,90) .. controls (419.8,75.8) and (450.2,75.8) .. (450,90) ;
\draw  [fill={rgb, 255:red, 0; green, 0; blue, 0 }  ,fill opacity=1 ] (432.1,78.87) .. controls (432.1,77.27) and (433.4,75.97) .. (435,75.97) .. controls (436.6,75.97) and (437.9,77.27) .. (437.9,78.87) .. controls (437.9,80.47) and (436.6,81.77) .. (435,81.77) .. controls (433.4,81.77) and (432.1,80.47) .. (432.1,78.87) -- cycle ;
\draw    (445,115) .. controls (444.8,100.8) and (475.2,100.8) .. (475,115) ;
\draw  [fill={rgb, 255:red, 0; green, 0; blue, 0 }  ,fill opacity=1 ] (457.1,103.87) .. controls (457.1,102.27) and (458.4,100.97) .. (460,100.97) .. controls (461.6,100.97) and (462.9,102.27) .. (462.9,103.87) .. controls (462.9,105.47) and (461.6,106.77) .. (460,106.77) .. controls (458.4,106.77) and (457.1,105.47) .. (457.1,103.87) -- cycle ;
\draw   (410,115) -- (430,115) -- (430,135) -- (410,135) -- cycle ;
\draw   (415,150) -- (435,150) -- (435,170) -- (415,170) -- cycle ;
\draw   (435,115) -- (455,115) -- (455,135) -- (435,135) -- cycle ;
\draw    (450,90) .. controls (450.14,102.14) and (460,91.7) .. (460,103.87) ;
\draw    (470,70) .. controls (470.14,82.14) and (495,102.83) .. (495,115) ;
\draw   (465,115) -- (505,115) -- (505,135) -- (465,135) -- cycle ;
\draw   (445,150) -- (485,150) -- (485,170) -- (445,170) -- cycle ;
\draw    (445,135) .. controls (445.14,147.14) and (455,137.83) .. (455,150) ;
\draw    (485,135) .. controls (485.14,147.14) and (475,137.83) .. (475,150) ;
\draw    (420,90) -- (420,115) ;
\draw    (420,135) .. controls (420.14,147.14) and (425,137.83) .. (425,150) ;
\draw   (425,185) -- (465,185) -- (465,205) -- (425,205) -- cycle ;
\draw    (425,170) .. controls (425.14,182.14) and (435,172.83) .. (435,185) ;
\draw    (465,170) .. controls (465.14,182.14) and (455,172.83) .. (455,185) ;
\draw    (445,205) -- (445,215) ;
\draw    (295,130) -- (295,135) ;
\draw    (325,130) -- (325,135) ;
\draw    (345,130) -- (345,135) ;
\draw    (335,155) -- (335,160) ;
\draw    (295,155) -- (295,160) ;

\draw (325,66.6) node [anchor=south] [inner sep=0.75pt]    {$X_{0}$};
\draw (350,100) node  [font=\footnotesize]  {$f_{1}$};
\draw (310,100) node  [font=\footnotesize]  {$f_{0}$};
\draw (335,145) node  [font=\footnotesize]  {$g_{1}$};
\draw (360,66.6) node [anchor=south] [inner sep=0.75pt]    {$X_{1}$};
\draw (315,218.4) node [anchor=north] [inner sep=0.75pt]    {$Y_{1}$};
\draw (295,145) node  [font=\footnotesize]  {$g_{0}$};
\draw (315,185) node  [font=\footnotesize]  {$h_{1}$};
\draw (420,125) node  [font=\footnotesize]  {$f_{0}$};
\draw (425,160) node  [font=\footnotesize]  {$g_{0}$};
\draw (445,125) node  [font=\footnotesize]  {$f_{0}$};
\draw (485,125) node  [font=\footnotesize]  {$f_{1}$};
\draw (465,160) node  [font=\footnotesize]  {$g_{1}$};
\draw (445,195) node  [font=\footnotesize]  {$h_{1}$};
\draw (445,218.4) node [anchor=north] [inner sep=0.75pt]    {$Y_{1}$};
\draw (435,66.6) node [anchor=south] [inner sep=0.75pt]    {$X_{0}$};
\draw (470,66.6) node [anchor=south] [inner sep=0.75pt]    {$X_{1}$};
\draw (376,138.4) node [anchor=north west][inner sep=0.75pt]    {$\neq $};

\end{tikzpicture}
\caption{Failure of associativity if copying is not natural, as it happens, for instance, with stochastic functions.}
\label{string:klassoc}
\end{figure}

\section{Productive categories}

\begin{definition}[\defining{linktruncatingcoherently}{Truncating coherently}]\label{def:truncatingcoherently}
  Let $f_{n}^{k} \colon X_{n} \tensor M_{n-1} \to Y_{n} \tensor M_{n}$ be a family
  of families of morphisms of increasing length, indexed by $k \in \naturals$ and $n \leq k$.
  We say that this family \emph{truncates coherently} if
  $\bra{f_{0}^{p}|\dots|f_{n}^{p}} = \bra{f_{0}^{q}|\dots|f_{n}^{q}}$
  for each $p, q \in \naturals$ and each $n \leq \min\{p,q\}$.
\end{definition}

\begin{lemma}[Factoring a family of processes]\label{lemma:familyfactoring}
  In a \productive{} category, let
  $(\bra{f_{0}^{k}|\dots|f_{k}^{k}})_{k \in \naturals}$ be a sequence
  of sequences that \truncatesCoherently{}. Then, there exists a sequence
  $h_{i}$ with $s^{k}_{i-1}f^{k}_{i} = h_{i}s^{k}_{i}$ such that, for each $k \in \naturals$ and each $n \leq k$,
  $\bra{f_{0}^{k}|\dots|f_{n}^{k}} = \bra{h_{0}|\dots|h_{n}}$.
  Moreover, this family $h_{i}$ is such that
  $\bra{h_{0}\dots h_{n}s^{p}_{n}u} = \bra{h_{0}\dots h_{n}s^{q}_{n}v}$
  implies $\bra{s^{p}_{n}u} = \bra{s^{q}_{n}v}$.
\end{lemma}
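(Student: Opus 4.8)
The plan is to construct the family $h_i$ together with the translations $s_i^k$ by induction on the length, feeding one component at a time into productivity and carrying the cancellation (``moreover'') clause along as a strengthened induction hypothesis. First I would handle the base case. Since the sequence truncates coherently (\Cref{def:truncatingcoherently}), the $0$-stage truncations $\bra{f_0^k}$ all agree, so they determine a single $1$-stage process $\alpha \in \Stage{1}(\stream{X},\stream{Y})$ independent of $k$. As $\catC$ is \productive{} (\Cref{def:productive}), $\alpha$ is terminating relative to $\catC$: there is a canonical first action $h_0 \colon X_0 \to M_0 \tensor Y_0$ and, for each representative $f_0^k$, a translation $s_0^k$ with $f_0^k = h_0 ; (s_0^k \tensor \im)$. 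The cancellation property that is part of termination is precisely the stage-$0$ form of the ``moreover'' clause, so taking $s_{-1}^k = \im$ settles the base case.

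For the inductive step I assume $h_0,\dots,h_n$ and the $s_i^k$ for $i \le n$ satisfying (i) $s_{i-1}^k f_i^k = h_i s_i^k$, (ii) $\bra{f_0^k|\dots|f_n^k} = \bra{h_0|\dots|h_n}$, and (iii) the stage-$n$ cancellation clause. I would pull the next components back along the translations, setting $g^k \defn (s_n^k \tensor \im) ; f_{n+1}^k$. Sliding each $s_i^k$ past the bars (\Cref{remark:stagenotation}) and using (i) repeatedly rewrites the full truncation as $\bra{f_0^k|\dots|f_{n+1}^k} = \bra{h_0|\dots|h_n|g^k}$. Coherence makes the left-hand side independent of $k$, so $\bra{h_0|\dots|h_n|g^p} = \bra{h_0|\dots|h_n|g^q}$ for all $p,q \ge n+1$; applying the cancellation clause (iii) with the continuations $f_{n+1}^p, f_{n+1}^q$ strips off the common prefix $h_0\dots h_n$ and yields $\bra{g^p} = \bra{g^q}$. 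Thus the $g^k$ assemble into a single $1$-stage process, to which I apply productivity once more, obtaining a canonical $h_{n+1}$ and fresh translations $s_{n+1}^k$ with $g^k = h_{n+1} ; (s_{n+1}^k \tensor \im)$. This is exactly relation (i) at $i = n+1$, and combined with (ii) it gives $\bra{f_0^k|\dots|f_{n+1}^k} = \bra{h_0|\dots|h_{n+1}}$.

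The delicate point, and the main obstacle, is re-establishing the cancellation clause (iii) at stage $n+1$. Given $\bra{h_0\dots h_{n+1} s_{n+1}^p u} = \bra{h_0\dots h_{n+1} s_{n+1}^q v}$, I would substitute $h_{n+1} ; (s_{n+1}^k \tensor \im) = g^k = (s_n^k \tensor \im) ; f_{n+1}^k$ to rewrite both sides in the form $\bra{h_0\dots h_n\, s_n^k\, (f_{n+1}^k \text{-continuation})}$, invoke the stage-$n$ clause (iii) to cancel $h_0\dots h_n$, and then invoke the freshly obtained termination cancellation for $\bra{h_{n+1}}$ to cancel $h_{n+1}$, arriving at $\bra{s_{n+1}^p u} = \bra{s_{n+1}^q v}$. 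The real work is the bookkeeping here: threading the memories through the composite $h_0;\dots;h_n$, keeping track of the symmetries, and checking that every rewrite is legitimate up to dinaturality. The two successive cancellations (stage-$n$ then single-step) are exactly what let the strengthened hypothesis propagate, which is why the cancellation statement had to be bundled into the lemma rather than derived afterwards.
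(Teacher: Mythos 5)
Your proposal is correct and follows essentially the same route as the paper's proof: induction on the stage, applying productivity to the (coherently agreeing) next component at each step, bundling the cancellation clause into the induction hypothesis, and establishing it at the next stage by two successive cancellations (first the stage-$n$ clause, then the termination property of the freshly constructed $h_{n+1}$). The only presentational difference is that the paper first composes the truncations ``along the bars'' into a single $1$-stage process before cancelling, a step you leave implicit when passing from $\bra{h_0|\dots|h_n|g^p} = \bra{h_0|\dots|h_n|g^q}$ to the hypothesis of the cancellation clause.
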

\begin{proof}
  We construct the family by induction. In the case $n=0$, we use that
  the family \truncatesCoherently{} to have that $\bra{f^{p}_{0}} = \bra{f^{q}_{0}}$
  and thus, by \productivity{}, create an $h_{0}$ with $f^{k}_{0} = h_{0}s^{k}_{0}$
  such that $\bra{h_{0}s^{p}_{0}u} = \bra{h_{0}s^{q}_{0}v}$ implies $\bra{s^{p}_{0}u} = \bra{s^{q}_{0}v}$.

  In the general case, assume we already have constructed $h_{0},\dots,h_{n-1}$
  with $s^{k}_{i-1}f^{k}_{i} = h_{i}s^{k}_{i}$ such that, for each $k \in \naturals$ and
  $\bra{f_{0}^{k}|\dots|f_{n-1}^{k}} = \bra{h_{0}|\dots|h_{n-1}}$. Moreover,
  $\bra{h_{0}\dots h_{n-1}s^{p}_{n-1}u} = \bra{h_{0}\dots h_{n-1}s^{q}_{n-1}v}$
  implies $\bra{s^{p}_{n-1}u} = \bra{s^{q}_{n-1}v}$.

  In this case, we use the fact that composition ``along a bar'' is dinatural:
  $\bra{f_{0}^p|\dots|f_{n}^{p}} = \bra{f_{0}^{q}|\dots|f_{n}^{q}}$ implies that
  $\bra{\tid{f_{0}^p\dots f_{n}^{p}}} = \bra{\tid{f_{0}^q\dots f_{n}^{q}}}$. This can be then
  rewritten as
  $$\bra{h_{0}\dots h_{n-1}s^{p}_{n-1}f_{n}^{p}} = \bra{h_{0}\dots h_{n-1}s^{q}_{n-1}f_{n}^{q}},$$
  which in turn implies
  $\bra{s^{p}_{n-1}f_{n}^{p}} = \bra{s^{q}_{n-1}f_{n}^{q}}$. By \productivity{},
  there exists $h_{n}$ with $s_{n-1}^{k}f^{k}_{n} = h_{n}s^{k}_{n}$ such that
  $\bra{f_{0}^{k}|\dots|f_{n}^{k}} = \bra{h_{0}|\dots|h_{n}}$.

  Finally, assume that
  $\bra{h_{0}\dots h_{n-1}h_{n}s_{n}^{p}u} = \bra{h_{0} \dots  h_{n-1}h_{n}s_{n}^{q}v}$. Thus, we
  have $\bra{h_{0}\dots  h_{n-1}s_{n-1}^{p}f_{n}^{p}u} = \bra{h_{0} \dots  h_{n-1}s_{n-1}^{q}f_{n}^{q}v}$
  and $\bra{s_{n-1}^{p}f_{n}^{p}u} = \bra{s_{n-1}^{q}f_{n}^{q}v}$. This can be rewritten as
  $\bra{h_{n}s_{n}^{p}u} = \bra{h_{n}s_{n}^{q}v}$, which in turn implies $\bra{s_{n}^{p}u} = \bra{s_{n}^{q}v}$.
  We have shown that the $h_{n}$ that we constructed satisfies the desired property.
\end{proof}

\begin{lemma}
  \label{lemma:observationalstatefulisterminalsequence}
  In a \productive{} category, the set of observational sequencesis isomorphic to the limit of the terminal
  sequence of the endofunctor $(\hom{} \odot\, \bullet)$ via the canonical
  map between them.
\end{lemma}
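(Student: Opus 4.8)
The plan is to prove the isomorphism in three stages: compute the terminal sequence explicitly, recognise the comparison map named in the statement, and then show it is a bijection, with surjectivity being the only step that uses \productivity{}.

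\emph{Identifying the terminal sequence.} Write $F \defn (\hom{}\odot\,\bullet)$ for the communicating-composition endofunctor on $[\NcatC^{op}\times\NcatC,\Set]$, and let $1$ denote the terminal functor. Unfolding the definition of communicating composition gives, for every $n$,
\[ F^{n+1}1(\stream{X},\stream{Y}) = \coend{M \in \catC}\hom{}(X_{0}, M\tensor Y_{0})\times(F^{n}1)(\act{M}{\tail{\stream{X}}},\tail{\stream{Y}}). \]
First I would show by induction on $n$ that $F^{n}1 \cong \Stage{n}$ naturally in $\stream{X},\stream{Y}$: the base case is the terminal functor, and the inductive step merges the new outermost coend with the coends inside $F^{n}1$ using the Fubini rule (\Cref{prop:fubinirule}), so that the covariant memory of the new first action is glued dinaturally to the contravariant memory of the next action --- exactly the dinatural sliding across one ``bar'' in the $\bra{\,\cdot\,}$-notation. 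Under this identification the canonical projections $F^{n+1}1\to F^{n}1$ become the truncation maps $\pi_{n+1,n}$, so that $\lim_n F^{n}1 \cong \lim_n \Stage{n}(\stream{X},\stream{Y})$.

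\emph{The canonical map.} Sending an \extensionalSequence{} $\extseq{f}$ to the family of its truncations $(\bra{f_{0}|\dots|f_{n}})_{n\in\naturals}$ produces a cone over the tower $(\Stage{n},\pi)$, hence an element of the limit; since truncation is invariant under dinatural (extensional) equivalence, this is a well-defined function, and it is the canonical map $\varphi\colon \oSeq(\stream{X},\stream{Y})\to\lim_n\Stage{n}(\stream{X},\stream{Y})$ referred to in the statement. Injectivity is then immediate from the definition of \observationalEquality{} (\Cref{def:observationallyequal}): two \extensionalSequences{} with the same image in $\lim_n\Stage{n}$ have equal truncations at every stage, hence are observationally equivalent and are identified in $\oSeq$.

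\emph{Surjectivity.} This is the crux, and the only place \productivity{} enters. Given a point of $\lim_n\Stage{n}$, that is, a compatible family $(\alpha_{n})_{n}$ with $\pi_{n+1,n}(\alpha_{n+1})=\alpha_{n}$, I would choose for each $n$ a representing tuple $\bra{f_{0}^{n}|\dots|f_{n}^{n}}$ of $\alpha_{n}$. Compatibility of the family is exactly the statement that this doubly-indexed family \truncatesCoherently{} in the sense of \Cref{def:truncatingcoherently}. The Factoring Lemma (\Cref{lemma:familyfactoring}), valid in any \productive{} category, then yields a single \extensionalSequence{} $(h_{i})$, together with state-transition maps $s_{i}^{k}$, such that $\bra{h_{0}|\dots|h_{n}}=\bra{f_{0}^{n}|\dots|f_{n}^{n}}=\alpha_{n}$ for all $n$; hence the class of $\extseq{h}$ maps to $(\alpha_{n})_{n}$ under $\varphi$. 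The main obstacle is precisely this gluing step: each $\alpha_{n}$ constrains its memory channels only up to dinaturality, so one cannot in general thread a single coherent memory through all stages simultaneously, and it is exactly \productivity{} --- through the terminating-component structure extracted in \Cref{lemma:familyfactoring} --- that lets the $h_{i}$ and $s_{i}^{k}$ realising every truncation at once be constructed. Combining the three stages yields the asserted isomorphism $\oSeq(\stream{X},\stream{Y})\cong\lim_n F^{n}1$ via the canonical map.
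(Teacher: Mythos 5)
Your proof is correct and follows essentially the same route as the paper: identify the limit of the terminal sequence with $\lim_{n}\Stage{n}(\stream{X},\stream{Y})$, observe that injectivity of the induced map on $\oSeq$ is immediate from the definition of \observationalEquality{}, and obtain surjectivity by applying the Factoring Lemma (\Cref{lemma:familyfactoring}) to a coherently truncating family of representatives. The only difference is that you spell out the inductive Fubini argument identifying $F^{n}1$ with $\Stage{n}$, which the paper takes for granted.
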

\begin{proof}
  We start by noting that observational equivalence of sequences is, by
  definition, the same thing as being equal under the canonical map to the
  limit of the terminal sequence
  \[\lim_{n} \int^{M_{0},\dots,M_{n}} \prod_{i=0}^{n} \hom{}(X_{i} \tensor M_{i-1}, Y_{i} \tensor M_{i}).\]
  We will show that this canonical map is surjective. That means that the domain quotiented by
  equality under the map is isomorphic to the codomain, q.e.d.

  Indeed, given any family $f_{n}^{k}$ that \truncatesCoherently{}, we can
  apply \Cref{lemma:familyfactoring} to find a sequence $h_{i}$ such
  that $\bra{f_{0}^{k}|\dots|f_{n}^{k}} = \bra{h_{0}|\dots|h_{n}}$. This means
  that it is the image of the stateful sequence $h_{i}$.
\end{proof}

\begin{lemma}[Factoring two processes]\label{lemma:multiplefactoring}
  In a \productive{} category, let $\bra{f_{0}} = \bra{g_{0}}$. Then there
  exists $h_{0}$ with $f_{0} = h_{0}s_{0}$ and $g_{0} = h_{0}t_{0}$ such that
  \[\bra{f_{0}|\dots|f_{n}} = \bra{g_{0}|\dots|g_{n}}\]
  implies the existence of a family $h_{i}$ together with $s_{i}$ and $t_{i}$ such that
  $s_{i-1}f_{i} = h_{i}s_{i}$ and $t_{i-1}g_{i} = h_{i}t_{i}$; and moreover, such that
  \[\bra{h_{0}\dots h_{n}s_{n}u} = \bra{h_{0}\dots h_{n}t_{n}v} \mbox{ implies } \bra{s_{n}u} = \bra{t_{n}v}.\]
\end{lemma}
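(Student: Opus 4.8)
The plan is to mirror the inductive construction used in \Cref{lemma:familyfactoring}, but now to factor the two given families $(f_i)$ and $(g_i)$ \emph{simultaneously} through a single common family $(h_i)$. I would build $h_i$, $s_i$ and $t_i$ by induction on $i$, appealing to productivity (\Cref{def:productive}) at each stage, the only difference from \Cref{lemma:familyfactoring} being that the two representatives now come from the two distinct families $f$ and $g$ rather than from two truncation lengths of one family.

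For the base (unconditional) part, the hypothesis $\bra{f_0} = \bra{g_0}$ says that $f_0$ and $g_0$ are two representatives of one and the same $1$-stage process. Since $\catC$ is productive, this process is terminating, so there is a terminating component $h_0$ together with $s_0, t_0$ satisfying $f_0 = h_0 s_0$ and $g_0 = h_0 t_0$; moreover productivity guarantees that $\bra{f_0\, u} = \bra{g_0\, v}$ implies $\bra{s_0\, u} = \bra{t_0\, v}$ for all suitably typed $u,v$. This settles the starting case.

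For the inductive step I would assume $\bra{f_0|\dots|f_n} = \bra{g_0|\dots|g_n}$ and that $h_0,\dots,h_{n-1}$ with $s_i$, $t_i$ have already been built for $i < n$, satisfying $s_{i-1}f_i = h_i s_i$, $t_{i-1}g_i = h_i t_i$ and the corresponding cancellation property. The key move is dinaturality, i.e. composition across the bars (\Cref{remark:stagenotation}), which collapses the $n$-stage equality into an equality of single composites $\bra{\tid{f_0\cdots f_n}} = \bra{\tid{g_0\cdots g_n}}$. Using the inductive factorizations I would rewrite the two sides as $\bra{h_0\cdots h_{n-1} s_{n-1} f_n}$ and $\bra{h_0\cdots h_{n-1} t_{n-1} g_n}$; the inductive cancellation property then strips the common prefix $h_0 \cdots h_{n-1}$ and yields $\bra{s_{n-1} f_n} = \bra{t_{n-1} g_n}$. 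Thus $s_{n-1}f_n$ and $t_{n-1}g_n$ are two representatives of a single $1$-stage process, and a final appeal to productivity produces $h_n$, $s_n$, $t_n$ with $s_{n-1}f_n = h_n s_n$ and $t_{n-1}g_n = h_n t_n$, carrying its own cancellation property.

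It then remains to verify the final ``moreover'' clause, which I would obtain by chaining the two cancellation properties exactly as in \Cref{lemma:familyfactoring}: given $\bra{h_0\cdots h_n s_n u} = \bra{h_0\cdots h_n t_n v}$, I substitute $h_n s_n = s_{n-1} f_n$ and $h_n t_n = t_{n-1} g_n$, apply the stage-$(n-1)$ cancellation with $f_n u$ and $g_n v$ to reach $\bra{s_{n-1} f_n u} = \bra{t_{n-1} g_n v}$, and finally apply the stage-$n$ productivity property to conclude $\bra{s_n u} = \bra{t_n v}$. The step requiring the most care is the precise matching against the productivity implication: productivity is phrased for two representatives of \emph{a single} $1$-stage process, so I must make sure that at each stage the equality $\bra{s_{n-1} f_n} = \bra{t_{n-1} g_n}$ has genuinely been established before invoking it — which is exactly what the inductive cancellation property secures. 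No idea beyond \Cref{lemma:familyfactoring} is needed; the work is entirely in the bookkeeping of the dinaturality rewriting and keeping the two parallel factorizations through $f$ and $g$ aligned.
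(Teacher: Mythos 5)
Your proposal is correct and follows essentially the same route as the paper's own proof: induction on $n$, collapsing the $n$-stage equality by composing across the bars, stripping the common prefix $h_0\cdots h_{n-1}$ via the inductive cancellation property to obtain $\bra{s_{n-1}f_n} = \bra{t_{n-1}g_n}$, and then invoking productivity to produce $h_n$, $s_n$, $t_n$. The final chaining of the two cancellation properties to verify the ``moreover'' clause is also exactly the paper's argument.
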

\begin{proof}
  By \productivity{}, we can find such a factorization $f_{0} = h_{0}s_{0}$ and $g_{0} = h_{0}t_{0}$.

  Assume now that we have a family of morphisms such that $\bra{f_{0}|\dots|f_{n}} = \bra{g_{0}|\dots|g_{n}}$.
  We proceed by induction on $n$, the size of the family. The case $n = 0$ follows from
  the definition of \productive{} category.

  In the general case, we will construct the relevant $h_{n}$. The assumption $\bra{f_{0}|\dots|f_{n}} = \bra{g_{0}|\dots|g_{n}}$
  implies, in particular, that
  $\bra{f_{0}|\dots|f_{n-1}} = \bra{g_{0}|\dots|g_{n-1}}$.
  Thus, by induction
  hypothesis, there exist $h_{1},\dots,h_{n-1}$ together with
  $s_{i-1}f_{i} = h_{i}s_{i}$ and $t_{i-1}g_{i} = h_{i}t_{i}$, such that
  \[\bra{h_{0}\dots h_{n}s_{n-1}u} = \bra{h_{0}\dots h_{n}t_{n-1}v} \mbox{ implies } \bra{s_{n-1}u} = \bra{t_{n-1}v}.\]
  We know that $\bra{f_{0}\dots f_{n}} = \bra{g_{0} \dots g_{n}}$ and thus,
  \[\bra{h_{0}\dots h_{n-1}s_{n-1}f_{n}} = \bra{h_{0} \dots h_{n-1}t_{n-1}g_{n}},\]
  which, by induction hypothesis, implies $\bra{s_{n-1}f_{n}} = \bra{t_{n-1}g_{n}}$.
  By \productivity{}, there exists $h_{n}$ with $s_{n-1}f_{n} = h_{n}s_{n}$ and
  $t_{n-1}g_{n} = h_{n}t_{n}$ such that
  $\bra{h_{n}s_{n}u} = \bra{h_{n}t_{n}v} \mbox{ implies } \bra{s_{n}u} = \bra{t_{n}v}$.

  Finally, assume that
  $\bra{h_{0}\dots h_{n-1}h_{n}s_{n}u} = \bra{h_{0} \dots  h_{n-1}h_{n}t_{n}v}$. Thus, we
  have $\bra{h_{0}\dots  h_{n-1}s_{n-1}f_{n}u} = \bra{h_{0} \dots  h_{n-1}t_{n-1}g_{n}v}$
  and $\bra{s_{n-1}f_{n}u} = \bra{t_{n-1}g_{n}v}$. This can be rewritten as
  $\bra{h_{n}s_{n}u} = \bra{h_{n}t_{n}v}$, which in turn implies $\bra{s_{n}u} = \bra{t_{n}v}$.
  We have shown that the $h_{n}$ that we constructed satisfies the desired property.
\end{proof}

\begin{lemma}[Removing the first step]\label{lemma:removestep}
  In a \productive{} category, let $\bra{f_{0}} = \bra{g_{0}}$. Then there
  exists $h$ with $f_{0} = hs$ and $g_{0} = ht$ such that
  $\bra{f_{0}|\dots|f_{n}} = \bra{g_{0}|\dots|g_{n}}$ implies
  $\bra{sf_{1}|\dots|f_{n}} = \bra{tg_{1}|\dots|g_{n}}$.
\end{lemma}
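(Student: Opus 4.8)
The plan is to derive this lemma as a short corollary of \Cref{lemma:multiplefactoring}, the only real work being the bookkeeping of sliding morphisms past bars, i.e.\ dinaturality as recalled in \Cref{remark:stagenotation}. First I would use the opening assertion of \Cref{lemma:multiplefactoring}: since $\bra{f_{0}} = \bra{g_{0}}$, there exist a common terminating part $h$ together with $s$ and $t$ with $f_{0} = hs$ and $g_{0} = ht$, and these depend only on the equality $\bra{f_{0}} = \bra{g_{0}}$, hence are fixed before $n$ is seen. These are exactly the $h$, $s$, $t$ demanded by the present statement.

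Next, assuming the hypothesis $\bra{f_{0}|\dots|f_{n}} = \bra{g_{0}|\dots|g_{n}}$, I would invoke the main conclusion of \Cref{lemma:multiplefactoring} to extract a single family $(h_{i},s_{i},t_{i})_{i\leq n}$, with $h_{0}=h$, $s_{0}=s$, $t_{0}=t$, satisfying the two systems of factorizations $s_{i-1}f_{i} = h_{i}s_{i}$ and $t_{i-1}g_{i} = h_{i}t_{i}$. The whole reason to cite that lemma rather than appeal to \productivity{} directly is that it furnishes the \emph{same} family $h_{i}$ for both the $f$-side and the $g$-side; this shared family is what makes the two reductions below land on a common value.

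The core step is then a telescoping reduction. Starting from $\bra{s f_{1} | f_{2} | \dots | f_{n}}$, I would rewrite the first slot as $s f_{1} = s_{0}f_{1} = h_{1}s_{1}$ and slide $s_{1}$ rightward across the first bar by dinaturality, turning the second slot into $s_{1}f_{2} = h_{2}s_{2}$; iterating through the $n$ slots, each $s_{i}$ is created by the factorization in slot $i$ and immediately slid into slot $i{+}1$. After processing all slots the sequence becomes $\bra{h_{1} | h_{2} | \dots | h_{n}s_{n}}$, and the trailing $s_{n}$ sitting in the last slot is removed by dinaturality (as in \Cref{remark:stagenotation}), leaving $\bra{h_{1} | \dots | h_{n}}$. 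Running the identical computation on $\bra{t g_{1} | g_{2} | \dots | g_{n}}$ with the $t_{i}$-factorizations yields the very same $\bra{h_{1} | \dots | h_{n}}$, so the two sides coincide.

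Since the argument is essentially a rewriting, the only obstacle is type discipline. I must check that each $s_{i}$ (resp.\ $t_{i}$) has codomain matching the memory wire read by the next morphism, so that every slide is well-typed, and that the final absorption of $s_{n}$ is legitimate precisely because it occupies the last slot. Both of these follow directly from the factorization types packaged by \Cref{lemma:multiplefactoring}, so no appeal to \productivity{} beyond what that lemma already supplies is needed.
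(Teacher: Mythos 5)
Your proposal is correct and follows essentially the same route as the paper: both invoke \Cref{lemma:multiplefactoring} to obtain the shared family $(h_{i},s_{i},t_{i})$ and then use dinaturality to slide the $s_{i}$/$t_{i}$ across the bars, with the last one absorbed; the paper merely runs the same chain of equalities in the reverse direction, starting from $\bra{h_{1}|\dots|h_{n}s_{n}} = \bra{h_{1}|\dots|h_{n}t_{n}}$ and rewriting back to $\bra{s_{0}f_{1}|\dots|f_{n}} = \bra{t_{0}g_{1}|\dots|g_{n}}$.
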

\begin{proof}
  By \Cref{lemma:multiplefactoring}, we obtain a factorization $f_{0} = hs$ and
  $g_{0} = ht$. Moreover, each time that we have
  $\bra{f_{0}|\dots|f_{n}} = \bra{g_{0}|\dots|g_{n}}$, we can obtain a family
  $h_{i}$ together with $s_{i}$ and $t_{i}$ such that
  $s_{i-1}f_{i} = h_{i}s_{i}$ and $t_{i-1}f_{i} = h_{i}t_{i}$. Using the fact
  that $\bra{h_{n}s_{n}} = \bra{h_{n}t_{n}}$, we have that
  $\bra{h_{1}|\dots|h_{n}s_{n}} = \bra{h_{1}|\dots|h_{n}t_{n}}$, which can be
  rewritten using dinaturality as
  $\bra{s_{0}f_{1}|\dots|f_{n}} = \bra{t_{0}g_{1}|\dots|g_{n}}$.
\end{proof}

\begin{lemma}
  \label{lemma:coalgebraexists}
  In a \productive{} category, the final coalgebra of the endofunctor
  $(\hom{} \odot\, \bullet)$ does exist and it is given by the limit of the
  terminal sequence
  \[L \coloneqq \lim_{n} \int^{M_{0},\dots,M_{n}} \prod_{i=0}^{n} \hom{}(X_{i} \tensor M_{i-1}, Y_{i} \tensor M_{i}).\]
\end{lemma}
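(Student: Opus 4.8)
The plan is to apply Adamek's theorem (\Cref{th:adamek}) to the endofunctor $F \defn (\hom{} \odot\, \bullet)$ on $[\NcatC^{op} \times \NcatC, \Set]$. This category has a terminal object $1$ (the constant singleton) and all $\omega$-shaped limits, computed pointwise in $\Set$, so the ambient hypotheses hold. First I would identify the terminal sequence. By the definition of communicating composition, $F(1)(\stream{X},\stream{Y}) = \coend{M}\hom{}(X_{0}, M \tensor Y_{0}) \cong \Stage{0}(\stream{X},\stream{Y})$, and more generally $F(\Stage{n})(\stream{X},\stream{Y}) = \coend{M}\hom{}(X_{0}, M \tensor Y_{0}) \times \Stage{n}(\act{M}{\tail{\stream{X}}}, \tail{\stream{Y}}) \cong \Stage{n+1}(\stream{X},\stream{Y})$, since prepending a first action to an $n$-stage process on the tails yields an $(n{+}1)$-stage process. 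By induction $F^{n+1}1 \cong \Stage{n}$, the connecting maps being the truncations $\pi_{n+1,n}$; hence the terminal sequence is the tower of truncations, and its limit is exactly $L = \lim_{n} \Stage{n}(\stream{X},\stream{Y})$, the coend expression in the statement.

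By \Cref{th:adamek} it then remains to show that $F$ preserves this limit, i.e. that the canonical comparison $c \colon FL \to L$ is an isomorphism; since we are in $\Set$, I would prove it is a bijection. Writing $FL(\stream{X},\stream{Y}) = \coend{M}\hom{}(X_{0}, M \tensor Y_{0}) \times L(\act{M}{\tail{\stream{X}}}, \tail{\stream{Y}})$, an element is a dinatural class $[\phi, \xi]_{M}$ of a first action $\phi \colon X_{0} \to M \tensor Y_{0}$ together with a coherent tail family $\xi$, and $c$ prepends $\phi$, sending $[\phi,\xi]_{M}$ to the coherent family whose $(n{+}1)$-stage component is $\bra{\phi | \xi_{n}}$.

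For surjectivity I would take an arbitrary element of $L$, namely a family $(\bra{f_{0}^{k}|\dots|f_{k}^{k}})_{k}$ that \truncatesCoherently{}, and apply \Cref{lemma:familyfactoring} to replace it by a single sequence $(h_{i})_{i}$ with $\bra{f_{0}^{k}|\dots|f_{n}^{k}} = \bra{h_{0}|\dots|h_{n}}$ for all $n \leq k$. This fixed sequence provides a first action $\phi = h_{0} \colon X_{0} \to M_{0} \tensor Y_{0}$ and a coherent tail family $\xi = (h_{1}, h_{2}, \dots) \in L(\act{M_{0}}{\tail{\stream{X}}}, \tail{\stream{Y}})$, and $c([\phi,\xi]_{M_{0}})$ recovers the original element. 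For injectivity, suppose $c([\phi_{f},\xi_{f}]_{M}) = c([\phi_{g},\xi_{g}]_{N})$, so that $\bra{\phi_{f}|(\xi_{f})_{n}} = \bra{\phi_{g}|(\xi_{g})_{n}}$ for every $n$; truncating at $n=0$ gives $\bra{\phi_{f}} = \bra{\phi_{g}}$. Then \Cref{lemma:multiplefactoring} yields a common factor $h$ with $\phi_{f} = h ; (s \tensor \im)$ and $\phi_{g} = h ; (t \tensor \im)$, while \Cref{lemma:removestep} upgrades the stagewise equality of the full families to $\act{s}{\xi_{f}} = \act{t}{\xi_{g}}$ in $L(\act{K}{\tail{\stream{X}}}, \tail{\stream{Y}})$. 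Sliding $s$ and $t$ across the coend then gives $[\phi_{f},\xi_{f}]_{M} = [h, \act{s}{\xi_{f}}]_{K} = [h, \act{t}{\xi_{g}}]_{K} = [\phi_{g},\xi_{g}]_{N}$, so $c$ is injective and hence an isomorphism.

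The hard part is precisely this isomorphism $FL \cong L$, because a priori the coend over $M$ is a colimit and does not commute with the $\omega$-limit over stages; \productivity{} is exactly the hypothesis that repairs this, through \Cref{lemma:familyfactoring} (the common sequence needed for surjectivity) and \Cref{lemma:removestep,lemma:multiplefactoring} (the cancellation needed for injectivity), both of which fail in a non-\productive{} category. Once $c$ is shown to be an isomorphism, \Cref{th:adamek} identifies $L$ as the final $(\hom{} \odot\, \bullet)$-coalgebra, which is the claim. As a consistency check, \Cref{lemma:observationalstatefulisterminalsequence} independently identifies this same limit $L$ with the set of observational sequences.
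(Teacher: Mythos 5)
Your proposal is correct and follows essentially the same route as the paper: apply Adamek's theorem, identify the terminal sequence with the tower of $n$-stage truncations, and prove the comparison $FL \to L$ is a bijection using \Cref{lemma:familyfactoring} for surjectivity and \Cref{lemma:removestep} (with the factorization it provides) plus dinaturality in the memory object for injectivity. The only cosmetic difference is that you cite \Cref{lemma:multiplefactoring} explicitly alongside \Cref{lemma:removestep}, whereas the paper invokes only the latter, which already packages the common factor $h$ and the morphisms $s$, $t$.
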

\begin{proof}
  We will apply \Cref{th:adamek}. The endofunctor
  $(\hom{} \odot\, \bullet)$
  acts on the category $[(\NcatC)^{op} \times \NcatC, \Set]$, which, being a presheaf
  category, has all small limits. We will show that there is an isomorphism
  $\hom{} \odot\ L \cong L$ given by the canonical morphism between them.

  First, note that the set $L(\stream{X};\stream{Y})$ is, explicitly,
  \[\lim_{n} \int^{M_{1},\dots,M_{n}} \prod_{i=1}^{n} \hom{}(X_{i} \tensor M_{i-1}, Y_{i} \tensor M_{i}).\]
  A generic element from this set is a \emph{sequence of sequences of increasing length}.
  Moreover, the sequences must \emph{truncate coherently} (\Cref{def:truncatingcoherently}).

  Secondly, note that the set
  $(\hom{} \odot\ L)(\stream{X};\stream{Y})$ is, explicitly,
  \begin{align*}\int^{M_{0}}\hom{}(&X_{0}, Y_{0} \tensor M_{0}) \times \\ & \lim_{n} \int^{M_{1},\dots,M_{n}} \prod_{i=1}^{n} \hom{}(X_{i} \tensor M_{i-1}, Y_{i} \tensor M_{i}).\end{align*}
  A generic element from this set is of the form
  \[\bra{f|(\bra{f_{1}^{k}|\dots|f_{k}^{k}})_{k \in \naturals}},\]
  that is, a pair consisting on a first morphism
  $f \colon X_{0} \to Y_{0} \tensor M_{0}$ and a family of sequences
  $(\bra{f_{1}^{k}|\dots|f_{k}^{k}})$, quotiented by dinaturality of $M_{0}$ and
  truncating coherently. The canonical map to $L(\stream{X};\stream{Y})$ maps
  this generic element to the family of sequences
  $(\bra{f_{0}|f_{1}^{k}|\dots|f_{k}^{k}})_{k \in \naturals}$, which truncates
  coherently because the previous family did and we are precomposing with $f_{0}$,
  which is dinatural.

  Thirdly, this map is injective. Imagine a pair of elements
  $\bra{f_{0}|(\bra{f_{1}^{k}|\dots|f_{k}^{k}})_{k \in \naturals}}$ and
  $\bra{g_{0}|(\bra{g_{1}^{k}|\dots|g_{k}^{k}})_{k \in \naturals}}$ that have the
  same image, meaning that, for each $k \in \naturals$,
  \[\bra{f_{0}|f_{1}^{k}|\dots|f_{k}^{k}} = \bra{g_{0}|g_{1}^{k}|\dots|g_{k}^{k}}.\]
  By \Cref{lemma:removestep}, we can find $h$ with $f_{0} = hs$ and $g_{0} = ht$
  such that, for each $k \in \naturals$,
  $\bra{sf_{1}^{k}|\dots|f_{k}^{k}} = \bra{tg_{1}^{k}|\dots|g_{k}^{k}}$. Thus,
  \[\begin{aligned}
    \bra{f_{0}|(\bra{f_{1}^{k}|\dots|f_{k}^{k}})_{k \in \naturals}} =
    \bra{h|(\bra{sf_{1}^{k}|\dots|f_{k}^{k}})_{k \in \naturals}} = \\
    \bra{h|(\bra{tg_{1}^{k}|\dots|g_{k}^{k}})_{k \in \naturals}} =
    \bra{g_{0}|(\bra{tg_{1}^{k}|\dots|g_{k}^{k}})_{k \in \naturals}}.
  \end{aligned}\]

Finally, this map is also surjective. From \Cref{lemma:familyfactoring}, it follows
that any family that \truncatesCoherently{} can be equivalently written as $\bra{h_{0}|\dots|h_{n}}_{n \in \naturals}$,
which is the image of the element $\bra{h_{0}|\bra{h_{1}|\dots|h_{n}}_{n \in \naturals}}$.
\end{proof}

\begin{theorem}[From \Cref{theorem:observationalfinalcoalgebra}]\label{appendix:theorem:observationalfinalcoalgebra}
  In a \productive{} category, the final coalgebra of the endofunctor
  $(\hom{} \odot\, \bullet)$ exists and it is given by the set of stateful
  sequences quotiented by observational equivalence.
  \[\left(\int^{M \in [\naturals,\catC]} \prod^{\infty}_{i=0} \hom{}(X_{i} \tensor M_{i-1}, Y_{i} \tensor M_{i})\right)
  \bigg/\approx\]
\end{theorem}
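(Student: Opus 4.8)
The plan is to assemble this theorem from the two technical lemmas already established, by identifying the terminal sequence of the endofunctor $(\hom{} \odot\, \bullet)$ with the $n$-stage process sets and then invoking Adamek's theorem (\Cref{th:adamek}).

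First I would recall that the terminal object of the presheaf category $[\NcatC^{op} \times \NcatC, \Set]$ is the constant singleton functor $1$, and compute the terminal sequence of $(\hom{} \odot\, \bullet)$ explicitly. Iterating the functor against $1$ and repeatedly applying the Fubini rule for coends (\Cref{prop:fubinirule}) gives $F^{n} 1 \cong \Stage{n}(\stream{X},\stream{Y})$, with the connecting maps of the $\omega$-chain realized by the truncations $\pi_{n,k}$. Hence the limit of the terminal sequence is exactly $L = \lim_n \Stage{n}(\stream{X},\stream{Y})$, the object appearing in \Cref{lemma:coalgebraexists,lemma:observationalstatefulisterminalsequence}.

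Next I would invoke \Cref{lemma:coalgebraexists}: in a productive category the canonical comparison morphism $(\hom{} \odot\, L) \to L$ is an isomorphism, i.e. $F$ preserves its own terminal-sequence limit. By \Cref{th:adamek} this means $L$ is the final $(\hom{} \odot\, \bullet)$-coalgebra; in particular the final coalgebra exists, which settles the existence half of the statement. It then remains to identify $L$ with observational sequences. By \Cref{def:observationallyequal} (and the remark following it), observational equivalence is precisely equality under the canonical map $\eSeq \to \lim_n \Stage{n} = L$, so the induced map $\oSeq \to L$ is injective by construction; \Cref{lemma:observationalstatefulisterminalsequence} supplies surjectivity, since every coherently-truncating family is, by \Cref{lemma:familyfactoring}, the image of some extensional sequence. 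This yields a natural isomorphism $\oSeq \cong L$, and transporting the final-coalgebra structure across it — which is exactly the now/later decomposition of \Cref{def:monoidalstream} — exhibits $\oSeq$ as the final coalgebra, as required.

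The hard part is entirely packaged inside the two lemmas being cited. The genuine difficulty is checking that $(\hom{} \odot\, L) \to L$ is an isomorphism (\Cref{lemma:coalgebraexists}) and that the comparison map is surjective (\Cref{lemma:observationalstatefulisterminalsequence}); both injectivity and surjectivity arguments are where \emph{productivity} does its work, through the delicate inductive factorizations of \Cref{lemma:familyfactoring,lemma:removestep}. Assembling the present theorem from them is then formal, modulo the routine bookkeeping of verifying that the set-level isomorphism $\oSeq \cong L$ is natural in $(\stream{X},\stream{Y})$ and commutes with the coalgebra structure maps.
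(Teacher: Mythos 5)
Your proposal is correct and follows essentially the same route as the paper: the paper's proof is precisely the two-step assembly of \Cref{lemma:coalgebraexists} (existence of the final coalgebra as the limit of the terminal sequence, via Adamek) and \Cref{lemma:observationalstatefulisterminalsequence} (identification of that limit with observational sequences). Your additional unpacking of where productivity enters and why the comparison map $\oSeq \to L$ is bijective accurately reflects the content of those lemmas.
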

\begin{proof}
  By \Cref{lemma:coalgebraexists}, we know that the final coalgebra exists
  and is given by the limit of the terminal sequence.
  By \Cref{lemma:observationalstatefulisterminalsequence}, we know that it
  is isomorphic to the set of stateful sequences quotiented by observational
  equivalence.
\end{proof}
\section{Monoidal streams}

\begin{lemma}
  \label{lemma:sequentialcompositionassociativememories}
  Sequential composition of streams with memories (\Cref{def:sequentialstream}) is associative.
  Given three streams
  \begin{itemize}
    \item $f \in \STREAM(\sA \cdot \stream{X},\stream{Y})$,
    \item $g \in \STREAM(\sB \cdot \stream{Y},\stream{Z})$,
    \item and $h \in \STREAM(\sB \cdot \stream{Z},\stream{W})$;
  \end{itemize}
  we can compose them in two different ways,
  \begin{itemize}
    \item $(f^{\sA} ; g^{\sB}) ; h^{\sC} \in \STREAM((\sA \tensor \sB) \tensor \sC \cdot \stream{X},\stream{W})$, or
    \item $f^{\sA} ; (g^{\sB} ; h^{\sC}) \in \STREAM(\sA \tensor (\sB \tensor \sC) \cdot \stream{X},\stream{W})$.
  \end{itemize}
  We claim that
  \[((f^{\sA} ; g^{\sB}) ; h^{\sC}) = \alpha_{\sA,\sB,\sC} \cdot (f^{\sA} ; (g^{\sB} ; h^{\sC})).\]
\end{lemma}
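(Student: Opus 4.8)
The plan is to reason coinductively, exploiting the characterization of monoidal streams as a final coalgebra (\Cref{def:monoidalstream}): two streams are equal precisely when their memories agree up to dinaturality, their first actions $\now$ agree, and their rests $\later$ agree. Accordingly, I would regard the statement of the lemma, taken over \emph{all} suitably typed $f$, $g$, $h$ and all memories $\sA$, $\sB$, $\sC$, as a candidate bisimulation, and show it is closed under the coalgebra structure. The reason the lemma is phrased for composition \emph{with memories} rather than plain composition is exactly to make this coinduction go through: the extra memory parameters $\sA,\sB,\sC$ are what the coinduction hypothesis consumes at the next time step, so proving the memory-laden version is what strengthens the invariant enough to close the loop.

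For the memory and the first action, I would unfold \Cref{def:sequentialstream}. The memory of $(f^{\sA} ; g^{\sB}) ; h^{\sC}$ is $(M(f) \tensor M(g)) \tensor M(h)$, while that of $f^{\sA} ; (g^{\sB} ; h^{\sC})$ is $M(f) \tensor (M(g) \tensor M(h))$; these are identified by the associator $\alpha_{M(f),M(g),M(h)}$, which I would exhibit as the witness $r$ of the dinaturality relation of \Cref{def:monoidalstream}. For $\now$, both sides are composites of $\now(f)$, $\now(g)$, $\now(h)$ interleaved with the braidings $\sigma$ that appear in $\now(\Ncomp{-}{}{-}{})$, together with the precomposition by $\alpha_{\sA,\sB,\sC} \tensor \im$ on the input memory $(\sA \tensor \sB) \tensor \sC \tensor X_0$ coming from the $\alpha_{\sA,\sB,\sC} \cdot (-)$ on the right-hand side. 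Appealing to the coherence theorem (\Cref{theorem:coherence,remark:usingcoherence}), I would verify that both composites reduce to the same canonical morphism: the hexagon equation and naturality of $\sigma$ let me slide the memory wires of $f$, $g$ and $h$ past one another in whichever order the two bracketings prescribe, and the two associators $\alpha_{\sA,\sB,\sC}$ and $\alpha_{M(f),M(g),M(h)}$ account for the remaining reshuffling.

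For the rest of the action, \Cref{def:sequentialstream} gives $\later((f^{\sA} ; g^{\sB}) ; h^{\sC}) = (\later(f)^{M(f)} ; \later(g)^{M(g)}) ; \later(h)^{M(h)}$ and, symmetrically, $\later(f^{\sA} ; (g^{\sB} ; h^{\sC})) = \later(f)^{M(f)} ; (\later(g)^{M(g)} ; \later(h)^{M(h)})$, so the coinduction hypothesis supplies their equality up to $\alpha_{M(f),M(g),M(h)}$. The main obstacle, and where I expect to spend most care, is checking that this per-step associator is \emph{compatible} with the dinaturality witness chosen for the stage-$0$ memory and threads correctly into the next step: the associator identifying the two memories at stage $0$ must be exactly the one that the coinductive equality at stage $1$ expects to absorb, so that the two descriptions coincide as elements of the quotient. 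Making this precise amounts to the pentagon coherence relating the input-level associator $\alpha_{\sA,\sB,\sC}$ to the memory-level associators $\alpha_{M(f),M(g),M(h)}$; once that compatibility is established, the three checks above close the bisimulation and the equality follows by coinduction.
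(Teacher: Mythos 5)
Your proposal is correct and follows essentially the same route as the paper's own proof: exhibit the associator of the memories $(M(f)\tensor M(g))\tensor M(h) \cong M(f)\tensor(M(g)\tensor M(h))$ as the dinaturality witness, check that the two $\now$ composites agree by string-diagrammatic/coherence reasoning, and close the argument by applying the coinduction hypothesis to the tails, which is exactly why the lemma is stated for composition with memories. The compatibility worry in your last paragraph is already discharged by the form of the equivalence relation in \Cref{def:monoidalstream}: the coinductive call on the tails is precisely the lemma instance with memories $M(f), M(g), M(h)$, so it produces exactly the associator that the stage-$0$ witness requires.
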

\begin{proof}
  First, we note that both sides of the equation represent streams with different memories.
  \begin{itemize}
    \item $M((f^{\sA} ; g^{\sB}) ; h^{\sC}) = (M(f) \tensor M(g)) \tensor M(h)$,
    \item $M(f^{\sA} ; (g^{\sB} ; h^{\sC})) = M(f) \tensor (M(g) \tensor M(h))$.
  \end{itemize}
We will prove they are related by dinaturality over the associator $\alpha$.
We know that $\now((f^{\sA} ; g^{\sB}) ; h^{\sC}) = \now(f^{\sA} ; (g^{\sB} ; h^{\sC}))$ by string diagrams (see~\Cref{strings:assoc}).
Then, by coinduction, we know that
\[\begin{gathered}(\later(f)^{M(f)} ; \later(g)^{M(g)}) ; \later(h)^{M(h)} = \\
    \act{\alpha}{(\later(f)^{M(f)} ; (\later(g)^{M(g)} ; \later(h)^{M(h)}))},\end{gathered}\]
that is, $\later((f^{\sA} ; g^{\sB}) ; h^{\sC}) = \alpha \cdot \later(f^{\sA} ; (g^{\sB} ; h^{\sC}))$.
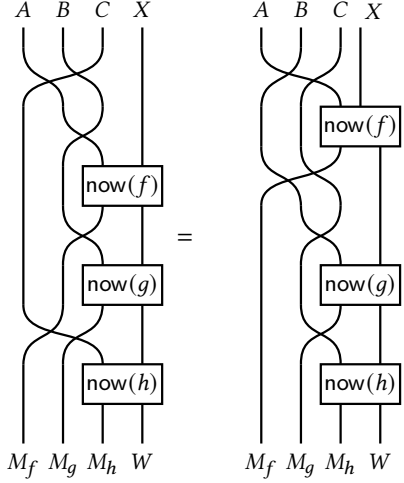
\begin{figure}[h]
\tikzset{every picture/.style={line width=0.85pt}}
\begin{tikzpicture}[x=0.75pt,y=0.75pt,yscale=-1,xscale=1]
\draw    (90,50) .. controls (89.83,69.77) and (49.5,61.43) .. (50,80) ;
\draw   (80,110) -- (120,110) -- (120,130) -- (80,130) -- cycle ;
\draw    (110,40) -- (110,110) ;
\draw    (90,80) .. controls (90,99.43) and (69.67,90.1) .. (70,110) ;
\draw    (70,80) .. controls (70,98.77) and (90.33,90.77) .. (90,110) ;
\draw    (50,50) .. controls (49.83,69.77) and (70.5,62.1) .. (70,80) ;
\draw    (70,50) .. controls (69.5,69.1) and (89.5,60.77) .. (90,80) ;
\draw    (70,40) -- (70,50) ;
\draw    (90,40) -- (90,50) ;
\draw    (50,40) -- (50,50) ;
\draw    (70,110) -- (70,130) ;
\draw    (50,80) -- (50,180) ;
\draw   (80,160) -- (120,160) -- (120,180) -- (80,180) -- cycle ;
\draw    (90,130) .. controls (89.67,150.43) and (69.67,140.43) .. (70,160) ;
\draw    (70,130) .. controls (70,150.43) and (90.33,140.43) .. (90,160) ;
\draw    (70,160) -- (70,180) ;
\draw    (110,130) -- (110,160) ;
\draw    (110,180) -- (110,210) ;
\draw   (80,210) -- (120,210) -- (120,230) -- (80,230) -- cycle ;
\draw    (90,230) -- (90,250) ;
\draw    (110,230) -- (110,250) ;
\draw    (50,210) -- (50,250) ;
\draw    (70,210) -- (70,250) ;
\draw    (50,180) .. controls (49.83,199.77) and (90.33,190.43) .. (90,210) ;
\draw    (70,180) .. controls (69.83,199.77) and (50.5,192.1) .. (50,210) ;
\draw    (90,180) .. controls (89.5,199.1) and (69.5,190.77) .. (70,210) ;
\draw    (210.01,100) .. controls (209.84,119.77) and (169.5,111.43) .. (170,130) ;
\draw   (200,80) -- (240,80) -- (240,100) -- (200,100) -- cycle ;
\draw    (210,180) .. controls (210,199.43) and (189.67,190.1) .. (190,210) ;
\draw    (190,180) .. controls (190,198.77) and (210.34,190.77) .. (210,210) ;
\draw    (190,100) .. controls (189.83,119.77) and (210.49,112.1) .. (209.99,130) ;
\draw    (170,100) .. controls (169.5,119.1) and (189.5,110.77) .. (190,130) ;
\draw    (190,40) -- (190,50) ;
\draw    (210,40) -- (210,50) ;
\draw    (170,40) -- (170,50) ;
\draw    (170,80) -- (170,100) ;
\draw   (200,160) -- (240,160) -- (240,180) -- (200,180) -- cycle ;
\draw    (209.99,130) .. controls (209.66,150.43) and (189.66,140.43) .. (189.99,160) ;
\draw    (190,130) .. controls (190,150.43) and (210.33,140.43) .. (210,160) ;
\draw    (190,160) -- (190,168.85) -- (190,180) ;
\draw    (230.01,100) -- (230.01,160) ;
\draw    (220,40) -- (220,80) ;
\draw   (200,210) -- (240,210) -- (240,230) -- (200,230) -- cycle ;
\draw    (230,230) -- (230,250) ;
\draw    (210,230) -- (210,250) ;
\draw    (190,210) -- (190,250) ;
\draw    (170,130) -- (169.99,250) ;
\draw    (170,50) .. controls (169.83,69.77) and (210.33,60.43) .. (210,80) ;
\draw    (190,50) .. controls (189.83,69.77) and (170.5,62.1) .. (170,80) ;
\draw    (210,50) .. controls (209.5,69.1) and (189.5,60.77) .. (190,80) ;
\draw    (190,80) -- (190,100) ;
\draw    (230,180) -- (230,210) ;
\draw (126,142.4) node [anchor=north west][inner sep=0.75pt]    {$=$};
\draw (50,253.4) node [anchor=north] [inner sep=0.75pt]  [font=\small]  {$M_{f}$};
\draw (50,36.6) node [anchor=south] [inner sep=0.75pt]  [font=\small]  {$A$};
\draw (70,36.6) node [anchor=south] [inner sep=0.75pt]  [font=\small]  {$B$};
\draw (90,36.6) node [anchor=south] [inner sep=0.75pt]  [font=\small]  {$C$};
\draw (110,36.6) node [anchor=south] [inner sep=0.75pt]  [font=\small]  {$X$};
\draw (70,253.4) node [anchor=north] [inner sep=0.75pt]  [font=\small]  {$M_{g}$};
\draw (90,253.4) node [anchor=north] [inner sep=0.75pt]  [font=\small]  {$M_{h}$};
\draw (110,253.4) node [anchor=north] [inner sep=0.75pt]  [font=\small]  {$W$};
\draw (170,36.6) node [anchor=south] [inner sep=0.75pt]  [font=\small]  {$A$};
\draw (190,36.6) node [anchor=south] [inner sep=0.75pt]  [font=\small]  {$B$};
\draw (210,36.6) node [anchor=south] [inner sep=0.75pt]  [font=\small]  {$C$};
\draw (227,37.6) node [anchor=south] [inner sep=0.75pt]  [font=\small]  {$X$};
\draw (169.99,253.4) node [anchor=north] [inner sep=0.75pt]  [font=\small]  {$M_{f}$};
\draw (190,253.4) node [anchor=north] [inner sep=0.75pt]  [font=\small]  {$M_{g}$};
\draw (210,253.4) node [anchor=north] [inner sep=0.75pt]  [font=\small]  {$M_{h}$};
\draw (230,253.4) node [anchor=north] [inner sep=0.75pt]  [font=\small]  {$W$};
\draw (100,120) node  [font=\small]  {${\textstyle \mathsf{now}( f)}$};
\draw (100,170) node  [font=\small]  {${\textstyle \mathsf{now}( g)}$};
\draw (100,220) node  [font=\small]  {${\textstyle \mathsf{now}( h)}$};
\draw (220,90) node  [font=\small]  {${\textstyle \mathsf{now}( f)}$};
\draw (220,170) node  [font=\small]  {${\textstyle \mathsf{now}( g)}$};
\draw (220,220) node  [font=\small]  {${\textstyle \mathsf{now}( h)}$};
\end{tikzpicture}
\caption{Associativity for sequential composition}
\label{strings:assoc}
\end{figure}
\end{proof}

\begin{lemma}
  \label{lemma:sequentialcompositionassociative}
  Sequential composition of streams (\Cref{def:sequentialstream}) is associative.
  Given three streams
  \begin{itemize}
    \item $f \in \STREAM(\stream{X},\stream{Y})$,
    \item $g \in \STREAM(\stream{Y},\stream{Z})$,
    \item and $h \in \STREAM(\stream{Z},\stream{W})$;
  \end{itemize}
  we claim that $((f ; g) ; h) = (f ; (g ; h))$.
\end{lemma}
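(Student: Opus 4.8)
The plan is to derive associativity of plain sequential composition directly from \Cref{lemma:sequentialcompositionassociativememories}, the memoryful version, by specializing all three auxiliary memory channels to the monoidal unit. Recall from \Cref{def:sequentialstream} that $(f ; g)$ is by definition the memoryful composition $(\Ncomp{f}{\monoidalunit}{g}{\monoidalunit})$, and that $\act{\monoidalunit}{\stream{X}} \cong \stream{X}$. Thus the streams $f \in \STREAM(\stream{X},\stream{Y})$, $g \in \STREAM(\stream{Y},\stream{Z})$ and $h \in \STREAM(\stream{Z},\stream{W})$ of the present statement may be viewed as memoryful streams with memory channels $\sA = \sB = \sC = \monoidalunit$, and both bracketings $((f;g);h)$ and $(f;(g;h))$ become instances of the two memoryful composites compared in the previous lemma.

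First I would substitute $\sA = \sB = \sC = \monoidalunit$ into the identity
\[((f^{\sA} ; g^{\sB}) ; h^{\sC}) = \alpha_{\sA,\sB,\sC} \cdot (f^{\sA} ; (g^{\sB} ; h^{\sC}))\]
supplied by \Cref{lemma:sequentialcompositionassociativememories}. This yields
\[((f ; g) ; h) = \alpha_{\monoidalunit,\monoidalunit,\monoidalunit} \cdot (f ; (g ; h)),\]
where the associator now mediates between the memory sequences $(\monoidalunit \tensor \monoidalunit) \tensor \monoidalunit$ and $\monoidalunit \tensor (\monoidalunit \tensor \monoidalunit)$.

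Then I would argue that $\alpha_{\monoidalunit,\monoidalunit,\monoidalunit}$ is the identity. Working up to strictification, as justified by the coherence theorem (\Cref{theorem:coherence}) and the convention of \Cref{remark:usingcoherence}, all associators and unitors are identities, so the dinatural reindexing $\alpha_{\monoidalunit,\monoidalunit,\monoidalunit} \cdot (-)$ acts trivially. The right-hand side therefore collapses to $(f ; (g ; h))$, giving $((f;g);h) = (f;(g;h))$ as claimed.

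The main point to check, and the only place subtlety could hide, is precisely this vanishing of the reindexing along $\alpha_{\monoidalunit,\monoidalunit,\monoidalunit}$; everything else — the genuine bookkeeping of threading the memories $M(f)$, $M(g)$, $M(h)$ through the symmetries, and the coinductive comparison of $\later$-components — has already been discharged inside \Cref{lemma:sequentialcompositionassociativememories}. I expect no real obstacle, since the coherence convention makes the associator on unit objects an identity immediately.
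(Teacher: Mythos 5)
Your proposal is correct and follows essentially the same route as the paper, which likewise derives the result as a direct consequence of \Cref{lemma:sequentialcompositionassociativememories} ``after considering the appropriate coherence morphisms''; you have simply made explicit that this means instantiating $\sA = \sB = \sC = \monoidalunit$ and observing that the reindexing along $\alpha_{\monoidalunit,\monoidalunit,\monoidalunit}$ is trivial under the coherence conventions of \Cref{remark:usingcoherence}. No gap.
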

\begin{proof}
  Direct consequence of~\Cref{lemma:sequentialcompositionassociativememories},
  after considering the appropriate coherence morphisms.
\end{proof}

\begin{lemma}\label{lemma:parfunassociativememories}
  Parallel composition of streams with memories is functorial with regards to sequential composition of streams with memories.
  Given four streams
  \begin{itemize}
    \item $f \in \STREAM(\sA \cdot \stream{X},\stream{Y})$,
    \item $f' \in \STREAM(\sA' \cdot \stream{X}',\stream{Y}')$,
    \item $g \in \STREAM(\sB \cdot \stream{Y},\stream{Z})$, and
    \item $g' \in \STREAM(\sB' \cdot \stream{Y}',\stream{Z}')$,
  \end{itemize}
  we can compose them in two different ways,
  \begin{itemize}
    \item $\Ncomp{(\Ntensor{f}{A}{f'}{A'})}{(A \tensor A')}{(\Ntensor{g}{B}{g'}{B'})}{(B \tensor B')}$, and
    \item $\Ntensor{(\Ncomp{f}{A}{g}{B})}{(A \tensor B)}{(\Ncomp{f'}{A'}{g'}{B'})}{(A' \tensor B')}$,
  \end{itemize}
  having slightly different types, respectively,
  \begin{itemize}
    \item $\STREAM(\act{(A \tensor A') \tensor (B \tensor B')}{\stream{X} \tensor \stream{X}'},\stream{Z} \tensor \stream{Z}')$, and
    \item $\STREAM(\act{(A \tensor B) \tensor (A' \tensor B')}{\stream{X} \tensor \stream{X}'},\stream{Z} \tensor \stream{Z}')$.
  \end{itemize}
  We claim that
  \[\begin{gathered}
      \Ncomp{(\Ntensor{f}{A}{f'}{A'})}{(A \tensor A')}{(\Ntensor{g}{B}{g'}{B'})}{(B \tensor B')} = \\
      \act{\sigma_{A',B}}{\Ntensor{(\Ncomp{f}{A}{g}{B})}{(A \tensor B)}{(\Ncomp{f'}{A'}{g'}{B'})}{(A' \tensor B')}}.
    \end{gathered}\]
\end{lemma}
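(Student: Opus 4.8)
The plan is to reason coinductively, following the pattern of \Cref{lemma:sequentialcompositionassociativememories}. By the explicit description of equality of monoidal streams (\Cref{def:monoidalstream}), to prove that the two streams coincide it suffices to exhibit a reindexing morphism $r$ between their memory channels such that their first actions $\now$ agree after composing with $r$, and their rest-of-actions $\later$ agree up to $\act{r}{-}$, the latter following from the coinduction hypothesis. First I would compute the two memories. Unfolding \Cref{def:sequentialstream,def:parallelstream}, the left-hand stream has memory $(M(f) \tensor M(f')) \tensor (M(g) \tensor M(g'))$, whereas the right-hand stream (the $\act{\sigma_{A',B}}{-}$ acts only on the input, not the memory) has memory $(M(f) \tensor M(g)) \tensor (M(f') \tensor M(g'))$. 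These differ exactly by the symmetry $\sigma_{M(f'),M(g)}$ swapping the two middle factors, modulo associators; this symmetry is the witness $r$, mirroring the $\sigma_{A',B}$ reindexing on the input memory appearing in the statement.

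Next I would verify the equality of first actions. Unfolding the definitions, both $\now$'s expand into composites of the local first actions $\now(f), \now(f'), \now(g), \now(g')$ interleaved with symmetries $\sigma$ and identities. The required identity, namely that the first action of the left-hand side equals the first action of the right-hand side composed on the input with $\sigma_{A',B} \tensor \im$ and on the memory output with $r \tensor \im$, then reduces to a string-diagram computation in the base symmetric monoidal category: it is the middle-four interchange of $\now(f) \tensor \now(f')$ against $\now(g) \tensor \now(g')$, and follows purely from naturality of the symmetry and coherence, in the same spirit as \Cref{strings:assoc}. For the rest-of-action, unfolding $\later$ of both sides using \Cref{def:sequentialstream,def:parallelstream} yields precisely the same interchange statement with $(f,f',g,g')$ replaced by the tails $(\later(f), \later(f'), \later(g), \later(g'))$ and the state objects $(A,A',B,B')$ replaced by the memories $(M(f),M(f'),M(g),M(g'))$. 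This is exactly the coinduction hypothesis, so the rests agree up to $\act{r}{-}$ and the equality holds.

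The main obstacle is purely the bookkeeping of symmetries: one must pin down the precise witness $r$ — incorporating the associators needed to pass between $(M(f)\tensor M(f'))\tensor(M(g)\tensor M(g'))$ and $(M(f)\tensor M(g))\tensor(M(f')\tensor M(g'))$ — and check that it is consistent both with the $\sigma_{A',B}$ reindexing on the input and with the symmetries baked into the definitions of $\now$ in \Cref{def:sequentialstream,def:parallelstream}. Once the coherence diagram for the first action is drawn, the verification is routine, and the inductive step propagates the same witness to the tails without further work. As in \Cref{lemma:sequentialcompositionassociativememories}, the clean coinductive formulation hides this bookkeeping, so that the only genuine content is the single base-category string-diagram identity for $\now$.
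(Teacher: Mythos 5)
Your proposal is correct and follows essentially the same route as the paper's proof: compute the two memories $(M(f)\tensor M(f'))\tensor(M(g)\tensor M(g'))$ and $(M(f)\tensor M(g))\tensor(M(f')\tensor M(g'))$, exhibit the middle-swapping symmetry as the dinaturality witness, check the first actions agree by a string-diagram interchange argument, and close the tails by coinduction. The paper does exactly this (with the diagram in \Cref{strings:functorpar} playing the role of your middle-four interchange), so there is nothing to add.
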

\begin{proof}
  First, we note that both sides of the equation (which, from now on, we call $LHS$ and $RHS$, respectively)
represent strams with different memories.
  \begin{itemize}
    \item $M(LHS) = (M(f) \tensor M(f')) \tensor (M(g) \tensor M(g'))$,
    \item $M(RHS) = (M(f) \tensor M(g)) \tensor (M(f') \tensor M(g'))$.
  \end{itemize}
We will prove they are related by dinaturality over the symmetry $\sigma$.
We know that $\now(LHS) ; \sigma = \now(RHS)$ by string diagrams (see \Cref{strings:functorpar}).
Then, by coinduction, we know that $\later(LHS) = \act{\sigma}{\later(RHS)}$.
\end{proof}

\begin{figure}
\tikzset{every picture/.style={line width=0.85pt}}
\begin{tikzpicture}[x=0.75pt,y=0.75pt,yscale=-1,xscale=1]
\draw    (310,40) -- (310,50) ;
\draw    (330,40) -- (330,50) ;
\draw    (290,40) -- (290,50) ;
\draw    (370,40) -- (370,80) ;
\draw    (390,40) -- (390,110) ;
\draw    (350,40) -- (350,50) ;
\draw    (350,50) .. controls (349.83,69.77) and (309.5,61.43) .. (310,80) ;
\draw    (330,50) .. controls (329.83,69.77) and (289.5,61.43) .. (290,80) ;
\draw    (290,50) .. controls (289.83,69.77) and (329.5,61.43) .. (330,80) ;
\draw    (310,50) .. controls (309.83,69.77) and (349.5,61.43) .. (350,80) ;
\draw    (350,80) .. controls (350,100.43) and (370.33,90.43) .. (370,110) ;
\draw    (370,80) .. controls (370,100.43) and (350.33,90.43) .. (350,110) ;
\draw   (320,110) -- (360,110) -- (360,130) -- (320,130) -- cycle ;
\draw   (360,110) -- (400,110) -- (400,130) -- (360,130) -- cycle ;
\draw    (350,130) .. controls (350,150.43) and (370.33,140.43) .. (370,160) ;
\draw    (370,130) .. controls (370,150.43) and (350.33,140.43) .. (350,160) ;
\draw    (330,80) -- (330,110) ;
\draw    (330,130) -- (330,160) ;
\draw    (390,130) -- (390,220) ;
\draw    (350,160) .. controls (349.84,179.77) and (309.5,171.43) .. (310,190) ;
\draw    (330,160) .. controls (329.84,179.77) and (289.5,171.43) .. (290,190) ;
\draw    (290,160) .. controls (289.84,179.77) and (329.5,171.43) .. (330,190) ;
\draw    (310,160) .. controls (309.84,179.77) and (349.5,171.43) .. (350,190) ;
\draw    (310,80) -- (310,160) ;
\draw    (290,80) -- (290,160) ;
\draw   (321.5,220) -- (361.5,220) -- (361.5,240) -- (321.5,240) -- cycle ;
\draw   (361.5,220) -- (401.5,220) -- (401.5,240) -- (361.5,240) -- cycle ;
\draw    (350,190) .. controls (350,210.43) and (370.33,200.43) .. (370,220) ;
\draw    (370,190) .. controls (370,210.43) and (350.33,200.43) .. (350,220) ;
\draw    (350,240) .. controls (350,260.43) and (370.33,250.43) .. (370,270) ;
\draw    (370,240) .. controls (370,260.43) and (350.33,250.43) .. (350,270) ;
\draw    (370,160) -- (370,190) ;
\draw    (390,240) -- (390,270) ;
\draw    (330,240) -- (330,270) ;
\draw    (290,190) -- (290,270) ;
\draw    (310,190) -- (310,270) ;
\draw    (330,190) -- (330,220) ;
\draw    (442.5,40) -- (442.5,100) ;
\draw    (522.5,40) -- (522.5,70) ;
\draw    (542.5,40) -- (542.5,130) ;
\draw    (482.5,100) -- (482.5,130) ;
\draw    (502.5,100) .. controls (502.5,120.43) and (522.83,110.43) .. (522.5,130) ;
\draw    (522.5,100) .. controls (522.5,120.43) and (502.83,110.43) .. (502.5,130) ;
\draw    (462.5,40) .. controls (462.5,60.43) and (482.83,50.43) .. (482.5,70) ;
\draw    (482.5,40) .. controls (482.5,60.43) and (462.83,50.43) .. (462.5,70) ;
\draw    (522.5,70) .. controls (522.33,89.77) and (482,81.43) .. (482.5,100) ;
\draw    (482.5,70) .. controls (482.33,89.77) and (502,81.43) .. (502.5,100) ;
\draw    (502.5,70) .. controls (502.33,89.77) and (522,81.43) .. (522.5,100) ;
\draw   (512.5,140) -- (552.5,140) -- (552.5,160) -- (512.5,160) -- cycle ;
\draw    (462.5,70) -- (462.5,100) ;
\draw    (442.5,100) .. controls (442.5,120.43) and (462.83,110.43) .. (462.5,130) ;
\draw    (462.5,100) .. controls (462.5,120.43) and (442.83,110.43) .. (442.5,130) ;
\draw   (452.5,140) -- (492.5,140) -- (492.5,160) -- (452.5,160) -- cycle ;
\draw    (502.5,170) .. controls (502.5,190.43) and (522.83,180.43) .. (522.5,200) ;
\draw    (522.5,170) .. controls (522.5,190.43) and (502.83,180.43) .. (502.5,200) ;
\draw    (442.5,170) .. controls (442.5,190.43) and (462.83,180.43) .. (462.5,200) ;
\draw    (462.5,170) .. controls (462.5,190.43) and (442.83,180.43) .. (442.5,200) ;
\draw   (512.5,210) -- (552.5,210) -- (552.5,230) -- (512.5,230) -- cycle ;
\draw   (452.5,210) -- (492.5,210) -- (492.5,230) -- (452.5,230) -- cycle ;
\draw    (502.5,40) -- (502.5,70) ;
\draw    (482.5,170) -- (482.5,200) ;
\draw    (442.5,130) -- (442.5,160) ;
\draw    (502.5,140) -- (502.5,160) ;
\draw    (542.5,170) -- (542.5,200) ;
\draw    (482.5,230) .. controls (482.33,249.77) and (522,241.43) .. (522.5,260) ;
\draw    (502.5,230) .. controls (502.33,249.77) and (482,241.43) .. (482.5,260) ;
\draw    (522.5,230) .. controls (522.33,249.77) and (502,241.43) .. (502.5,260) ;
\draw    (502.5,210) -- (502.5,230) ;
\draw    (542.5,230) -- (542.5,270) ;
\draw    (442.5,210) -- (442.5,260) ;
\draw    (462.5,230) -- (462.5,260) ;
\draw    (522.5,260) -- (522.5,270) ;
\draw    (502.5,260) -- (502.5,270) ;
\draw    (482.5,260) -- (482.5,270) ;
\draw    (462.5,260) -- (462.5,270) ;
\draw    (442.5,260) -- (442.5,270) ;
\draw    (542.5,160) -- (542.5,170) ;
\draw    (522.5,160) -- (522.5,170) ;
\draw    (502.5,160) -- (502.5,170) ;
\draw    (482.5,160) -- (482.5,170) ;
\draw    (462.5,160) -- (462.5,170) ;
\draw    (442.5,160) -- (442.5,170) ;
\draw    (442.5,200) -- (442.5,210) ;
\draw    (502.5,200) -- (502.5,210) ;
\draw    (522.5,200) -- (522.5,210) ;
\draw    (542.5,200) -- (542.5,210) ;
\draw    (482.5,200) -- (482.5,210) ;
\draw    (462.5,200) -- (462.5,210) ;
\draw    (502.5,130) -- (502.5,140) ;
\draw    (522.5,130) -- (522.5,140) ;
\draw    (542.5,130) -- (542.5,140) ;
\draw    (482.5,130) -- (482.5,140) ;
\draw    (462.5,130) -- (462.5,140) ;
\draw (330,36.6) node [anchor=south] [inner sep=0.75pt]  [font=\small]  {$B$};
\draw (290,273.4) node [anchor=north] [inner sep=0.75pt]  [font=\small]  {$M_{f}$};
\draw (340,120) node  [font=\footnotesize]  {${\textstyle \mathsf{now}( f)}$};
\draw (380,120) node  [font=\footnotesize]  {${\textstyle \mathsf{now}( f')}$};
\draw (341.5,230) node  [font=\footnotesize]  {${\textstyle \mathsf{now}( g)}$};
\draw (381.5,230) node  [font=\footnotesize]  {${\textstyle \mathsf{now}( g')}$};
\draw (290,36.6) node [anchor=south] [inner sep=0.75pt]  [font=\small]  {$A$};
\draw (310,36.6) node [anchor=south] [inner sep=0.75pt]  [font=\small]  {$A'$};
\draw (350,36.6) node [anchor=south] [inner sep=0.75pt]  [font=\small]  {$B'$};
\draw (370,36.6) node [anchor=south] [inner sep=0.75pt]  [font=\small]  {$X$};
\draw (390,36.6) node [anchor=south] [inner sep=0.75pt]  [font=\small]  {$X'$};
\draw (310,273.4) node [anchor=north] [inner sep=0.75pt]  [font=\small]  {$M'_{f}$};
\draw (330,273.4) node [anchor=north] [inner sep=0.75pt]  [font=\small]  {$M_{g}$};
\draw (350,273.4) node [anchor=north] [inner sep=0.75pt]  [font=\small]  {$M'_{g}$};
\draw (370,273.4) node [anchor=north] [inner sep=0.75pt]  [font=\small]  {$Z$};
\draw (390,273.4) node [anchor=north] [inner sep=0.75pt]  [font=\small]  {$Z'$};
\draw (472.5,150) node  [font=\footnotesize]  {${\textstyle \mathsf{now}( f)}$};
\draw (532.5,150) node  [font=\footnotesize]  {${\textstyle \mathsf{now}( f')}$};
\draw (472.5,220) node  [font=\footnotesize]  {${\textstyle \mathsf{now}( g)}$};
\draw (532.5,220) node  [font=\footnotesize]  {${\textstyle \mathsf{now}( g')}$};
\draw (442.5,36.6) node [anchor=south] [inner sep=0.75pt]  [font=\small]  {$A$};
\draw (462.5,36.6) node [anchor=south] [inner sep=0.75pt]  [font=\small]  {$A'$};
\draw (482.5,36.6) node [anchor=south] [inner sep=0.75pt]  [font=\small]  {$B$};
\draw (502.5,36.6) node [anchor=south] [inner sep=0.75pt]  [font=\small]  {$B'$};
\draw (522.5,36.6) node [anchor=south] [inner sep=0.75pt]  [font=\small]  {$X$};
\draw (542.5,36.6) node [anchor=south] [inner sep=0.75pt]  [font=\small]  {$X'$};
\draw (442.5,273.4) node [anchor=north] [inner sep=0.75pt]  [font=\small]  {$M_{f}$};
\draw (462.5,273.4) node [anchor=north] [inner sep=0.75pt]  [font=\small]  {$M'_{f}$};
\draw (482.5,273.4) node [anchor=north] [inner sep=0.75pt]  [font=\small]  {$M_{g}$};
\draw (502.5,273.4) node [anchor=north] [inner sep=0.75pt]  [font=\small]  {$M'_{g}$};
\draw (522.5,273.4) node [anchor=north] [inner sep=0.75pt]  [font=\small]  {$Z$};
\draw (542.5,273.4) node [anchor=north] [inner sep=0.75pt]  [font=\small]  {$Z'$};
\draw (411,143.4) node [anchor=north west][inner sep=0.75pt]    {$=$};
\end{tikzpicture}
\caption{Functoriality of parallel composition.}
\label{strings:functorpar}
\end{figure}
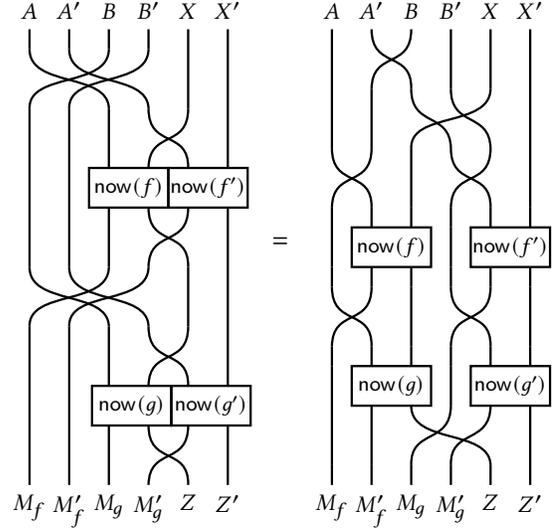

\begin{lemma}\label{lemma:parfun}
  Parallel composition of streams is functorial with respect to sequential composition of streams.
  Given four streams
  \begin{itemize}
    \item $f \in \STREAM(\stream{X},\stream{Y})$,
    \item $f' \in \STREAM(\stream{X}',\stream{Y}')$,
    \item $g \in \STREAM(\stream{Y},\stream{Z})$, and
    \item $g' \in \STREAM(\stream{Y},\stream{Z})$;
  \end{itemize}
  we claim that $(f \tensor f') ; (g \tensor g') = (f ; g) \tensor (f' ; g')$.
\end{lemma}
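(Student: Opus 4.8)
The plan is to obtain \Cref{lemma:parfun} as an immediate specialization of its memory-carrying counterpart, \Cref{lemma:parfunassociativememories}, following exactly the pattern by which \Cref{lemma:sequentialcompositionassociative} was deduced from \Cref{lemma:sequentialcompositionassociativememories}. Concretely, I would instantiate the four memory objects $\sA, \sA', \sB, \sB'$ appearing in \Cref{lemma:parfunassociativememories} all to the monoidal unit $\monoidalunit$, so that the general claim of that lemma specializes to a single identity between the two composites built from $f, f', g, g'$ with empty memory channels.

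First I would unfold the degenerate memories using the defining equations of the two operations. By \Cref{def:parallelstream}, $\Ntensor{f}{\monoidalunit}{f'}{\monoidalunit} = f \tensor f'$ and $\Ntensor{g}{\monoidalunit}{g'}{\monoidalunit} = g \tensor g'$; by \Cref{def:sequentialstream}, $\Ncomp{f}{\monoidalunit}{g}{\monoidalunit} = f ; g$ and likewise for the primed streams. Substituting these into the two sides of \Cref{lemma:parfunassociativememories}, the left-hand side $\Ncomp{(\Ntensor{f}{\monoidalunit}{f'}{\monoidalunit})}{(\monoidalunit \tensor \monoidalunit)}{(\Ntensor{g}{\monoidalunit}{g'}{\monoidalunit})}{(\monoidalunit \tensor \monoidalunit)}$ reduces to $(f \tensor f') ; (g \tensor g')$, while the right-hand side reduces to $(f ; g) \tensor (f' ; g')$. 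The only remaining decoration is the sliding correction $\act{\sigma_{\sA',\sB}}{-}$, which in this instance is $\act{\sigma_{\monoidalunit,\monoidalunit}}{-}$; since the braiding on the unit is the identity (up to unitors), this correction disappears and the two sides become literally equal.

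The main obstacle I anticipate is purely the bookkeeping of coherence isomorphisms, not any new conceptual content. I would need to check that the memory $\monoidalunit \tensor \monoidalunit$ appearing in each composite is identified with $\monoidalunit$ through the unitor, and that the two internal memories produced by the composites — which differ by an associator and the unit-symmetry — are equated by the dinaturality-in-$M$ quotient that defines monoidal streams (\Cref{def:monoidalstream}). Working in the strictification guaranteed by the coherence theorem, as licensed by \Cref{remark:usingcoherence}, lets me treat all unitors, associators, and the unit-braiding $\sigma_{\monoidalunit,\monoidalunit}$ as identities, collapsing the specialized statement of \Cref{lemma:parfunassociativememories} directly into the claimed interchange law. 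All of the genuine work — the coinductive matching of $\now$ via string diagrams and of $\later$ via the coinduction hypothesis — has already been carried out in \Cref{lemma:parfunassociativememories}, so nothing beyond this coherence juggling should be required.
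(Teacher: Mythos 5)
Your proposal is correct and follows exactly the paper's own route: the paper proves this lemma as a direct consequence of \Cref{lemma:parfunassociativememories} by specializing all memory objects to the monoidal unit and absorbing the residual $\sigma_{\monoidalunit,\monoidalunit}$ and unitors into the coherence isomorphisms (handled via strictification and the dinaturality quotient). Your write-up simply spells out the bookkeeping that the paper leaves implicit.
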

\begin{proof}
  Direct consequence of~\Cref{lemma:parfunassociativememories},
  after considering the appropriate coherence morphisms.
\end{proof}

\begin{theorem}[see~\cite{roman2020}]\label{th:monoidalstreamscategory}
  Monoidal streams over a \symmetricMonoidalCategory{}
  $(\catC, \tensor, \monoidalunit)$ form a symmetric monoidal category
  $\STREAM$.
\end{theorem}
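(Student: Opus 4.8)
The plan is to verify the symmetric-monoidal-category axioms coinductively, reusing the operations and lemmas already in place, and invoking the coherence theorem (\Cref{theorem:coherence}) to assume without loss of generality that $\catC$ — and hence $\NcatC$ computed componentwise — is strict. The uniform proof technique is the following: to establish an equality of streams $f = g$, I exhibit a dinaturality witness $r \colon M(g) \to M(f)$ with $\now(f) = \now(g) ; (r \tensor \im)$ and $\act{r}{\later(f)} \sim \later(g)$, and then close the tail by the coinduction hypothesis (\Cref{def:monoidalstream}). The generalized ``with memories'' operations of \Cref{def:sequentialstream,def:parallelstream} are exactly what make these coinduction hypotheses strong enough to be carried through.

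First I would settle the category structure. Associativity of sequential composition is \Cref{lemma:sequentialcompositionassociative}. For the unit laws $\im_{\stream{X}} ; f = f$ and $f ; \im_{\stream{Y}} = f$, I proceed by coinduction: the first actions agree after the strict unitors on the memory channel $\monoidalunit \tensor M(f)$ are absorbed, and the tail closes by applying the coinduction hypothesis to $\later(f)$. This makes $\STREAM$ a category, with the memoryless identities of \Cref{def:inclusion}.

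Next I would assemble the monoidal product. On objects, $\stream{X} \tensor \stream{Y}$ is taken componentwise, so $\obj{\STREAM} = \obj{\NcatC}$ inherits the (strict) symmetric monoidal structure of $\NcatC$; in particular the associators and unitors of $\STREAM$ are identities and their coherence axioms are trivial. On morphisms, the tensor is the parallel composition of \Cref{def:parallelstream}. Bifunctoriality splits into preservation of identities $\im_{\stream{X}} \tensor \im_{\stream{Y}} = \im_{\stream{X} \tensor \stream{Y}}$, a direct coinduction, and the interchange law $(f \tensor f') ; (g \tensor g') = (f ; g) \tensor (f' ; g')$, which is \Cref{lemma:parfun}.

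The remaining data is the braiding $\sigma$, which I take to be the image of the symmetry of $\NcatC$ under the identity-on-objects embedding $\NcatC \to \STREAM$ that sends a componentwise morphism to its constant memoryless stream (\Cref{def:inclusion}); invertibility and the involutivity $\sigma ; \sigma = \im$ follow by a short coinduction reducing stage $0$ to the symmetry axiom of $\catC$, and the hexagon holds because it is an equation among memoryless streams that the embedding transports from the symmetry of $\NcatC$. The main obstacle is naturality of $\sigma$ against \emph{arbitrary} streams $f, g$ carrying nontrivial memory: the square $\sigma_{\stream{X},\stream{Y}} ; (g \tensor f) = (f \tensor g) ; \sigma_{\stream{X}',\stream{Y}'}$ cannot be reduced to $\NcatC$ by functoriality alone, since the two sides have memories $M(g) \tensor M(f)$ and $M(f) \tensor M(g)$. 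I would prove it by coinduction, using the dinaturality quotient to slide the symmetry of $\catC$ past the memory channel via the witness $\sigma_{M(f),M(g)}$: at stage $0$ the first actions agree by naturality of the braiding of $\catC$ with respect to $\now(f)$ and $\now(g)$, and the tail closes under the coinduction hypothesis. Assembling the category structure, the bifunctor, the trivial associators and unitors, and the braiding with its verified naturality, involutivity and hexagon, yields that $\STREAM$ is a symmetric monoidal category.
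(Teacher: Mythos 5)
Your proposal is correct and follows essentially the same route as the paper: associativity via \Cref{lemma:sequentialcompositionassociative}, bifunctoriality via \Cref{lemma:parfun}, and coherence morphisms and symmetry lifted from $\NcatC$ through the memoryless embedding. The one place you go beyond the paper's sketch is in explicitly proving naturality of the braiding against streams with nontrivial memory via the dinaturality witness $\sigma_{M(f),M(g)}$ — a point the paper's proof glosses over with ``the symmetry can be included from sets'' — and your coinductive argument there is the right one.
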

\begin{proof}
  Sequential composition of streams (\Cref{def:sequentialstream})
  is associative (\Cref{lemma:sequentialcompositionassociative})
  and unital with respect to identities.
  Parallel composition is bifunctorial with respect to sequential composition (\Cref{lemma:parfun});
  this determines a bifunctor, which is the tensor of the monoidal category.
  The coherence morphisms and the symmetry can be included from sets, so they still satisfy the pentagon and triangle equations.
\end{proof}

\begin{lemma}\label{lemma:tighteninglong}
  The structure $(\STREAM, \fbk)$ with memories satisfies the tightening axiom (A1).
  Given three streams
  \begin{itemize}
    \item $u \in \STREAM(\act{A}{\stream{X}'}, \stream{X})$,
    \item $f \in \STREAM(\act{B \tensor T}{\delay\stream{S} \tensor \stream{X}}, \stream{S} \tensor \stream{Y})$, and
    \item $v \in \STREAM(\act{C}{\stream{Y}}, \stream{Y}')$;
  \end{itemize}
  we claim that
  \[\fbk^{S}(u^{A} ; f^{B} ; v^{C}) = \act{\sigma}{u^{A} ; \fbk^{S}(f^{B \tensor T}) ; v^{C}}.\]
\end{lemma}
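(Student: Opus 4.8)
The plan is to prove this instance of the tightening axiom (A1) coinductively, following exactly the recipe used for \Cref{lemma:sequentialcompositionassociativememories,lemma:parfunassociativememories}: identify the dinaturality witness relating the two memory channels, check that the two first actions agree up to that witness by a string-diagram computation in the symmetric monoidal base, and then close the argument by applying the coinduction hypothesis to the tails. Since monoidal streams are a final coalgebra (\Cref{def:monoidalstream}), it suffices to produce an $r$ between the two memories satisfying $\now(\mathrm{LHS}) = \now(\mathrm{RHS}) \dcomp r$ and $\act{r}{\later(\mathrm{LHS})} \sim \later(\mathrm{RHS})$.

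First I would compute the memory objects of both sides. Unfolding the definition of feedback with memories (where $\fbk^{S}$ appends a factor $S_{0}$) and of sequential composition with memories (\Cref{def:sequentialstream}, where memories tensor together), the left-hand side carries memory $(M(u) \tensor M(f) \tensor M(v)) \tensor S_{0}$, while the right-hand side carries $M(u) \tensor (M(f) \tensor S_{0}) \tensor M(v)$: on the left the fed-back channel $S_{0}$ is appended at the very end, whereas on the right it is created in the middle by the inner $\fbk^{S}(f^{B \tensor T})$ and then pushed outward by composition with $v^{C}$. These two objects differ exactly by the symmetry transposing $S_{0}$ past $M(v)$, which is precisely the $\act{\sigma}{-}$ appearing in the statement; thus the claimed equality is an instance of dinaturality in the memory channel, with $r = \sigma$.

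Next I would settle the base case at the level of first actions. Because $\now(\fbk^{S}(g^{N})) = \now(g)$ for every $g$, both first actions are built from $\now(u)$, $\now(f)$ and $\now(v)$ by tensoring with identities and braiding, as in the $\now$ of an iterated sequential composition (\Cref{figure:memories}). A routine string diagram using only naturality of the symmetry and coherence then shows $\now(\mathrm{LHS}) = \now(\mathrm{RHS}) \dcomp \sigma$, matching the transposition witness found above. For the coinductive step I would unfold $\later$ on both sides: on the left $\later(\fbk^{S}(u^{A};f^{B};v^{C})) = \fbk\bigl(\later(u^{A};f^{B};v^{C})^{\ast}\bigr)$ and $\later$ of the triple composition is the triple composition of $\later(u),\later(f),\later(v)$ over the updated memories $M(u),M(f),M(v)$; on the right $\later$ commutes with both composition and feedback in the same way, since $\later(\fbk^{S}(f^{N})) = \fbk(\later(f)^{M(f)\tensor S_{0}})$. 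Hence both tails have precisely the shape of the statement with $(u,f,v)$ replaced by $(\later(u),\later(f),\later(v))$, the memories $A,B,C$ by $M(u),M(f),M(v)$, and the feedback now taken over $\tail{\stream{S}}$ (with $S_{0}$ absorbed into the memory). The coinduction hypothesis applied to these tails gives the equality of tails up to the transported symmetry, which together with the base case discharges the conditions of \Cref{def:monoidalstream}.

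I expect the main obstacle to be the bookkeeping of the memory objects and the transport of the symmetry witness through the coinductive step, in particular tracking the auxiliary channel $T$ (carried precisely to keep the coinduction hypothesis strong enough) and verifying that the incrementing of the memory by $S_{0}$ under $\fbk$ remains compatible with sliding/dinaturality. The string diagram for the first actions is elementary but must be drawn with all braidings explicit; the conceptual crux is arranging the coinductive invariant so that the induction hypothesis yields exactly the once-shifted statement rather than a merely isomorphic one.
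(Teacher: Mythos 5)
Your proposal is correct and follows essentially the same route as the paper's proof: both sides are exhibited as streams whose memories differ by the symmetry transposing $S_{0}$ past the memory of $v$, the first actions are checked to agree up to that symmetry by a string-diagram computation, and the tails are handled by applying the coinduction hypothesis to $\later(u)$, $\later(f)$, $\later(v)$ with the updated memory parameters. The paper's proof is just a terser version of the same argument, so there is nothing to add beyond the bookkeeping you already anticipate.
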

\begin{proof}
  First, we note that both sides of the equation (which, from now on, we call $LHS$ and $RHS$, respectively)
  represent streams with different memories.
  \begin{itemize}
    \item $M(LHS) = A \tensor B \tensor C \tensor T$,
    \item $M(RHS) = A \tensor B \tensor T \tensor C$.
  \end{itemize}
  We will prove that they are related by dinaturality over the symmetry $\sigma$.
  We know that $\now(LHS) ; \sigma = \now(RHS)$ by string diagrams (see \Cref{strings:tightening}).
  Then, by coinduction, we know that $\later(LHS) = \act{\sigma}{\later(RHS)}$.
\end{proof}

\begin{figure}
\tikzset{every picture/.style={line width=0.85pt}} %
\begin{tikzpicture}[x=0.75pt,y=0.75pt,yscale=-1,xscale=1]
\draw    (140,60) -- (140,100) ;
\draw    (120,60) -- (120,100) ;
\draw    (200,60) -- (200.01,170) ;
\draw    (160,60) .. controls (160,80.43) and (180.33,70.43) .. (180,90) ;
\draw    (180,60) .. controls (180,80.43) and (160.33,70.43) .. (160,90) ;
\draw    (180.01,100) .. controls (179.84,119.77) and (119.51,111.43) .. (120.01,130) ;
\draw    (140.01,100) .. controls (139.84,119.77) and (159.5,111.43) .. (160,130) ;
\draw    (120.01,100) .. controls (119.85,119.77) and (139.51,111.43) .. (140.01,130) ;
\draw    (160.01,100) .. controls (159.84,119.77) and (179.5,111.43) .. (180,130) ;
\draw    (160.01,140) .. controls (159.84,159.77) and (139.51,151.43) .. (140.01,170) ;
\draw    (140.01,140) .. controls (139.85,159.77) and (179.51,151.43) .. (180.01,170) ;
\draw    (180.01,140) .. controls (179.84,159.77) and (159.51,151.43) .. (160.01,170) ;
\draw   (170.01,170) -- (210.01,170) -- (210.01,190) -- (170.01,190) -- cycle ;
\draw    (180.01,190) .. controls (179.84,209.77) and (139.51,201.43) .. (140.01,220) ;
\draw    (140.01,190) .. controls (139.84,209.77) and (159.51,201.43) .. (160.01,220) ;
\draw    (160.01,190) .. controls (159.84,209.77) and (179.51,201.43) .. (180.01,220) ;
\draw   (150.01,220) -- (210.01,220) -- (210.01,240) -- (150.01,240) -- cycle ;
\draw    (200.01,190) -- (200.01,220) ;
\draw    (120.01,240) .. controls (119.84,259.77) and (179.51,251.43) .. (180.01,270) ;
\draw   (170.01,270) -- (210.01,270) -- (210.01,290) -- (170.01,290) -- cycle ;
\draw    (200.01,240) -- (200.01,270) ;
\draw    (160.01,170) -- (160.01,190) ;
\draw    (140.01,170) -- (140.01,190) ;
\draw    (140.01,220) -- (140.01,240) ;
\draw    (120.01,140) -- (120.01,240) ;
\draw    (180.01,240) .. controls (179.84,259.77) and (159.51,251.43) .. (160.01,270) ;
\draw    (160.01,240) .. controls (159.84,259.77) and (139.51,251.43) .. (140.01,270) ;
\draw    (140.01,240) .. controls (139.84,259.77) and (119.51,251.43) .. (120.01,270) ;
\draw    (160.01,270) -- (160.01,310) ;
\draw    (140.01,270) -- (140.01,310) ;
\draw    (120.01,270) -- (120.01,310) ;
\draw    (180.01,290) -- (180.01,310) ;
\draw    (200.01,290) -- (200.01,310) ;
\draw    (320,60) -- (320,110) ;
\draw    (300.01,60) .. controls (300.02,80.43) and (260.34,70.43) .. (260,90) ;
\draw    (260,60) .. controls (260.01,80.43) and (280.34,70.43) .. (280,90) ;
\draw    (340,60) -- (340,130) ;
\draw    (280.01,60) .. controls (280.02,80.43) and (300.34,70.43) .. (300,90) ;
\draw    (300,90) .. controls (300,100.37) and (280,100.37) .. (280,110) ;
\draw    (260,90) -- (260,210) ;
\draw    (280,110) -- (280,170) ;
\draw   (310,130) -- (350,130) -- (350,150) -- (310,150) -- cycle ;
\draw    (300,130) -- (300,150) ;
\draw   (290,190) -- (350,190) -- (350,210) -- (290,210) -- cycle ;
\draw    (320,170) -- (320,190) ;
\draw    (340,150) -- (340,190) ;
\draw    (280,190) -- (280,210) ;
\draw    (300,210) .. controls (299.67,220.37) and (279.67,220.37) .. (280,230) ;
\draw    (320,210) -- (320,230) ;
\draw    (340,210) -- (340,250) ;
\draw    (280,230) -- (280,310) ;
\draw    (260,230) -- (260,310) ;
\draw   (310,250) -- (350,250) -- (350,270) -- (310,270) -- cycle ;
\draw    (300,250) -- (300,270) ;
\draw    (340,270) -- (340,310) ;
\draw    (280,90) .. controls (280,100.37) and (300,100.37) .. (300,110) ;
\draw    (320,110) .. controls (320,120.37) and (300,120.37) .. (300,130) ;
\draw    (300,110) .. controls (300,120.37) and (320,120.37) .. (320,130) ;
\draw    (320,150) .. controls (320,160.37) and (300,160.37) .. (300,170) ;
\draw    (300,150) .. controls (300,160.37) and (320,160.37) .. (320,170) ;
\draw    (300,170) .. controls (300,180.37) and (280,180.37) .. (280,190) ;
\draw    (280,170) .. controls (280,180.37) and (300,180.37) .. (300,190) ;
\draw    (320,230) .. controls (320,240.37) and (300,240.37) .. (300,250) ;
\draw    (300,230) .. controls (300,240.37) and (320,240.37) .. (320,250) ;
\draw    (320,270) .. controls (320,280.37) and (300,280.37) .. (300,290) ;
\draw    (300,270) .. controls (300,280.37) and (320,280.37) .. (320,290) ;
\draw    (320,290) .. controls (320,300.37) and (300,300.37) .. (300,310) ;
\draw    (300,290) .. controls (300,300.37) and (320,300.37) .. (320,310) ;
\draw    (280,210) .. controls (279.67,220.37) and (259.67,220.37) .. (260,230) ;
\draw    (260,210) .. controls (260.34,220.37) and (299.67,220.37) .. (300,230) ;
\draw    (160,90) -- (160.01,100) ;
\draw    (180,90) -- (180.01,100) ;
\draw    (180,130) -- (180.01,140) ;
\draw    (160,130) -- (160.01,140) ;
\draw    (140,130) -- (140.01,140) ;
\draw    (120,130) -- (120.01,140) ;
\draw (161,56.6) node [anchor=south] [inner sep=0.75pt]  [font=\small]  {$C$};
\draw (121,56.6) node [anchor=south] [inner sep=0.75pt]  [font=\small]  {$A$};
\draw (141,56.6) node [anchor=south] [inner sep=0.75pt]  [font=\small]  {$B$};
\draw (181,56.6) node [anchor=south] [inner sep=0.75pt]  [font=\small]  {$T$};
\draw (200,56.6) node [anchor=south] [inner sep=0.75pt]  [font=\small]  {$X'$};
\draw (190.01,180) node  [font=\footnotesize]  {${\textstyle \mathsf{now}( u)}$};
\draw (180.01,230) node  [font=\footnotesize]  {${\textstyle \mathsf{now}( f)}$};
\draw (190.01,280) node  [font=\footnotesize]  {${\textstyle \mathsf{now}( v)}$};
\draw (121.5,312.4) node [anchor=north] [inner sep=0.75pt]  [font=\small]  {$M_{u}$};
\draw (141.5,312.4) node [anchor=north] [inner sep=0.75pt]  [font=\small]  {$M_{f}$};
\draw (181.5,312.4) node [anchor=north] [inner sep=0.75pt]  [font=\small]  {$M_{v}$};
\draw (161.5,312.4) node [anchor=north] [inner sep=0.75pt]  [font=\small]  {$S_{0}$};
\draw (201.5,312.4) node [anchor=north] [inner sep=0.75pt]  [font=\small]  {$Y'$};
\draw (300.01,56.6) node [anchor=south] [inner sep=0.75pt]  [font=\small]  {$C$};
\draw (260,56.6) node [anchor=south] [inner sep=0.75pt]  [font=\small]  {$A$};
\draw (280.01,56.6) node [anchor=south] [inner sep=0.75pt]  [font=\small]  {$B$};
\draw (320,56.6) node [anchor=south] [inner sep=0.75pt]  [font=\small]  {$T$};
\draw (340,56.6) node [anchor=south] [inner sep=0.75pt]  [font=\small]  {$X'$};
\draw (330,140) node  [font=\footnotesize]  {${\textstyle \mathsf{now}( u)}$};
\draw (320,200) node  [font=\footnotesize]  {${\textstyle \mathsf{now}( f)}$};
\draw (330,260) node  [font=\footnotesize]  {${\textstyle \mathsf{now}( v)}$};
\draw (260,313.4) node [anchor=north] [inner sep=0.75pt]  [font=\small]  {$M_{u}$};
\draw (280,313.4) node [anchor=north] [inner sep=0.75pt]  [font=\small]  {$M_{f}$};
\draw (300,313.4) node [anchor=north] [inner sep=0.75pt]  [font=\small]  {$S_{0}$};
\draw (320,313.4) node [anchor=north] [inner sep=0.75pt]  [font=\small]  {$M_{v}$};
\draw (340,313.4) node [anchor=north] [inner sep=0.75pt]  [font=\small]  {$Y'$};
\draw (221,172.4) node [anchor=north west][inner sep=0.75pt]    {$=$};
\end{tikzpicture}
\caption{The tightening axiom (A1).}
\label{strings:tightening}
\end{figure}
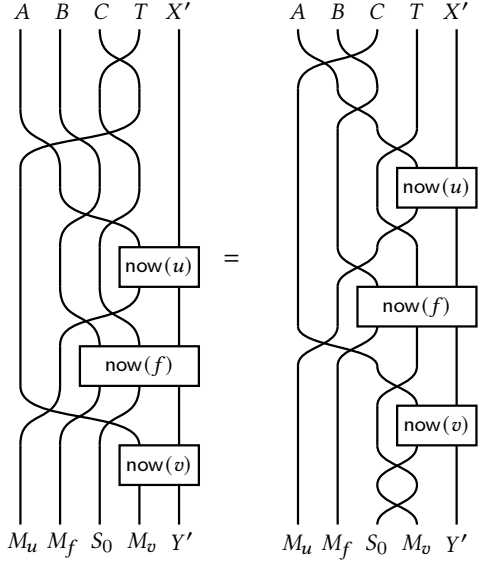

\begin{lemma}\label{lemma:tightening}
  The structure $(\STREAM, \fbk)$ satisfies the tightening axiom (A1).
  Given streams
  \begin{itemize}
     \item $u \in \STREAM(\stream{X}',\stream{X})$,
     \item $f \in \STREAM(\delay\stream{S}\tensor \stream{X},\stream{S}\tensor \stream{Y})$, and
     \item $v \in \STREAM(\stream{Y},\stream{Y}')$;
  \end{itemize}
  we claim that $\fbk^{S}(u ; f ; v) = u ; \fbk^{S}(f) ; v$.
\end{lemma}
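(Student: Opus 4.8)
The plan is to obtain this statement as a direct specialization of the memory-enriched tightening lemma (\Cref{lemma:tighteninglong}), exactly as \Cref{lemma:sequentialcompositionassociative} was deduced from \Cref{lemma:sequentialcompositionassociativememories}, and \Cref{lemma:parfun} from \Cref{lemma:parfunassociativememories}. First I would instantiate all the auxiliary memory channels of \Cref{lemma:tighteninglong} at the monoidal unit, taking $A = B = C = T = \monoidalunit$. Under these choices the composites with memories $u^{\monoidalunit}$, $f^{\monoidalunit}$, $v^{\monoidalunit}$ become the ordinary streams $u$, $f$, $v$ of \Cref{def:sequentialstream}, the feedback $\fbk^{S}(f^{\monoidalunit \tensor \monoidalunit})$ becomes $\fbk^{S}(f)$, and the left-hand side $\fbk^{S}(u^{\monoidalunit} ; f^{\monoidalunit} ; v^{\monoidalunit})$ becomes exactly $\fbk^{S}(u ; f ; v)$.

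The second step is to discharge the braiding $\sigma$ that decorates the right-hand side of \Cref{lemma:tighteninglong}. With every memory channel equal to $\monoidalunit$, this $\sigma$ is a symmetry on copies of the unit, hence an identity up to the unitor coherence isomorphisms of the symmetric monoidal category. Invoking the coherence theorem (\Cref{theorem:coherence}), used throughout via \Cref{remark:usingcoherence}, lets me suppress these isomorphisms and identify the two sides, yielding $\fbk^{S}(u ; f ; v) = u ; \fbk^{S}(f) ; v$ on the nose.

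The only real obstacle is bookkeeping. One has to check that the memory objects on the two sides, $M(\fbk^{S}(u ; f ; v))$ and $M(u ; \fbk^{S}(f) ; v)$, are related precisely by the coherence isomorphism that \Cref{lemma:tighteninglong} produces once its memories are set to $\monoidalunit$, and that the stream equivalence of \Cref{def:monoidalstream}, namely dinaturality in the memory, respects this identification. Since \Cref{lemma:tighteninglong} has already carried out the genuine work, matching $\now$ by the string diagram in \Cref{strings:tightening} and matching $\later$ through the coinduction hypothesis, no new coinductive argument is needed here; the remaining task is confined to verifying that the coherence data collapses as claimed, which is routine.
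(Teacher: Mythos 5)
Your proposal is correct and follows essentially the same route as the paper: the paper's proof of this lemma is literally ``Consequence of \Cref{lemma:tighteninglong}, after applying the necessary coherence morphisms,'' which is precisely your instantiation of all auxiliary memories at $\monoidalunit$ followed by collapsing the residual symmetry and unitors via coherence. Your expanded bookkeeping of how $u^{\monoidalunit}$, $f^{\monoidalunit}$, $v^{\monoidalunit}$ degenerate to the plain composites and how the decorating $\sigma$ trivializes is exactly the content the paper leaves implicit.
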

\begin{proof}
  Consequence of \Cref{lemma:tighteninglong}, after applying the necessary coherence morphisms.
\end{proof}

\begin{lemma}\label{lemma:vanishinglong}
  The structure $(\STREAM, \fbk)$ with memories satisfies the vanishing axiom (A2).
  Given a stream
  \begin{itemize}
    \item $f \in \STREAM(\act{A}{\delay\stream{S} \tensor \stream{X}}, \stream{S} \tensor \stream{Y})$,
  \end{itemize}
  we claim that $\fbk^{I}(f^{A}) = \act{\rho}{f}$.
\end{lemma}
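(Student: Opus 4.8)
The plan is to follow the coinductive template already used for \Cref{lemma:tighteninglong}: record the memories of the two sides, name the isomorphism on the memory that witnesses their dinatural equivalence, check that the first actions agree up to that isomorphism by a one-line string diagram, and then close by coinduction. Here the state over which we feed back is the unit stream, $\stream{S} = \monoidalunit$ (this is the content of axiom (A2)), so $S_0 = \monoidalunit$ and $\delay\stream{S} = \monoidalunit$. Consequently the definition of $\fbk$ gives $M(\fbk^{I}(f^{A})) = M(f) \tensor S_0 = M(f) \tensor \monoidalunit$, whereas the representative $\act{\rho}{f}$ carries the memory $M(f)$; the two memories are related by the right unitor $\rho_{M(f)} \colon M(f) \tensor \monoidalunit \to M(f)$, which I would take as the dinatural witness.

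For the base case I would use that $\now(\fbk^{I}(f^{A})) = \now(f)$ by definition of feedback. Since both the delayed-state input $(\delay\monoidalunit)_0$ and the stored-state output $S_0$ equal $\monoidalunit$, the first action of $\fbk^{I}(f^{A})$ differs from that of $\act{\rho}{f}$ only by a spurious tensor factor $\monoidalunit$ on the memory wire; contracting it with $\rho_{M(f)}$ yields $\now(\fbk^{I}(f^{A})) ; (\rho_{M(f)} \tensor \im) = \now(\act{\rho}{f})$. This is a routine unit-law identity, and the unitors on the input wire $X_0$ are absorbed silently by the coherence theorem (\Cref{remark:usingcoherence}).

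The coinductive step is where the vanishing behaviour really shows up: by definition $\later(\fbk^{I}(f^{A})) = \fbk(\later(f)^{M(f) \tensor \monoidalunit})$, and since $\tail{\monoidalunit} = \monoidalunit$ the feedback inside $\later(f)$ is again taken over the unit state. I would therefore invoke the coinduction hypothesis (the present lemma applied to the stream $\later(f)$ with memory parameter $M(f) \tensor \monoidalunit$) to obtain $\fbk(\later(f)^{M(f) \tensor \monoidalunit}) = \act{\rho}{\later(f)}$; as $\later(\act{\rho}{f}) = \later(f)$, this is precisely the equation $\later(\fbk^{I}(f^{A})) = \act{\rho}{\later(\act{\rho}{f})}$ required by the template. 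Together with the base case this exhibits the dinaturality witnessed by $\rho$ and hence the claimed equality $\fbk^{I}(f^{A}) = \act{\rho}{f}$ in $\STREAM$.

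The hard part is not conceptual but bookkeeping: tracking which unitor acts on the memory as opposed to the input and output channels, and verifying that a single witness $\rho$ threads coherently through every stage of the recursion. The recursion is well-founded exactly because $\delay\monoidalunit = \monoidalunit$ and $\tail{\monoidalunit} = \monoidalunit$ keep the invariant ``equal up to the right unitor on the memory, over unit state'' stable under passing from a stream to its $\later$; the only genuine care needed is to confirm that the coherence isomorphism produced at each stage is indeed the right unitor and that the reindexing $\act{\rho}{-}$ matches on both sides.
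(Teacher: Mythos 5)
Your proposal is correct and follows essentially the same route as the paper's proof: both sides are exhibited as dinaturally equivalent via the right unitor $\rho_{M(f)} \colon M(f) \tensor I \to M(f)$, the first actions agree by definition of the feedback operation, and the tail equation is closed by coinduction using that the unit state stream is fixed by $\delay$ and $\tail{(-)}$. The extra bookkeeping you flag about which unitor acts on which wire is handled in the paper exactly as you suggest, by appeal to coherence.
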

\begin{proof}
  First, we note that both sides of the equation represent streams with different memories, $M(\fbk^{I}(f^{A})) = M(f) \tensor I$.
  We will prove that they are related by dinaturality over the right unitor $\rho$.
  We know that $\now(\fbk^{I}(f^{A})) = \now(f)$ by definition.
  Then, by coinduction, we konw that $\later(\fbk^{I}(f^{A})) = \act{\rho}{\later(f)}$.
\end{proof}

\begin{lemma}\label{lemma:vanishing}
  The structure $(\STREAM, \fbk)$ satisfies the vanishing axiom (A2).
  Given a stream $f \in \STREAM(\delay\stream{S} \tensor \stream{X}, \stream{S} \tensor \stream{Y})$,
  we claim that $\fbk^{I}(f) = f$.
\end{lemma}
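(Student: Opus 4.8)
The plan is to deduce the plain vanishing axiom directly from its memory-carrying strengthening, \Cref{lemma:vanishinglong}, in exactly the way the plain tightening axiom (\Cref{lemma:tightening}) was deduced from \Cref{lemma:tighteninglong}. The only work is to specialize the auxiliary memory channel to the monoidal unit and then to discharge the resulting right unitor using the dinatural quotient that defines monoidal streams.

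First I would unfold the notation. By the convention that $\fbk(f)$ abbreviates $\fbk(f^{\monoidalunit})$, the morphism $\fbk^{I}(f) \in \STREAM(\stream{X}, \stream{Y})$ is literally $\fbk^{I}(f^{\monoidalunit})$: the feedback of $f \in \STREAM(\delay\stream{S} \tensor \stream{X}, \stream{S} \tensor \stream{Y})$ over the unit state $\stream{S} = I$, carrying no extra memory.

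Next I would invoke \Cref{lemma:vanishinglong} with $A \defn \monoidalunit$. That lemma yields $\fbk^{I}(f^{\monoidalunit}) = \act{\rho}{f}$, where $\rho \colon M(f) \tensor I \to M(f)$ is the right unitor acting on the memory channel. Concretely, the feedback operation sets $M(\fbk^{I}(f^{\monoidalunit})) = M(f) \tensor I$ and leaves $\now$ and, coinductively, $\later$ untouched, so the sole discrepancy between $\fbk^{I}(f)$ and $f$ is this spurious factor of $I$ in the memory, witnessed by $\rho$.

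The final step is to observe that this discrepancy disappears under the equivalence that defines monoidal streams. By \Cref{def:monoidalstream}, streams are quotiented by dinaturality in the memory $M$, and reindexing the memory along the isomorphism $\rho \colon M(f) \tensor I \cong M(f)$ is precisely such a dinaturality move; hence $\act{\rho}{f}$ and $f$ are identified, giving $\fbk^{I}(f) = f$. I expect no genuine obstacle: all the real content — matching the first actions by string diagrams and propagating the equality through the coinductive definition of $\later$ — is already handled in \Cref{lemma:vanishinglong}, and the present lemma merely removes the bookkeeping unitor.
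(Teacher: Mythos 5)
Your proposal is correct and follows exactly the paper's own route: the paper's proof is literally ``Consequence of \Cref{lemma:vanishinglong}, after applying the necessary coherence morphisms,'' and you have simply spelled out that the coherence morphism in question is the right unitor on the memory channel, absorbed by the dinaturality quotient of \Cref{def:monoidalstream}. No gap; your elaboration of why $\act{\rho}{f}$ and $f$ are identified is precisely the content the paper leaves implicit.
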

\begin{proof}
  Consequence of \Cref{lemma:vanishinglong}, after applying the necessary coherence morphisms.
\end{proof}

\begin{lemma}\label{lemma:joininglong}
  The structure $(\STREAM, \fbk)$ with memories satisfies the joining axiom (A3).
  Given a stream
  \begin{itemize}
    \item $f \in \STREAM(\act{(A \tensor P \tensor Q)}{\delay \stream{S} \tensor \stream{X}}, \stream{S} \tensor \stream{Y})$,
  \end{itemize}
  we claim that
  \[\fbk^{S \tensor T}(f^{A \tensor (P \tensor Q)}) = \act{\alpha}{\fbk^{T}(\act{\sigma}{\fbk^{S}(\act{\sigma}{f^{(A \tensor Q) \tensor P}})})}.\]
\end{lemma}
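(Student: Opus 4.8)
The plan is to follow the same coinductive template used for tightening (\Cref{lemma:tighteninglong}) and vanishing (\Cref{lemma:vanishinglong}): exhibit both sides as monoidal streams whose memory objects differ by a coherence isomorphism, check that their first actions agree modulo that isomorphism, and then close the argument by coinduction on the rest of the action.

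First I would compute the memories of the two sides. Writing $LHS$ and $RHS$ for the two streams, the single feedback $\fbk^{S \tensor T}$ adds the memory component $(S \tensor T)_0 = S_0 \tensor T_0$, so that $M(LHS) = M(f) \tensor (S_0 \tensor T_0)$; whereas the nested feedback first adds $S_0$ and then $T_0$, so that $M(RHS) = (M(f) \tensor S_0) \tensor T_0$. These are related exactly by the associator $\alpha$, which is the coherence $\act{\alpha}{(-)}$ appearing in the statement, and the intervening symmetries $\act{\sigma}{(-)}$ account for the reordering of the $P$, $Q$ and $S_0$, $T_0$ channels forced by applying $\fbk^{S}$ and $\fbk^{T}$ in sequence. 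I would then check the base case $\now(LHS) = \act{\alpha}{\now(RHS)}$: since $\now(\fbk(f^{N})) = \now(f)$ by definition of the feedback operation, both first actions are equal to $\now(f)$ up to the reassociation and symmetries, which I would verify with a single string-diagram manipulation using only coherence morphisms (as in \Cref{strings:tightening}).

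For the inductive step I would use $\later(\fbk(f^{N})) = \fbk(\later(f)^{M(f) \tensor S_0})$ to rewrite $\later(LHS)$ and $\later(RHS)$ as the two sides of the same joining identity applied now to $\later(f)$, but with the grown memories $M(f) \tensor (S_0 \tensor T_0)$ on the left and $(M(f) \tensor S_0) \tensor T_0$ on the right. Crucially, the statement is phrased \emph{with memories} precisely so that the coinduction hypothesis is general enough: the roles of $A$, $P$, $Q$ in the hypothesis are instantiated by the accumulated memory channels produced by one step of feedback, so the recursion closes on a statement of the same shape.

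The main obstacle I anticipate is the bookkeeping of the symmetries $\sigma$ that thread the new memory components $S_0$ and $T_0$ past the pre-existing memory $M(f)$ and the auxiliary channels $P$, $Q$. One must keep these permutations coherent between the base step and the coinductive step so that the single associator $\act{\alpha}{(-)}$ witnessing dinatural equivalence of the two memory objects is exactly the one reproduced by $\later$; getting the order of $\fbk^{S}$, the symmetry, and $\fbk^{T}$ to match the accumulation of memory is the delicate part, and it is cleanest to discharge it diagrammatically rather than by component computation.
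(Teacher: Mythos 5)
Your proposal follows exactly the paper's argument: compute the two memory objects $M(f)\tensor(S_{0}\tensor T_{0})$ and $(M(f)\tensor S_{0})\tensor T_{0}$, observe that they are related by dinaturality over the associator $\alpha$, check that the first actions agree essentially by the definition of $\fbk$ (which leaves $\now$ unchanged), and close by coinduction on $\later$. This is the same coinductive template the paper uses, so the proposal is correct and needs no changes.
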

\begin{proof}
  First, we note that both sides of the equation (which, from now on, we call $LHS$ and $RHS$, respectively)
  represent strams with different memories.
  \begin{itemize}
    \item $M(LHS) = M(f) \tensor (S_{0} \tensor T_{0})$,
    \item $M(RHS) = (M(f) \tensor S_{0}) \tensor T_{0}$.
  \end{itemize}
  We will prove that they are related by dinaturality over the associator $\alpha$.
  We know that $\now(LHS) ; \alpha = \now(RHS)$ by definition.
  Then, by coinduction, we know that $\later(LHS) = \act{\alpha}{\later(RHS)}$.
\end{proof}
\begin{lemma}\label{lemma:joining}
  The structure $(\STREAM, \fbk)$ satisfies the joining axiom (A3).
  Given a stream
  \begin{itemize}
    \item $f \in \STREAM(\delay \stream{S} \tensor \stream{X}, \stream{S} \tensor \stream{Y})$,
  \end{itemize}
  we claim that $\fbk^{S \tensor T}(f) = \fbk^{T}(\fbk^{S}(f))$.
\end{lemma}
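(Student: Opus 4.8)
The plan is to obtain this short statement from its memory-augmented counterpart, \Cref{lemma:joininglong}, following exactly the pattern by which the vanishing and tightening axioms were derived from \Cref{lemma:vanishinglong} and \Cref{lemma:tighteninglong}. First I would specialize \Cref{lemma:joininglong} to the case in which all of the auxiliary memory channels are trivial, i.e.\ $A = P = Q = \monoidalunit$. In that situation the decorated stream $f^{A \tensor (P \tensor Q)}$ is just the plain stream $f \in \STREAM(\delay \stream{S} \tensor \stream{X}, \stream{S} \tensor \stream{Y})$, and the equation supplied by \Cref{lemma:joininglong},
\[\fbk^{S \tensor T}(f) = \act{\alpha}{\fbk^{T}(\act{\sigma}{\fbk^{S}(\act{\sigma}{f})})},\]
collapses, because each relabeling $\act{\alpha}{-}$ and $\act{\sigma}{-}$ now acts on a memory channel assembled purely out of $\monoidalunit$, and is therefore witnessed by an associator or a symmetry involving the unit.

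Next I would discharge these coherence isomorphisms explicitly. Unfolding the feedback operation, the two sides carry memories $M(f) \tensor (S_{0} \tensor T_{0})$ and $(M(f) \tensor S_{0}) \tensor T_{0}$, which are identified by the associator; since monoidal streams are taken up to dinaturality in the memory channel (\Cref{def:monoidalstream}), this associator is precisely a sliding witness for that quotient, so by the coherence theorem (\Cref{theorem:coherence}) and the convention of \Cref{remark:usingcoherence} it may be suppressed. The remaining symmetries involve only copies of $\monoidalunit$ and are coherence isomorphisms as well. Discharging all of them yields the desired identity $\fbk^{S \tensor T}(f) = \fbk^{T}(\fbk^{S}(f))$.

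As a sanity check, and as an alternative route, I could instead argue directly by coinduction in the same two-step style used for \Cref{lemma:joininglong}: first observe that $\now(\fbk^{S \tensor T}(f)) \dcomp \alpha = \now(\fbk^{T}(\fbk^{S}(f)))$ holds by definition, since unfolding the feedback operation leaves both first actions equal to $\now(f)$ up to the associator on memories, and then invoke the coinduction hypothesis for the rest of the action. I expect the only genuine bookkeeping --- and thus the main, though routine, obstacle --- to be confirming that the memory relabeling by $\alpha$ is compatible with the dinatural quotient defining streams, that is, that it is a legitimate sliding witness rather than extra data. The substantive geometric content has already been absorbed into \Cref{lemma:joininglong}, so here nothing beyond coherence manipulation remains.
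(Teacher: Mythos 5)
Your proposal is correct and follows essentially the same route as the paper, which disposes of this lemma in one line as a ``consequence of the memory-augmented joining lemma, after applying the necessary coherence morphisms''; your elaboration of how the associator between the memories $M(f)\tensor(S_{0}\tensor T_{0})$ and $(M(f)\tensor S_{0})\tensor T_{0}$ is absorbed by the dinaturality quotient is exactly the content the paper leaves implicit. The only minor slip is the claim in your first paragraph that every relabeling acts on a memory built purely from $\monoidalunit$ --- the associator relates genuinely non-trivial memories --- but you correct this yourself in the second paragraph.
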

\begin{proof}
  Consequence of \Cref{lemma:joininglong}, after applying the necessary coherence morphisms.
\end{proof}

\begin{lemma}\label{lemma:strengthlong}
  The structure $(\STREAM, \fbk)$ with memories satisfies the strength axiom (A4).
  Given two streams
  \begin{itemize}
    \item $f \in \STREAM(\act{(A \tensor P)}{\delay \stream{S} \tensor \stream{X}}, \stream{S} \tensor \stream{Y})$, and
    \item $g \in \STREAM(\act{B}{\stream{X}'},\stream{Y}')$,
  \end{itemize}
  we claim that
  \[\act{\alpha}{\fbk^{S}(f^{A} \tensor g^{B})} = \fbk^{S}(f^{A \tensor P})^{A \tensor P} \tensor g^{B}.\]
\end{lemma}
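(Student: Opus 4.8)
The plan is to follow verbatim the coinductive template already established for the tightening (A1), vanishing (A2), and joining (A3) axioms in \Cref{lemma:tighteninglong,lemma:vanishinglong,lemma:joininglong}. These ``with memories'' statements are deliberate strengthenings, engineered so that the coinduction hypothesis is strong enough to apply to the tails, and I expect (A4) to behave identically: a memory computation, a $\now$ equation proved by string diagrams, and a $\later$ equation closed by coinduction.

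First I would record that the two sides denote streams whose state spaces differ only by a coherence isomorphism. Using \Cref{def:parallelstream} and the feedback definition, the left-hand side $\fbk^{S}(f^{A} \tensor g^{B})$ has memory $(M(f) \tensor M(g)) \tensor S_{0}$, whereas the right-hand side $\fbk^{S}(f^{A \tensor P})^{A \tensor P} \tensor g^{B}$ has memory $(M(f) \tensor S_{0}) \tensor M(g)$. These are related by the composite coherence morphism $\alpha$, built from associators together with the symmetry $\sigma_{M(g),S_{0}}$ that commutes $M(g)$ past the fed-back wire $S_{0}$. Hence it suffices to show that the two streams are dinaturally equal along this morphism, that is, to verify the $\act{\alpha}$ witness.

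Next I would discharge the base of the coinduction, the $\now$ component. Since $\now(\fbk^{S}(h)) = \now(h)$ by definition, both first actions reduce to $\sigma ; (\now(f) \tensor \now(g)) ; \sigma$: on the left directly through \Cref{def:parallelstream}, and on the right because $\now(\fbk^{S}(f^{A \tensor P})) = \now(f)$ and parallel composition with $g$ then contributes $\now(g)$. The feedback thus does nothing to the first action, so the required equation $\now(\mathrm{LHS}) ; \alpha = \now(\mathrm{RHS})$ collapses to naturality of the symmetry, namely that $g$'s first action slides freely past the $S_{0}$ wire, exactly as in the string-diagram computations accompanying (A1)--(A3). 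Then comes the coinductive step: unfolding $\later$ for parallel composition and feedback, $\later(\mathrm{LHS})$ becomes $\fbk^{S}(\later(f)^{M(f)} \tensor \later(g)^{M(g)})$ and $\later(\mathrm{RHS})$ becomes the corresponding right-hand shape on $\later(f)$ and $\later(g)$, now with the enlarged channels $M(f), M(g), S_{0}$ in place of $A, P, B$.

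The strengthened statement is therefore precisely the coinduction hypothesis needed to conclude $\later(\mathrm{LHS}) = \act{\alpha}{\later(\mathrm{RHS})}$, closing the argument. The main obstacle I anticipate is the memory bookkeeping: tracking how the three auxiliary channels $A, P, B$ and the fed-back $S_{0}$ are grouped on each side, and pinning down the exact composite $\alpha$ (associators together with the $M(g)$--$S_{0}$ symmetry) that witnesses the dinatural equality. Once that isomorphism is fixed, both the $\now$ equation and the $\later$ step are routine and mirror the three preceding axioms, so the ``without memories'' version (A4) will then follow by inserting the appropriate coherence morphisms, exactly as \Cref{lemma:tightening,lemma:vanishing,lemma:joining} are derived from their memoryful counterparts.
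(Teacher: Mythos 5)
Your proposal follows the same route as the paper's proof: identify the memories of the two sides, exhibit the coherence morphism relating them, verify the $\now$ equation by a string-diagram computation, and close the $\later$ equation by coinduction on the memory-strengthened statement, exactly as for axioms (A1)--(A3). Your explicit computation of the two memories, $(M(f)\tensor M(g))\tensor S_{0}$ versus $(M(f)\tensor S_{0})\tensor M(g)$, and your identification of the witnessing morphism as associators together with the symmetry $\sigma_{M(g),S_{0}}$ are, if anything, slightly more careful than the paper's own bookkeeping, and the rest is the same argument.
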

\begin{proof}
  First, we note that both sides of the equation (which, from now on, we call $LHS$ and $RHS$, respectively)
  represent streams with different memories.
  \begin{itemize}
     \item $M(LHS) = (M(f) \tensor S_{0}) \tensor M(g)$,
     \item $M(RHS) = M(f) \tensor (S_{0} \tensor M(g))$.
  \end{itemize}
  We will prove that they are related by dinaturality over the symmetry $\alpha$.
  We know that $\now(LHS) = \now(RHS) ; \alpha$ by string diagrams (see \Cref{strings:strength}).
  Then, by coinduction, we know that $\act{\alpha}{\later(LHS)} = \later(RHS)$.
\end{proof}

\begin{figure}
\tikzset{every picture/.style={line width=0.85pt}} %
\begin{tikzpicture}[x=0.75pt,y=0.75pt,yscale=-1,xscale=1]
\draw    (180,60) .. controls (179.83,79.77) and (159.5,71.43) .. (160,90) ;
\draw    (140,60) .. controls (139.84,79.77) and (179.5,71.43) .. (180,90) ;
\draw    (160,60) .. controls (159.84,79.77) and (139.5,71.43) .. (140,90) ;
\draw    (120,60) -- (120,100) ;
\draw    (200,60) -- (200,90) ;
\draw    (140,90) -- (140,100) ;
\draw    (160,90) -- (160,100) ;
\draw   (110,100) -- (170,100) -- (170,120) -- (110,120) -- cycle ;
\draw    (180,90) -- (180,100) ;
\draw    (140,120) -- (140,130) ;
\draw    (160,120) -- (160,130) ;
\draw   (170,100) -- (210,100) -- (210,120) -- (170,120) -- cycle ;
\draw    (200,90) -- (200,100) ;
\draw    (140,120) -- (140,130) ;
\draw    (160,120) -- (160,130) ;
\draw    (180,120) -- (180,130) ;
\draw    (200,120) -- (200,160) ;
\draw    (180,130) .. controls (179.83,149.77) and (139.5,141.43) .. (140,160) ;
\draw    (140,130) .. controls (139.83,149.77) and (159.51,141.43) .. (160.01,160) ;
\draw    (160,130) .. controls (159.83,149.77) and (179.51,141.43) .. (180.01,160) ;
\draw    (120,120) -- (120,160) ;
\draw    (310,60) .. controls (310,70.37) and (290,70.37) .. (290,80) ;
\draw    (290,60) .. controls (290,70.37) and (310,70.37) .. (310,80) ;
\draw    (330,80) .. controls (330,90.37) and (310,90.37) .. (310,100) ;
\draw    (310,80) .. controls (310,90.37) and (330,90.37) .. (330,100) ;
\draw   (259,100) -- (319,100) -- (319,120) -- (259,120) -- cycle ;
\draw   (319,100) -- (359,100) -- (359,120) -- (319,120) -- cycle ;
\draw    (330,120) .. controls (330,130.37) and (310,130.37) .. (310,140) ;
\draw    (310,120) .. controls (310,130.37) and (330,130.37) .. (330,140) ;
\draw    (310,140) .. controls (310,150.37) and (290,150.37) .. (290,160) ;
\draw    (290,140) .. controls (290,150.37) and (310,150.37) .. (310,160) ;
\draw    (290,80) -- (290,100) ;
\draw    (290,120) -- (290,140) ;
\draw    (330,140) -- (330,160) ;
\draw    (330,60) -- (330,80) ;
\draw    (350,60) -- (350,100) ;
\draw    (350,120) -- (350,160) ;
\draw    (270,60) -- (270,100) ;
\draw    (270,120) -- (270,160) ;
\draw (160,56.6) node [anchor=south] [inner sep=0.75pt]  [font=\small]  {$P$};
\draw (120.01,56.6) node [anchor=south] [inner sep=0.75pt]  [font=\small]  {$A$};
\draw (140,56.6) node [anchor=south] [inner sep=0.75pt]  [font=\small]  {$B$};
\draw (180,56.6) node [anchor=south] [inner sep=0.75pt]  [font=\small]  {$X$};
\draw (200,56.6) node [anchor=south] [inner sep=0.75pt]  [font=\small]  {$X'$};
\draw (221,103.4) node [anchor=north west][inner sep=0.75pt]    {$=$};
\draw (140,110) node  [font=\footnotesize]  {${\textstyle \mathsf{now}( f)}$};
\draw (190,110) node  [font=\footnotesize]  {${\textstyle \mathsf{now}( g)}$};
\draw (289,110) node  [font=\footnotesize]  {${\textstyle \mathsf{now}( f)}$};
\draw (339,110) node  [font=\footnotesize]  {${\textstyle \mathsf{now}( g)}$};
\draw (270,56.6) node [anchor=south] [inner sep=0.75pt]  [font=\small]  {$A$};
\draw (290,56.6) node [anchor=south] [inner sep=0.75pt]  [font=\small]  {$B$};
\draw (310,56.6) node [anchor=south] [inner sep=0.75pt]  [font=\small]  {$P$};
\draw (330,56.6) node [anchor=south] [inner sep=0.75pt]  [font=\small]  {$X$};
\draw (350,56.6) node [anchor=south] [inner sep=0.75pt]  [font=\small]  {$X'$};
\draw (120,163.4) node [anchor=north] [inner sep=0.75pt]  [font=\small]  {$M_{f}$};
\draw (140,163.4) node [anchor=north] [inner sep=0.75pt]  [font=\small]  {$M_{g}$};
\draw (160.01,163.4) node [anchor=north] [inner sep=0.75pt]  [font=\small]  {$S_{0}$};
\draw (180.01,163.4) node [anchor=north] [inner sep=0.75pt]  [font=\small]  {$Y$};
\draw (200,163.4) node [anchor=north] [inner sep=0.75pt]  [font=\small]  {$Y'$};
\draw (270,163.4) node [anchor=north] [inner sep=0.75pt]  [font=\small]  {$M_{f}$};
\draw (290,163.4) node [anchor=north] [inner sep=0.75pt]  [font=\small]  {$M_{g}$};
\draw (310,163.4) node [anchor=north] [inner sep=0.75pt]  [font=\small]  {$S_{0}$};
\draw (330,163.4) node [anchor=north] [inner sep=0.75pt]  [font=\small]  {$Y$};
\draw (350,163.4) node [anchor=north] [inner sep=0.75pt]  [font=\small]  {$Y'$};
\end{tikzpicture}
\caption{The strength axiom (A4).}
\label{strings:strength}
\end{figure}
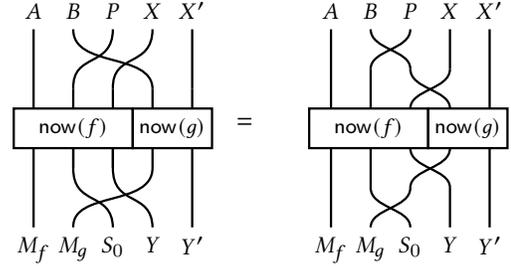

\begin{lemma}\label{lemma:strength}
  The structure $(\STREAM, \fbk)$ with memories satisfies the strength axiom (A4).
  Given two streams
  \begin{itemize}
    \item $f \in \STREAM(\stream{S} \tensor \stream{X}, \stream{S} \tensor \stream{Y})$, and
    \item $g \in \STREAM(\stream{X}',\stream{Y}')$,
  \end{itemize}
  we claim that
  \[\act{\alpha}{\fbk^{S}(f \tensor g)} = \fbk^{S}(f) \tensor g.\]
\end{lemma}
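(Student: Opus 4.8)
The plan is to follow the same two-step pattern already used for the tightening, vanishing and joining axioms (see \Cref{lemma:tightening,lemma:vanishing,lemma:joining}): the genuine coinductive work is carried out once, in the ``with memories'' version \Cref{lemma:strengthlong}, and the present statement is then recovered as the special case in which the auxiliary memory channels are trivial. So rather than re-running a coinduction, I would invoke \Cref{lemma:strengthlong} directly and specialize.

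Concretely, I would instantiate \Cref{lemma:strengthlong} at $A = P = B = \monoidalunit$. Under this specialization the streams $f^{A}$ and $g^{B}$ become $f$ and $g$ up to the left and right unitors, the exponent $A \tensor P$ collapses to $\monoidalunit$, and the asserted equality $\act{\alpha}{\fbk^{S}(f^{A} \tensor g^{B})} = \fbk^{S}(f^{A \tensor P})^{A \tensor P} \tensor g^{B}$ reduces to exactly $\act{\alpha}{\fbk^{S}(f \tensor g)} = \fbk^{S}(f) \tensor g$. The only point that must be verified is that this collapse is compatible with the bracketing of the memory objects: the two sides carry memories $(M(f) \tensor S_{0}) \tensor M(g)$ and $M(f) \tensor (S_{0} \tensor M(g))$, and their identification is precisely the dinaturality over the associator $\alpha$ that \Cref{lemma:strengthlong} already establishes at the level of $\now$ and $\later$. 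Hence no further coinduction is required at this stage.

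I expect the main, and essentially the only, obstacle to be coherence bookkeeping: one must insert the correct unitors and associators so that the specialized source and target sequences in $\STREAM$ match on the nose, and confirm that the resulting morphisms coincide as morphisms of $\STREAM$ rather than merely up to an unaccounted isomorphism. This is routine once we work in a strict monoidal category via the coherence theorem (\Cref{theorem:coherence}) and silently reintroduce coherence maps as in \Cref{remark:usingcoherence}. The substantive content—that feedback commutes with tensoring—has already been discharged in \Cref{lemma:strengthlong}, so the proof here should amount to little more than the sentence ``consequence of \Cref{lemma:strengthlong}, after applying the necessary coherence morphisms.''
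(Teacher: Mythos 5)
Your proposal matches the paper's proof exactly: the paper also deduces this lemma as a direct consequence of the ``with memories'' version (\Cref{lemma:strengthlong}), specialized to trivial auxiliary memories and adjusted by the necessary coherence morphisms. No further comment is needed.
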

\begin{proof}
  Consequence of \Cref{lemma:strengthlong}, after applying the necessary coherence morphisms.
\end{proof}

\begin{lemma}\label{lemma:slidinglong}
  The structure $(\STREAM, \fbk)$ with memories satisfies the sliding axiom (A5).
  Given two streams
  \begin{itemize}
    \item $f \in \STREAM(\act{A}{\delay \stream{S} \tensor \stream{X}}, \stream{T} \tensor \stream{Y})$, and
    \item $r \in \STREAM(\act{C}{\stream{T}}, \stream{S})$
  \end{itemize}
  we claim that, for each $k \colon B \tensor Q \to C \tensor P$,
  \[\act{k}{\fbk^{\stream{S}}(f^{A \tensor P} ; (r^{C} \tensor \im))} = \fbk^{\stream{T}}(\act{k}{(\delay r^{C} \tensor \im) ; f^{A \tensor P}}).\]
\end{lemma}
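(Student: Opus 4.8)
The plan is to follow the coinductive template already used for axioms (A1)--(A4): present the two sides as streams carrying (generally different) memories, check that their first actions agree up to the relevant dinaturality, and then close by the coinduction hypothesis on the tails. Writing $LHS$ and $RHS$ for the two sides, I would first record the memories. On the left, sequential composition with memories gives the composite $f^{A \tensor P} ; (r^{C} \tensor \im)$ internal memory $M(f) \tensor M(r)$, and the feedback over $\stream{S}$ then appends the first component, so $M(LHS) = M(f) \tensor M(r) \tensor S_{0}$. On the right, $\delay r$ has trivial memory $I$ and identity first action, so $(\delay r^{C} \tensor \im) ; f^{A \tensor P}$ has internal memory $M(f)$, and feedback over $\stream{T}$ appends $T_{0}$, giving $M(RHS) = M(f) \tensor T_{0}$; the precomposition $\act{k}{(-)}$ alters only the extra input channel and not the internal memory. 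The two memories thus differ exactly in the slot where $r$ acts.

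Next I would identify the dinaturality witness relating these memories. Because the feedback is delayed, at the very first step the fed-back channel is the unit, so $r$ does not yet influence $f$; instead it is the first action $\now(r) \colon C \tensor T_{0} \to M(r) \tensor S_{0}$ that migrates between the two descriptions. Concretely, $RHS$ keeps $T_{0}$ in memory and only runs $r$'s first action at the next step, whereas $LHS$ has already applied $\now(r)$ and keeps $M(r) \tensor S_{0}$. I would take the translation morphism to be assembled from $\now(r)$ together with the mediating $k$ on the $B, Q, C, P$ channels, and verify by a string-diagram computation unfolding the definitions of sequential composition and of $\fbk$ that $\now(LHS)$ and $\now(RHS)$ agree after sliding this witness past the memory bar. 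This is exactly dinaturality of streams in the memory channel.

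Finally, the coinductive step. Using the definition of $\fbk$ on streams, $\later$ of either feedback unfolds to another feedback applied to $\later$; and at the level of $r$ one has $\later(\delay r) = r$ while $\later(r)$ is the genuine tail of $r$. Tracking these, the equation to be proved for the tails is again an instance of the sliding relation, now for $\later(f)$, the tail of $r$, and a mediating morphism updated from $\now(r)$ and $k$. The coinduction hypothesis---the statement of this lemma applied to the tails---then finishes the argument.

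The main obstacle will be the bookkeeping peculiar to this axiom: unlike (A1)--(A4), where the two sides are related by a fixed coherence isomorphism ($\sigma$, $\alpha$ or $\rho$), here the witness is the genuinely non-trivial first action $\now(r)$ of a full stream. One must check that the delay on $r$ shifts its actions by exactly one step, so that $\now(r)$ lands in the memory at stage $0$ while $\later(r)$ drives the sliding at stage $1$, and that the updated $k$ together with the regrouping of $M(f) \tensor M(r) \tensor S_{0}$ against $M(f) \tensor T_{0}$ lines up with the dinaturality quotient. Once the stage-$0$ diagram commutes, the remaining work is the routine coinductive descent.
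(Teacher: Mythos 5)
Your plan is the paper's plan: exhibit the two sides as streams with different internal memories, relate them by a dinaturality witness built from $\now(r)$ (rather than a mere coherence isomorphism, which is indeed what distinguishes (A5) from (A1)--(A4)), check the first actions by a string-diagram computation, and close by applying the coinduction hypothesis to the tails with the mediating morphism $k$ replaced by one derived from $\now(r)$ --- this being exactly why the statement carries an arbitrary $k$. The identity $\later(\delay r) = r$ plays the role you assign to it.

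There is, however, one concrete slip in the bookkeeping, and it sits precisely at the subtle point of this lemma. You compute $M(RHS) = M(f) \tensor T_{0}$ on the grounds that $\delay r$ has trivial memory and identity first action. But then your proposed witness cannot typecheck: $\now(r) \colon C \tensor T_{0} \to M(r) \tensor S_{0}$ needs a $C$ in its domain, and your $M(RHS)$ contains no $C$. The resolution is that $\delay r^{C}$, as a stream whose extra memory channel $C$ arrives at time $0$ but whose processing of $\stream{T}$ is postponed to time $1$, must \emph{buffer} that $C$ in its internal state for one tick; its first action stores $C$, so its internal memory at stage $0$ is $C$, not $I$. Consequently $M(RHS) \cong C \tensor M(f) \tensor T_{0}$ (up to symmetry), and the witness $\sigma \dcomp (\im_{M(f)} \tensor \now(r)) \colon M(f) \tensor C \tensor T_{0} \to M(f) \tensor M(r) \tensor S_{0}$ does typecheck; this is the morphism along which dinaturality identifies the two sides, and it (not $k$, which occurs identically in both first actions and is consumed at stage $0$) becomes the new mediating morphism for the coinductive call. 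Once you repair the memory of the right-hand side in this way, the rest of your argument goes through exactly as in the paper.
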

\begin{proof}
  First, we note that both sides of the equation (which, from now on, we call $LHS$ and $RHS$, respectively)
  represent streams with different memories.
  \begin{itemize}
    \item $M(LHS) = M(f) \tensor M(r)  \tensor S_{0}$,
    \item $M(RHS) = M(r) \tensor M(f) \tensor T_{0}$.
  \end{itemize}
  We will prove that they are related by dinaturality over the symmetry and the first action of $r$, that is, $\sigma ; (\now(r) \tensor \im)$.
  We know that $\now(LHS)= \now(RHS) ; (\sigma ; (\im \tensor \now(r)))$ by string diagrams (see \Cref{strings:sliding}).
  Using coinduction,
  \[\begin{aligned}
      & \act{(\sigma ; \tid{\now(r)})}{\later(LHS)}  \\
      = & \act{(\sigma ; \tid{\now(r)})}{\fbk^{\stream{S}}(\later(f)^{M(f) \tensor S_{0}} ; (\later(r)^{M(r)} \tensor \im))} \\
      = & \fbk^{\stream{T}}(\act{(\sigma ; \tid{\now(r)})}{(\delay \later(r)^{M(r)} \tensor \im) ; \later(f)^{M(f) \tensor S_{0}}}) \\
      = & \fbk^{\stream{T}}(\act{\sigma}{(\later(\delay r)^{M(r)} \tensor \im) ; \later(f)^{M(f) \tensor S_{0}}}) \\
      = & \later(RHS),
    \end{aligned}\]
  we show that $\act{(\sigma ; \tid{\now(r)})}{\later(LHS)} = \later(RHS)$.
\end{proof}

\begin{figure}
\tikzset{every picture/.style={line width=0.85pt}} %
\begin{tikzpicture}[x=0.75pt,y=0.75pt,yscale=-1,xscale=1]
\draw    (80,40) -- (80,90) ;
\draw    (100,40) -- (100,60) ;
\draw    (120,40) -- (120,60) ;
\draw    (140,40) -- (140,130) ;
\draw   (90,60) -- (130,60) -- (130,80) -- (90,80) -- cycle ;
\draw    (80,90) .. controls (80,110.43) and (100.33,100.43) .. (100,120) ;
\draw    (100,90) .. controls (100,110.43) and (80.33,100.43) .. (80,120) ;
\draw    (120,80) -- (120,130) ;
\draw   (90,130) -- (150,130) -- (150,150) -- (90,150) -- cycle ;
\draw    (80,150) .. controls (80,170.43) and (100.33,160.43) .. (100,180) ;
\draw    (100,150) .. controls (100,170.43) and (80.33,160.43) .. (80,180) ;
\draw    (80,130) -- (80,150) ;
\draw   (90,190) -- (130,190) -- (130,210) -- (90,210) -- cycle ;
\draw    (80,190) -- (80,230) ;
\draw    (140,150) -- (140,230) ;
\draw    (120,210) -- (120,230) ;
\draw    (100,210) -- (100,230) ;
\draw    (190,40) -- (190,80) ;
\draw    (210,40) -- (210,60) ;
\draw    (230,40) -- (230,60) ;
\draw    (250,40) -- (250,150) ;
\draw   (200,60) -- (240,60) -- (240,80) -- (200,80) -- cycle ;
\draw    (230,80) -- (230,100) ;
\draw   (200,150) -- (260,150) -- (260,170) -- (200,170) -- cycle ;
\draw    (190,150) -- (190,170) ;
\draw   (200,190) -- (240,190) -- (240,210) -- (200,210) -- cycle ;
\draw    (230,170) -- (230,190) ;
\draw    (190,190) -- (190,230) ;
\draw    (250,170) -- (250,230) ;
\draw    (230,210) -- (230,230) ;
\draw    (210,210) -- (210,230) ;
\draw    (210,80) .. controls (210,90.37) and (190,90.37) .. (190,100) ;
\draw    (190,80) .. controls (190,90.37) and (210,90.37) .. (210,100) ;
\draw    (190,100) .. controls (190,110.37) and (230,110.37) .. (230,120) ;
\draw    (210,100) .. controls (210,110.37) and (190,110.37) .. (190,120) ;
\draw    (230,100) .. controls (230,110.37) and (210,110.37) .. (210,120) ;
\draw    (190,120) -- (190,130) ;
\draw    (210,120) -- (210,130) ;
\draw    (230,120) -- (230,130) ;
\draw    (230,130) .. controls (230,140.37) and (190,140.37) .. (190,150) ;
\draw    (190,130) .. controls (190,140.37) and (210,140.37) .. (210,150) ;
\draw    (210,130) .. controls (210,140.37) and (230,140.37) .. (230,150) ;
\draw    (210,170) .. controls (210,180.37) and (190,180.37) .. (190,190) ;
\draw    (190,170) .. controls (190,180.37) and (210,180.37) .. (210,190) ;
\draw    (100,80) -- (100,90) ;
\draw    (100,120) -- (100,130) ;
\draw    (80,120) -- (80,130) ;
\draw    (100,180) -- (100,190) ;
\draw    (80,180) -- (80,190) ;
\draw    (120,150) -- (120,190) ;
\draw (80,233.4) node [anchor=north] [inner sep=0.75pt]  [font=\small]  {$M_{f}$};
\draw (100,233.4) node [anchor=north] [inner sep=0.75pt]  [font=\small]  {$M_{r}$};
\draw (80,36.6) node [anchor=south] [inner sep=0.75pt]  [font=\small]  {$A$};
\draw (100,36.6) node [anchor=south] [inner sep=0.75pt]  [font=\small]  {$B$};
\draw (120,36.6) node [anchor=south] [inner sep=0.75pt]  [font=\small]  {$Q$};
\draw (140,36.6) node [anchor=south] [inner sep=0.75pt]  [font=\small]  {$X$};
\draw (110,70) node  [font=\footnotesize]  {$k$};
\draw (110,200) node  [font=\footnotesize]  {${\textstyle \mathsf{now}( r)}$};
\draw (120,140) node  [font=\footnotesize]  {${\textstyle \mathsf{now}( f)}$};
\draw (220,70) node  [font=\footnotesize]  {$k$};
\draw (220,200) node  [font=\footnotesize]  {${\textstyle \mathsf{now}( r)}$};
\draw (230,160) node  [font=\footnotesize]  {${\textstyle \mathsf{now}( f)}$};
\draw (156,133.4) node [anchor=north west][inner sep=0.75pt]    {$=$};
\draw (120,233.4) node [anchor=north] [inner sep=0.75pt]  [font=\small]  {$S_{0}$};
\draw (140,233.4) node [anchor=north] [inner sep=0.75pt]  [font=\small]  {$Y$};
\draw (190,233.4) node [anchor=north] [inner sep=0.75pt]  [font=\small]  {$M_{f}$};
\draw (210,233.4) node [anchor=north] [inner sep=0.75pt]  [font=\small]  {$M_{r}$};
\draw (230,233.4) node [anchor=north] [inner sep=0.75pt]  [font=\small]  {$S_{0}$};
\draw (250,233.4) node [anchor=north] [inner sep=0.75pt]  [font=\small]  {$Y$};
\draw (190,36.6) node [anchor=south] [inner sep=0.75pt]  [font=\small]  {$A$};
\draw (210,36.6) node [anchor=south] [inner sep=0.75pt]  [font=\small]  {$B$};
\draw (230,36.6) node [anchor=south] [inner sep=0.75pt]  [font=\small]  {$Q$};
\draw (250,36.6) node [anchor=south] [inner sep=0.75pt]  [font=\small]  {$X$};
\end{tikzpicture}
\caption{The sliding axiom (A5).}
\label{strings:sliding}
\end{figure}

\begin{lemma}\label{lemma:sliding}
  The structure $(\STREAM, \fbk)$ satisfies the sliding axiom (A5).
  Given two streams
  \begin{itemize}
    \item $f \in \STREAM(\delay \stream{S} \tensor \stream{X}, \stream{T} \tensor \stream{Y})$, and
    \item $r \in \STREAM(\stream{T}, \stream{S})$
  \end{itemize}
  we claim that, $\fbk^{\stream{S}}(f ; (r \tensor \im)) = \fbk^{\stream{T}}((\delay r \tensor \im) ; f)$.
\end{lemma}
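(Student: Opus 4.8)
The plan is to obtain the plain sliding axiom as the unit-memory specialisation of the memoryful version already established in \Cref{lemma:slidinglong}, exactly as \Cref{lemma:tightening,lemma:vanishing,lemma:joining,lemma:strength} were each deduced from their memoryful counterparts. Concretely, I would instantiate \Cref{lemma:slidinglong} at $A = B = C = P = Q = \monoidalunit$, so that all the tracked memory channels collapse to the monoidal unit. Under this choice the tracked memories in $f^{A \tensor P}$ and $r^{C}$ reduce away, so these streams become $f$ and $r$ (up to unitors), and the memoryful feedback operators $\fbk^{\stream{S}}$ and $\fbk^{\stream{T}}$ reduce to the plain feedback $\fbk^{\stream{S}}(\cdot) = \fbk(\cdot^{\monoidalunit})$ and $\fbk^{\stream{T}}(\cdot) = \fbk(\cdot^{\monoidalunit})$ appearing in the statement.

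With all auxiliary objects equal to $\monoidalunit$, the connecting morphism $k \colon B \tensor Q \to C \tensor P$ becomes a morphism $\monoidalunit \tensor \monoidalunit \to \monoidalunit \tensor \monoidalunit$, and I would take it to be the identity (equivalently, the coherence unitor). The left-hand side of \Cref{lemma:slidinglong} then reads $\act{k}{\fbk^{\stream{S}}(f ; (r \tensor \im))}$ and the right-hand side reads $\fbk^{\stream{T}}(\act{k}{(\delay r \tensor \im) ; f})$; since $k$ acts only on unit factors, both instances of $\act{k}{-}$ are identities up to unitors, leaving exactly the two streams $\fbk^{\stream{S}}(f ; (r \tensor \im))$ and $\fbk^{\stream{T}}((\delay r \tensor \im) ; f)$ of the claim.

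Finally, the equation of \Cref{lemma:slidinglong} is stated up to dinaturality (witnessed there by the symmetry together with $\now(r)$); once every auxiliary memory factor is the unit, this dinatural relation degenerates to an honest equality of morphisms in $\STREAM$, so no residual quotient survives. I expect the only obstacle to be bookkeeping: inserting the correct unitors and associators so that the degenerate composites are literally the morphisms named in the statement, which is justified by the coherence theorem (\Cref{theorem:coherence}). No fresh coinductive argument is required, since the $\now$/$\later$ reasoning was already carried out in \Cref{lemma:slidinglong}.
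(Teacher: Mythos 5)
Your proposal is correct and matches the paper's proof, which reads exactly ``Consequence of \Cref{lemma:slidinglong}, after applying the necessary coherence morphisms'': one instantiates the memoryful sliding lemma at $A = B = C = P = Q = \monoidalunit$ with $k$ the identity (unitor) and cleans up with coherence. The only cosmetic remark is that your final worry about the dinaturality relation ``degenerating'' is unnecessary — the witnessing dinaturality in \Cref{lemma:slidinglong} is over the internal memories $M(f) \tensor M(r) \tensor S_{0}$ versus $M(r) \tensor M(f) \tensor T_{0}$, which do not collapse at unit auxiliary memories, but morphisms of $\STREAM$ are by definition (\Cref{def:monoidalstream}) already equivalence classes under dinaturality in the memory, so the conclusion of \Cref{lemma:slidinglong} is an honest equality in $\STREAM$ from the start.
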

\begin{proof}
  Consequence of \Cref{lemma:slidinglong}, after applying the necessary coherence morphisms.
\end{proof}

\begin{theorem}[From \Cref{th:monoidalstreamsfeedback}]\label{th:appendix:monoidalstreamsfeedback}
  \MonoidalStreams{} over a \symmetricMonoidalCategory{}
  $(\catC, \tensor, \monoidalunit)$ form a \(\delay\)-feedback monoidal category
  $(\STREAM, \fbk)$.
\end{theorem}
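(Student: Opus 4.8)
The plan is to assemble the theorem from two ingredients that are already in place: the symmetric monoidal structure of $\STREAM$ and the verification of the five feedback axioms. First I would invoke \Cref{th:monoidalstreamscategory} to conclude that $(\STREAM, \tensor, \monoidalunit)$ is a symmetric monoidal category, with sequential composition associative and unital (\Cref{lemma:sequentialcompositionassociative}) and parallel composition bifunctorial (\Cref{lemma:parfun}). Next I would observe that $\delay$, as defined in \Cref{def:delay-fun-stream}, is a monoidal endofunctor: it acts as the identity on objects and its coherence data is inherited from $\NcatC$. It then remains to check that the feedback operator $\fbk$ satisfies axioms (A1)--(A5) of \Cref{def:feedback} against the guarding functor $\delay$.

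I would dispatch each axiom by citing the corresponding lemma: tightening (A1) from \Cref{lemma:tightening}, vanishing (A2) from \Cref{lemma:vanishing}, joining (A3) from \Cref{lemma:joining}, strength (A4) from \Cref{lemma:strength}, and sliding (A5) from \Cref{lemma:sliding}. Each of these is itself obtained by applying the appropriate coherence morphisms to a strengthened ``with memories'' statement (\Cref{lemma:tighteninglong}, \Cref{lemma:vanishinglong}, \Cref{lemma:joininglong}, \Cref{lemma:strengthlong}, \Cref{lemma:slidinglong}). The uniform shape of every such argument is the same: the two streams being compared have memory objects differing by a coherence isomorphism, one checks by string diagrams that their first actions $\now$ agree up to that isomorphism, and then one closes the argument coinductively on $\later$.

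The main obstacle is exactly this coinductive closure. Stated in their bare form, the axioms do not admit an immediate coinduction hypothesis, because as one unfolds $\later$ the memory channels on the two sides accumulate and no longer match on the nose --- they are only related by dinaturality. The ``with memories'' generalizations resolve this by carrying the extra memory parameters explicitly, so that the recursion reproduces a statement of the same strengthened form at each stage; the required dinaturality witness is a specific coherence morphism in each case (a unitor for (A2), an associator for (A3), a symmetry for (A4), and the symmetry composed with $\now(r)$ for the sliding axiom (A5)). Once all five axioms are verified in this way, the tuple $(\STREAM, \delay, \fbk)$ is by definition a $\delay$-feedback monoidal category, which is the claim.
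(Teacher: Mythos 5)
Your proposal is correct and follows essentially the same route as the paper: the appendix proof of this theorem is exactly the assembly of \Cref{lemma:tightening,lemma:vanishing,lemma:joining,lemma:strength,lemma:sliding}, each obtained from its ``with memories'' strengthening by coherence, and your explanation of why the strengthened versions are needed (to make the coinduction hypothesis close under unfolding of $\later$) matches the paper's strategy. One trivial quibble: the dinaturality witness for the strength axiom (A4) is an associator (the memories differ as $(M(f)\tensor S_{0})\tensor M(g)$ versus $M(f)\tensor(S_{0}\tensor M(g))$), not a symmetry, though the paper's own wording in \Cref{lemma:strengthlong} is itself imprecise on this point.
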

\begin{proof}
  We have proven that it satisfies all the feedback axioms:
  \begin{itemize}
    \item the tightening axiom by \Cref{lemma:tightening},
    \item the vanishing axiom by \Cref{lemma:vanishing},
    \item the joining axiom by \Cref{lemma:joining},
    \item the strength axiom by \Cref{lemma:strength},
    \item and the sliding axiom by \Cref{lemma:sliding}.
  \end{itemize}
  Thus, it is a feedback structure
\end{proof}
\section{Type theory}\label{appendix:section:typetheory}

As our base type theory, we consider the that of symmetric monoidal categories over some generators forming a multigraph $\mathcal{G}$, as described by Shulman~\cite{shulman2016categorical}.
Our only change will be to consider an \emph{unbiased presentation}, meaning that we consider n-ary tensor products instead of only binary and 0-ary (the monoidal unit).
This reduces the number of rules: the binary case and the 0-ary case of the usual presentation are taken care of by a single n-ary case.

\begin{figure}
\begin{mathpar}
  \infer[\textsc{Gen}]{f \in \mathcal{G}(A_1,\dots,A_n;B) \and
    \Gamma_1 \entails x_1 : A_1 \dots \Gamma_n \entails x_n : A_n}
    {\Shuf(\Gamma_1,\dots,\Gamma_n;\Gamma) \entails f(x_1,\dots,x_n) : B} \\
  \infer[\textsc{Pair}]{\Gamma_1 \entails x_1 : A_1 \and \dots \and \Gamma_n \entails x_n : A_n}{\Shuf(\Gamma,\Delta) \entails [x_1,\dots,x_n] : A_1 \otimes \dots \otimes A_n}  \and \infer[\textsc{Var}]{ }{x : A\entails x : A} \\
  \infer[\textsc{Split}]{\Delta \entails m : A_1 \otimes \dots \otimes A_n \and \Gamma, x_1 : A_1, \dots , x_n : A_n \entails z : C}{ \Shuf(\Gamma_1,\dots,\Gamma_n) \entails \textsc{Split}\ m \to [x_1,\dots,x_n]\ \textsc{in}\ z : C} \\
\end{mathpar}
\caption{Type theory of symm. monoidal categories~\cite{shulman2016categorical}.}
\end{figure}

\begin{definition}[Shuffling contexts~\cite{shulman2016categorical}]
A \emph{shuffle of contexts}
$\defining{linkshuffle}{\ensuremath{\operatorname{Shuf}}}(\Gamma_{1},\dots,\Gamma_{n})$
is the result of a permutation of $\bigsqcup^{n}_{i=0} \Gamma_{n}$ that leaves invariant the
internal order of each $\Gamma_{i}$. Shuffles allow us to derive morphisms that
make use of the symmetry without introducing redundancy in our type theory.
\end{definition}

Substitution is admissible in the type theory for symmetric monoidal categories~\cite{shulman2016categorical}.
It can be inductively defined: we write $P[m/x]$ for the substitution of the variable $x$ by a term $m$ inside the term $p$.  Using substitution, we can state a pair of $\beta/\eta$-reduction equalities for the terms of our type theory.

\begin{itemize}
  \item $\SPLIT{[E_1,\dots,E_{n}]}{x_1,\dots,x_n}{z} \equiv z[E_{1}/x_{1},\dots]$
  \item $\SPLIT{[E_{1},\dots,E_{n}]}{x_{1},\dots,x_{n}}{N[[x_{1},\dots,x_{n}]/u]}$ $\equiv N[[E_{1},\dots,E_{n}]/u]$
\end{itemize}

\subsection{Type theory for a strong monoidal endofunctor}

\begin{definition}[Signature]
  Let $T$ be a set of basic types. We write
  \[\freeendo{T} \coloneqq \{\partial^{n_{1}}t_{1}\partial^{n_{2}}t_{2}\dots\partial^{n_{k}}t_{k} \mid n_{1},\dots,n_{k} \in \mathbb{N}, t_{i} \in T\}\]
  for the free monoid-with-an-endomorphism $\partial : T^{\star}_{\partial} \to T^{\star}_{\partial}$ over $T$.
  The \textbf{signature} for a type theory of monoidal categories with a monoidal endofunctor is given by a pair of
  functions $s,t\colon \mathcal{O} \rightrightarrows \freeendo{T}$, assigning source and target to every generator from a set $\mathcal{O}$.
\end{definition}

\begin{example}
  We will usually include two families of generators in our theory. For each type $t_{0} \in T$ that
  can be copied, we have a $\mathsf{copy} \in \mathcal{O}$ generator with $s(\mathsf{copy}) = t_{0} \in \freeendo{T}$
  and $t(\mathsf{copy}) = t_{0} \cdot t_{0} \in \freeendo{T}$.
\end{example}

\begin{definition}
  The type theory of symmetric monoidal categories with a symmetric monoidal endofunctor over a signature
  $\mathcal{O} \rightrightarrows \freeendo{T}$ is the type theory of symmetric
  monoidal categories extended with an operator $\partial$ on types such that
  \[\partial [] = [] \quad\mbox{and}\quad \partial (\Gamma, x {:} A) = \partial \Gamma , x {:} \partial A,\]
  and the following $\Delay$ introduction rule.
  \begin{mathpar}
    \infer[\defining{linkDelay}{\textsc{Delay}}]{\Gamma \entails x : A}{\partial \Gamma \entails \Delay(x) : \partial A}
  \end{mathpar}
  The delay operator is a monoid homomorphism on types, satisfying
  $\partial I \equiv I$ and
  $\partial (A \otimes B) \equiv \partial A \otimes \partial B$. The \(\Delay\) introduction rule satisfies the following conversion equalities, that state that it acts as a functor preserving the monoidal structure.
  \begin{itemize}
    \item $\Delay(x) \equiv x$ for \(x\) a variable,
    \item $[\Delay(e_{1}), \dots, \Delay(e_{n})] \equiv \Delay([e_{1},\dots,e_{n}])$,
    \item $\begin{aligned}[t]\SPLIT{\Delay(m)}{e_{1},\dots,e_{n}}{\Delay(z)} \\ \equiv \Delay(\SPLIT{m}{e_{1},\dots,e_{n}}{z}),\end{aligned}$
  \end{itemize}
  We choose not to explicitly state the delay rule when writing terms of the type theory (as we do in \Cref{sec:addingfeedback}), as it does not cause ambiguity with the typing. However, we do write it when typechecking.
\end{definition}

\subsection{Type Theory for Delayed Feedback}
We finally augment the type theory of symmetric monoidal categories with a
monoidal endofunctor by adding the following derivation rule.
\begin{mathpar}
  \infer[\defining{linkFbk}{\textsc{Fbk}}]
    {\Gamma , s : \partial S \entails x(s) : S \tensor A}
    {\Gamma \entails \mbox{\textsc{Fbk}}\ s.\ x(s) : A}
\end{mathpar}
Following the axioms of categories with delayed feedback and their normalization
theorem, we introduce rules that follow from the feedback axioms.
Note that these rules simplify multiple applications of feedback into a single application at the head of the term.
\begin{enumerate}
\item $g(\Fbk{s}{x(s)})\ \equiv$ \\
  $\Fbk{s}{\LetIn{[t,b]}{x(s)}{[t,g(b)]}}$
\item
  $[\Fbk{s}{x(s)},\Fbk{t}{y(t)}]\ \equiv\ \Fbk{m}{}{}$ \\
  $\SPLIT{m}{s,t}{}$ \\
  $\SPLIT{x(s)}{u,v}{}$ \\
  $\SPLIT{y(t)}{u',v'}{}$ \\
  $[[u,u'],[v,v']]$
\item
  $\SPLIT{m}{x_{1},\dots,x_{n}}{\Fbk{s}{z}}\ \equiv\ $ \\
  $\Fbk{s}{\SPLIT{m}{x_{1},\dots,x_{n}}{z}}$
\item
  $\SPLIT{(\Fbk{u}{z(u)})}{x_{1},\dots,x_{n}}{m}\ \equiv$ \\
  $\Fbk{u}{\SPLIT{z(u)}{v,n}{}} \\ \SPLIT{n}{x_{1},\dots,x_{n}}{[v,m]}$
\item
  $x\ \equiv\ \Fbk{i}{\SPLIT{i}{}{[t,x]}}$
\item
  $\Fbk{x}{\Fbk{y}{m(x,y)}}\ \equiv\ $\\
  $\Fbk{n}{\SPLIT{n}{x,y}{m(x,y)}}$
\end{enumerate}

Finally, we introduce an equality representing the \emph{sliding axiom} (\Cref{diagram:sliding}).
Having all of these rules together means that we can always rewrite a term in the type theory with feedback as a term of the type theory of symmetric monoidal categories \emph{up to sliding}.
These are precisely the morphisms of the $\St(\bullet)$-construction, the free category with feedback.

\begin{enumerate}
  \item $\Fbk{s}{f(\Delay(h(s)),x)} \equiv$ \\
  $\Fbk{t}{\SPLIT{f(x,t)}{y,s}{[y,h(s)]}}$
\end{enumerate}

\subsection{Categories with copy and syntax sugar}

\begin{definition}\defining{linkwithcopy}
  A \emph{category with copying} is a \symmetricMonoidalCategory{} $(\catC,\otimes,I)$ in which
  every object $A \in \catC$ has a (non-necessarily natural) coassociative and cocommutative comultiplication
  $\delta_{A} = (\blackComonoid)_{A} \colon A \to A \otimes A$, called the ``copy''.
\end{definition}
  Every cartesian category and every kleisli category of a $\Set$-based commutative monad is a category with copying.
In our type theory, this is translated into a $\Copy$ generator acting as follows.
\begin{mathpar}
  \infer[\defining{linkCopy}{\textsc{Copy}}]{\Gamma \entails x : A}{\Gamma \entails \Copy(x) : A \otimes A}
\end{mathpar}

\begin{definition}
  A \emph{category with $\partial$-merging} is a symmetric monoidal category $(\catC,\otimes,I)$ with a symmetric monoidal endofunctor \(\partial \colon \catC \to \catC\) in which every object $A \in \catC$ has an associated morphism $\phi_{A} \colon A \otimes \partial A \to A$.
\end{definition}

In our type theory, this is translated by a $\Fby$ generator acting as follows.

\begin{mathpar}
  \infer[\defining{linkFby}{\defining{linkFby}{\textsc{Fby}}}]{
    \Gamma \entails x : A \and
    \Delta \entails y : \partial(A)}{\Shuf(\Gamma,\Delta) \entails x\ \Fby\ y : A}
\end{mathpar}

We allow three pieces of syntax sugar in our language, suited only for the case of categories with copying.
These make the language more \textsc{Lucid}-like without changing its formal description.

\begin{enumerate}
  \item We allow multiple occurences of a variable, implicily copying it.
  \item We apply \textsc{Delay} rules where needed for type-checking, without
  explicitly writing the rule.
  \item Recursive definitions are syntax for the $\textsc{Fbk}$ rule and the $\textsc{Copy}$
  rule. That is,
  \[M = x(M) \quad\mbox{means}\quad M = \Fbk{m}{\Copy(x(m))}.\]
  \item We use $\Wait$ to declare an implicit feedback loop.
  \[\defining{linkWait}{\ensuremath{\mbox{\textsc{Wait}}}}(x) \quad\mbox{means}\quad \Fbk{y}{[x,y]}.\]
\end{enumerate}
\section{Implementation}\label{section:implementation}

We use the \textbf{Haskell}~\cite{haskellreport} programming language for computations.
We use \hask{Arrows}~\cite{hughes00} to represent monoidal categories with an identity-on-objects monoidal functor from our base category of Haskell types and functions.
Notations for arrows~\cite{paterson01} have been explained in terms of Freyd categories~\cite{power99:freyd}.
In particular, the \hask{loop} notation is closely related to feedback, as it is usually employed to capture \emph{traces}.

Our definition of monoidal streams follows~\Cref{def:monoidalstream}.
\begin{haskellcode}
type Stream c = StreamWithMemory c ()

data StreamWithMemory c n x y where
  StreamWithMemory :: (Arrow c) =>
     c (n , x) (m , y)
    -> StreamWithMemory c m x y
    -> StreamWithMemory c n x y
\end{haskellcode}
Their sequential and parallel composition (\hask{comp} and \hask{tensor}) follow from~\Cref{def:sequentialstream,def:parallelstream}. In \Cref{section:code} we describe both first on the $\now$ part; and then trivially extended by coinduction.
\[\now(f \dcomp_{N} g) = (\sigma \tensor \im_{A}) \dcomp (\im%
    \tensor \now(\fm)) \dcomp  (\sigma \tensor \im_{B}) ; (\im%
    \tensor \now(\gm)).\]
\[\now(f \tensor_{N} g) = (\im\tensor\sigma\tensor\im);(\now(f)\tensor\now(g));(\im \tensor \sigma\tensor\im).\]

\begin{example}[Fibonacci example]\defining{linkexamplefibonacci}{}\label{computation:fibonacci}
  The code for \hask{fibonacci} in~\Cref{section:code} follows the definition in~\Cref{example:fibonacci}.
  We can execute it to obtain the first 10 numbers from the Fibonacci sequence.
  \begin{haskellcode}
> take 10 <$> run fibonacci
Identity [0,1,1,2,3,5,8,13,21,34]
  \end{haskellcode}
\end{example}

\begin{example}[Random walk example]\label{implementation:walk}\defining{linkexamplewalk}{}
  The code for \hask{walk} in~\Cref{section:code} follows the definition in~\Cref{example:walk}.
  We can execute it multiple times to obtain different random walks starting from 0.
  \begin{haskellcode}
> take 10 <$> run walk
[0,1,0,-1,-2,-1,-2,-3,-2,-3]
> take 10 <$> run walk
[0,1,2,1,2,1,2,3,4,5]
> take 10 <$> run walk
[0,-1,-2,-1,-2,-1,0,-1,0,-1]
\end{haskellcode}
\end{example}

\begin{example}[Ehrenfest model]\label{implementation:ehrenfest}
  The code for \hask{ehrenfest} in~\Cref{section:code} follows the definition in~\Cref{example:ehrenfest}.
  We can execute it to simulate the Ehrenfest model.
  \begin{haskellcode}
> take 10 <$> run ehrenfest
[([2,3,4],[1]),([2,3],[1,4]),
([2,3,4],[1]),([2,4],[1,3]),
([2],[1,3,4]),([2,4],[1,3]),
([2],[1,3,4]),([2,3],[1,4]),
([3],[1,2,4]),([2,3],[1,4])]
  \end{haskellcode}
\end{example}

\onecolumn

\subsection{Term derivations}
\begin{example}[Fibonacci]
  {\small
  \begin{mathpar}
    \infer{\infer{\infer{\infer{ }{\entails 0 : \naturals_0} \and
          \infer{\infer{\infer{\infer{ }{f : \naturals \entails f : \naturals}}{f : \naturals \entails \Copy(f) : \naturals \tensor \naturals} \and
              \infer{\infer{ }{f_{1} : \naturals \entails f_{1} : \naturals}  \and \infer{\infer{ }{\entails 1 :\naturals_0} \and \infer{\infer{ }{f_{2} : \naturals \entails f_{2} : \naturals}}{f_{2} : \naturals \entails \Wait(f_{2}) : \delay \naturals}}{f_{2} : \naturals \entails 1\ \Fby\ \Wait(f_{2}) : \naturals}}{f_{1} : \naturals, f_{2} : \naturals \entails f_{1} + 1\ \Fby\ \Wait(f_{2}) : \naturals}}{f : \naturals \entails \SPLIT{\Copy(f)}{f_{1},f_{2}}{} (f_{1} + 1\ \Fby\ \Wait(f_{2})) : \naturals}}{
        f : \delay \naturals \entails \SPLIT{\Copy(f)}{f_{1},f_{2}}{} (f_{1} + 1\ \Fby\ \Wait(f_{2})) : \delay\naturals}}
        {f : \delay \naturals \entails 0\ \Fby\ \SPLIT{\Copy(f)}{f_{1},f_{2}}{} (f_{1} + 1\ \Fby\ \Wait(f_{2})) : \naturals}}
      {f : \delay \naturals \entails \Copy (0\ \Fby\ \SPLIT{\Copy(f)}{f_{1},f_{2}}{} (f_{1} + 1\ \Fby\ \Wait(f_{2}))\ ) : \naturals \tensor \naturals}}
    {\entails : \Fbk{f}{} \Copy (0\ \Fby\ \SPLIT{\Copy(f)}{f_{1},f_{2}}{} (f_{1} + 1\ \Fby\ \Wait(f_{2}))\ ) : \naturals}
  \end{mathpar}}
\end{example}

\begin{example}[Random walk]
  {\small
  \begin{mathpar}
    \infer{\infer{\infer{ }{\entails 0 : \naturals_0} \and \infer{\infer{ }{\entails \uniform{(-1,1)} : \naturals} \and \infer{ }{w : \delay \naturals \entails w : \delay \naturals}}
        {w : \delay \naturals \entails \uniform{(-1,1)} + w : \delay \naturals}}
      {\infer{w : \delay \naturals\entails 0\ \Fby\ (\uniform{(-1,1)} + w) : \naturals}
        {w : \delay \naturals\entails \Copy( 0\ \Fby\ (\uniform{(-1,1)} + w)) : \naturals \tensor \naturals}}}
    {\entails \Fbk{w}{} \Copy(0\ \Fby\ (\uniform{(-1,1)} + w)) : \naturals}
  \end{mathpar}}
\end{example}

\begin{example}[Ehrenfest model]
  {\small
  \begin{mathpar}
    \infer{\infer{\infer{\infer
          {\entails (1,2,3,4) : \textsf{Urn}_0 \\\\ \entails () : \textsf{Urn}_0}
          {\entails [(1,2,3,4),()] \\\\ : \textsf{Urn}_0 \tensor \textsf{Urn}_0} \and
          \infer{\infer{ }
            {u : \textsf{Urn} \tensor \textsf{Urn} \entails\\\\ u : \textsf{Urn} \tensor \textsf{Urn}} \and
            \infer{\infer{\infer{ }{\entails \uniform : \naturals}}
              {\entails \Copy(\uniform) : \naturals \tensor \naturals} \and
              \infer{\infer{\infer{ }{n_1 : \naturals \entails \\\\ n_1 : \naturals} \and
                  \infer{ }{u_1 : \mathsf{Urn} \entails \\\\ u_1 : \mathsf{Urn}}}
                {n_1 : \naturals, u_1:\textsf{Urn} \entails\\\\ \Move(n_{1},u_{1}) : \textsf{Urn}} \and
                \infer{\infer{ }{n_2 : \naturals \entails \\\\  n_2 : \naturals} \and
                  \infer{ }{u_2 : \mathsf{Urn} \entails \\\\ u_2 : \mathsf{Urn}}}
                 {n_2 : \naturals, u_2 :\textsf{Urn} \entails\\\\ \Move(n_{2},u_{2}) : \textsf{Urn}}}
               {n_1 : \naturals, n_2 : \naturals, u_1:\textsf{Urn}, u_2:\textsf{Urn} \entails \\\\
                 [\Move(n_{1},u_{1}), \Move(n_{2},u_{2})] : \textsf{Urn} \tensor \textsf{Urn}}}
            {u_1:\textsf{Urn},u_2:\textsf{Urn} \entails \SPLIT{\Copy(\uniform)}{n_{1},n_{2}}{} \\\\
              [\Move(n_{1},u_{1}), \Move(n_{2},u_{2})] : \textsf{Urn} \tensor \textsf{Urn}}}
          {\infer{
                u : \textsf{Urn} \tensor \textsf{Urn} \entails \SPLIT{u}{u_{1},u_{2}}{} \\\\ \SPLIT{\Copy(\uniform)}{n_{1},n_{2}}{} [\Move(n_{1},u_{1}), \Move(n_{2},u_{2})] : \textsf{Urn} \tensor \textsf{Urn}
              }
{u : \delay (\textsf{Urn} \tensor \textsf{Urn}) \entails \SPLIT{u}{u_{1},u_{2}}{} \\\\ \SPLIT{\Copy(\uniform)}{n_{1},n_{2}}{} [\Move(n_{1},u_{1}), \Move(n_{2},u_{2})] : \delay(\textsf{Urn} \tensor \textsf{Urn})}}}
        {u : \delay (\textsf{Urn} \tensor \textsf{Urn}) \entails
          [(1,2,3,4),()]\ \Fby\ \SPLIT{u}{u_{1},u_{2}}{} \\\\ \SPLIT{\Copy(\uniform)}{n_{1},n_{2}}{} [\Move(n_{1},u_{1}), \Move(n_{2},u_{2})] : \textsf{Urn} \tensor \textsf{Urn}}}
      {u : \delay (\textsf{Urn} \tensor \textsf{Urn}) \entails
      \Copy([(1,2,3,4),()]\ \Fby\ \SPLIT{u}{u_{1},u_{2}}{} \\\\ \SPLIT{\Copy(\uniform)}{n_{1},n_{2}}{} [\Move(n_{1},u_{1}), \Move(n_{2},u_{2})]) : (\textsf{Urn} \tensor \textsf{Urn}) \tensor (\textsf{Urn} \tensor \textsf{Urn})}}
    {\entails \Fbk{u}{}  \Copy([(1,2,3,4),()]\ \Fby\ \SPLIT{u}{u_{1},u_{2}}{} \\\\
      \SPLIT{\Copy(\uniform)}{n_{1},n_{2}}{} [\Move(n_{1},u_{1}), \Move(n_{2},u_{2})]) : \textsf{Urn} \tensor \textsf{Urn}}
  \end{mathpar}}
\end{example}
\newpage
\onecolumn

\subsection{Code}\label{section:code}
The following code has been compiled under GHCi, version 8.6.5.
The ``random'' library may need to be installed.

\inputminted{haskell}{MonoidalStreams.hs}

\end{document}